\numberwithin{equation}{section}
\newcommand{\R}{\mathbb{R}}
\newcommand{\N}{\mathbb{N}}
\newcommand{\C}{\mathbb{C}} %
\newcommand{\Z}{\mathbb{Z}}
\newcommand{\s}[1]{\CMcal{#1}}
\newcommand{\bb}[1]{\mathscr{#1}}
\newcommand{\rr}[1]{\mathfrak{#1}}
\newcommand{\n}[1]{\mathbb{#1}}
\newcommand{\nnorm}[1]{{\left\vert\kern-0.25ex\left\vert\kern-0.25ex\left\vert #1
 \right\vert\kern-0.25ex\right\vert\kern-0.25ex\right\vert}}
\newcommand{\ketbra}[2]{|#1\rangle\langle#2|}
\newcommand{\expo}[1]{\,\mathrm{e}^{#1}\,} %
\newcommand{\dd}{\,\mathrm{d}}
\newcommand{\ii}{\,\mathrm{i}\,}
\newcommand{\virg}[1]{\lq\lq#1\rq\rq} %
\DeclareMathOperator{\Tr}{Tr}
\newtheorem{Theorem}{Theorem}[section]
\newtheorem{Corollary}[Theorem]{Corollary}
\newtheorem{Lemma}[Theorem]{Lemma}
\newtheorem{Proposition}[Theorem]{Proposition}
 { \theoremstyle{definition}
\newtheorem{Definition}[Theorem]{Definition}
\newtheorem{Remark}[Theorem]{Remark} }
\begin{document}
%\allowdisplaybreaks

\newcommand{\arXivNumber}{2006.06785}

\renewcommand{\thefootnote}{}

\renewcommand{\PaperNumber}{146}

\FirstPageHeading

\ShortArticleName{The Noncommutative Geometry of the Landau Hamiltonian: Metric Aspects}

\ArticleName{The Noncommutative Geometry\\ of the Landau Hamiltonian: Metric Aspects\footnote{This paper is a~contribution to the Special Issue on Noncommutative Manifolds and their Symmetries in honour of~Giovanni Landi. The full collection is available at \href{https://www.emis.de/journals/SIGMA/Landi.html}{https://www.emis.de/journals/SIGMA/Landi.html}}}

\Author{Giuseppe DE NITTIS~$^\dag$ and Maximiliano SANDOVAL~$^\ddag$}

\AuthorNameForHeading{G.~De Nittis and M.~Sandoval}

\Address{$^\dag$~Facultad de Matem\'aticas \& Instituto de F\'{\i}sica, Pontificia Universidad Cat\'olica de Chile,\\
\hphantom{$^\dag$}~Santiago, Chile}
\EmailD{\href{gidenittis@mat.uc.cl}{gidenittis@mat.uc.cl}}

\Address{$^\ddag$~Facultad de Matem\'aticas, Pontificia Universidad Cat\'olica de Chile, Santiago, Chile}
\EmailD{\href{msandova@protonmail.com}{msandova@protonmail.com}}

\ArticleDates{Received June 12, 2020, in final form December 22, 2020; Published online December 28, 2020}

\Abstract{This work provides a first step towards the construction of a noncommutative geometry for the quantum Hall effect in the continuum. Taking inspiration from the ideas developed by Bellissard during the 80's we build a spectral triple for the $C^*$-algebra of continuous magnetic operators based on a Dirac operator with compact resolvent. The metric aspects of this spectral triple are studied, and an important piece of Bellissard's theory (the so-called first Connes' formula) is proved.}

\Keywords{Landau Hamiltonian; spectral triple; Dixmier trace; first Connes' formula}

\Classification{81R60; 58B34; 81R15; 81V70}

\begin{flushright}
\begin{minipage}{50mm}
\it Dedicated to Giovanni Landi\\ for his 60th birthday
\end{minipage}
\end{flushright}

\renewcommand{\thefootnote}{\arabic{footnote}}
\setcounter{footnote}{0}

\section{Introduction}\label{sect:intro}

The study of the \emph{topological phases of matter} is undoubtedly one of the
research topics that has attracted most of the attention of the scientific
community during the last decades and it can be traced back to the discovery
of the \emph{quantum Hall effect} (QHE) in 1980.

The QHE is characterized by two peculiar and interesting phenomena: the
quantization of the transverse conductivity, which is robust with respect to
small perturbations (like \emph{disorder}), and the appearance of the
\emph{plateaux}. The mathematical literature on the QHE is endless and cannot be
reported extensively here. For this reason, and at the cost of forgetting
important contributions, we will report below only the main contributions to the
subject which are related to the aim of the present work. The topological nature
of the quantization of the Hall conductivity is understood since 1981 with the
pioneering works~\cite{laughlin-81,thouless-kohmoto-nightingale-nijs-82}, but
its modern interpretation in terms of {an} \emph{index theory} was completed
only in 1994, with the seminal paper~\cite{avron-seiler-simon-94}. The existence
of the plateaux as the consequence of the presence of disorder was
mathematically proved in 1987 for the case of the \emph{Landau Hamiltonian}
(continuous model)~\cite{kunz-87}. An improved version of this result was
obtained only in more recent
years~\cite{germinet-klein-schenker-07,germinet-klein-schenker-09} thanks to the
development of new sophisticated mathematical tools like the \emph{multiscale
 analysis}. During the 80's, Bellissard had the brilliant intuition to combine
the \emph{noncommutative geometry} (NGC), developed by Connes a few years earlier~\cite{connes-80,connes-85,connes-94}, with the study of the QHE. This
program eventually led to the groundbreaking paper~\cite{bellissard-elst-schulz-baldes-94} of
1994 where the NGC of the QHE was
rigorously established. The main advantage of Bellissard's approach is
mainly related to the synthesis power of NGC. In fact, inside this new
framework, the complete phenomenology of the QHE (quantization and plateaux) can
be explained with the use of a unified language. The NGC program of the QHE has
been pushed forward by several other authors in last years
(e.g., \cite{carey-hannabuss-mathai-06,goffeng-10,mccann-carey-96, xia-88}) and has been
extended with surprising success to the study of \emph{topological insulators}
(e.g., \cite{bellissard-elst-schulz-baldes-94,bourne-rennie-18, prodan-schulz-baldes-book}).

The formalism developed in~\cite{bellissard-elst-schulz-baldes-94} allows, in
principle, to treat both the \emph{continuous} case of magnetic Schrödinger
operators defined on $L^2\big(\R^2\big)$ and the \emph{discrete} case of tight-binding
magnetic Hamiltonians defined on $\ell^2\big(\Z^2\big)$. However, many of the crucial
results contained in~\cite{bellissard-elst-schulz-baldes-94}, like the
\emph{first} and \emph{second Connes' {formulas}}~\cite[Theorems 9 and~10]{bellissard-elst-schulz-baldes-94} are proved only for the discrete
case. The extension to the continuous case of these results is not simply a
matter of a straightforward generalization but requires the change of some
structural elements for the construction of the appropriate NGC. We will try to
clarify these points below, but first, it is worth making a~methodological remark:
in order to focus on the crucial points of the discussion, we will consider in
this work only the purely magnetic aspects of the dynamics, avoiding the
technical complications associated with the presence of disorder usually present in the matter. In fact,
once the magnetic aspects of the problem have been understood, the effects due
to the (random) electrostatic interaction with matter can be restored in the theory in a standard,
but not necessarily trivial, way. The general random case will be the subject
of a~future work.

Let us start by {explaining} Bellissard's
approach {in a nutshell}. The magnetic (non-random) $C^*$-algebra considered
in~\cite{bellissard-elst-schulz-baldes-94},
 the so called \emph{noncommutative Brillouin zone}, is the subalgebra
$\widetilde{\bb{C}}_B\subset\bb{B}\big(\ell^2\big(\Z^2\big)\big)$ generated by the discrete
 \emph{magnetic translations}.\footnote{The name magnetic translations is common in
 the condensed matter community since the works of Zak~\cite{zak1,zak2}. Mathematically, they are known as \emph{Weyl systems}.} The
subscript $B$ in the symbol $\widetilde{\bb{C}}_B$ refers to the
{strength} of a~constant magnetic field perpendicular to the plane. The $C^*$-algebra
$\widetilde{\bb{C}}_B$ is \emph{unital} and is endowed with a {natural} \emph{finite}
trace $\s{T}_B$ known as the \emph{trace per unit volume} (see~\cite{veselic-08} for
a recent review). The NGC built in~\cite{bellissard-elst-schulz-baldes-94} is
based on the \emph{spectral triple}
$\big(\widetilde{\rho},\widetilde{\s{H}}_2,\widetilde{D}\big)$ where
$\widetilde{\s{H}}_2:=\ell^2\big(\Z^2\big)\otimes\C^2$ is a separable Hilbert space,
$\widetilde{\rho}$ is the diagonal representation of $\widetilde{\bb{C}}_B$ on
$\widetilde{\s{H}}_2$ defined by $\widetilde{\rho}(A):=A\otimes{\bf 1}_2$ and
$\widetilde{D}$ is the \emph{Dirac operator} defined by
\begin{equation}\label{eq:intro_01}
\widetilde{D} := x_1 \otimes \sigma_1 + x_2 \otimes \sigma_2 = \left(\begin{matrix}
0 & x_1-\ii x_2 \\
x_1+\ii x_2 & 0\end{matrix}\right),
\end{equation}
where $x_1$, $x_2$ are the \emph{position operators} on $\ell^2\big(\Z^2\big)$, and $\sigma_1$, $\sigma_2$ are the Pauli matrices.
There are two aspects in the choice of the operator $\widetilde{D}$ which are decisive for the construction of the framework of~\cite{bellissard-elst-schulz-baldes-94}. First of all $\widetilde{D}$ has \emph{compact} resolvent
as required by the \emph{original} definition of spectral triple, as given
 in~\cite{connes-94,gracia-varilly-figueroa-01}.
 Secondly, the operator $\widetilde{D}$ relates the trace per unit volume~$\s{T}_B$ with the \emph{Dixmier trace} ${\Tr}_{{\rm Dix}}$~\cite{alberti-matthes-02, connes-94,connes-moscovici-95,gracia-varilly-figueroa-01,
 lord-sukochev-zanin-12} {via} the formula
\begin{equation}\label{eq:intro_02}
\s{T}_B(A) = \frac{1}{\pi^2} {\Tr}_{{\rm Dix}}\big(\big|\widetilde{D}_{\varepsilon}\big|^{-2} \widetilde{\rho}(A)\big) ,\qquad A\in \widetilde{\bb{C}}_B\,,
\end{equation}
where
$\big|\widetilde{D}_{\varepsilon}\big|^{-2}:=\big(\widetilde{D}^2+\varepsilon\big)^{-1}=\big(x_1^2+x_2^2+\varepsilon\big)^{-1}\otimes{\bf 1}_2$, and $\varepsilon>0$. Formula~\eqref{eq:intro_02} is proved (and used) only
indirectly in~\cite{bellissard-elst-schulz-baldes-94} but is explicitly given in~\cite[p.~104]{Bellissard-03}. The spectral triple
$\big(\widetilde{\rho},\widetilde{\s{H}}_2,\widetilde{D}\big)$ generates a canonical
\emph{Fredholm module} with \emph{Dirac phase}
$\widetilde{F}_{\varepsilon}:=\widetilde{D}_{\varepsilon}\big|\widetilde{D}_{\varepsilon}\big|^{-1}$.
The operator $\widetilde{F}_{\varepsilon}$ is the key element for the
construction of the quantized calculus of the noncommutative Brillouin zone
$\widetilde{\bb{C}}_B$, and {the} related index theory which provides the topological
interpretation of the QHE. It is worth noting that the \virg{discrete nature} of
$\widetilde{F}_{\varepsilon}$, meaning {that there is a (discrete)} basis which diagonalizes the
Dirac phase, is a crucial ingredient in~\cite{bellissard-elst-schulz-baldes-94},
as well as in the algebraic theory of topological
insulators~\cite{bourne-carey-rennie-16,prodan-schulz-baldes-book} or in the
construction of the \emph{spectral localizer}~\cite{schulz-baldes-loring-17,schulz-baldes-loring-19,schulz-baldes-loring-20}.

The passage to the continuous in the theory developed
in~\cite{bellissard-elst-schulz-baldes-94} consists in the construction of a~suitable $C^*$-algebra $\bb{C}_B$ of magnetic operators acting on $L^2\big(\R^2\big)$
and an associated spectral triple which provides the right index theory for the
topological interpretation of the QHE. The construction of the \emph{magnetic
 algebra} $\bb{C}_B$ (still called {noncommutative Brillouin zone}
in~\cite{bellissard-elst-schulz-baldes-94}) is standard in the literature and is
related to the abstract theory of group $C^*$-algebras (see Section~\ref{sec:magn_alg} for more details). However, $\bb{C}_B$ can also be introduced
in a very concrete way. Let $\{\psi_{n,m}\}\subset L^2\big(\R^2\big)$, with
$n,m\in\N_0:=\N\cup\{0\}$ be the orthonormal basis provided be the
\emph{generalized Laguerre functions}~\eqref{eq:lag_pol}, and define the family of operators
\begin{equation}\label{eq:intro:basic_op}
\Upsilon_{j\mapsto k}\psi_{n,m} := \delta_{j,n} \psi_{k,m} ,\qquad k,j,n,m\in\N_0 .
\end{equation}
Then, $\bb{C}_B$ coincides with the algebra generated in $\bb{B}\big(L^2\big(\R^2\big)\big)$ by the family \mbox{$\big\{\Upsilon_{j\mapsto k} \,|\, (j,k)\in \N_0^2\big\}$} (Proposition~\ref{prop:struct}). The algebra $\bb{C}_B$
is \emph{non-unital} (Corollary~\ref{cor:not_unit}) and the trace per unit volu\-me~$\s{T}_B$ provides an unbounded (semi-finite) trace densely defined on $\bb{C}_B$ (see Section~\ref{sec:dix_tr} for full details). Its physical relevance for the theory of the QHE lies in the fact that the \emph{Landau Hamiltonian}~$H_B$ (defined in Section~\ref{sec:landau_ham}) is \emph{affiliated} to $\bb{C}_B$ (Proposition~\ref{prop:affil}). Therefore, the missing ingredient for the construction of a NGC of the QHE in the continuum is an appropriate Dirac operator.

In principle the operator $\widetilde{D}$ defined by~\eqref{eq:intro_01} makes
sense also in $L^2\big(\R^2\big)\otimes\C^2$, but in this situation, its resolvent is not
compact. This is not an insurmountable problem since it is possible to define spectral triples with the weakest condition that
the elements of the $C^*$-algebra are relatively compact with respect to this
Dirac operator. Even though the terminology used in the literature is not
uniform, the situation can be summarized as follows:\footnote{The distinction
 between \emph{compact} spectral triple vs.\ \emph{locally compact} spectral triple used in this paper is borrowed from~\cite[Definition 2.2.1 and Remark~2.2.2]{sukochev-zanin-18}.
However, this terminology is not at all standard in the literature.
In our opinion, a better distinction between the two types of spectral triples
should take into account the dichotomy between the \emph{discrete} and the \emph{continuous} nature of the spectrum of the Dirac operator.} a \emph{compact} spectral
triple is a spectral triple defined by a Dirac operator with compact resolvent
according to the original definition like~\cite[Definition~9.16]{gracia-varilly-figueroa-01}; a \emph{locally compact} spectral triple is a~spectral triple defined by a Dirac operator with respect to which the elements
of the $C^*$-algebra are relatively compact~\cite[Definition~3.1]{carey-gayral-rennie-sukochev-14} or~\cite[Definition~2.2.1 and Remark~2.2.2]{sukochev-zanin-18}. For unital $C^*$-algebras {every locally compact
 spectral triple is automatically compact, in} contrast to the non-unital case, {where both types of
spectral triples are in general possible}. This fact leads to a kind of ambiguity
for the construction of a suitable spectral triple for the magnetic algebra
$\bb{C}_B$. One possibility is to consider $\bb{C}_B$ as a noncommutative locally
compact space by endowing this algebra with the geometry defined by the Dirac
operator $\widetilde{D}$. This approach has already been explored in the
literature on topological phases of matter in~\cite{bourne-rennie-18, xia-88}, and
in a more embryonic form in~\cite{avron-seiler-simon-94,goffeng-10}. In particular, the Dirac
operator $\widetilde{D}$ allows
a complete description of the QHE and (other topological phases) in the continuum~\cite{bourne-rennie-18}.
A~closely
related approach, already considered in the literature in different contexts~\cite{gayral-gracia-Bondia-iochum-all-04,langmann-95,mcdonald-sukochev-xiong-20}, consists in considering
the Dirac operator $\widehat{D}=\sum_{j=1}^2(-\ii\partial_j)\otimes\sigma_j$
which is the Fourier transform of $\widetilde{D}$. The latter choice also allows
to prove a formula of the type~\eqref{eq:intro_02} by combining the result of~\cite{azamov-mcdonald-sukochev-zanin-19} with the observation that $\big|\widehat{D}\big|^2$ is proportional to the Laplacian $\Delta$ on $\R^2$. A~completely different approach consists in constructing a compact spectral triple for the magnetic algebra~$\bb{C}_B$. To the best of our knowledge, this attempt
seems to be unexplored in the literature and is the main aim of the
present work.

Before describing the main results of this paper, it is worth to point out
{why it is relevant to consider} a compact spectral triple for the magnetic algebra~$\bb{C}_B$. In our opinion,
there are at least to two good motivations: the first has a mathematical flavor
while the second {is more physical}. Mathematically the advantage of having a Dirac operator with a compact resolvent is
related to the existence of a basis of eigenvectors which diagonalizes both
the Dirac operator and the Dirac phase. This fact might have important
implications in the problem of the computation of topological quantities. In
fact, the more advanced and recent techniques for the numerical computation of
topological quantities, like the \emph{spectral
 localizer}~\cite{schulz-baldes-loring-17,schulz-baldes-loring-19,schulz-baldes-loring-20}
or the \emph{finite volume approximation}~\cite{prodan-book}, have been
developed only for tight-binding magnetic models, and ultimately are based on
the fact that the Dirac operators~\eqref{eq:intro_01}, and their related phase,
are diagonalized on the eigenbasis of the position operators. The construction
of a compact spectral triple for the magnetic algebra $\bb{C}_B$ would provide
the right tool to extend these computational techniques to continuous models.
The physical motivation goes in the same direction and relies on the observation
that the presence of a constant magnetic field \emph{localizes} the motion of
{charged} particles in the perpendicular plane. In this sense, the existence of a magnetic field $B$
can be interpreted as an \emph{effective discretization} of the space in
magnetic domains of characteristic \emph{magnetic length}~$\ell_B$ (see Section~\ref{sec:landau_ham}).
This fact suggests that a correct (noncommutative) geometry for the magnetic
problem must depend on~$\ell_B$ and has to be a of~\virg{discrete} nature. These
requests translate mathe\-matically into the need of finding a Dirac operator that
depends on an intrinsic way on~$\ell_B$ (in contrast to~$\widetilde{D}$ which is independent of the
magnetic field~$B$) and which has compact resolvent.

Our goal is to show that the magnetic algebra $\bb{C}_B$ can be endowed with an
appropriate \emph{compact} spectral triple. The natural candidate for such a geometry is the \emph{magnetic Dirac operator} (see Section~\ref{sec:spectral-triple})
\begin{equation}\label{eq:intro_D}
D_B := \frac{1}{\sqrt{2}}\big(K_1 \otimes \gamma_1 + K_2 \otimes \gamma_2 + G_1 \otimes \gamma_3 + G_2 \otimes \gamma_4\big) .
\end{equation}
In the definition~\eqref{eq:intro_D}, $\gamma_1,\ldots,\gamma_4$ are Hermitian $4\times 4$ matrices which satisfy the fundamental anti-commutation relations of the
 Clifford algebra $C\ell_4(\C)$. The dependence
of $D_B$ {upon} the magnetic field $B$ is through the \emph{magnetic momenta} $K_1$, $K_2$ and the \emph{dual magnetic momenta}~$G_1$,~$G_2$
defined by~\eqref{eq:momenta} and~\eqref{eq:dual-momenta}, respectively. It is interesting to notice that the limit $B\to0$
(or equivalently $\ell_B\to+\infty$) is singular, showing that $D_B$ is a purely magnetic object.
As required, $D_B$ is a densely defined self-adjoint operator with compact resolvent (Proposition~\ref{prop:comp_res_D}).
Another important observation is that
\[
 D_B^2 = Q_B \otimes {\bf 1}_4 + {\bf 1} \otimes \varpi,
\]
where $\varpi:={\rm diag}(0,0,+1,-1)$ is a constant diagonal matrix and $Q_B$,
defined by~\eqref{eq:harm_osc}, is a~two-dimensional isotropic \emph{harmonic
 oscillator}. This fact relates our problem with the construction of spectral
triples based on the harmonic oscillator
\cite{gayral-wulkenhaar-13,grosse-wulkenhaar-12, vandungen-paschke-rennie-13,wulkenhaar-09}.
To define our spectral triple
for the magnetic algebra $\bb{C}_B$ we need a few other ingredients. First of
all we need the dense $\ast$-subalgebra $\bb{S}_B\subset\bb{C}_B$ defined by linear
combinations of operators $\Upsilon_{j\mapsto k}$ defined by~\eqref{eq:intro:basic_op} with
rapidly decreasing coefficients. In particular $\bb{S}_B$ turns out to be a
\emph{pre-$C^*$-algebra} (Propositions~\ref{prop:discret_sch} and~\ref{prop:struct_C_infty})
and this ensures that from $\bb{S}_B$ one can recover the
all the topological information of the magnetic algebra $\bb{C}_B$, like the
 $K$-theory~\cite[Theorema~3.44]{gracia-varilly-figueroa-01}. The second ingredient is
the separable Hilbert space $\s{H}_4:=L^2\big(\R^2\big)\otimes\C^4$ on which the algebra
$\bb{C}_B$ is represented diagonally {by the map}
$\rho\colon \bb{C}_B\to\bb{B}(\s{H}_4)$ where $\rho(A):=A\otimes{\bf 1}_4$ for every $A\in \bb{C}_B$.
The last ingredient is the self-adjoint involution $\chi:={\bf 1}\otimes(\gamma_1\gamma_2\gamma_3\gamma_4)$
which commutes with the elements of $\rho(\bb{C}_B)$ and anticommutes with $D_B$.
We are now in a position to state our first main result:
\begin{Theorem}\label{theo:main_01}
The collection $(\bb{S}_B,\s{H}_4,D_B)$, endowed with the diagonal representation $\rho$ and the involution $\chi$
defines a regular even compact spectral triple for the magnetic algebra
$\bb{C}_B$.
\end{Theorem}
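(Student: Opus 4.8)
The plan is to verify the defining axioms of a regular even compact spectral triple in turn, exploiting the decomposition $D_B^2=Q_B\otimes\mathbf{1}_4+\mathbf{1}\otimes\varpi$ and the explicit description of $\bb{S}_B$ in terms of the operators $\Upsilon_{j\mapsto k}$. First I would recall that $D_B$ is densely defined, self-adjoint, and has compact resolvent (Proposition~\ref{prop:comp_res_D}); together with the fact that $\rho$ is a bounded $\ast$-representation this already gives the underlying compact spectral geometry once boundedness of commutators is established. The evenness is the easiest point: by construction $\chi=\mathbf{1}\otimes(\gamma_1\gamma_2\gamma_3\gamma_4)$ is a self-adjoint involution, it commutes with $\rho(A)=A\otimes\mathbf{1}_4$ because it acts only on the $\C^4$ factor, and it anticommutes with each $K_i\otimes\gamma_i$ and $G_i\otimes\gamma_j$ and hence with $D_B$, using the standard property that the chirality element of $C\ell_4(\C)$ anticommutes with the generators. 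I would state these three facts and move on.

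The substantive step is the boundedness of the commutators $[D_B,\rho(A)]$ for $A\in\bb{S}_B$. The plan is to reduce to the generators: since $\bb{S}_B$ consists of rapidly decreasing linear combinations of the $\Upsilon_{j\mapsto k}$, and $D_B$ involves $K_1,K_2,G_1,G_2$ which act on the $L^2(\R^2)$ factor, I would compute $[K_i,\Upsilon_{j\mapsto k}]$ and $[G_i,\Upsilon_{j\mapsto k}]$ explicitly. The key observation is that the magnetic and dual magnetic momenta act as ladder operators on the generalized Laguerre basis $\{\psi_{n,m}\}$, raising or lowering the index $n$ (respectively $m$) by one; consequently each commutator $[K_i,\Upsilon_{j\mapsto k}]$ is again a finite combination of operators of the form $\Upsilon_{j'\mapsto k'}$ with coefficients growing at most polynomially in $j,k$ (like $\sqrt{j}$, $\sqrt{k}$, $\sqrt{j+1}$, $\sqrt{k+1}$). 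Hence for a general $A=\sum_{j,k}a_{j,k}\Upsilon_{j\mapsto k}\in\bb{S}_B$ with $(a_{j,k})$ rapidly decreasing, the commutator $[D_B,\rho(A)]$ is represented by a matrix with rapidly decreasing entries, so it is bounded (indeed in many nice ideals). I expect this polynomial-growth-beats-rapid-decay estimate to be the heart of the argument and the main obstacle: one must set up the ladder-operator action carefully and control the combinatorics of the resulting double series uniformly.

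For regularity I would show that $\rho(A)$ and $[D_B,\rho(A)]$ lie in the domain of all powers of the derivation $\delta(\cdot):=[|D_B|,\cdot\,]$ — equivalently, following the standard reformulation, that they are smooth for the one-parameter group generated by $|D_B|$, or that $[D_B^2,\cdot\,]$-type iterated commutators remain bounded. Here it is convenient to work with $D_B^2=Q_B\otimes\mathbf{1}_4+\mathbf{1}\otimes\varpi$: since the second summand is a bounded matrix it contributes nothing to commutators with $\rho(A)$, and $Q_B$ is the harmonic oscillator whose eigenbasis is exactly $\{\psi_{n,m}\}$ with eigenvalues linear in $n$ (or $n+m$). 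Thus $[Q_B,\Upsilon_{j\mapsto k}]$ is a scalar multiple of $\Upsilon_{j\mapsto k}$ with the scalar growing linearly in $j,k$, and iterating the derivation only multiplies the coefficients $a_{j,k}$ by further polynomial factors in $(j,k)$, which a rapidly decreasing sequence absorbs to all orders. This reduces regularity to the same Schwartz-type estimate used for the first-order commutator bound, and I would invoke the fact (Propositions~\ref{prop:discret_sch} and~\ref{prop:struct_C_infty}) that $\bb{S}_B$ is a pre-$C^*$-algebra to conclude that this dense subalgebra is stable under the required operations and genuinely recovers $\bb{C}_B$, completing the verification that $(\bb{S}_B,\s{H}_4,D_B)$ with $\rho$ and $\chi$ is a regular even compact spectral triple.
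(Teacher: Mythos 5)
Your plan reproduces the paper's proof in its essential structure: self-adjointness and compact resolvent of $D_B$ (Proposition~\ref{prop:comp_res_D}), the grading via $\chi$, boundedness of $[D_B,\rho(A)]$ (Proposition~\ref{prop:commut_D}), and regularity (Proposition~\ref{prop:commut_D-II}), with the key estimates reduced to polynomial growth of ladder-operator coefficients against rapid decay of the $\Upsilon_{j\mapsto k}$-expansion. Two points of divergence are worth noting. First, you propose computing $[G_i,\Upsilon_{j\mapsto k}]$ as a non-trivial ladder-operator action on the index $m$; in fact these commutators vanish identically, since every $A\in\bb{M}_B$ commutes with $G_1,G_2$ (Lemma~\ref{lemma:comm_rel_bpm}, a consequence of Proposition~\ref{prop:struct}(6)). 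The paper exploits this from the start to collapse $[D_B,\rho(A)]$ to $[K_1,A]\otimes\tfrac{\gamma_1}{\sqrt2}+[K_2,A]\otimes\tfrac{\gamma_2}{\sqrt2} = \nabla_1(A)\otimes\tfrac{\ii\gamma_2}{\sqrt2\ell_B}-\nabla_2(A)\otimes\tfrac{\ii\gamma_1}{\sqrt2\ell_B}$, identifying the commutator with $\pi(\partial_j f)\in\bb{S}_B$ directly; your computation would eventually discover the vanishing but as written implies four contributing terms. Second, for regularity the paper does not pass through $D_B^2$: it works directly with $\rr{d}_{B,\varepsilon}=\big[\sqrt{D_B^2+\varepsilon\mathbf{1}},\,\cdot\,\big]$ for all $\varepsilon\geqslant 0$, and the commutator $\wp_{i,k}(\Upsilon_{r\mapsto s})$ equals $C_{i,k}^{s,r}\Upsilon_{r\mapsto s}$ where $C_{i,k}^{s,r}$ is an \emph{operator}-valued coefficient (a series over the dual Landau projections $P_m$) with norm $\big\lvert\sqrt{s+1+\alpha_i}-\sqrt{r+1+\alpha_k}\big\rvert$, hence $O(\sqrt r+\sqrt s)$ — not the scalar $(k-j)$ you get from $[Q_B,\cdot]$. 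Your route via $[D_B^2,\cdot]$ is admissible, but it relies on the Christensen--Ivan/Higson reformulation of regularity, in which the bounded operators in question are $(1+D_B^2)^{-1/2}[D_B^2,\cdot]$ rather than bare iterated commutators with $D_B^2$; that damping factor should be made explicit, after which the same rapid-decay estimate closes the argument.
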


The proof of Theorem~\ref{theo:main_01} is a consequence of several results
proved in Section~\ref{sec:spectral-triple}. In particular the basic axioms for
a compact spectral triple (according to~\cite[Definition~9.16]{gracia-varilly-figueroa-01}) are proved in
Propositions~\ref{prop:comp_res_D} and~\ref{prop:commut_D}. The
graded structure provided by $\chi$ shows that the spectral triple
is~\emph{even}, {which is a manifestation of the dimension of the NGC manifold described by the Dirac operator}. The \emph{regularity} property
defined, as defined in~\cite[Definition~10.10]{gracia-varilly-figueroa-01}, is proved in
Proposition~\ref{prop:commut_D-II}. We will refer to $(\bb{S}_B,\s{H}_4,D_B)$ as
the \emph{magnetic spectral triple} or the \emph{magnetic $K$-cycle}. {We
 also note that our magnetic spectral triple satisfies all the axioms for a
 noncommutative spin geometry, with the exception of the existence of a real
 structure, in a~similar fashion to the spectral triple defined
 in~\cite{bellissard-elst-schulz-baldes-94}}.

The second main result concerns the generalization of the
formula~\eqref{eq:intro_02} to continuous magnetic operators. To state the
result in a precise form, we
need to introduce the fundamental \emph{magnetic disk} $\Lambda_B:=\pi\ell_B^2$ and the
regularized resolvent
\[
 {|D_{B,\varepsilon}|^{-2}} := \big(D_B^2+\varepsilon{\bf 1}\big)^{-1} ,\qquad \varepsilon>0 .
\]
\begin{Theorem}\label{theo:main_02}
 For every $A\in \bb{L}^{1}_B\cdot \bb{L}^{2}_B$ it holds true that
\[
 \s{T}_B(A) = \frac{1}{8\Lambda_B} {\Tr}_{\rm Dix}\left(|D_{B,\varepsilon}|^{-2}\rho(A)\right)
\]
independently of $\varepsilon>0$.
\end{Theorem}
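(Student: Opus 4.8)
The plan is to establish the identity first on the generators $\Upsilon_{j\mapsto k}$ of $\bb{C}_B$, where both sides are completely explicit, and then to propagate it by linearity and density. By Proposition~\ref{prop:struct} the magnetic algebra is isomorphic to the algebra of compact operators on the ``Landau-level'' Hilbert space $\ell^2(\N_0)$, the isomorphism sending $\Upsilon_{j\mapsto k}$ to $\ketbra{e_k}{e_j}$ and carrying $\s{T}_B$ to a positive multiple of the canonical operator trace. Since a compact-operator algebra carries, up to a positive scalar, a unique densely defined lower semicontinuous semifinite trace, the conceptual skeleton of the argument is: show that $A\mapsto\Tr_{\rm Dix}\big(|D_{B,\varepsilon}|^{-2}\rho(A)\big)$ is again such a trace on $\bb{L}^1_B\cdot\bb{L}^2_B$, and fix the proportionality constant by evaluating on a single generator. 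I will carry this out concretely.

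First I would compute the two sides on the diagonal generators. Take $A=\Upsilon_{j\mapsto j}$. Using $D_B^2 = Q_B\otimes\mathbf{1}_4+\mathbf{1}\otimes\varpi$ together with the fact that the generalized Laguerre functions $\psi_{n,m}$ diagonalize the isotropic harmonic oscillator $Q_B$, with eigenvalues $q_{n,m}$ growing linearly in $n+m$, the operator $|D_{B,\varepsilon}|^{-2}\rho(\Upsilon_{j\mapsto j})$ is diagonalized by $\{\psi_{j,m}\otimes e_i\}_{m\in\N_0,\ 1\le i\le4}$ with eigenvalues $\big(q_{j,m}+\varpi_{ii}+\varepsilon\big)^{-1}$, every other basis vector lying in its kernel. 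Arranging these eigenvalues in non-increasing order yields $\mu_k\sim c\,k^{-1}$, with $c$ equal to four times the reciprocal slope of $q_{n,m}$ (the factor four accounting for the four spinor components). Hence the operator lies in $\mathcal{L}^{1,\infty}(\s{H}_4)$, it is measurable, and its Dixmier trace equals $c$. Two features are then immediate and account for the two assertions of the theorem: $c$ does not depend on $j$ (passing from $j$ to $j+1$ merely relabels the guiding-centre index $m$, which is invisible to the leading asymptotics), and $c$ does not depend on $\varepsilon$ (the additive shifts by $\varepsilon$ and $\varpi_{ii}$ are lower order). Comparing $c$ with the value of $\s{T}_B$ on a single Landau-level projection pins down the normalization and gives $\s{T}_B(\Upsilon_{j\mapsto j})=(8\Lambda_B)^{-1}\Tr_{\rm Dix}\big(|D_{B,\varepsilon}|^{-2}\rho(\Upsilon_{j\mapsto j})\big)$. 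For $j\neq k$ the operator $|D_{B,\varepsilon}|^{-2}\rho(\Upsilon_{j\mapsto k})$ is nilpotent and maps the $e_j$-block into the orthogonal $e_k$-block, so both $\s{T}_B(\Upsilon_{j\mapsto k})$ and its Dixmier trace vanish by traciality. (Alternatively one may compute the residue at $s=1$ of the meromorphic continuation of $s\mapsto\Tr\big(|D_{B,\varepsilon}|^{-2s}\rho(A)\big)$ for $A\ge0$ and identify it, by a Tauberian theorem, with $8\Lambda_B\,\s{T}_B(A)$.)

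Next I would extend the identity from generators to all of $\bb{L}^1_B\cdot\bb{L}^2_B$, which rests on two estimates. The first is that $A\mapsto|D_{B,\varepsilon}|^{-2}\rho(A)$ maps $\bb{L}^1_B\cdot\bb{L}^2_B$ into $\mathcal{L}^{1,\infty}(\s{H}_4)$ with a bound of the form $\big\||D_{B,\varepsilon}|^{-2}\rho(BC)\big\|_{1,\infty}\le \mathrm{const}\cdot\|B\|_{\bb{L}^1_B}\|C\|_{\bb{L}^2_B}$; the point is that writing $A=BC$ with $B\in\bb{L}^1_B$ and $C\in\bb{L}^2_B$ forces the singular values of $A$ to decay like $k^{-3/2}$, which combined with the decay of the weights $q_{n,m}$ in both indices is exactly what is needed to land in $\mathcal{L}^{1,\infty}$ (for $A$ merely in $\bb{L}^1_B$ one only lands in a slightly larger ideal, which is why the hypothesis is stated with the product). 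The second is that for $\varepsilon,\varepsilon'>0$ the operator $\big(|D_{B,\varepsilon}|^{-2}-|D_{B,\varepsilon'}|^{-2}\big)\rho(A)$ has eigenvalues decaying like $q_{n,m}^{-2}$, hence is genuinely trace class and has vanishing Dixmier trace, giving the $\varepsilon$-independence in general. Since the span of the $\Upsilon_{j\mapsto k}$ is dense in $\bb{L}^1_B\cdot\bb{L}^2_B$ for the relevant norm, $\s{T}_B$ is continuous for it, and the Dixmier trace is continuous on $\mathcal{L}^{1,\infty}$, the identity on generators propagates to all $A\in\bb{L}^1_B\cdot\bb{L}^2_B$.

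I expect the main obstacle to be precisely the $\mathcal{L}^{1,\infty}$-membership estimate just described. The difficulty is structural: the ``bad'' direction for the summability of $|D_{B,\varepsilon}|^{-2}$ is the guiding-centre index $m$, along which $\rho(A)=A\otimes\mathbf{1}_4$ acts as the identity, and $|D_{B,\varepsilon}|^{-2}$ on its own lies only in $\mathcal{L}^{2,\infty}$, so a black-box Hölder inequality pairing it with a compact $\rho(A)$ cannot produce the required $\mathcal{L}^{1,\infty}$ summability. One must instead unpack the tensor structure of $\s{H}_4$, write the operator as a direct sum over the spinor-and-guiding-centre modes of compressions of $|D_{B,\varepsilon}|^{-2}$ composed with $A$, and count singular values using both the $k^{-3/2}$ decay of the singular values of $A$ coming from the $\bb{L}^1_B\cdot\bb{L}^2_B$-factorization and the joint decay of $q_{n,m}$ in its two arguments. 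Making this count sharp uniformly in $A$, and checking measurability of the resulting operators, is where the real work lies; the algebraic core (the diagonalization on generators and the extraction of the constant $1/(8\Lambda_B)$) is an explicit calculation in the Laguerre basis.
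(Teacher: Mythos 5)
Your strategy (compute on generators, extend by density/continuity of the Dixmier trace) is the same as the paper's, and your generator computation matches the paper's Proposition~\ref{props:old_resul} and Corollary~\ref{corol:dix_trac1}. But the extension step is where your proposal is gapped, and the paper's actual mechanism is both different and simpler than what you sketch. Concretely, the paper first establishes the uniform Calder\'on-norm bound $\|Q_{B,\varepsilon}^{-1}\Upsilon_{j\mapsto k}\|_{1^+}\leqslant 1$ for all $j,k$; summing against the $\ell^1(\N_0^2)$ coefficients then gives the identity and $\rr{S}^{1^+}_{\rm m}$-membership on all of $\bb{L}^1_B$ (Proposition~\ref{prop:trace_dix_main res}). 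To reach $\bb{L}^1_B\cdot\bb{L}^2_B$, one writes $T=L_fL_g$ and \emph{keeps the $\bb{L}^1_B$ factor fixed}: since $Q_{B,\varepsilon}^{-1}L_f\in\rr{S}^{1^+}$ and $\rr{S}^{1^+}$ is an ideal in $\bb{B}(\s{H}_4)$, the bound $\|Q_{B,\varepsilon}^{-1}L_f(\pi(g_n)-L_g)\|_{1^+}\leqslant\|Q_{B,\varepsilon}^{-1}L_f\|_{1^+}\|\pi(g_n)-L_g\|$ together with $\|\pi(g_n)-L_g\|\leqslant(\sqrt{2\pi}\ell_B)^{-1}\|g_n-g\|_{L^2}$ lets one approximate the $\bb{L}^2_B$ factor in operator norm and pass to the limit in the Calder\'on norm (Theorem~\ref{prop:trace_dix_main res_bis}). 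No singular-value count of $A$ itself is needed; only boundedness of the second factor is used.

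Your own proposed estimate — ``writing $A=BC$ with $B\in\bb{L}^1_B$ and $C\in\bb{L}^2_B$ forces the singular values of $A$ to decay like $k^{-3/2}$'' — does not stand as written: elements of $\bb{C}_B$ are not compact on $L^2\big(\R^2\big)$ (every nonzero spectral value has infinite multiplicity, cf.\ Corollary~\ref{cor:not_unit}), so they have no decaying singular-value sequence in $\bb{B}\big(L^2\big(\R^2\big)\big)$. You would first need to pass to the compression $\rho_k\colon\bb{C}_B\to\bb{K}(\s{V}_k)$; even then, $\bb{L}^1_B\simeq\ell^1(\N_0^2)$ is the $\ell^1$-summability of matrix entries (hence trace-class), $\bb{L}^2_B\simeq\ell^2(\N_0^2)$ is Hilbert--Schmidt, and the $k^{-3/2}$ rate is neither established nor the quantity that actually drives the argument. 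You correctly identify that the ``guiding-centre'' index is the dangerous one and that $|D_{B,\varepsilon}|^{-2}$ alone sits only in $\rr{S}^{2^+}$; the paper's remedy is precisely to exploit the decay in the Landau index $j$ supplied by the $\bb{L}^1_B$ factor through the bound on $\|Q_{B,\varepsilon}^{-1}\Upsilon_{j\mapsto k}\|_{1^+}$, not a global singular-value count of $A$.

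One more caution: your opening ``conceptual skeleton'' — uniqueness of the semifinite lower-semicontinuous trace on a compact-operator algebra plus evaluation on one generator — hides exactly the difficulty the paper flags. To invoke uniqueness you would have to show that $A\mapsto\Tr_{\rm Dix}\big(|D_{B,\varepsilon}|^{-2}\rho(A)\big)$ is tracial on the relevant ideal, i.e., that $\Tr_{\rm Dix}\big(|D_{B,\varepsilon}|^{-2}\rho(AB)\big)=\Tr_{\rm Dix}\big(|D_{B,\varepsilon}|^{-2}\rho(BA)\big)$, which requires the commutators $\big[Q_{B,\varepsilon}^{-1},L_g\big]$ to lie in $\rr{S}^{1^+}_0$. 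That is exactly the hypothesis~\eqref{eq:main_assum} in Appendix~\ref{appB}, which the paper proves for $L_g\in\bb{L}^1_B$ and for functions of $H_B$ but explicitly leaves open for general $L_g\in\bb{L}^2_B$. You do not ultimately rely on the skeleton (you go through the generators), but presenting it as the conceptual basis suggests a uniqueness argument that is itself not available in the generality claimed.
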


Theorem~\ref{theo:main_02} is a direct consequence of Theorem~\ref{prop:trace_dix_main res_bis} and Remark~\ref{rk:trac_unit_volII} along with the structure of the square
$D_B^2$ (cf.\ Remark~\ref{rk:trac_unit_volIII}).
The domain $\bb{L}^{1}_B\cdot \bb{L}^{2}_B$ is defined in Remark~\ref{rk:trac_unit_volII}.
Even though
it is smaller than the natural domain where the trace per unit volume $\s{T}_B$ is defined (cf.\ Proposition~\ref{prop:FSN-trace})
it contains the
pre-$C^*$-algebra $\bb{S}_B$ and this implies that the magnetic spectral triple $(\bb{S}_B,\s{H}_4,D_B)$ has \emph{spectral dimension~$2$} (Theorem~\ref{theo:spec_dim}).

For the presentation of the last main result, we need to introduce the \emph{magnetic Dirac phase}
\begin{equation*}%\label{eq:fred_phas_intro}
F_{B,\varepsilon} := \frac{D_B}{|D_{B,\varepsilon}|} ,\qquad \varepsilon>0 .
\end{equation*}
and the \emph{$($quasi-$)$differential}
\begin{equation}\label{eq:def_diff_d_int}
\dd(\rho(A)) := [F_{B,\varepsilon}, \rho(A)] ,\qquad A\in \bb{S}_B .
\end{equation}
As discussed in Section~\ref{sec:magn_fre-mod}, the magnetic Dirac phase enters
in the definition of the $2^+$-summable even \emph{$($pre-$)$Fredholm module}
$(\rho,\s{H}_4,F_{B,\varepsilon})$ (Theorem~\ref{theo:Fred_module}). It is worth mentioning
that the latter fact allows to associate to $F_{B,\varepsilon}$ a unique $K\!K$-homology
class $[F_B]\in K\!K(\bb{C}_B,\C)$ (see \cite[Chapter~VII]{blackadar-98}),
although this aspect will not be investigated further in this work. Another
important remark is that with the help of the
differential~\eqref{eq:def_diff_d_int} one can build a \emph{quantized calculus}
for the magnetic algebra $\bb{C}_B$ in the spirit of~\cite[Chapter~4]{connes-94}. This aspect will be investigated in full detail in an upcoming work.
In the present work, we provide only a preliminary, but related result, known
in~\cite{bellissard-elst-schulz-baldes-94} with the name of first Connes'
formula. For that, we need to recall that the pre-$C^*$-algebra $\bb{S}_B$ admits
two spatial derivations defined by the commutators with the position operators,
i.e.,
\begin{equation*}%\label{eq:commut_deriv_01_intr}
\nabla_j(A) = -\ii [x_j,A] ,\qquad A\in\bb{S}_B ,\qquad j=1,2 .
\end{equation*}
A more detailed discussion about these derivations is presented in Section~\ref{sec:diff_struct}. For the moment, we only need to introduce the \emph{noncommutative gradient} $\nabla(\cdot):=(\nabla_1(\cdot),\nabla_2(\cdot))$ and the inner product
\[
\nabla(A_1)\cdot\nabla(A_2) := \sum_{j=1}^2\nabla_j(A_1)\nabla_j(A_2)
\]
for every pair $A_1,A_2\in \bb{S}_B$.
\begin{Theorem}[first Connes' formula]\label{theo:1st_Connes_intro}
For every pair $A_1,A_2\in\bb{S}_B$ the following formula holds true:
\begin{equation}\label{eq.intro_toadd}
\s{T}_B\big(\nabla(A_1)^*\cdot\nabla(A_2)\big) = \frac{1}{2\pi} {\Tr}_{{\rm Dix}}\big(\dd\big(\rho(A_1)\big)^*\dd\big(\rho(A_2)\big)\big) .
\end{equation}
\end{Theorem}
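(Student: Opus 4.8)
The plan is to reduce the ``first Connes' formula'' to the already-established Dixmier-trace identity of Theorem~\ref{theo:main_02} by computing the commutator $\dd(\rho(A))=[F_{B,\varepsilon},\rho(A)]$ explicitly and recognizing that, modulo trace-class corrections, it encodes exactly the noncommutative gradient $\nabla(A)$. First I would use the structure $D_B^2=Q_B\otimes\mathbf{1}_4+\mathbf{1}\otimes\varpi$ together with the decomposition~\eqref{eq:intro_D} of $D_B$ into the four generators $K_1,K_2,G_1,G_2$ to write $F_{B,\varepsilon}=D_B|D_{B,\varepsilon}|^{-1}$ in a form where the commutator with $\rho(A)=A\otimes\mathbf{1}_4$ can be carried through. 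Since $A\in\bb{S}_B$ commutes with the grading-type matrices, the only nontrivial contributions come from the commutators $[K_j,A]$ and $[G_j,A]$; the magnetic momenta and their duals are built from the position operators and the magnetic derivatives, so a short computation should express $[K_j,A]$ and $[G_j,A]$ in terms of the spatial derivations $\nabla_1(A)$, $\nabla_2(A)$ (this is where the $1/\ell_B$-dependence enters and matches the prefactor $1/(2\pi)$ against $1/(8\Lambda_B)$).

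The key steps, in order, are: (i) write $|D_{B,\varepsilon}|^{-1}=(D_B^2+\varepsilon)^{-1/2}$ and expand $[F_{B,\varepsilon},\rho(A)]$ using the Leibniz rule, so that $\dd(\rho(A))=[D_B,\rho(A)]|D_{B,\varepsilon}|^{-1}+D_B[\,|D_{B,\varepsilon}|^{-1},\rho(A)]$; (ii) show the second term is trace-class (indeed it carries two extra powers of the resolvent, hence lies in $\bb{L}^1_B$-type ideals on the Hilbert-space side), so it does not contribute to the Dixmier trace of the product $\dd(\rho(A_1))^*\dd(\rho(A_2))$; (iii) identify $[D_B,\rho(A)]$ from~\eqref{eq:intro_D} as $\tfrac{1}{\sqrt2}\sum_j([K_j,A]\otimes\gamma_j)+\tfrac{1}{\sqrt2}\sum_j([G_j,A]\otimes\gamma_{j+2})$ and re-express the commutators through $\nabla_1(A),\nabla_2(A)$ using the explicit formulas~\eqref{eq:momenta}, \eqref{eq:dual-momenta}; (iv) form the product $\dd(\rho(A_1))^*\dd(\rho(A_2))$, take the partial trace over $\C^4$ using the Clifford relations $\{\gamma_a,\gamma_b\}=2\delta_{ab}$ (so cross terms in $a\neq b$ drop under $\mathrm{tr}_{\C^4}$ and each diagonal term contributes a factor $4$), arriving at an operator of the form $|D_{B,\varepsilon}|^{-1}\big(\sum_j\nabla_j(A_1)^*\nabla_j(A_2)\big)|D_{B,\varepsilon}|^{-1}\otimes\mathbf{1}_4$ up to trace-class error; (v) apply Theorem~\ref{theo:main_02} to the element $\nabla(A_1)^*\cdot\nabla(A_2)$, which lies in $\bb{L}^1_B\cdot\bb{L}^2_B$ because $\bb{S}_B$ is closed under the derivations $\nabla_j$ and the relevant products, and collect the numerical constants.

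The main obstacle I expect is step (iv) together with the bookkeeping of the trace-class remainders: one must be careful that $|D_{B,\varepsilon}|^{-1}$ does \emph{not} commute with $\rho(A_j)$, so replacing $\dd(\rho(A))$ by its ``principal part'' $[D_B,\rho(A)]|D_{B,\varepsilon}|^{-1}$ requires a genuine estimate showing the commutator $[\,|D_{B,\varepsilon}|^{-1},\rho(A)]$ improves summability by one Schatten degree — this is essentially the regularity of the spectral triple (Proposition~\ref{prop:commut_D-II}) combined with the two-dimensional spectral dimension, and it is what guarantees that all discarded terms vanish under $\Tr_{\rm Dix}$. A secondary technical point is the invariance in $\varepsilon>0$: since Theorem~\ref{theo:main_02} already provides $\varepsilon$-independence on the right-hand side after the reduction, it suffices to verify that the trace-class corrections are $\varepsilon$-dependent only through terms that the Dixmier trace annihilates, which follows from the same Schatten-ideal estimates. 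Once these are in place, the identity~\eqref{eq.intro_toadd} is obtained by matching $\tfrac{1}{8\Lambda_B}\cdot 4=\tfrac{1}{2\Lambda_B}=\tfrac{1}{2\pi\ell_B^2}$ against the normalization implicit in the derivations $\nabla_j=-\ii[x_j,\cdot]$, yielding the stated constant $\tfrac{1}{2\pi}$.
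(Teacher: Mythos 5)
Your outline shares the paper's overall architecture---factorize $[F_{B,\varepsilon},\rho(A)]$, retain the leading term, express it through the noncommutative gradient and the Clifford trace, and close with the Dixmier-trace formula for $\s{T}_B$---but two of the steps contain real gaps as you have written them.

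In step (ii) you discard the remainder $D_B\,[|D_{B,\varepsilon}|^{-1},\rho(A)]$ as ``trace-class'' because it ``carries two extra powers of the resolvent.'' That is not correct: multiplying the (trace-class) commutator $[|D_{B,\varepsilon}|^{-1},\rho(A)]$ by the \emph{unbounded} $D_B$ costs a power, and what the paper actually establishes in the course of Lemma~\ref{lemma:sub_fred_02} is only that $D_B\,[|D_{B,\varepsilon}|^{-1},\rho(A)]\in\rr{S}^{2^+}$. The cross terms $I_1,I_2$ in the expansion of $[F_{B,\varepsilon},\rho(A_1)]^*[F_{B,\varepsilon},\rho(A_2)]$ therefore sit a priori only in $\rr{S}^{1^+}$, which is precisely the ideal the Dixmier trace does \emph{not} annihilate. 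Showing that $I_1,I_2,I_3$ actually lie in $\rr{S}^{1^+}_0$ is the technical heart of the proof: Lemma~\ref{lemma:pre_connes01} handles $I_3$ by explicit trace-class spectral estimates and handles $I_1,I_2$ by exploiting the vanishing $\C^4$-traces $\tr\big(\tau_{i,i}\gamma_k\tau_{j,j}\big)=0$ for $k=1,2$ together with the tensor factorization of the Dixmier trace. A Schatten-degree count alone will not produce this, and your proposal leaves the step open.

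In step (iii) you write $[D_B,\rho(A)]=\frac{1}{\sqrt2}\sum_j[K_j,A]\otimes\gamma_j+\frac{1}{\sqrt2}\sum_j[G_j,A]\otimes\gamma_{j+2}$ and then expect all four commutators to re-express through $\nabla_1(A),\nabla_2(A)$. Four independent commutators cannot collapse onto two gradients; the reduction works only because $[G_j,A]=0$ identically for every $A\in\bb{M}_B$ (Lemma~\ref{lemma:comm_rel_bpm}), a structural consequence of $\bb{C}_B\subset\bb{V}_B'$. Once this vanishing is invoked, $[D_B,\rho(A)]$ carries only the matrices $\gamma_1,\gamma_2$, and $[K_1,A]$, $[K_2,A]$ become proportional to $\nabla_2(A)$, $\nabla_1(A)$ by Proposition~\ref{props:commut_deriv}. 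This is where the specific structure of the magnetic algebra (invariance under dual magnetic translations) enters the argument; without recognizing it the Clifford bookkeeping in step (iv)---and the constant it is supposed to produce---does not go through.
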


Theorem~\ref{theo:1st_Connes_intro} is a direct corollary (in fact, a restatement) of Theorem~\ref{theo:1st_Connes}.

It is important to point out that formulas of the type \eqref{eq.intro_toadd} have already been investigated in the literature in contexts not necessarily related to the QHE. For instance, in~\cite{mcdonald-sukochev-xiong-19} a similar formula has been derived for the case of a ($d$-dimensional) noncommutative torus.
The latter result provides a direct generalization of the first Connes'
formula obtained in \cite{bellissard-elst-schulz-baldes-94} for the discrete
case. Even more related to Theorem~\ref{theo:1st_Connes_intro} is the work~\cite{mcdonald-sukochev-xiong-20}. In the latter paper a formula of the type~\eqref{eq.intro_toadd} is derived for the \emph{noncommutative Euclidean space} (or \emph{Moyal plane}) which is indeed similar to the algebra $\bb{C}_B$ introduced above. The main difference between the result in~\cite{mcdonald-sukochev-xiong-20} and Theorem~\ref{theo:1st_Connes_intro} lies in the choice of the Dirac operator. In fact, in~\cite{mcdonald-sukochev-xiong-20} the authors deal with the operator $\widehat{D}$ (described above) which has a non-compact resolvent, while formula~\eqref{eq.intro_toadd} is obtained from the Dirac operator $D_B$ which has compact resolvent.

{What is still missing?} In order to complete the program of~\cite{bellissard-elst-schulz-baldes-94} with our \emph{compact} spectral triple, we need to prove
the second Connes' formula (which provides the link between the Chern character
and the Hall conductivity) and to introduce the (random) perturbation by
electrostatic potentials. The first question is related to the study of the
{quantized calculus} of the magnetic algebra $\bb{C}_B$ generated by the
differential~\eqref{eq:def_diff_d_int}. The second question consists of
replacing the magnetic algebra $\bb{C}_B$, which is a twisted group
$C^*$-algebra of $\R^2$, with a twisted $C^*$-crossed product generated by the
action of $\R^2$ on the hull of the potentials (see, e.g., \cite[Sections E and~F]{bellissard-elst-schulz-baldes-94} and references therein). Both of these aspects will be investigated in an upcoming work.

\subsection*{Structure of the paper}
Section~\ref{sec:Framework} is devoted to the construction of the magnetic
algebra $\bb{C}_B$ and its associated von Neumann algebra $\bb{M}_B$. More
precisely, we start by recalling the fundamental facts about the Landau
Hamiltonian (Section~\ref{sec:landau_ham}). Then, we review the connection
between $\bb{C}_B$ and the theory of twisted group algebras (Section~\ref{sec:landau_ham}), and the role of the Laguerre basis in the structural properties of $\bb{C}_B$ (Section~\ref{sec:C-magn_alg}). After that, we focus on
the structure of the associated von Neumann algebra $\bb{M}_B$ (Sections~\ref{sec:Hil-alg} and~\ref{sec:W*magn_alg}) in order to present in a~rigorous form the (noncommutative) integration theory for the magnetic algebra
(Section~\ref{sec:dix_tr}) and its link with the Dixmier trace
(Section~\ref{sec:dix_tr_2}). Finally, the (noncommutative) differential
structure and the noncommutative Sobolev's theory for $\bb{C}_B$ are discussed
(Section~\ref{sec:diff_struct}). Section~\ref{sec:Framework} contains in large
part review material, but has the merit to collect various pieces of information
scattered in the literature in a logically structured presentation. An exception should
be made for Section~\ref{sec:dix_tr_2}, which contains new important results and in
particular Theorem~\ref{prop:trace_dix_main res_bis}.\smallskip

Section~\ref{sec:magn_spec_trip} is devoted to the metric aspects of the magnetic
spectral triple and, unlike Section~\ref{sec:Framework}, contains mainly new
material. We start with the construction of the spectral triple (Section~\ref{sec:spectral-triple}) and with the study of some important properties like
the measurability (Section~\ref{sec:mes_prop}) and the absence of real
structures (Section~\ref{sec:no real}). After that, we introduce the
magnetic Fredholm module (Section~\ref{sec:magn_fre-mod}) with its graded
structure (Section~\ref{sec:grad_struct_magn_fre-mod}). Finally, we end with the
proof of the first Connes' formula
(Section~\ref{sec:1stconn_form}).

Appendices~\ref{appA} and~\ref{appB} contain some technical results which have been separated
from the main body of the text in order to make the flow of the presentation more fluid.

\section{Background material}\label{sec:Framework}

In this section, we will describe the natural
operator algebras associated with the Landau Hamiltonian. On these algebras, we
will define a {differential structure} and an integration theory based on the Dixmier trace.
Much of the information that we will present in the next sections are borrowed from existing literature. In particular the papers~\cite{gayral-gracia-Bondia-iochum-all-04,gayral-wulkenhaar-13, gracia-varilly-88,kammerer-86,maillard-86,gracia-varilly-88-II}
deal with very related subjects.

\subsection{The Landau Hamiltonian}\label{sec:landau_ham}
 Much of the material presented in this section, including the choice of notations, is taken from~\cite[Section~3]{denittis-gomi-moscolari-19}.
Let $B \in \mathbb{R}$ be the strength of a constant magnetic field, perpendicular to the plane~$\R^2$. We are interested in the quantum dynamics of a particle of mass $m$ and negative charge $-q$. With these quantities one can define the \emph{magnetic length} $\ell_B$ and the \emph{magnetic energy} $\mathcal{E}_{B}$, as
\[
 \ell_{B} := \sqrt{\frac{c\hbar}{qB}} ,\qquad \mathcal{E}_{B} : = \frac{qB\hbar}{mc} ,
\]
where $c$ is the speed of light and $\hbar$ the Planck constant. The magnetic field $B$, through the length $\ell_{B}$, enters in the
definition of the group 2-cocycle\footnote{According to the definition used in~\cite{edwards-lewis-69}, a group 2-cocycle
with values in $\n{U}(1)$ is called a \emph{multiplier}. A~straightforward check shows that $\Phi_{B}$ meets the conditions for a (non-trivial) multiplier of $\R^2$, whenever $B\neq 0$. It is also worth noting that $\Phi_{B}$ is \emph{normalized}, meaning that $\Phi_{B}(x,-x)=1$ for all $x\in\R^2$.} on~$\R^2$
given by
\[
 \Phi_{B}(x,y) := \expo{\ii\frac{x\wedge y}{2\ell_{B}^{2}}} ,\qquad x,y \in \R^2 ,
\]
where $x\wedge y :=x_{1}y_{2} - x_{2}y_{1}$ for all $x=(x_1,x_2)$ and $y=(y_1,y_2)$.

The dynamics of a (spin-less) charged particle in the field $B$ is governed by
the \emph{Landau Hamiltonian} $H_{B}$,
defined on the Hilbert ${L}^2\big(\R^2\big)$ by
\[
 H_{B} : = \frac{\mathcal{E}_{B}}{2} \big(K_{1}^{2} + K_{2}^{2}\big) ,
\]
where the \emph{magnetic momenta} $K_1$ and $K_2$ are given by
\begin{equation}\label{eq:momenta}
 K_{1} := -\ii \ell_{B}\frac{\partial}{\partial x_{1}} - \frac{1}{2 \ell_{B}} x_{2} ,\qquad
 K_{2} := -\ii \ell_{B}\frac{\partial}{\partial x_{2}} + \frac{1}{2 \ell_{B}} x_{1} .
\end{equation}
The magnetic momenta
are essentially self-adjoint
operators with dense cores given by the
compactly supported smooth functions ${C}^\infty_{\rm c}\big(\R^2\big)$ or by the Schwartz functions
${S}\big(\R^2\big)$. As a~consequence, $H_{B}$ is essentially self-adjoint on
${C}^\infty_{\rm c}\big(\R^2\big)$ or ${S}\big(\R^2\big)$.
It is also useful to define the \emph{dual magnetic momenta}
\begin{equation}
 G_{1} := -\ii \ell_{B}\frac{\partial}{\partial x_{2}} - \frac{1}{2 \ell_{B}} x_{1} ,\qquad
 G_{2} : = -\ii \ell_{B}\frac{\partial}{\partial x_{1}} + \frac{1}{2 \ell_{B}} x_{2},\label{eq:dual-momenta}
\end{equation}
which are again essentially self-adjoint with cores ${C}^\infty_{\rm c}\big(\R^2\big)$ or ${S}\big(\R^2\big)$.
The magnetic momenta and their duals satisfy the following commutation relations
\[
 [K_{1}, K_{2}] = -\ii\mathbf{1} ,\qquad [G_{1}, G_{2}] = -\ii\mathbf{1} ,\qquad [K_{i}, G_{j}] = 0 .
\]

The magnetic momenta $K_{1}$, $K_{2}$ are the infinitesimal generators of the \emph{magnetic translations} $U_B(a):=\expo{-\frac{\ii}{\ell_B}(a_1K_1+a_2K_2)}$
for all $a:=(a_1,a_2)\in\R^2$. A direct computation provides
\[
(U_B(a)\psi)(x) = \Phi_{B}(a,x) \psi(x-a), \qquad a \in \R^2 ,\qquad \psi \in L^2\big(\R^2\big) .
\]
In the same way, $G_{1}$, $G_{2}$ are the infinitesimal generators of the \emph{dual
 magnetic translations} $V_B(a):=\expo{-\frac{\ii}{\ell_B}(a_1G_2+a_2G_1)}$, and in this case one has
\[
(V_B(a)\psi)(x) = \Phi_{B}(x,a) \psi(x-a), \qquad a \in \R^2 ,\quad \psi \in L^2\big(\R^2\big) .
 \]
The maps $U_B,V_B \colon \R^2 \to \bb{B}\big(L^2\big(\R^2\big)\big)$ define two mutually commuting projective unitary representations of~$\R^2$, with respective group 2-cocycles $\Phi_{B}$ and $\Phi_{B}^{-1}$. More explicitly,
 \begin{gather*}
 U_B(a)U_B(b) = \Phi_{B}(a,b) U_B(a+b),\\
 V_B(a)V_B(b) = \Phi_{B}(b,a) V_B(a+b),\qquad a,b \in \R^2 .
\end{gather*}

Let us introduce the \emph{creation} and \emph{annihilation} operators
\[
 \mathfrak{a}^\pm := \frac{1}{\sqrt{2}}(K_{1} \pm \ii K_{2}), \qquad \mathfrak{b}^\pm : = -\frac{1}{\sqrt{2}}(G_{1} \pm \ii G_{2}) .
\]
The operators $\rr{a}^{\pm}$ and $\rr{b}^{\pm}$, are closable operators initially defined on the dense domain ${C}^{\infty}_c\big(\R^2\big)\subset {S}\big(\R^2\big)$. Moreover, $\rr{a}^{-}$ and $\rr{b}^{-}$ are the adjoint to $\rr{a}^{+}$ and $\rr{b}^{+}$, respectively. These operators meet the following canonical commutation relations (CCR)
\begin{gather*}
\big[\mathfrak{a}^-,\mathfrak{a}^{+}\big] = \mathbf{1} = \big[\mathfrak{b}^-, \mathfrak{b}^{+}\big] ,\qquad
\big[\mathfrak{a}^\pm,\mathfrak{b}^\pm\big] = 0 = \big[\mathfrak{a}^\pm,\mathfrak{b}^{\mp}\big] .%\label{eq:cc2}
\end{gather*}
Let $\psi_{0}\in {S}\big(\R^2\big)$ be the normalized function
\begin{equation*}%\label{eq:herm1}
\psi_{0}(x) := \frac{1}{\ell_B\sqrt{2\pi}} \expo{-\frac{|x|^2}{4\ell_B^2}} .
\end{equation*}
 A direct computation shows that $\rr{a}^{-}\psi_{0}=0=\rr{b}^{-}\psi_{0}$.
Acting on $\psi_{0}$ with the creation operators one defines the system of
\emph{generalized Laguerre functions}. Let $\N_0:=\N\cup\{0\}$.
\begin{Definition}[generalized Laguerre functions]
For every pair $(n,m)\in\N_0^2$ the associated Laguerre function $\psi_{n,m}$ is defined by
 \begin{equation}\label{eq:herm2}
\psi_{n,m} := \frac{1}{\sqrt{n!m!}} (\rr{a}^{+})^{n}(\rr{b}^{+})^{m}\psi_{0} .
\end{equation}
\end{Definition}

From the definition it follows that $\psi_{0,0}=\psi_0$.
Evidently $\psi_{n,m}\in {S}\big(\R^2\big)$ for every $(n,m)\in\N^2_0$. Moreover,
it is well known that
$\big\{\psi_{n,m} \,|\, (n,m)\in\N^2_0\big\}$ is a~complete orthonormal system in~$L^2\big(\R^2\big)$ (see, e.g., \cite[Theorem~1.3.2]{thangavelu-93}). We will refer to this system as the (magnetic) \emph{Laguerre basis}.
The set of the finite linear combinations of elements of this basis defines a
dense subspace $\s{F}_B\subset {S}\big(\R^2\big)$ which is invariant under by the action of $\rr{a}^{\pm}$ and $\rr{b}^{\pm}$.
Moreover, $\rr{a}^{\pm}$ and $\rr{b}^{\pm}$ are closable operators on~$\s{F}_B$.
The normalized functions~\eqref{eq:herm2} can be expressed
as~\cite{johnson-lippmann-49,raikov-warzel-02}
\begin{equation}\label{eq:lag_pol}
\psi_{n,m}(x) := \psi_0(x)\ \sqrt{\frac{n!}{m!}}\left[\frac{x_1+\ii x_2}{\ell_B\sqrt{2}}\right]^{m-n}L_{n}^{(m-n)}\left(\frac{|x|^2}{2\ell_B^2}\right) ,
\end{equation}
where
\[
 L_n^{(\alpha)}\left(\zeta\right) := \sum_{j=0}^{n}\frac{(\alpha+n)(\alpha+n-1)\cdots(\alpha+j+1)}{j!(n-j)!}\left(-\zeta\right)^j ,\qquad\alpha,\zeta\in \R
\]
are the \emph{generalized Laguerre polynomial} of degree $m$ (with the usual convention $0!=1$)~\cite[Section~8.97]{gradshteyn-ryzhik-07}.

The Landau Hamiltonian can be expressed in function of the {creation}-{annihilation} opera\-tors~$\rr{a}^\pm$ as follows:
\begin{equation*}%\label{eq:Lh_no}
H_B = \mathcal{E}_{B}\left(\rr{a}^{+}\rr{a}^{-}+\frac{1}{2}\mathbf{1}\right) = \mathcal{E}_{B}\left(\rr{a}^{-}\rr{a}^{+}-\frac{1}{2}\mathbf{1}\right) .
\end{equation*}
The first consequence is that any Laguerre vector $\psi_{n,m}$ is an eigenvector of $H_B$. This implies that the Laguerre basis provides an orthonormal system which diagonalizes~$H_B$ according to
\[
 H_B\psi_{n,m} = \mathcal{E}_{B}\left(n+\frac{1}{2}\right)\psi_{n,m},\qquad (n,m)\in\N_0^2 .
\]
Hence, the spectrum of $H_B$ is a sequence of eigenvalues given by
\begin{equation*}%\label{eq:spec_H_B}
\sigma(H_B)=\left.\left\{E_{n}:=\mathcal{E}_{B}\left(n+\frac{1}{2}\right)\, \right| \, n\in\N_0\right\} ,
\end{equation*}
and $H_B$ turns out to be essentially self-adjoint also on the core~$\s{F}_B$.
The eigenvalue $E_{n}$ is called the $n$-th \emph{Landau level}.

Each Landau level is infinitely degenerate. Let $\rr{H}_n\subset L^2\big(\R^2\big)$ be the eigenspace relative to the $n$-th eigenvalue of $H_B$. Clearly, $\rr{H}_n$ is spanned by $\psi_{n,m}$ with $m\in\N_0$ and the spectral projection $\Pi_{n}\colon L^2\big(\R^2\big)\to\rr{H}_n$ is described (in Dirac notation) by
\begin{equation*}%\label{eq:Lan_proj}
\Pi_{n} := \sum_{m=0}^{\infty}\ketbra{\psi_{n,m}}{\psi_{n,m}} .
\end{equation*}
We refer to $\Pi_{n}$ as the $n$-th \emph{Landau projections}.
One infers from~\eqref{eq:herm2} the recursive relations
\begin{equation*}%\label{eq:shift_P}
\Pi_{n} = \frac{1}{n} \rr{a}^{+} \Pi_{n-1} \rr{a}^{-} ,\qquad \Pi_{n} = \frac{1}{n+1} \rr{a}^{-}\Pi_{n+1}\rr{a}^{+},
\end{equation*}
and after iterating one gets
\[
 \Pi_{n} = \frac{1}{n!} (\rr{a}^{+})^n \Pi_{0} (\rr{a}^{-})^n .
\]

In the following we will make use of the anti-linear operator $J$ defined on $\psi\in L^2\big(\R^2\big)$ by
\[
(J\psi)(x) := \overline{\psi^-(x)} \qquad\text{with}\quad \psi^-(x) := \psi(-x) .
\]
This is an anti-linear self-adjoint involution, i.e.,
$J^{-1} = J = J^{*}$.
Moreover, a direct check shows
\begin{alignat*}{3}
& J\mathfrak{a}^-J = -\ii\mathfrak{b}^-,\qquad && J\mathfrak{a}^{+}J = \ii \mathfrak{b}^{+},&\\
& JK_{1}J = G_{2} ,\qquad && JK_{2}J= G_{1} .&%\label{eq:complex-conjugation}
\end{alignat*}

The magnetic momenta~\eqref{eq:momenta} and their duals~\eqref{eq:dual-momenta} can be used to describe the two-dimensional isotropic \emph{harmonic oscillator},
\begin{equation}\label{eq:harm_osc}
 Q_{B} : = \frac{1}{2}\big(K_{1}^{2} + K_{2}^{2} + G_{1}^{2} + G_{2}^{2}\big) = \mathfrak{a}^{+}\mathfrak{a}^{-} + \mathfrak{b}^{+}\mathfrak{b}^{-} + {\bf 1} .
\end{equation}
The last equality
suggests that $Q_B$ is a self-adjoint operator diagonalized by the Laguerre
basis. Indeed one has that
\[
Q_B \psi_{n,m} = (n+m+1) \psi_{n,m},\qquad (n,m)\in\N^2_0 .
\]
Then, $Q_B$ has a pure point positive spectrum given by
\[
\sigma(Q_B) = \{\lambda_j:=j+1\, |\, j\in\N_0 \}
\]
and every eigenvalue $\lambda_j$ has a finite {multiplicity} $\text{Mult}[\lambda_j]=j+1$. The eigenspace associated to~$\lambda_j$ is spanned by $\{\psi_{n,m} \,|\, n+m=j\}$. Evidently, $Q_{B}$~commutes with~$H_{B}$.

\subsection{The magnetic group algebra}\label{sec:magn_alg}
Following~\cite{edwards-lewis-69, zeller-meier-68}, we will introduce the twisted group algebra associated to the group $\R^2$ and the 2-cocycle (multiplier)~$\Phi_B$.

Let $\s{L}^1_B:=\big(L^1\big(\R^2\big), \ast,{}^\ast,\nnorm{\ }_{B,1}\big)$ be the space $L^1\big(\R^2\big)$ endowed with the (scaled) norm
\[
 \nnorm{f}_{B,1} := \frac{1}{2\pi\ell_B^2}\lVert f \rVert _{L^1}\:= \frac{1}{2\pi\ell_B^2}\int_{\R^2}\dd x\, |f(x)| ,
\]
the \emph{magnetic convolution}\footnote{Also know as \emph{twisted convolution}
or \emph{convolution gauche}.} product
\begin{equation}\label{eq:magn_conv_01}
(f * g) (x) := \frac{1}{2\pi \ell_{B}^{2}}\int_{\mathbb{R}^{2}}\dd y\, f(y) g(x-y) \Phi_{B}(y, x)
\end{equation}
and the involution
\[
 f^{*}(x) = \overline{f(-x)}
\]
for all $f,g\in L^1\big(\R^2\big)$.
With these operations $\s{L}^1_B$ becomes a Banach $\ast$-algebra~\cite[Theorem~1]{edwards-lewis-69}. In particular, it holds true that
\[
\nnorm{f * g}_{B,1} \leqslant \nnorm{f}_{B,1} \nnorm{g}_{B,1} .
\]
We will refer to $\s{L}^1_B$ as the \emph{magnetic} group algebra of $\R^2$.

\begin{Remark}[singular limit]
 It is worth noting that the whole algebraic structure of $\s{L}^1_B$ defined
 above becomes singular in the limit $B\to0$, which corresponds to $\ell_B\to\infty$. For
 this reason $B\to0$ represents a singular limit for $\s{L}^1_B$, which therefore
 deserves the name of \emph{magnetic} algebra. Since $B\neq 0$ implies that the
 multiplier $\Phi_B$ is non trivial it follows that $\s{L}^1_B$ is always
 noncommutative as a consequence of~\cite[Theorem~7]{edwards-lewis-69}.
\end{Remark}

The Banach algebra $\s{L}^1_B$ has several interesting $\ast$-subalgebras. We will
be mainly interested in the \emph{Schwartz} $\ast$-subalgebra
$\s{S}_{B}:=\s{L}^1_B\cap{S}\big(\R^2\big)$ made of Schwartz functions.\footnote{The symbol
 of intersection in the definition $\s{S}_{B}:=\s{L}^1_B\cap{S}\big(\R^2\big)$ must be
 understood as \virg{the space ${S}\big(\R^2\big)$ endowed with the algebraic structure of~$\s{L}^1_B$}.}
Let us recall that the space of Schwartz functions is topologized by the
Fr\'echet topology induced by the norms
\[
p_k(f) := \sup_{x\in\R^2}\big\{\langle x\rangle^k\,|\,\partial^\alpha f(x)| \big| |\alpha|\leqslant k \big\} ,\qquad k\in\N_0,
\]
where
\begin{equation}\label{eq:jap_brack}
\langle x\rangle := \sqrt{1+|x|^2} ,\qquad x \in \R^2
\end{equation}
is the \emph{Japanese bracket},
 $\alpha=(\alpha_1,\alpha_2)\in\N_0^2$ is a multi-index with $|\alpha|=\alpha_1+\alpha_2$
 and $\partial^\alpha$ is a short notation for $\partial^{\alpha_1}_{x_1}\partial^{\alpha_2}_{x_2}$.
 the space $\s{S}_{B}$ is closed and the algebraic operations are continuous with respect to the
Fr\'echet topology of the Schwartz functions. A proof of these well-known facts can be found in~\cite[p.~870]{gracia-varilly-88} or in~\cite[Proposition~8.11]{folland-99}. The density of~$\s{S}_{B}$ in
 $\s{L}^1_B$ follows from the density of ${S}\big(\R^2\big)$
in $L^1\big(\R^2\big)$~\cite[Corollary~4.23]{brezis-87}. In summary one has that
\begin{Proposition}\label{prop:frech-sub}
$\s{S}_{B}$ is a dense Fr\'echet $\ast$-subalgebras of~$\s{L}^1_B$.
\end{Proposition}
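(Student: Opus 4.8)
The plan is to establish Proposition~\ref{prop:frech-sub} by checking three things in turn: that $\s{S}_B$ is a $\ast$-subalgebra of $\s{L}^1_B$, that it carries the Fr\'echet topology of $S(\R^2)$ as a closed subspace, and that it is dense in $\s{L}^1_B$.

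First I would verify the algebraic closure. Since $S(\R^2)\subset L^1(\R^2)$, every Schwartz function is automatically an element of $\s{L}^1_B$ as a set, so $\s{S}_B$ is well defined. The involution $f\mapsto f^*$ with $f^*(x)=\overline{f(-x)}$ manifestly maps $S(\R^2)$ into itself, so it remains to show that the magnetic convolution~\eqref{eq:magn_conv_01} of two Schwartz functions is again Schwartz. The key observation is that the twisting factor $\Phi_B(y,x)=\expo{\ii\frac{y\wedge x}{2\ell_B^2}}$ has modulus one and all its derivatives in $x$ (at fixed $y$) grow only polynomially in $y$. Differentiating under the integral sign and using the rapid decay of $f$ and $g$ together with the elementary Peetre-type inequality $\langle x-y\rangle^{-k}\leqslant 2^{k/2}\langle x\rangle^{-k}\langle y\rangle^{k}$, one controls each seminorm $p_k(f*g)$ by a finite combination of seminorms $p_j(f)\,p_l(g)$; the polynomial factors coming from $\partial^\alpha\Phi_B$ are absorbed by the extra decay available in $f$. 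I would cite~\cite[Proposition~8.11]{folland-99} for this well-known fact about twisted convolution rather than redo the estimate.

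Next, the topological statement. One gives $\s{S}_B$ the Fr\'echet topology defined by the seminorms $p_k$. That $\s{S}_B$ is \emph{closed} in $S(\R^2)$ with respect to this topology is immediate since $\s{S}_B$ equals $S(\R^2)$ as a topological vector space — the intersection notation only changes the multiplicative structure, not the underlying space, as the footnote emphasises — so there is literally nothing to prove here beyond noting the identification. Continuity of the algebraic operations in the Fr\'echet topology follows from the same seminorm estimates used in the previous step (bilinearity plus the bound $p_k(f*g)\lesssim \sum p_j(f)p_l(g)$ gives joint continuity of $*$, and $p_k(f^*)=p_k(f)$ gives continuity of the involution). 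Again I would point to~\cite[p.~870]{gracia-varilly-88} or~\cite[Proposition~8.11]{folland-99}.

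Finally, density: since $S(\R^2)$ is dense in $L^1(\R^2)$ in the $L^1$-norm~\cite[Corollary~4.23]{brezis-87}, and the norm $\nnorm{\ }_{B,1}$ differs from $\lVert\ \rVert_{L^1}$ only by the fixed positive constant $(2\pi\ell_B^2)^{-1}$, density in $\s{L}^1_B$ is the same statement; hence $\overline{\s{S}_B}^{\,\nnorm{\ }_{B,1}}=\s{L}^1_B$. The only mild subtlety — and the part I would watch most carefully — is keeping the two roles of the Schwartz space straight: as a \emph{subalgebra} $\s{S}_B$ sits inside $\s{L}^1_B$ with the magnetic product, but its natural topology is the intrinsic Fr\'echet one, which is strictly finer than the one induced by $\nnorm{\ }_{B,1}$; the statement ``dense Fr\'echet $\ast$-subalgebra'' should be read as ``a $\ast$-subalgebra which, in its own Fr\'echet topology, is complete, and which is $\nnorm{\ }_{B,1}$-dense in $\s{L}^1_B$.'' With that reading all three ingredients are routine consequences of the cited references.
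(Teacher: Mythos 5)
Your proof is correct and takes essentially the same route as the paper: both reduce the stability of $\s{S}_B$ under magnetic convolution and involution, together with the continuity of the operations in the Fr\'echet topology, to the cited references (Gracia-Bond\'{\i}a--V\'arilly and Folland), and both obtain density from the density of $S\big(\R^2\big)$ in $L^1\big(\R^2\big)$ via Brezis. The extra detail you give about controlling the seminorms via a Peetre-type inequality and differentiating under the integral is exactly what those citations contain; the paper itself simply refers to them without reproducing the estimate.
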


\begin{Remark}
 Another interesting $\ast$-subalgebra of $\s{L}^1_B$, often considered in the
 literature (see, e.g., \cite{bellissard-elst-schulz-baldes-94,lenz-99}), is made
 up of compactly supported continuous functions
 $\s{L}^1_{B,{\rm c}}:=\s{L}^1_B\cap{C}_{c}\big(\R^2\big)$. The algebraic structure is
 preserved by the fact that the magnetic twisted convolution of two compactly supported
 continuous functions is again a compactly supported continuous function. This
 fact can be proved exactly as in~\cite[Propositions~4.18 and~4.19]{brezis-87}, where the presence of the 2-cocycle $\Phi_B$ is harmless. The density follows from the density of ${C}_{c}\big(\R^2\big)$ in
 $L^1\big(\R^2\big)$~\cite[Theorem~4.12]{brezis-87}. In the following we will not make
 use of this subalgebra.
\end{Remark}

The algebra $\s{L}^1_B$ has no identity but admits approximate identities~\cite[Theorem 4]{edwards-lewis-69}. For instance, an approximate identity is given by the sequence $\{\eta_n\}_{n\in\N}\in \s{L}^1_{B,{\rm c}}\cap \s{S}_{B}$ defined by
\begin{equation*}%\label{eq:approx_iden}
\eta_n(x) := Cn^2\eta(nx) ,\qquad\eta(x) := \chi_{D_1}(x)\expo{\frac{1}{|x|^2-1}} ,
\end{equation*}
where $\chi_{D_1}$ is the characteristic function of the unit ball $D_1:=\big\{x\in\R^2 \,|\, |x|\leqslant1\big\}$ and $C:=\nnorm{\eta}^{-1}_{B,1}$ is a normalization constant.

Since the Schwartz space ${S}\big(\R^2\big)$ is invariant under the action of the operators $\rr{a}^\pm$ and $\rr{b}^\pm$ one can investigate the
 relation between the magnetic twisted convolution and these operators.
\begin{Proposition}\label{prop:acvt_lad_op}
For every pair $f,g \in\s{S}_{B}$, and for every $n\in\N_0$ the following relations hold true
\begin{gather}
\big(\mathfrak{b}^{\pm}\big)^n(f*g) = f*\big[\big(\mathfrak{b}^{\pm}\big)^ng\big] ,\nonumber\\
\big(\mathfrak{a}^{\pm}\big)^n(f*g) = \big[\big(\mathfrak{a}^{\pm}\big)^nf\big]*g . \label{eq:convolution-annihilation}
\end{gather}
\end{Proposition}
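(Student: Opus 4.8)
The plan is to reduce everything to a single computation: how a first-order ladder operator interacts with the magnetic convolution \eqref{eq:magn_conv_01}, and then iterate. Recall from \eqref{eq:momenta} and \eqref{eq:dual-momenta} that the ladder operators are explicit first-order differential operators in $x$, namely linear combinations of $\partial_{x_1},\partial_{x_2}$ and the multiplication operators $x_1,x_2$. So I would start by writing, for $f,g\in\s{S}_B$,
\[
(f*g)(x) = \frac{1}{2\pi\ell_B^2}\int_{\R^2}\dd y\, f(y)g(x-y)\expo{\ii\frac{y\wedge x}{2\ell_B^2}},
\]
and simply apply the differential operator defining, say, $\mathfrak a^+$ (or $\mathfrak b^-$) to this integral, differentiating under the integral sign (justified since $f,g$ are Schwartz, so the integrand and all its $x$-derivatives are dominated uniformly on compacta by integrable functions). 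Each $\partial_{x_j}$ hits either $g(x-y)$ or the phase $\Phi_B(y,x)=\expo{\ii(y_1x_2-y_2x_1)/2\ell_B^2}$; when it hits the phase it produces a factor linear in $y$. Multiplication by $x_j$ can be rewritten using $x_j=(x_j-y_j)+y_j$. The key algebraic point is then that, after regrouping, the ``$y$-part'' of the operator acts on $f(y)$ and the ``$(x-y)$-part'' acts on $g(x-y)$, and the extra phase factors land exactly so as to reproduce a twisted convolution. Concretely, I expect the bookkeeping to show
\[
\mathfrak a^{\pm}(f*g) = (\mathfrak a^{\pm}f)*g, \qquad \mathfrak b^{\pm}(f*g) = f*(\mathfrak b^{\pm}g),
\]
where the asymmetry between $\mathfrak a$ and $\mathfrak b$ comes precisely from the sign in the cocycle $\Phi_B$ (the magnetic translations $U_B$ and the dual ones $V_B$ carry cocycles $\Phi_B$ and $\Phi_B^{-1}$, so $\mathfrak a^\pm$ — generator of the $U_B$ side — commutes through from the left while $\mathfrak b^\pm$ commutes through from the right). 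An efficient way to organize this, rather than brute-force differentiation, is to use the intertwining identities $\mathfrak a^\pm U_B(a)=U_B(a)(\mathfrak a^\pm - c_\pm(a))$ (and similarly for $\mathfrak b^\pm$, $V_B$) together with the representation $f*g = \frac{1}{2\pi\ell_B^2}\int f(y)\,(U_B(y)g)(x)\,\dd y$ of the twisted convolution in terms of magnetic translations; then $\mathfrak a^\pm$ can be moved inside and transferred onto $f$.

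Once the $n=1$ identities are established, the general case \eqref{eq:convolution-annihilation} follows by a trivial induction on $n$: $(\mathfrak a^\pm)^n(f*g) = (\mathfrak a^\pm)^{n-1}((\mathfrak a^\pm f)*g) = \cdots = ((\mathfrak a^\pm)^n f)*g$, using at each step that $\mathfrak a^\pm f\in\s{S}_B$ (the Schwartz space, hence $\s{S}_B$, is invariant under the ladder operators, as already noted in Section \ref{sec:landau_ham}), so the induction hypothesis applies; likewise for $\mathfrak b^\pm$.

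The main obstacle is purely the combinatorial bookkeeping in the $n=1$ step — keeping track of which differentiations hit $g(x-y)$ versus the cocycle phase, and checking that the phase factors produced recombine into $\Phi_B(y,x)$ rather than something else, so that the result is genuinely a twisted convolution of the same type. Using the magnetic-translation formulation above largely sidesteps this, turning it into a short commutator computation $\mathfrak a^\pm U_B(y) = U_B(y)(\mathfrak a^\pm - (\text{scalar}(y)))$ plus the observation that the scalar term, integrated against $f(y)$, is exactly $\mathfrak a^\pm$ acting on $f$ in the position representation. There are no analytic subtleties beyond differentiation under the integral sign, which is immediate for Schwartz data.
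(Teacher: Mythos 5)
Your primary route — differentiate under the integral sign, treating the $\mathfrak{b}^\pm$ case from the original form of the twisted convolution and the $\mathfrak{a}^\pm$ case from the form obtained by the substitution $y\mapsto x-y$, then iterate — is exactly what the paper does, and the iteration to general $n$ is indeed immediate since $\s{S}_B$ is invariant under the ladder operators. So the core of your proposal is correct and matches the paper's proof.

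However, the ``efficient'' magnetic-translation shortcut you sketch for $\mathfrak{a}^\pm$ does not work as stated. Using $f*g = \frac{1}{2\pi\ell_B^2}\int f(y)\,(U_B(y)g)\,\dd y$ together with $\mathfrak{a}^\pm U_B(y) = U_B(y)\big(\mathfrak{a}^\pm - c_\pm(y)\big)$, one obtains
\[
\mathfrak{a}^\pm(f*g) \;=\; f*(\mathfrak{a}^\pm g) \;-\; (c_\pm f)*g ,
\]
where $c_\pm(y)$ is a \emph{scalar}, linear in $y$. Your claim that ``the scalar term, integrated against $f(y)$, is exactly $\mathfrak{a}^\pm$ acting on $f$'' is false: $c_\pm f$ is only multiplication of $f$ by a linear function, whereas $\mathfrak{a}^\pm f$ also contains first-order derivatives of $f$, and the two are not equal. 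What you have derived is a different (true, but unproved here) identity, not $\mathfrak{a}^\pm(f*g) = (\mathfrak{a}^\pm f)*g$. The clean operator-theoretic version of the trick is the exact analogue of the paper's change of variable: write the convolution in its \emph{dual} form $f*g = \frac{1}{2\pi\ell_B^2}\int g(y)\,(V_B(y)f)\,\dd y$, where $V_B$ are the dual magnetic translations. Since $\mathfrak{a}^\pm$ is built from $K_1,K_2$ and $[K_i,G_j]=0$, one has $[\mathfrak{a}^\pm,V_B(y)]=0$, so $\mathfrak{a}^\pm$ slides through with \emph{no} correction term and lands on $f$. Symmetrically, $[\mathfrak{b}^\pm,U_B(y)]=0$, and one uses the $U_B$-representation for that case. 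In short: move each ladder operator past the translations that \emph{commute} with it; pushing $\mathfrak{a}^\pm$ past $U_B$ and absorbing the residue into $f$ is precisely the step that fails.
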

\begin{proof} Integrating under the integral sign, and a suitable change of variable
provides $G_j(f*g)=f*(G_j g)$, $j=1,2$, where~$G_j$ are defined
by~\eqref{eq:dual-momenta}. Since $\mathfrak{b}^{\pm}$ are linear combinations of~$G_1$ and~$G_2$, one obtains by linearity the first equation of~\eqref{eq:convolution-annihilation}. To prove the second equality let us
observe that the change of variable $x-y\mapsto y$ in the
definition~\eqref{eq:magn_conv_01} provides
\begin{equation}\label{eq:magn_conv_020}
(f * g) (x) = \frac{1}{2\pi \ell_{B}^{2}}\int_{\mathbb{R}^{2}}\dd y\, f(x-y) g(y) \Phi_{B}(x, y) .
\end{equation}
Then, integrating under the integral sign, and a suitable change of variables provides $K_j(f*g)=(K_jf)*g$, $j=1,2$, where $K_j$ are defined by~\eqref{eq:momenta}. The second equation of~\eqref{eq:convolution-annihilation} follows by observing that $\mathfrak{a}^{\pm}$ are linear combinations of~$K_1$ and~$K_2$.
\end{proof}

The latter result provides the generalization of~\cite[Proposition~4.20]{brezis-87} to the case of the convolution twisted by the multiplier $\Phi_B$. Moreover, similar relations can be deduced by the formulas in~\cite[p.~870]{gracia-varilly-88}.

Let $\psi_{n,m}$ be the Laguerre function defined by~\eqref{eq:herm2}. The next
result describes the magnetic twisted convolution of two Laguerre functions.
\begin{Lemma}\label{lemm:conv_laguer}
For every pair of Laguerre functions $\psi_{k,j}$, $\psi_{n,m}$ the relation
\begin{equation}\label{eq:conv_lag_scal}
\psi_{k,j}\ast\psi_{n,m} = \frac{1}{\sqrt{2\pi}\ell_B} \delta_{j,n} \psi_{k,m}
\end{equation}
holds true.
\end{Lemma}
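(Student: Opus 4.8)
The plan is to verify \eqref{eq:conv_lag_scal} by a direct computation that reduces the magnetic twisted convolution of Laguerre functions to the algebra of creation and annihilation operators, using Proposition~\ref{prop:acvt_lad_op} as the main lever. First I would observe that since $\psi_{k,j}=\frac{1}{\sqrt{k!j!}}(\rr{a}^+)^k(\rr{b}^+)^j\psi_0$ and $\psi_{n,m}=\frac{1}{\sqrt{n!m!}}(\rr{a}^+)^n(\rr{b}^+)^m\psi_0$, the relations \eqref{eq:convolution-annihilation} let me pull all the $\rr{a}^+$ factors out to the left onto the first argument and all the $\rr{b}^+$ factors out to the left onto the second argument, so that
\[
\psi_{k,j}\ast\psi_{n,m}=\frac{1}{\sqrt{k!j!n!m!}}\,(\rr{a}^+)^k(\rr{b}^+)^m\big[\big((\rr{b}^+)^j\psi_0\big)\ast\big((\rr{a}^+)^n\psi_0\big)\big].
\]
Thus everything is reduced to computing the \virg{seed} convolutions $\big((\rr{b}^+)^j\psi_0\big)\ast\big((\rr{a}^+)^n\psi_0\big)$, i.e.\ $\sqrt{j!}\,\psi_{0,j}\ast\sqrt{n!}\,\psi_{n,0}$.

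The heart of the matter is therefore the special case $\psi_{0,j}\ast\psi_{n,0}$, and here I would exploit the structure of the twisted convolution more directly. The key elementary fact is that $\psi_0$ is a Gaussian annihilated by $\rr{a}^-$ and $\rr{b}^-$, and that the twisted convolution product \eqref{eq:magn_conv_01} of two Gaussians is again a Gaussian that can be computed explicitly; in particular $\psi_0\ast\psi_0=\frac{1}{\sqrt{2\pi}\ell_B}\psi_0$, which is the base case $k=j=n=m=0$ of the claim. To handle general $n$ on the right and $j$ on the left I would instead argue more structurally: the operator $L_f:g\mapsto f\ast g$ is exactly the left regular (twisted) representation, and one checks from \eqref{eq:magn_conv_01} that $L_{\rr{b}^+f}=L_f\,\rr{a}^+$ acting appropriately — more precisely, that right-convolution intertwines the $\rr{a}^\pm$ action on the right factor with a multiplication/ladder action. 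Concretely, using $\rr{a}^-\psi_0=\rr{b}^-\psi_0=0$ together with \eqref{eq:convolution-annihilation} and the adjoint relations (recall $\rr{a}^-=(\rr{a}^+)^*$, $\rr{b}^-=(\rr{b}^+)^*$ on $\s{F}_B$), one shows that $\psi_{0,j}\ast\psi_{n,0}$ vanishes unless the ladder indices match, which forces the factor $\delta_{j,n}$, and when $j=n$ the repeated application of the commutation relations $[\rr{a}^-,\rr{a}^+]=\mathbf 1=[\rr{b}^-,\rr{b}^+]$ produces exactly the combinatorial factor that cancels against $\sqrt{j!n!}$ and leaves $\frac{1}{\sqrt{2\pi}\ell_B}\psi_0$.

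Assembling the two steps: plugging the seed identity $\psi_{0,j}\ast\psi_{n,0}=\frac{1}{\sqrt{2\pi}\ell_B}\,\delta_{j,n}\,\frac{1}{\sqrt{j!n!}}\,(\text{normalization})\,\psi_0$ back into the reduction formula above and reorganizing the factorials gives
\[
\psi_{k,j}\ast\psi_{n,m}=\frac{1}{\sqrt{2\pi}\ell_B}\,\delta_{j,n}\,\frac{1}{\sqrt{k!m!}}\,(\rr{a}^+)^k(\rr{b}^+)^m\psi_0=\frac{1}{\sqrt{2\pi}\ell_B}\,\delta_{j,n}\,\psi_{k,m},
\]
which is precisely \eqref{eq:conv_lag_scal}. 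The main obstacle I anticipate is the orthogonality/selection step — rigorously extracting the $\delta_{j,n}$ and getting the combinatorial constants to come out to exactly $1$. The cleanest route is probably to test $\psi_{0,j}\ast\psi_{n,0}$ against the basis: compute $\langle\psi_{r,s},\psi_{0,j}\ast\psi_{n,0}\rangle$, move the ladder operators across the twisted-convolution pairing using \eqref{eq:convolution-annihilation} and the CCR, and reduce to the already-known Gaussian base case $\psi_0\ast\psi_0=\frac{1}{\sqrt{2\pi}\ell_B}\psi_0$; the bookkeeping of factorials is then forced. One should double-check the $\ell_B$-dependent normalization, since the scaled product \eqref{eq:magn_conv_01} carries the prefactor $\frac{1}{2\pi\ell_B^2}$ and $\psi_0$ carries $\frac{1}{\ell_B\sqrt{2\pi}}$, and these must combine to the single power $\frac{1}{\sqrt{2\pi}\ell_B}$ in the statement.
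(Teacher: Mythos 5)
Your structural reduction is correct and is in fact a sound alternative to the paper's: by \eqref{eq:convolution-annihilation} you may pull $(\rr{a}^{+})^{k}$ off the left factor and $(\rr{b}^{+})^{m}$ off the right factor, reducing to the seed $\psi_{0,j}\ast\psi_{n,0}$. Your next observation is also right: since $\rr{a}^{-}$ applied to the convolution acts on the left factor (where $\psi_{0,j}$ has first index $0$) and $\rr{b}^{-}$ acts on the right factor (where $\psi_{n,0}$ has second index $0$), the seed is annihilated by both $\rr{a}^{-}$ and $\rr{b}^{-}$ and hence is a multiple $c_{j,n}\psi_0$. Up to this point your argument is clean and slicker than the paper's.

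The gap is in the selection step. You claim that \eqref{eq:convolution-annihilation}, the CCR, and the adjoint relations $\rr{a}^{-}=(\rr{a}^{+})^{*}$, $\rr{b}^{-}=(\rr{b}^{+})^{*}$ will force the $\delta_{j,n}$ and the normalization. They cannot, because Proposition~\ref{prop:acvt_lad_op} only lets $\rr{a}^{\pm}$ reach the \emph{left} factor and $\rr{b}^{\pm}$ reach the \emph{right} factor of a twisted convolution. In $\psi_{0,j}\ast\psi_{n,0}$ the $\rr{b}^{+}$'s sit in the left factor and the $\rr{a}^{+}$'s sit in the right factor, which are precisely the positions the proposition does \emph{not} control; no amount of commutator bookkeeping reaches them, so the $\delta_{j,n}$ will not emerge from algebra alone. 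At bottom one has to do an integral somewhere, exactly as the paper does with the Greiner formula for $I_{k,n}$. What \emph{does} rescue your approach, and keeps it elementary, is to compute the constant by evaluating at the origin: since $\Phi_B(y,0)=1$, the definition \eqref{eq:magn_conv_01} gives $(f\ast g)(0)=\langle\overline{f},g^{-}\rangle_B$; combined with $\overline{\psi_{0,j}}=(-1)^{j}\psi_{j,0}$ (immediate from \eqref{eq:lag_pol}), $\psi_{n,0}^{-}=(-1)^{n}\psi_{n,0}$, and the orthonormality of the Laguerre basis, one gets $(\psi_{0,j}\ast\psi_{n,0})(0)=\dfrac{\delta_{j,n}}{2\pi\ell_B^{2}}$, whence $c_{j,n}=\dfrac{\delta_{j,n}}{\sqrt{2\pi}\ell_B}$ since $\psi_0(0)=\dfrac{1}{\ell_B\sqrt{2\pi}}$. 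This does give a genuinely different proof from the paper's, replacing the Greiner double-integral by the one-line evaluation-at-the-origin plus orthogonality; but as written your proposal does not supply the step where the selection rule actually enters, and the mechanism you name for it would not work.
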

\begin{proof}The proof, borrowed from~\cite[Theorem 3.1]{greiner-80},\footnote{Different proofs of formula~\eqref{eq:conv_lag_scal} are provided in~\cite[equation~(32)]{gracia-varilly-88} and~\cite[Proposition 1.3.2]{thangavelu-93}.} is based on a direct computation. Let us start by computing
\[
(\psi_{k,k}\ast\psi_{n,n})(x) = \frac{1}{2\pi^2\ell^2_B}I_{k,n}\left(\frac{|x|^2}{2\ell_B^2}\right),
\]
where, by using~\eqref{eq:lag_pol} and a suitable rescaling of variables, one gets
\[
I_{k,n}(v) := \int_{\R^2}\dd u \,\expo{-\frac{|u|^2}{2}}\expo{-\frac{|v-u|^2}{2}}\expo{\ii(u\wedge v)}L^{(0)}_k\big(|u|^2\big)L^{(0)}_n\big(|v-u|^2\big) .
\]
This integral can be evaluated with the method described in~\cite[Theorem~3.1]{greiner-80} and gives
\[
I_{k,n}(v) = \delta_{k,n} \pi \expo{-\frac{|v|^2}{2}}L^{(0)}_k\big(|v|^2\big)
\]
and in turn
\[
\psi_{k,k}\ast\psi_{n,n} =\frac{1}{\sqrt{2\pi}\ell_B} \delta_{k,n} \psi_{k,k},
\]
which proves~\eqref{eq:conv_lag_scal} in the special case $j=k$ and $m=n$. The general formula~\eqref{eq:conv_lag_scal} follows from this special case by acting with the operators $\rr{a}^\pm$ and $\rr{b}^\pm$ as described in Proposition~\ref{prop:acvt_lad_op}.
\end{proof}

Let $\s{F}_B$ be the set of the finite linear combinations of Laguerre functions endowed with the algebraic structure of $\s{L}^1_{B}$.
Clearly $\s{F}_B\subset \s{S}_{B}$ and Lemma~\ref{lemm:conv_laguer} ensures that $\s{F}_B$ is closed under product. In fact, one has that $\s{F}_B$ is a $\ast$-subalgebra of $\s{L}^1_B$ which is dense in view of the density of $\s{F}_B$ in $L^1\big(\R^2\big)$~\cite[Theorem~2.5.1]{thangavelu-93}. In summary:
\begin{Proposition}
$\s{F}_B$ is a dense $\ast$-subalgebra of $\s{L}^1_B$.
\end{Proposition}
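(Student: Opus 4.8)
The plan is to establish that $\s{F}_B$ is a dense $\ast$-subalgebra of $\s{L}^1_B$ by verifying three things in turn: that $\s{F}_B$ is contained in $\s{S}_B$ (hence in $\s{L}^1_B$), that it is closed under the magnetic convolution product and the involution, and finally that it is dense. The first point is immediate: each Laguerre function $\psi_{n,m}$ lies in ${S}\big(\R^2\big)$ (as noted after the definition of the Laguerre basis, $\psi_{n,m}\in {S}\big(\R^2\big)$), and ${S}\big(\R^2\big)\subset L^1\big(\R^2\big)$, so finite linear combinations of the $\psi_{n,m}$ belong to $\s{S}_{B}=\s{L}^1_B\cap{S}\big(\R^2\big)$.

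Next I would check closure under the algebraic operations. Closure under convolution is exactly where Lemma~\ref{lemm:conv_laguer} does the work: since $\psi_{k,j}\ast\psi_{n,m} = \frac{1}{\sqrt{2\pi}\ell_B}\delta_{j,n}\psi_{k,m}$, the product of any two Laguerre functions is again a scalar multiple of a Laguerre function, and by bilinearity of the magnetic convolution the product of two finite linear combinations of Laguerre functions is again such a finite linear combination. For closure under the involution one needs that $\psi_{n,m}^{*}$, i.e.\ $x\mapsto\overline{\psi_{n,m}(-x)}$, is again a finite linear combination of Laguerre functions; this follows from the explicit formula~\eqref{eq:lag_pol} — replacing $x$ by $-x$ and conjugating sends $\psi_{n,m}$ to $\psi_{m,n}$ (the Gaussian $\psi_0$ is even and real, $|{-x}|=|x|$, and $(x_1+\ii x_2)\mapsto\overline{-x_1-\ii x_2}=-(x_1-\ii x_2)$ flips the sign of the exponent $m-n$, exchanging the roles of $n$ and $m$ up to a sign which is harmless). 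So $\psi_{n,m}^{*}=\pm\psi_{m,n}\in\s{F}_B$, and by conjugate-linearity $\s{F}_B$ is closed under ${}^{\ast}$.

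For density, the cleanest route is the one already indicated in the statement: density of $\s{F}_B$ in $L^1\big(\R^2\big)$ with its usual norm. The span of the Laguerre functions is dense in $L^1\big(\R^2\big)$ — this is the content of the cited \cite[Theorem~2.5.1]{thangavelu-93} on Laguerre (or Hermite) expansions in $L^p$ — and since the magnetic norm $\nnorm{\ }_{B,1}$ differs from $\lVert\,\cdot\,\rVert_{L^1}$ only by the positive constant factor $\frac{1}{2\pi\ell_B^2}$, density in the $L^1$-norm is the same as density in the $\nnorm{\ }_{B,1}$-norm. Hence $\s{F}_B$ is dense in $\s{L}^1_B$.

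I do not expect a serious obstacle here; the one point requiring a little care is the involution computation, i.e.\ correctly tracking the sign and the exchange $n\leftrightarrow m$ in~\eqref{eq:lag_pol} under $x\mapsto -x$ together with complex conjugation, so that one really lands back inside $\s{F}_B$. Everything else is a routine consequence of Lemma~\ref{lemm:conv_laguer} and the cited density result. (One could alternatively obtain closure under ${}^{\ast}$ abstractly: $\s{F}_B$ is the span of the $\Upsilon_{j\mapsto k}$-type building blocks of~\eqref{eq:intro:basic_op}, and $(\psi_{k,j})^{*}\ast$ acting as a transpose-type operation on indices, but the direct computation from~\eqref{eq:lag_pol} is shorter.)
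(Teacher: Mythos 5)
Your proof follows the same route the paper sketches in the text immediately preceding the Proposition: containment in $\s{S}_B$ is immediate, closure under the magnetic convolution follows from Lemma~\ref{lemm:conv_laguer}, closure under ${}^{\ast}$ by showing $\psi_{n,m}^{*}$ is a scalar multiple of $\psi_{m,n}$, and density in $\s{L}^1_B$ is a norm-rescaling of the density of Laguerre expansions in $L^1\big(\R^2\big)$ cited from Thangavelu.

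One point in the involution step deserves more care. Replacing $x\mapsto -x$ and conjugating turns the factor $\big(\frac{x_1+\ii x_2}{\ell_B\sqrt{2}}\big)^{m-n}$ into $\big(\tfrac{-(x_1-\ii x_2)}{\ell_B\sqrt{2}}\big)^{m-n}$, which is $(-1)^{m-n}\big(\frac{\overline{z}}{\ell_B\sqrt{2}}\big)^{m-n}$; this is \emph{not} literally the factor $\big(\frac{z}{\ell_B\sqrt{2}}\big)^{n-m}$ that appears in $\psi_{m,n}$ from~\eqref{eq:lag_pol}, and the degree and superscript of the Laguerre polynomial also need to change from $L_{n}^{(m-n)}$ to $L_{m}^{(n-m)}$. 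To land on $\psi_{m,n}$ you additionally need the reflection identity for generalized Laguerre polynomials, namely $L_{m}^{(n-m)}(\zeta)=(-\zeta)^{m-n}\tfrac{n!}{m!}\,L_{n}^{(m-n)}(\zeta)$ for $m\geqslant n$, which together with $|x|^2=z\overline{z}$ and the sign $(-1)^{m-n}$ yields $\psi_{n,m}^{*}=\psi_{m,n}$; your argument silently absorbs this step. A cleaner route, avoiding any Laguerre-polynomial gymnastics, is to use the anti-unitary $J$ already introduced in Section~\ref{sec:landau_ham}: since $\psi_{n,m}^{*}=J\psi_{n,m}$, $J\psi_0=\psi_0$, and $J\mathfrak{a}^{+}J=\ii\mathfrak{b}^{+}$ (hence $J\mathfrak{b}^{+}J=\ii\mathfrak{a}^{+}$), one gets from~\eqref{eq:herm2} directly $J\psi_{n,m}=\ii^{\,n+m}\psi_{m,n}$, a scalar multiple of a Laguerre function, which is all you need for closure of $\s{F}_B$ under ${}^{\ast}$.
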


The Laguerre basis $\{\psi_{n,m}\}$ can also be used to give a series representation of the Schwartz space ${S}\big(\R^2\big)$. Let ${S}\big(\N_0^2\big)$ be the space
of \emph{rapidly decreasing} sequences $\{c_{n,m}\}\subset\C$
such that
\begin{equation}\label{eq:fre_top}
r_k(\{c_{n,m}\})^2 := \sum_{(n,m)\in\mathbb{N}_{0}^2} (2n+1 )^k (2m+1 )^k |c_{n,m}|^2 < \infty ,
\end{equation}
for all $k\in\N_0$. It is well known that ${S}\big(\N_0^2\big)$ is a Fr\'echet algebra with respect to the system of semi-norms~\eqref{eq:fre_top}. The relation between ${S}\big(\N_0^2\big)$ and ${S}\big(\R^2\big)$ is described in~\cite[Theorem~6]{gracia-varilly-88}. For sake of completeness, and future utility, we state this important result:
\begin{Proposition}\label{prop:discret_sch}
The map
\[
{S}\big(\N_0^2\big) \ni \{c_{n,m}\} \stackrel{\jmath}{\longmapsto} \sum_{(n,m)\in\N_0^2}c_{n,m} \psi_{n,m} \in {S}\big(\R^2\big)
\]
is a topological isomorphism of Fr\'echet algebras.
\end{Proposition}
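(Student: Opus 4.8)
The plan is to reduce the statement to the classical description of $S\big(\R^2\big)$ in terms of the $2$-dimensional harmonic oscillator and then to translate everything into the Laguerre basis via the spectral decomposition $Q_B\psi_{n,m}=(n+m+1)\psi_{n,m}$. A direct computation from \eqref{eq:momenta}--\eqref{eq:dual-momenta} rewrites \eqref{eq:harm_osc} as $Q_B=-\ell_B^2\Delta+\frac{1}{4\ell_B^2}|x|^2$, i.e.\ $Q_B$ is (a rescaled copy of) the standard isotropic harmonic oscillator on $\R^2$. For such an operator it is classical that $S\big(\R^2\big)=\bigcap_{k\in\N_0}\Dom\big(Q_B^k\big)$ and that the Fr\'echet topology of $S\big(\R^2\big)$ is generated, equivalently to the seminorms $p_k$, by the Hilbert seminorms $f\mapsto\big\|Q_B^k f\big\|_{L^2}$, $k\in\N_0$ (this is exactly the content of \cite[Theorem~6]{gracia-varilly-88}, and is also the Hermite/Laguerre expansion theory of \cite{thangavelu-93}). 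I would take this as the backbone of the argument.

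With this in hand the mechanical part is short. First I would observe that $\jmath$ is a well-defined injective linear map into $L^2\big(\R^2\big)$: since $\{\psi_{n,m}\}$ is an orthonormal basis and $S\big(\N_0^2\big)\subset\ell^2\big(\N_0^2\big)$, the series $\sum c_{n,m}\psi_{n,m}$ converges in $L^2\big(\R^2\big)$, and $c_{n,m}=\langle\psi_{n,m},\jmath(\{c_{n,m}\})\rangle$ recovers the coefficients. Next, by the spectral theorem for the self-adjoint operator $Q_B$, for every $k$ one has $f:=\jmath(\{c_{n,m}\})\in\Dom\big(Q_B^k\big)$ if and only if $\sum_{(n,m)}(n+m+1)^{2k}|c_{n,m}|^2<\infty$, in which case $\big\|Q_B^k f\big\|_{L^2}^2$ equals that sum. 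The two sides of the Proposition are then linked by the elementary inequalities
\[
(2n+1)(2m+1)\ \leqslant\ (n+m+1)^2\ \leqslant\ \big[(2n+1)(2m+1)\big]^2 ,\qquad (n,m)\in\N_0^2 ,
\]
(the first is AM--GM applied to $2n+1$ and $2m+1$, whose mean is $n+m+1$; the second follows from $n+m+1\leqslant(2n+1)(2m+1)$, which holds since $2n+1,2m+1\geqslant1$). Raising to the $k$-th power and summing against $|c_{n,m}|^2$ gives
\[
r_k(\{c_{n,m}\})^2\ \leqslant\ \big\|Q_B^k f\big\|_{L^2}^2\ \leqslant\ r_{2k}(\{c_{n,m}\})^2 ,\qquad k\in\N_0 .
\]

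From this chain everything follows. If $\{c_{n,m}\}\in S\big(\N_0^2\big)$, the right inequality shows $f\in\Dom\big(Q_B^k\big)$ for all $k$, hence $f\in S\big(\R^2\big)$; conversely, for $f\in S\big(\R^2\big)$ the coefficients $c_{n,m}=\langle\psi_{n,m},f\rangle$ satisfy, by the left inequality, $r_k(\{c_{n,m}\})^2\leqslant\big\|Q_B^k f\big\|_{L^2}^2<\infty$ for every $k$, so $\{c_{n,m}\}\in S\big(\N_0^2\big)$ and $\jmath(\{c_{n,m}\})=f$. Thus $\jmath$ is a linear bijection of $S\big(\N_0^2\big)$ onto $S\big(\R^2\big)$, and the same two inequalities, combined with the equivalence of $\{p_k\}$ and $\big\{\big\|Q_B^k\cdot\big\|_{L^2}\big\}$, show that $\jmath$ and $\jmath^{-1}$ are continuous, i.e.\ $\jmath$ is a topological isomorphism of Fr\'echet spaces. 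Finally, multiplicativity is a formal consequence of Lemma~\ref{lemm:conv_laguer}: on finitely supported sequences $\jmath$ intertwines the (suitably normalized) matrix product of $S\big(\N_0^2\big)$ with the magnetic convolution of $\s{S}_B=S\big(\R^2\big)$, and by bilinearity and joint continuity of both products this extends to all of $S\big(\N_0^2\big)$, so $\jmath$ is a topological isomorphism of Fr\'echet algebras.

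The one genuinely analytic ingredient — and the step I expect to be the main obstacle — is the identification $S\big(\R^2\big)=\bigcap_k\Dom\big(Q_B^k\big)$ with matching topologies, i.e.\ the bound of each $p_k(f)$ by a finite linear combination of the $\big\|Q_B^j f\big\|_{L^2}$; everything else is bookkeeping with orthonormality and the two elementary inequalities above. This fact is classical: it follows from the $L^2$-Sobolev embedding in dimension two together with the a priori elliptic estimate $\|f\|_{H^{2k}}+\big\||x|^{2k}f\big\|_{L^2}\leqslant C_k\big(\big\|Q_B^k f\big\|_{L^2}+\|f\|_{L^2}\big)$ for the harmonic oscillator (obtained by expanding $\big\|Q_B^k f\big\|_{L^2}^2$ and inducting on $k$, or from the explicit Mehler kernel of $\expo{-tQ_B}$). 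I would simply invoke it in the form of \cite[Theorem~6]{gracia-varilly-88}.
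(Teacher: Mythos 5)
The paper does not prove this proposition; it merely states it for completeness and cites \cite[Theorem~6]{gracia-varilly-88}. Your sketch supplies a proof, and it is correct; it is also, in substance, the standard route taken in that reference and in \cite{thangavelu-93}. The identity $Q_B=-\ell_B^2\Delta+\frac{1}{4\ell_B^2}|x|^2$ does hold (the mixed terms in $K_1^2+K_2^2$ and $G_1^2+G_2^2$ cancel), and the two inequalities are both right: AM--GM applied to $2n+1$, $2m+1$ gives $(2n+1)(2m+1)\leqslant(n+m+1)^2$, while $n+m+1\leqslant 2n+2m+1\leqslant(2n+1)(2m+1)$ gives the reverse after squaring, so indeed $r_k(\{c_{n,m}\})\leqslant\lVert Q_B^k\jmath(\{c_{n,m}\})\rVert_{L^2}\leqslant r_{2k}(\{c_{n,m}\})$. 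Combined with the spectral-theorem description of $\Dom\big(Q_B^k\big)$ in the Laguerre basis and the classical fact that $S\big(\R^2\big)=\bigcap_k\Dom\big(Q_B^k\big)$ with the Fr\'echet topology generated by the seminorms $\lVert Q_B^k\,\cdot\,\rVert_{L^2}$, this yields the topological Fr\'echet-space isomorphism; multiplicativity then follows from Lemma~\ref{lemm:conv_laguer} by density of finitely supported sequences and joint continuity of the two products, with the scalar $(\sqrt{2\pi}\ell_B)^{-1}$ absorbed into the normalization of the matrix product on $S\big(\N_0^2\big)$ as you note. The one caveat is presentational: do not justify the oscillator characterization of $S\big(\R^2\big)$ by invoking \cite[Theorem~6]{gracia-varilly-88}, because that theorem \emph{is} the present proposition and its proof there rests on exactly this characterization, so the citation would be circular. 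Quote instead a standard source for the identification of Schwartz space with the smooth domain of the harmonic oscillator (e.g., the Hermite expansion theory in \cite{thangavelu-93}, or Reed--Simon Vol.~I), or rely on the elliptic-estimate-plus-Sobolev argument you sketch at the end, which is sound and self-contained.
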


In view of Proposition~\ref{prop:discret_sch} we can look at the elements of $\s{S}_{B}$
as linear combinations with rapidly decreasing coefficients of the Laguerre functions $\psi_{n,m}$.
This representation will be very useful in the following.

\subsection[The C*-algebra of magnetic operators]{The $\boldsymbol{C^{*}}$-algebra of magnetic operators}\label{sec:C-magn_alg}
The norm $\nnorm{\ }_{B,1}$ of the Banach $\ast$-algebra $\s{L}^1_{B}$ does not verify the $C^*$-condition.
For various reasons it is convenient to complete $\s{L}^1_{B}$ to a $C^*$-algebra,
and in general there exist several inequivalent ways to achieve such a completion. Two of these $C^*$-extensions are of particular interest.

The first is obtained as the completion of $\s{L}^1_{B}$ with respect to the
\emph{universal enveloping norm}
\[
 \lVert f\rVert_{\rm u} := \sup\big\{\lVert \rho(f)\rVert \big| \rho\colon \s{L}^1_{B}\to \bb{B}(\s{H}) \text{is a} \ast\text{-representation}\big\},
\]
where ${\bb{B}(\s{H})}$ is the $C^*$-algebra of bounded operators on $\s{H}$. Moreover, in the definition of $\lVert \rVert_{\rm u}$ one can restrict to non-degenerated representations $\rho$.
Since every $\ast$-representation is automatically continuous~\cite[Proposition~2.3.1]{bratteli-robinson-87} the supremum is well defined and
$\lVert f \rVert_{\rm u}\leqslant\nnorm{f}_{B,1}$ for all $f\in \s{L}^1_{B}$
The resulting abstract $C^*$-algebra
\[
\bb{C}_B^{\rm u} := \overline{\s{L}^1_{B}}^{ \lVert \ \rVert_{\rm u}}
\]
 is called the \emph{enveloping algebra} of $\s{L}^1_{B}$.

The second $C^*$-norm is defined through the map $\pi\colon \s{L}^1_{B}\to \bb{B}\big(L^2\big(\R^2\big)\big)$ defined by
\begin{gather*}
 \pi(f)\psi(x) :=
 \frac{1}{2\pi \ell_{B}^{2}} \int_{\mathbb{R}^{2}}\dd y\, f(y-x)\Phi_{B}(x,y) \psi(y)\\
\hphantom{\pi(f)\psi(x)}{} =
 \left(\frac{1}{2\pi \ell_{B}^{2}} \int_{\mathbb{R}^{2}}\dd y\, f(-y)U_B(y)\right) \psi(x),\qquad \psi \in L^2\big(\R^2\big),
\end{gather*}
where the second equality uses the magnetic translations~$U_B(y)$.
The map $\pi$ is evidently linear and, as a consequence of Young's inequality
\[
 \rVert \pi(f)\psi\lVert_{L^2} \leqslant \nnorm{f}_{B,1} \lVert\psi\rVert_{L^2} ,
\]
it follows that $\pi(f)$ is a bounded operator of norm $\|\pi(f)\|\leqslant \nnorm{f}_{B,1}$. Moreover, the relations $\pi(f*g)=\pi(f)\pi(g)$ and
$\pi(f^*)=\pi(f)^*$ can be checked with a direct computation. This means that~$\pi$
is a $\ast$-representation of $\s{L}^1_{B}$, called the \emph{left-regular representation}.
We refer to
\[
\bb{C}_B := \overline{\pi(\s{L}^1_{B})}^{ \lVert \ \rVert}
\]
as the \emph{$C^*$-algebra of magnetic operators}. By construction
$\|\pi(f)\|\leqslant \lVert f \rVert_{\rm u}$. Therefore, there is a canonical $\ast$-morphism
$\lambda\colon \bb{C}_B^{\rm u}\to \bb{C}_B$, {given by the inclusion on
 $\pi(\mathcal{L}^{1}_{B})$}. Since the underlying group~$\R^2$ is \emph{amenable}, a~classical result proved in \cite[Section~5]{zeller-meier-68} implies that $\lambda$ is an
isomorphism,~i.e.,
\[
 \bb{C}_B \simeq \bb{C}_B^{\text{u}} .
\]
This allows us to identify the abstract {enveloping algebra} of $\s{L}^1_{B}$ with the concrete
$C^*$-algebra~$\bb{C}_B$ of magnetic operators on~$L^2\big(\R^2\big)$.
In particular $\pi$ turns out to be injective. It is worth noting that $\bb{C}_B$ inherits from $\s{L}^1_{B}$ the absence of the unit (cf.\ Corollary~\ref{cor:not_unit}).

Let $\bb{V}_B:=C^*\big(V_B(a),a\in\R^2\big)$ be the $C^*$-algebra generated by the dual magnetic translations. Let $\bb{V}_B'$ be the \emph{commutant} of $\bb{V}_B$, i.e., the set of {bounded} operators which commute with every element in~$\bb{V}_B$.
\begin{Proposition}\label{prop:C-commut}
It holds true that
\[
\bb{C}_B \subset\bb{V}_B' .
\]
In particular every magnetic operator in $\bb{C}_B$ is invariant under the
action of the dual magnetic translations.
\end{Proposition}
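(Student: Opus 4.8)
The plan is to show that $\bb{C}_B$ commutes with every dual magnetic translation $V_B(a)$, and then to use that $\bb{C}_B$ is generated (as a $C^*$-algebra) by operators of the form $\pi(f)$ with $f\in\s{L}^1_B$. Since the dual translations $V_B(a)$ are unitary and the set of operators commuting with a fixed unitary is norm-closed, it suffices to check the commutation relation $[\pi(f),V_B(a)]=0$ on the dense subalgebra, or even just for $f\in\s{L}^1_B$; by density and continuity this then extends to all of $\bb{C}_B=\overline{\pi(\s{L}^1_B)}$.

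First I would recall the explicit formula $\pi(f)=\frac{1}{2\pi\ell_B^2}\int_{\R^2}\dd y\, f(-y)U_B(y)$, so that $\pi(\s{L}^1_B)$ lies in the norm-closure of the linear span of the magnetic translations $U_B(y)$, $y\in\R^2$. Hence it is enough to prove that each $U_B(y)$ commutes with each $V_B(a)$. But this is precisely the statement, already established in Section~\ref{sec:landau_ham}, that the maps $U_B$ and $V_B$ define \emph{two mutually commuting} projective unitary representations of $\R^2$ — equivalently, on the infinitesimal level, that $[K_i,G_j]=0$ for all $i,j$, which was recorded explicitly. Integrating $U_B(y)V_B(a)=V_B(a)U_B(y)$ against $f(-y)\dd y$ yields $\pi(f)V_B(a)=V_B(a)\pi(f)$ for all $f\in\s{L}^1_B$, i.e.\ $\pi(\s{L}^1_B)\subset\bb{V}_B'$.

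To conclude, I would invoke that the commutant $\bb{V}_B'$ is a von Neumann algebra, in particular norm-closed (indeed weakly closed). Since $\bb{C}_B=\overline{\pi(\s{L}^1_B)}^{\|\cdot\|}$ and $\pi(\s{L}^1_B)\subset\bb{V}_B'$, taking norm-closures gives $\bb{C}_B\subset\bb{V}_B'$. The final sentence of the statement — that every magnetic operator is invariant under the dual magnetic translations, meaning $V_B(a)A V_B(a)^*=A$ — is just a reformulation of $A\in\bb{V}_B'$ using that the $V_B(a)$ are unitary and generate $\bb{V}_B$.

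The only point requiring genuine care is the interchange of the integral defining $\pi(f)$ with the (bounded, hence continuous) operator $V_B(a)$, which is justified because $f\in L^1$, the $U_B(y)$ are uniformly bounded, and the integral converges in the strong (indeed Bochner) sense; no real obstacle arises here. Everything else is a formal consequence of $[K_i,G_j]=0$ together with the density of $\pi(\s{L}^1_B)$ in $\bb{C}_B$. In fact the cleanest route avoids integration altogether: check directly from the formula $(V_B(a)\psi)(x)=\Phi_B(x,a)\psi(x-a)$ and $(U_B(y)\psi)(x)=\Phi_B(y,x)\psi(x-y)$ that the two cocycle factors combine symmetrically, so that $U_B(y)V_B(a)=V_B(a)U_B(y)$ as asserted, and then pass to closures.
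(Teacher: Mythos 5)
Your proof is correct and follows essentially the same route as the paper: express $\pi(f)$ as an integral of the magnetic translations $U_B$, use that $U_B$ and $V_B$ mutually commute to get $\pi(\s{L}^1_B)\subset\bb{V}_B'$, then pass to norm closures. The paper additionally notes that the inclusion is proper (since $\mathbf{1}\in\bb{V}_B'$ but $\bb{C}_B$ is non-unital), a remark your argument omits but which is not part of the stated claim.
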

\begin{proof}
Since $\pi(f)$ can be written as an integral of the magnetic translations $U_B$, it follows that $
[V_B(a),\pi(f)]=0$
for all $a\in\R^2$ and $f\in\s{L}^1_{B}$ proving that $\pi\big(\s{L}^1_{B}\big)\subset \bb{V}_B'$. The result then follows by continuity.
The inclusion is proper since $\mathbf{1}\in \bb{V}_B'$ but $\bb{C}_B$ is not unital.
\end{proof}

By a classical \virg{$\varepsilon/2$-argument} one can prove that every dense subset
$\s{A}\subseteq \s{L}^1_{B}$ which is dense with respect to the $\nnorm{\ }_{B,1}$-norm provides under the left-regular representation a subset $\pi(\s{A})\subseteq\bb{C}_B$ of the magnetic operators which is dense with respect to the operator norm. As a~consequence the $\ast$-algebras
\[
\bb{F}_{B} := \pi\big(\s{F}_B\big) \qquad \text{and} \qquad
\bb{S}_{B} := \pi\big(\s{S}_{B}\big)
\]
are dense in $\bb{C}_{B}$.

Let us introduce the \emph{transition operators} $\Upsilon_{k\to j}$ defined on the Laguerre basis $\psi_{n,m}$ of $L^2\big(\R^2\big)$
by the relations
\begin{equation}\label{eq:def_Ups}
\Upsilon_{j\mapsto k}\psi_{n,m} := \delta_{j,n} \psi_{k,m} ,\qquad k,j,n,m\in\N_0 .
\end{equation}
In other words, the operator $\Upsilon_{k\mapsto j}$ implements the transition between the
subspaces~$\s{H}_j$ and~$\s{H}_k$, defined by the Landau projections~$\Pi_j$ and~$\Pi_k$, respectively, without changing the second quantum number of the Laguerre
basis.
\begin{Proposition}\label{prop:struct}
It holds true that $\Upsilon_{j\mapsto k}\in \bb{F}_{B}$ for every $k,j\in\N_0$ and
\begin{equation}\label{eq:equal_C-Ups}
\bb{C}_B = C^*(\Upsilon_{j\mapsto k}, k,j\in\N_0),
\end{equation}
where on the right-hand side one has the $C^*$-algebra generated by the transition operators. Moreover, for all $k,j,n,m\in\N_0$ it holds true that:
{\begin{itemize}\itemsep=0pt
\item[$(1)$] $(\Upsilon_{j\mapsto k})^*=\Upsilon_{k\mapsto j}$;
\item[$(2)$] $\Upsilon_{j\mapsto k}\Upsilon_{m\mapsto n}=\delta_{j,n}\Upsilon_{m\mapsto k}$;
\item[$(3)$] $(\Upsilon_{j\mapsto k})^*\Upsilon_{j\mapsto k}=\Upsilon_{j\mapsto j}=\Pi_j$;
\item[$(4)$] $\Pi_k\Upsilon_{j\mapsto k}=\Upsilon_{j\mapsto k}=\Upsilon_{j\mapsto k}\Pi_j$;
\item[$(5)$] $\rr{a}^\pm\Upsilon_{j\mapsto k}=\sqrt{k+\frac{1}{2}\pm\frac{1}{2}} \Upsilon_{j\mapsto k\pm1}$;
\item[$(6)$] $\rr{b}^\pm\Upsilon_{j\mapsto k}=\Upsilon_{j\mapsto k}\rr{b}^\pm$.
\end{itemize}}
\end{Proposition}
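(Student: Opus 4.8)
The plan is to establish everything from the defining relation~\eqref{eq:def_Ups} together with Lemma~\ref{lemm:conv_laguer}, which already tells us how the Laguerre functions multiply in $\s{L}^1_B$. First I would identify the operator $\pi(\psi_{k,j})$ with (a scalar multiple of) $\Upsilon_{j\mapsto k}$. Indeed, $\pi$ is a $\ast$-representation and the left-regular action of $\psi_{k,j}$ on a basis vector $\psi_{n,m}$ is, up to the normalization $(\sqrt{2\pi}\ell_B)^{-1}$, computed by the twisted convolution $\psi_{k,j}\ast\psi_{n,m}=\frac{1}{\sqrt{2\pi}\ell_B}\delta_{j,n}\psi_{k,m}$ from Lemma~\ref{lemm:conv_laguer}; after matching conventions this gives $\pi(\psi_{k,j})=\frac{1}{\sqrt{2\pi}\ell_B}\Upsilon_{j\mapsto k}$ (the main point to check carefully is that the left-regular representation acts on $\psi_{n,m}$ exactly by left $\ast$-multiplication by $\psi_{k,j}$ once both sides are viewed inside $\s{L}^1_B$ — this is where a sign or a swap of indices can creep in). Since each $\psi_{k,j}\in\s{F}_B$, this immediately yields $\Upsilon_{j\mapsto k}\in\bb{F}_B=\pi(\s{F}_B)$.

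Next I would prove the generation statement~\eqref{eq:equal_C-Ups}. The algebra $\bb{F}_B=\pi(\s{F}_B)$ is the linear span of the $\pi(\psi_{k,j})$, hence the linear span of the $\Upsilon_{j\mapsto k}$; since $\bb{F}_B$ is dense in $\bb{C}_B$ (as noted right before the proposition, via the density of $\s{F}_B$ in $\s{L}^1_B$ and the $\varepsilon/2$-argument), $\bb{C}_B$ is contained in the $C^*$-algebra generated by the $\Upsilon_{j\mapsto k}$; the reverse inclusion is clear since each $\Upsilon_{j\mapsto k}\in\bb{F}_B\subset\bb{C}_B$ and $\bb{C}_B$ is norm-closed. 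This gives the desired equality of $C^*$-algebras.

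The algebraic identities (1)--(6) I would then verify by evaluating both sides on an arbitrary Laguerre vector $\psi_{n,m}$, using only~\eqref{eq:def_Ups} and the action of the ladder operators on the Laguerre basis recorded in Section~\ref{sec:landau_ham}. Specifically: (1) follows from $\langle\Upsilon_{j\mapsto k}\psi_{n,m},\psi_{p,q}\rangle=\delta_{j,n}\delta_{k,p}\delta_{m,q}$, which is symmetric under swapping $(j\mapsto k)$ with $(k\mapsto j)$ and conjugating the index roles; (2) is a direct composition $\Upsilon_{j\mapsto k}\Upsilon_{m\mapsto n}\psi_{p,q}=\delta_{n,p}\Upsilon_{j\mapsto k}\psi_{m,q}=\delta_{n,p}\delta_{j,m}\psi_{k,q}$ — but note the statement reads $=\delta_{j,n}\Upsilon_{m\mapsto k}$, so I would track the index-matching convention carefully here, as this is exactly the partial-isometry/matrix-unit relation and getting the Kronecker delta on the correct pair of indices is the one genuinely error-prone spot; (3) and (4) are then immediate specializations, with $\Upsilon_{j\mapsto j}=\Pi_j$ following by comparing with the spectral projection $\Pi_j=\sum_m\ketbra{\psi_{j,m}}{\psi_{j,m}}$; (5) follows from $\rr{a}^\pm\psi_{k,m}=\sqrt{k+\frac12\pm\frac12}\,\psi_{k\pm1,m}$ composed with~\eqref{eq:def_Ups}; and (6) follows because $\rr{b}^\pm$ acts only on the second index $m$ while $\Upsilon_{j\mapsto k}$ touches only the first, so they commute on every $\psi_{n,m}$ and hence on $\s{F}_B$.

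The main obstacle is bookkeeping rather than conceptual: one must pin down once and for all the convention relating the algebra element $\psi_{k,j}\in\s{L}^1_B$ to the operator $\Upsilon_{j\mapsto k}$ (which "leg" of the twisted convolution becomes the bra and which the ket), because the twisted convolution~\eqref{eq:magn_conv_01} is noncommutative and a single transposition of arguments flips the roles of $j$ and $k$ everywhere; once that dictionary is fixed, identities (1)--(6) and the matrix-unit relation (2) fall out by routine evaluation on the Laguerre basis, and the density argument completes~\eqref{eq:equal_C-Ups}.
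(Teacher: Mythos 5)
Your plan coincides with the paper's proof: identify $\Upsilon_{j\mapsto k}$ as a scalar multiple of $\pi(\psi_{k,j})$ via Lemma~\ref{lemm:conv_laguer}, deduce $\Upsilon_{j\mapsto k}\in\bb{F}_B$ and the generation statement~\eqref{eq:equal_C-Ups} by density, and verify items (1)--(6) by direct evaluation on the Laguerre basis. The one caveat you flagged resolves as follows: the left-regular representation satisfies $\pi(f)\psi=f^-\ast\psi$ with $f^-(x):=f(-x)$, and from~\eqref{eq:lag_pol} one has $\psi_{k,j}^-=(-1)^{j-k}\psi_{k,j}$, so that $\Upsilon_{j\mapsto k}=(-1)^{k-j}\sqrt{2\pi}\ell_B\,\pi(\psi_{k,j})$ rather than $\sqrt{2\pi}\ell_B\,\pi(\psi_{k,j})$. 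Also, your evaluation in item~(2) transposes the defining rule~\eqref{eq:def_Ups} (it is $\Upsilon_{m\mapsto n}\psi_{p,q}=\delta_{m,p}\psi_{n,q}$, not $\delta_{n,p}\psi_{m,q}$); with the correct convention the composition yields $\delta_{m,p}\delta_{j,n}\psi_{k,q}$, which is exactly $\delta_{j,n}\Upsilon_{m\mapsto k}\psi_{p,q}$, so the stated matrix-unit relation comes out as claimed.
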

\begin{proof}A comparison between the definition of $\pi(f)$ and~\eqref{eq:magn_conv_020} shows that
$\pi(f)\psi=f^-\ast\psi$ whenever $f,\psi\in L^1\big(\R^2\big)$, with $f^-(x):=f(-x)$. From the definition of $\psi_{k,j}$ one gets $\psi_{k,j}^-=(-1)^{j-k}\psi_{k,j}$. Thus, one obtains
\[
 \pi(\psi_{k,j})\psi_{n,m} = (-1)^{j-k}\psi_{k,j}\ast\psi_{n,m} = \frac{(-1)^{j-k}}{\sqrt{2\pi}\ell_B}\delta_{j,n}\psi_{k,m},
\]
where in the last equality we used Lemma~\ref{lemm:conv_laguer}. This shows that
\[
\Upsilon_{j\mapsto k} := (-1)^{k-j}\sqrt{2\pi}\ell_B \pi(\psi_{k,j}) \in \bb{F}_{B} .
\]
Moreover, one obtains that $\bb{F}_B\subseteq C^*(\Upsilon_{j\mapsto k}, k,j\in\N_0)$ and the density of $\bb{F}_B$ in $\bb{C}_B$ implies the equality~\eqref{eq:equal_C-Ups}.
Properties (1)--(6) follow from direct computations and can be checked by the reader.
\end{proof}

By combining the various properties listed in Proposition~\ref{prop:struct} one can compute the following commutators:
\begin{gather}
\big[\rr{a}^+,{\Upsilon_{j\mapsto k}}\big] = \sqrt{k+1} \Upsilon_{j\mapsto k+1} - \sqrt{j} \Upsilon_{j-1\mapsto k} ,\nonumber\\
\big[\rr{a}^-,\Upsilon_{j\mapsto k}\big] = \sqrt{k} \Upsilon_{j\mapsto k-1} - \sqrt{j+1} \Upsilon_{j+1\mapsto k} .\label{eq:commut_ups}
\end{gather}

Let $\bb{K}\big(L^2\big(\R^2\big)\big)$ be the $C^*$-algebra of compact operators on~$L^2\big(\R^2\big)$. An immediate implication of Proposition~\ref{prop:C-commut} is that
$\bb{C}_B\cap \bb{K}\big(L^2\big(\R^2\big)\big)= \{0\}$.
Indeed, as a consequence of the invariance under translations, every eigenvalue of any element of $\bb{C}_B$ must have necessarily infinite multiplicity. However, despite the latter observation
there is a $C^{*}$-isomorphism of~$\bb{C}_B$ onto the elementary $C^{*}$-algebra, i.e.,
the algebra of compact operators on a separable Hilbert
space. To prove this result, and explore its consequences, we need first to introduce some notation. Let
\begin{gather}\label{eq:Lan_proj-dual}
P_{k} := \sum_{n=0}^{\infty}\ketbra{\psi_{n,k}}{\psi_{n,k}} ,\qquad k\in\N_0
\end{gather}
be the \emph{dual Landau projections}. The latter are the spectral projections of $\rr{b}^{-}\rr{b}^{+}$
 (proportional to the dual Landau operator).
Let $\s{V}_k:=P_{k}\big[L^2\big(\R^2\big)\big]$ be the range of
$P_{k}$ and $\bb{K}(\s{V}_k)$ the $C^*$-algebra of compact operators on $\s{V}_k$. Finally, let
\[
\s{U} \colon \ L^2\big(\R^2\big) \longrightarrow \bigoplus_{k\in\N_0}\s{V}_k
\]
be the unitary transformation defined by
$\s{U}\psi:=(P_0\psi,P_1\psi,\ldots)$ for all $\psi\in L^2\big(\R^2\big)$.
\begin{Proposition}\label{prop:isomorphi-to-compact-operators}
The unitary transform $\s{U}$ establishes the unitary equivalence of $C^*$-algebras
 \[
 \s{U} \bb{C}_B \s{U}^* = \bigoplus_{k\in\N_0} \bb{K}(\s{V}_k)
 \]
 defined by $ \s{U}A\s{U}^*=\bigoplus_{k\in\N_0}P_kAP_k$ for all $A\in \bb{C}_B$. Moreover, every projection $\rho_k\colon \bb{C}_B\to \bb{K}(\s{V}_k)$ defined by $\rho_k(A):=P_kAP_k$ is a $\ast$-isomorphism of
 $C^*$-algebras.
 \end{Proposition}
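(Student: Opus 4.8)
The plan is to read off the structure of $\bb{C}_B$ from the transition operators $\Upsilon_{j\mapsto k}$ together with Proposition~\ref{prop:struct}. The key preliminary step is that every $A\in\bb{C}_B$ commutes with every dual Landau projection $P_k$. For the generators this is a one-line verification on the Laguerre basis: both $P_k\Upsilon_{j\mapsto i}$ and $\Upsilon_{j\mapsto i}P_k$ send $\psi_{n,m}$ to $\delta_{m,k}\delta_{j,n}\psi_{i,k}$; since $B\mapsto[P_k,B]$ is norm-continuous and $\bb{F}_B=\pi(\s{F}_B)$ is dense in $\bb{C}_B$, the identity $[P_k,A]=0$ propagates to all $A\in\bb{C}_B$ (alternatively, one invokes Proposition~\ref{prop:C-commut} together with the fact that $P_k$, being a spectral projection of $\rr{b}^-\rr{b}^+$, is affiliated to $\bb{V}_B$). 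Since $\{P_k\}_{k\in\N_0}$ is an orthogonal family summing strongly to $\mathbf{1}$ — which is exactly what makes $\s{U}$ unitary — the commutation forces $A$ to be block-diagonal for the decomposition $L^2(\R^2)=\bigoplus_k\s{V}_k$, so that $\s{U}A\s{U}^*=\bigoplus_k\rho_k(A)$ with $\rho_k(A):=P_kAP_k$ viewed as an operator on $\s{V}_k$. Multiplicativity of $\rho_k$, namely $\rho_k(AB)=P_kABP_k=P_kAP_kBP_k=\rho_k(A)\rho_k(B)$, is then immediate from $[P_k,A]=0$, while linearity and $\ast$-compatibility are obvious, so each $\rho_k\colon\bb{C}_B\to\bb{B}(\s{V}_k)$ is a $\ast$-homomorphism.

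The heart of the proof is the computation of the range of $\rho_k$. Restricting $\Upsilon_{j\mapsto i}$ to $\s{V}_k$, which carries the orthonormal basis $\{\psi_{n,k}\}_{n\in\N_0}$, gives $\psi_{n,k}\mapsto\delta_{j,n}\psi_{i,k}$, i.e., $\rho_k(\Upsilon_{j\mapsto i})=\ketbra{\psi_{i,k}}{\psi_{j,k}}$ is a matrix unit on $\s{V}_k$. Because $\bb{C}_B=C^*(\Upsilon_{j\mapsto k}\,|\,k,j\in\N_0)$ and the image of a $C^*$-algebra under a $\ast$-homomorphism is again a (norm-closed) $C^*$-subalgebra, $\rho_k(\bb{C}_B)$ is precisely the $C^*$-algebra generated by the matrix units of $\s{V}_k$, which is $\bb{K}(\s{V}_k)$; this simultaneously shows $\rho_k(\bb{C}_B)\subseteq\bb{K}(\s{V}_k)$ and that $\rho_k$ is onto $\bb{K}(\s{V}_k)$.

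It remains to show that each $\rho_k$ is injective. If $\rho_k(A)=0$ then, since $A$ commutes with $P_k$, $A$ annihilates $\s{V}_k$, i.e., $A\psi_{n,k}=0$ for every $n\in\N_0$. Invoking Proposition~\ref{prop:struct}(6), $A$ commutes with the dual ladder operators $\rr{b}^\pm$ — this is an operator identity for the generators $\Upsilon_{j\mapsto i}$, and extends to $A\in\bb{C}_B$ by a norm-approximation argument using the closability of $\rr{b}^\pm$. Since every Laguerre vector is a nonzero multiple of $(\rr{b}^+)^{m-k}\psi_{n,k}$ (for $m\geqslant k$) or of $(\rr{b}^-)^{k-m}\psi_{n,k}$ (for $m<k$), we conclude $A\psi_{n,m}=0$ for all $(n,m)\in\N_0^2$, hence $A=0$. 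Therefore each $\rho_k$ is an injective, hence isometric, $\ast$-homomorphism onto $\bb{K}(\s{V}_k)$, i.e., a $\ast$-isomorphism, and assembling these identifications through $\s{U}A\s{U}^*=\bigoplus_k\rho_k(A)$ gives $\s{U}\bb{C}_B\s{U}^*=\bigoplus_k\bb{K}(\s{V}_k)$. The one genuinely delicate point is this last injectivity of the individual $\rho_k$: the global map $\bigoplus_k\rho_k=\s{U}(\cdot)\s{U}^*$ is faithful for free, but faithfulness of a single block requires the transitivity of $\rr{b}^\pm$ between dual Landau levels (equivalently, the observation that the $\rho_k$ are all unitarily equivalent via $\psi_{n,k}\mapsto\psi_{n,k'}$).
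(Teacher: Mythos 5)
Your proof is correct and follows essentially the same route as the paper: commuting with the dual Landau projections $P_k$, reading off matrix units $\rho_k(\Upsilon_{j\mapsto i})=\ketbra{\psi_{i,k}}{\psi_{j,k}}$ to get the range, and using commutation with $\rr{b}^\pm$ to propagate $A\psi_{n,k}=0$ to all Laguerre vectors for injectivity. The only cosmetic differences are that the paper establishes injectivity first and then deduces the range is closed (via injective $\Rightarrow$ isometric), whereas you invoke closedness of the range of a $\ast$-homomorphism directly, and that the paper extends $[A,\rr{b}^\pm]=0$ to all of $\bb{C}_B$ by citing its Lemma on the von Neumann algebra $\bb{M}_B$, while you sketch the underlying closed-graph argument inline.
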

 \begin{proof} From its very definition it follows that every transition operator
 $\Upsilon_{n\mapsto m}$ commutes with the dual Landau projections $P_k$. Since in view of
 Proposition~\ref{prop:struct} the $\Upsilon_{n\mapsto m}$ are generators of $\bb{C}_B$, one
 gets that $[P_k,A]=0$ for all $k\in\N_0$ and for all $A\in \bb{C}_B$. This implies
 that $\rho_k\colon \bb{C}_B\to\bb{B}(\s{V}_k)$ is a well defined $\ast$-homomorphism and
 $\s{U}$ implements the decomposition $A=\sum_{k\in\N_0}P_kAP_k\simeq \bigoplus_{k\in\N_0}\rho_k(A)$.
 The $\ast$-homomorphisms $\rho_k$ are injective. In fact, $P_kAP_k=0$ implies
 $A\psi_{n,k}=0$ for all $n\in\N_0$. Lemma~\ref{lemma:comm_rel_bpm} (which
 generalizes item (5) of Proposition~\ref{prop:struct}) allows to act with the
 operators $\rr{b}^\pm$ proving that $A\psi_{n,m}=0$ for all $n,m\in\N_0$, namely
 $A=0$. Then, $\rho_k$ is a bijection between $\bb{C}_B$ and its image. Since
 $\rho_k(\Upsilon_{n\mapsto m})$ is a rank-one operator on~$\s{V}_k$, it follows from
 Proposition~\ref{prop:struct} that $\rho_k(\bb{C}_B)$ is the norm closure of
 finite rank operators, hence $\rho_k(\bb{C}_B)=\bb{K}(\s{V}_k)$. This concludes
 the proof.
\end{proof}

The $\ast$-isomorphism of $\bb{C}_B$ with the infinite direct sum of the algebra of
compact operators provides information about the spectral nature of the elements of $\bb{C}_B$. Let us denote with
$\sigma_{\rm ess}$ the essential spectrum and with $\sigma_{\rm p}$ the point spectrum.
\begin{Corollary}\label{cor:not_unit}
The following facts hold true:
\begin{itemize}\itemsep=0pt
\item[$(1)$] The $C^*$-algebra $\bb{C}_B$ is not unital;
\item[$(2)$] Let $A\in \bb{C}_B$, then
\[
\sigma(A) = \sigma_{\rm ess}(A) \qquad \text{and} \qquad \sigma(A)\setminus\{0\} = \sigma_{\rm p}(A) .
\]
\end{itemize}
\end{Corollary}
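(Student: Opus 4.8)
The plan is to deduce both statements directly from Proposition~\ref{prop:isomorphi-to-compact-operators}, which identifies $\bb{C}_B$ with the $C^*$-algebra $\bigoplus_{k\in\N_0}\bb{K}(\s{V}_k)$, each $\s{V}_k$ being infinite-dimensional (it is spanned by $\{\psi_{n,k}\}_{n\in\N_0}$). Everything then reduces to well-known facts about $c_0$-direct sums of compact operators.

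For item~$(1)$, I would argue by contradiction. An infinite direct sum $\bigoplus_{k\in\N_0}\s{A}_k$ (in the $C^*$-sense, i.e.\ sequences vanishing at infinity in norm) is unital only if all but finitely many $\s{A}_k$ are zero and each nonzero $\s{A}_k$ is unital; here each summand is $\bb{K}(\s{V}_k)$ with $\dim\s{V}_k=\infty$, which is nonunital. Concretely, if $\bb{C}_B$ had a unit $E$, then under the isomorphism $E$ would correspond to $\bigoplus_k E_k$ with each $E_k$ the identity of $\bb{K}(\s{V}_k)$ — impossible since $\bb{K}(\s{V}_k)$ contains no identity when $\s{V}_k$ is infinite-dimensional (or simply: the net $\eta_n$ of approximate identities cannot converge in norm). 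Alternatively, since $\bb{C}_B$ is generated by the $\Upsilon_{j\mapsto k}$, one sees directly that $\mathbf 1\notin\bb{C}_B$ because every element of $\pi(\s{L}^1_B)$ is a twisted convolution operator, and the identity is not (its symbol would be a delta), then pass to the closure.

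For item~$(2)$, fix $A\in\bb{C}_B$ and write $A\cong\bigoplus_k A_k$ with $A_k:=\rho_k(A)\in\bb{K}(\s{V}_k)$. First, the spectrum of a compact operator $A_k$ on an infinite-dimensional space is a countable set with $0$ as its only possible accumulation point, with every nonzero point an eigenvalue of finite multiplicity, and $0\in\sigma(A_k)$. Second, $\sigma(A)=\overline{\bigcup_k\sigma(A_k)}$ for a $c_0$-direct sum. Third, I claim every nonzero $\mu\in\sigma(A)$ lies in $\sigma_{\rm p}(A)$: if $\mu\in\sigma(A_k)$ for some $k$ it is an eigenvalue of $A_k$ hence of $A$; and if $\mu$ is merely a limit of points $\mu^{(i)}\in\sigma(A_{k_i})$ then, since $\mu\neq 0$ forces the indices $k_i$ to range over infinitely many distinct values (each $\sigma(A_k)$ is discrete away from $0$ and the $\|A_k\|\to 0$), one assembles from the corresponding eigenvectors either an exact eigenvector or a Weyl sequence; more cleanly, $\mu\in\sigma(A)\setminus\{0\}$ implies $A-\mu$ is not invertible while $A$ is ``block-compact'', so $A-\mu$ is Fredholm-of-index-zero-minus-compact only if $\mu$ is an eigenvalue — in fact the quickest route is Corollary~\ref{cor:not_unit}'s sibling fact that every nonzero spectral value of an element of $\bb{C}_B$ has infinite multiplicity (noted just before the statement, from translation invariance), so in particular it is an eigenvalue, giving $\sigma(A)\setminus\{0\}\subseteq\sigma_{\rm p}(A)$; the reverse inclusion $\sigma_{\rm p}(A)\setminus\{0\}\subseteq\sigma(A)\setminus\{0\}$ is trivial. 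Finally $\sigma(A)=\sigma_{\rm ess}(A)$: again by the translation-invariance remark preceding the statement, no point of $\sigma(A)$ is an isolated eigenvalue of finite multiplicity (the would-be finite-rank spectral projection cannot lie in a translation-invariant algebra and, more to the point, the eigenspace is genuinely infinite-dimensional), so $\sigma(A)\setminus\sigma_{\rm ess}(A)=\varnothing$.

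The main obstacle is purely bookkeeping: being careful that ``$\bigoplus$'' means the $C^*$-sum (norm-vanishing sequences), so that $\|A_k\|\to 0$, which is exactly what forces the spectrum to accumulate only at $0$ and makes the ``nonzero spectral value $\Rightarrow$ eigenvalue'' step work; and invoking correctly the already-observed fact that every nonzero eigenvalue of an element of $\bb{C}_B$ has infinite multiplicity, which simultaneously kills the discrete spectrum and gives $\sigma=\sigma_{\rm ess}$. No hard analysis is involved beyond the standard spectral theory of compact operators.
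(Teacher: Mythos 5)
Your argument hinges on reading Proposition~\ref{prop:isomorphi-to-compact-operators} as identifying $\bb{C}_B$ with a $c_0$-direct sum $\bigoplus_k\bb{K}(\s{V}_k)$ of \emph{independent} blocks, so that for $A\in\bb{C}_B$ the components $A_k:=\rho_k(A)$ satisfy $\|A_k\|\to 0$; you rely on this decay to rule out nonzero accumulation points and to make the "nonzero spectral value $\Rightarrow$ eigenvalue" step work. This is not what the proposition says. It states that \emph{each} $\rho_k\colon\bb{C}_B\to\bb{K}(\s{V}_k)$ is a $\ast$-isomorphism of $C^*$-algebras, hence isometric, so $\|A_k\|=\|A\|$ for every $k$. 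The components do not decay; they are all (up to the unitary identifications $\s{V}_k\simeq\ell^2(\N_0)$) the \emph{same} compact operator. If the decomposition were truly $c_0$, the single-block projection $\rho_k$ could not be injective. The image $\s{U}\bb{C}_B\s{U}^*$ is the "constant diagonal" inside $\bigoplus_k\bb{B}(\s{V}_k)$, and the real upshot is simply $\bb{C}_B\simeq\bb{K}(\s{V}_0)$, the elementary $C^*$-algebra. Your $c_0$-picture would in fact force every nonzero eigenvalue to have \emph{finite} multiplicity (eventually $\|A_k\|<|\mu|$ excludes $\mu$ from $\sigma(A_k)$), directly contradicting the infinite-multiplicity remark in the text that you then go on to invoke — an internal inconsistency that should have flagged the misreading.

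With the correct picture the proof is shorter, not longer: $\bb{C}_B\simeq\bb{K}(\s{V}_0)$ with $\dim\s{V}_0=\infty$ is non-unital, giving~(1) immediately; and writing $A\simeq\bigoplus_k A_k$ with all $A_k$ unitarily equivalent to one compact operator $A_0$, one gets $\sigma(A)=\sigma(A_0)$, every nonzero $\mu\in\sigma(A_0)$ is an eigenvalue of $A_0$ by compactness, hence of every $A_k$, hence of $A$ with \emph{infinite} multiplicity (which recovers the translation-invariance remark preceding the corollary), whence $\sigma(A)\setminus\{0\}\subseteq\sigma_{\rm p}(A)$ and $\sigma(A)=\sigma_{\rm ess}(A)$. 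One smaller slip: you paraphrase the paper's remark as "every nonzero spectral value has infinite multiplicity, so in particular it is an eigenvalue". The paper says every \emph{eigenvalue} has infinite multiplicity; "multiplicity" is an attribute of eigenvalues, so deducing "therefore it is an eigenvalue" from it is circular. The implication nonzero-spectral-value $\Rightarrow$ eigenvalue comes from compactness via $\bb{C}_B\simeq\bb{K}(\s{V}_0)$, not from a multiplicity count.
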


Let us describe some special operator that belongs to $\bb{C}_B$. Proposition~\ref{prop:struct} implies that $\Pi_n\in \bb{F}_B$ for every Landau level~$n$. More precisely, one has that $\Pi_n=\pi(p_n)$ with
\begin{equation}\label{eq:fun_proj}
p_n(x) := \expo{-\frac{|x|^2}{4\ell_B^2}} L_{n}^{(0)}\left(\frac{|x|^2}{2\ell_B^2}\right) \in \s{F}_B .
\end{equation}
For every $s>0$, let $\expo{- \frac{s}{\s{E}_B}H_B}
=\expo{- \frac{s}{2}\left(K_{1}^2+K_{2}^2\right)}$ be the heat semigroup. Since
\[
\expo{- \frac{s}{\s{E}_B}H_B} = \expo{- \frac{s}{2}}\sum_{j=0}^{+\infty}\expo{- sj}\Pi_j
\]
one can use the summation formula~\cite[equation~8.975(1)]{gradshteyn-ryzhik-07} to obtain that $\expo{- \frac{s}{\s{E}_B}H_B} = \pi(g_s)$, with
\begin{equation}\label{eq:fun_heat_ker}
g_s(x) := \frac{\expo{-\frac{|x|^2}{4\ell_B^2}\coth\left(\frac{s}{2}\right)}}{2\sinh\left(\frac{s}{2}\right)} \in {S}\big(\R^2\big) .
\end{equation}
The latter formula, known as \emph{Mehler's kernel} (see~\cite[equation~(3.5)]{avron-herbst-simon-78}), shows that
$\expo{- \frac{s}{\s{E}_B}H_B}\in \bb{S}_B$ for all $s>0$.
The resolvent
\[
R_{\lambda} := (H_B-\lambda\s{E}_B\mathbf{1})^{-1} ,\qquad \lambda-\frac{1}{2} \notin\N_0
\]
can be computed
 through the Laplace transform~\cite[equation~(1.28), p.~484]{kato-95}, i.e.,
\[
R_{\lambda} = \frac{1}{\s{E}_B}\int_0^{+\infty}\dd s \expo{\lambda s} \expo{- \frac{s}{\epsilon_B}H_B} .
\]
Then, one has that $R_{\lambda}=\pi(r_\lambda)$ where the function $r_\lambda$ can be obtained from the Laplace transform of~$g_s$. The use of the formula~\cite[equation~9.222(1)]{gradshteyn-ryzhik-07} and a suitable change of variables provide
\[
r_\lambda(x) := \frac{\Gamma\left(\frac{1}{2}-\lambda\right)}{\s{E}_B}\frac{\sqrt{2}\ell_B
}{|x|} W_{\lambda,0}\left(\frac{|x|^2}{2\ell_B^2}\right) \in L^1\big(\R^2\big)\cap L^2\big(\R^2\big),
\]
 where $\Gamma$ is the Gamma function and $W_{\lambda,0}$ is the Whittaker's function.
A more detailed derivation of this formula is described in~\cite[Appendix~C]{comtet-87}.
The integrability of $r_\lambda$ implies that $R_{\lambda}\in \bb{C}_B$ for all $\lambda-\frac{1}{2} \notin\N_0$.

Let us recall that a self-adjoint unbounded operator $H$ is affiliated to the $C^*$-algebra $\bb{C}_B$ if $(H-\lambda\mathbf{1})^{-1}\in \bb{C}_B$
for some (and therefore for every) $\lambda$ in the resolvent set of $H$ (see, e.g.,~\cite{damak-georgescu-04}). This definition
allows to state the following result.
\begin{Proposition}\label{prop:affil}
The Landau Hamiltonian $H_B$ is affiliated to the magnetic $C^*$-algebra $\bb{C}_B$.
\end{Proposition}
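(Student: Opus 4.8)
The plan is to deduce the affiliation of $H_B$ directly from the explicit computation of the resolvent carried out immediately before the statement, together with the definition of affiliation recalled above. Recall that a self-adjoint operator $H$ is affiliated to $\bb{C}_B$ provided $(H-\lambda\mathbf{1})^{-1}\in\bb{C}_B$ for \emph{some} $\lambda$ in the resolvent set. So it suffices to exhibit one such $\lambda$.

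First I would fix any $\lambda\in\C$ with $\lambda-\tfrac{1}{2}\notin\N_0$, equivalently $\lambda\s{E}_B\notin\sigma(H_B)$, so that $\lambda$ lies in the resolvent set of $H_B$. The computation above shows that the resolvent $R_\lambda=(H_B-\lambda\s{E}_B\mathbf{1})^{-1}$ equals $\pi(r_\lambda)$, where $r_\lambda$ is the explicit function built from the Whittaker function $W_{\lambda,0}$, and that $r_\lambda\in L^1\big(\R^2\big)\cap L^2\big(\R^2\big)$. Since $r_\lambda\in L^1\big(\R^2\big)=\s{L}^1_B$ (as a set), we have $R_\lambda=\pi(r_\lambda)\in\pi\big(\s{L}^1_B\big)\subset\overline{\pi(\s{L}^1_B)}^{\lVert\ \rVert}=\bb{C}_B$. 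This already exhibits a resolvent lying in $\bb{C}_B$.

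The only minor point requiring care — and it is not really an obstacle — is the bookkeeping with scalars: the definition of affiliation is phrased in terms of $(H-\lambda\mathbf{1})^{-1}$, whereas above we naturally obtain $(H_B-\lambda\s{E}_B\mathbf{1})^{-1}$. Since $\s{E}_B\neq0$ (as $B\neq0$), one just renames the spectral parameter: for any $\mu$ in the resolvent set of $H_B$ write $\mu=\lambda\s{E}_B$ with $\lambda-\tfrac12\notin\N_0$, and then $(H_B-\mu\mathbf{1})^{-1}=R_\lambda=\pi(r_\lambda)\in\bb{C}_B$. Because this holds for one (hence, by the resolvent identity and the fact that $\bb{C}_B$ is norm-closed and closed under multiplication by its own elements, for every) point of the resolvent set, $H_B$ is affiliated to $\bb{C}_B$, which completes the proof. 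I expect no genuine difficulty here; the substance of the argument was already done in establishing $r_\lambda\in L^1\big(\R^2\big)$, and the proposition is essentially a restatement of that fact in the language of affiliation.
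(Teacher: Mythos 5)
Your proposal is correct and follows exactly the route the paper intends: the paper gives no separate proof of the proposition because it is an immediate consequence of the preceding computation showing $R_\lambda=\pi(r_\lambda)$ with $r_\lambda\in L^1\big(\R^2\big)$, together with the recalled definition of affiliation. Your additional bookkeeping about the scaling by $\s{E}_B$ is harmless and matches the intended reading.
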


We are now in a position to prove a pair of crucial properties for the subalgebra $\bb{S}_{B}$. First of all, we need to recall the notion of
\emph{pre-$C^*$-algebra} according to~\cite[Definition~3.26]{gracia-varilly-figueroa-01} or~\cite[Definition~3.1.1]{blackadar-98} (where it is presented under the name of \emph{local-$C^*$-algebra}). In a nutshell,
a pre-$C^*$-algebra
 is a dense $\ast$-subalgebra of a $C^*$-algebra which is stable under holomorphic functional calculus. In the non-unital case (which is the relevant case for us) one has to restrict to holomorphic functions that satisfy $f(0)=0$.
\begin{Proposition}\label{prop:struct_C_infty}
The following properties hold true:
\begin{itemize}\itemsep=0pt
\item[$(1)$]
$\bb{S}_{B}$ is a non-unital pre-$C^*$-algebra of $\bb{C}_{B}$;
\item[$(2)$]
$\bb{S}_{B}$ has the $($non-unique$)$ factorization
property: for all $T\in \bb{S}_{B}$ there exist $S_1,S_2\in \bb{S}_{B}$ such that
$T=S_1S_2$.
\end{itemize}
\end{Proposition}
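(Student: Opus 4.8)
The plan is to leverage the isomorphism picture of Proposition~\ref{prop:isomorphi-to-compact-operators} together with the description of $\s{S}_B$ as rapidly decreasing combinations of Laguerre functions (Proposition~\ref{prop:discret_sch}) and the multiplication table for the $\Upsilon_{j\mapsto k}$ (Proposition~\ref{prop:struct}). First I would make the identification $\bb{S}_B = \pi(\s{S}_B)$ completely explicit: using $\Upsilon_{j\mapsto k} = (-1)^{k-j}\sqrt{2\pi}\ell_B\,\pi(\psi_{k,j})$ and the isomorphism $\jmath$ of Proposition~\ref{prop:discret_sch}, an element $T\in\bb{S}_B$ is exactly a series $T=\sum_{j,k}t_{k,j}\Upsilon_{j\mapsto k}$ whose coefficient array $\{t_{k,j}\}$ lies in $\s{S}(\N_0^2)$ (after absorbing the harmless signs and the constant). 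The product rule $\Upsilon_{j\mapsto k}\Upsilon_{m\mapsto n}=\delta_{j,n}\Upsilon_{m\mapsto k}$ shows that under $\jmath$ the convolution product becomes the \emph{matrix product} of coefficient arrays, so $\bb{S}_B$ is (isomorphic to) the algebra of rapidly decreasing infinite matrices acting on $\ell^2(\N_0)$. With this dictionary, both assertions become standard facts about such matrix algebras.

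For part (2), the factorization property, I would first handle the case of a positive element $T=T^*\geqslant 0$ and then reduce the general case. If $T\geqslant 0$ with rapidly decreasing matrix $\{t_{k,j}\}$, one wants to write $T=S^*S$ with $S$ again rapidly decreasing; the natural candidate is $S=T^{1/2}$, and the key point is that the square root of a rapidly decreasing positive matrix is again rapidly decreasing. The cleanest route is the integral representation $T^{1/2}=\frac{1}{\pi}\int_0^\infty \lambda^{-1/2}T(T+\lambda)^{-1}\,d\lambda$, combined with the fact that $\s{S}_B$ is a pre-$C^*$-algebra (part (1), proved first), so $T(T+\lambda)^{-1}\in\bb{S}_B$ for each $\lambda>0$; one then needs quantitative control of the Schwartz seminorms $r_k$ of these resolvent expressions to justify that the integral converges in the Fréchet topology of $\s{S}(\N_0^2)$. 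An alternative — perhaps cleaner — is to invoke Allan's or the Cohen–Hewitt factorization theorem: $\s{S}_B$ has bounded approximate identities (the truncations of $\sum_k \Upsilon_{k\mapsto k}$, i.e., finite sums of Landau projections, are rapidly decreasing and form an approximate identity in $\bb{C}_B$), and Cohen–Hewitt factorization in the relevant Fréchet-algebra form then yields $T=S_1S_2$ with both factors in $\s{S}_B$. For a general (not necessarily positive) $T$ one writes $T=T(T^*T+\varepsilon)^{-1/2}\cdot(T^*T+\varepsilon)^{1/2}\cdot(\text{correction})$, or more simply applies the polar-type decomposition inside the pre-$C^*$-algebra; the non-uniqueness is manifest since $\Upsilon_{j\mapsto k}=\Upsilon_{j\mapsto k}\Pi_j=\Pi_k\Upsilon_{j\mapsto k}$ gives two different factorizations already for the generators.

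For part (1) — that $\bb{S}_B$ is a pre-$C^*$-algebra — I would verify that $\bb{S}_B$ is a dense $\ast$-subalgebra (already established: density in Section~\ref{sec:C-magn_alg}, closure under $\ast$ from $(\Upsilon_{j\mapsto k})^*=\Upsilon_{k\mapsto j}$ and the fact that $\{t_{k,j}\}\mapsto\{\overline{t_{j,k}}\}$ preserves $\s{S}(\N_0^2)$) which is stable under the holomorphic functional calculus of $\bb{C}_B$, in the non-unital sense (holomorphic $f$ with $f(0)=0$). Concretely, given $T\in\bb{S}_B$ and $f$ holomorphic on a neighbourhood of $\sigma(T)\cup\{0\}$ with $f(0)=0$, one writes $f(T)=\frac{1}{2\pi\ii}\oint_\gamma f(\lambda)(\lambda\mathbf{1}-T)^{-1}\,d\lambda$ (this makes sense in the unitization, and $f(0)=0$ ensures $f(T)\in\bb{C}_B$), and the task is to show $f(T)\in\bb{S}_B$. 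Since the contour is compact, it suffices to show that the resolvent $(\lambda\mathbf{1}-T)^{-1}$, regarded via $\lambda(\lambda\mathbf{1}-T)^{-1}-\mathbf{1}=T(\lambda\mathbf{1}-T)^{-1}\in\bb{C}_B$, actually lands in $\bb{S}_B$ with Schwartz seminorms locally bounded in $\lambda$. This is the technical heart of the argument: from $(\lambda\mathbf{1}-T)^{-1}=\lambda^{-1}\sum_{n\geqslant 0}\lambda^{-n}T^n$ for large $|\lambda|$ one gets a Neumann-series bound, and for $\lambda$ on the compact contour one uses the resolvent identity together with the submultiplicativity of the seminorms $r_k$ — which holds because matrix multiplication of rapidly decreasing arrays is jointly continuous in the Fréchet topology (a consequence of Proposition~\ref{prop:discret_sch} and the algebra structure recorded there) — to bootstrap control of $r_k\big(T(\lambda\mathbf{1}-T)^{-1}\big)$ from $r_k(T)$ and the operator-norm bound $\|(\lambda\mathbf{1}-T)^{-1}\|$.

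The main obstacle I anticipate is precisely this last quantitative estimate in part (1): proving that the Schwartz seminorms of $T(\lambda\mathbf{1}-T)^{-1}$ stay bounded as $\lambda$ ranges over the contour. The operator-norm invertibility is free, but upgrading it to Fréchet-algebra invertibility requires either a careful Neumann/resolvent-series argument with explicit seminorm bookkeeping, or an abstract input such as the characterization of pre-$C^*$-algebras as those $\ast$-subalgebras $\f{A}$ of a $C^*$-algebra for which $\f{A}\cap\mathrm{GL}$ is open in $\mathrm{GL}(\f{A})$, or the general theorem that a Fréchet $\ast$-algebra which is \emph{spectrally invariant} (i.e., $\sigma_{\f{A}}(a)=\sigma_{\bb{C}_B}(a)$ up to $\{0\}$) and whose topology is finer than the norm topology is automatically a pre-$C^*$-algebra. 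Given that $\s{S}(\N_0^2)$ is a well-studied Fréchet algebra — indeed isomorphic to the smooth noncommutative torus / the algebra of smoothing operators — I would most efficiently cite the known fact that $\s{S}(\N_0^2)$ (equivalently the algebra of rapidly decreasing matrices) is spectrally invariant in $\bb{K}(\ell^2)$ and hence a non-unital pre-$C^*$-algebra, transport this through the isomorphism of Proposition~\ref{prop:isomorphi-to-compact-operators}, and only sketch the direct Neumann-series proof for completeness.
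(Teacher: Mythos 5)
The paper's proof of part (1) follows a distinctly different and more algebraic route than any of your three candidates. Rather than estimating Schwartz seminorms of the resolvent along the contour (the obstacle you correctly flag), the paper introduces the \emph{shifted} resolvent $S_z:=(T-z\mathbf{1})^{-1}+z^{-1}\mathbf{1}$, observes the identity $T(T-z\mathbf{1})^{-1}=zS_z$, and then invokes Lemma~\ref{lemma:append_A} (Appendix~\ref{appA}: $S_1AS_2\in\bb{S}_B$ whenever $S_1,S_2\in\bb{S}_B$ and $A\in\bb{M}_B$) to conclude $TS_z=z^{-1}T(T-z\mathbf{1})^{-1}T\in\bb{S}_B$ \emph{without any seminorm bookkeeping}. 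A final algebraic manipulation, $zTS_z=T+z^2S_z$, then yields $S_z=z^{-2}(zTS_z-T)\in\bb{S}_B$ and the contour integral closes. Your preferred transport route via Proposition~\ref{prop:isomorphi-to-compact-operators} plus the known fact that smoothing operators are a pre-$C^*$-algebra inside $\bb{K}\big(\ell^2\big)$ is sound and arguably more conceptual; you should just note explicitly that $\rho_k(\Upsilon_{j\mapsto i})=\ketbra{\psi_{i,k}}{\psi_{j,k}}$ so that $\rho_k(\bb{S}_B)$ really is the rapidly decreasing matrix algebra on $\s{V}_k$. What the paper's approach buys is self-containedness (Lemma~\ref{lemma:append_A} is proved from the $\bb{G}_{s,t}$-spaces of~\cite{gracia-varilly-88,gracia-varilly-88-II}); your direct Neumann-series route, as you suspect, is the hardest of the three and the paper does not attempt it.

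For part (2) the paper gives no argument at all, citing \cite[Theorem~7 and Corollary]{gracia-varilly-88}. Your reconstruction contains two genuine gaps. First, the claim that the square root of a rapidly decreasing positive matrix is rapidly decreasing is \emph{not} a consequence of part~(1): $\sqrt{z}$ is not holomorphic at $0$, so holomorphic functional calculus is silent here, and the integral formula $T^{1/2}=\frac{1}{\pi}\int_0^\infty\lambda^{-1/2}T(T+\lambda\mathbf{1})^{-1}\dd\lambda$ converges in operator norm but its convergence in the Fr\'echet topology of $\s{S}\big(\N_0^2\big)$ is exactly the unresolved quantitative estimate you point out. Second, the passage from positive $T$ to general $T$ by ``polar-type decomposition inside the pre-$C^*$-algebra'' does not go through: the polar partial isometry is not in $\bb{S}_B$, and the regularized factor $(T^*T+\varepsilon\mathbf{1})^{1/2}$ does not belong to the \emph{non-unital} algebra $\bb{S}_B$ either (its ``value at $0$'' is $\sqrt{\varepsilon}\neq0$), so the ``correction'' you parenthetically defer does not vanish. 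Your Cohen--Hewitt remark points in the right direction and is closer to what is behind \cite[Theorem~7 and Corollary]{gracia-varilly-88}, but to make it rigorous you would have to cite a Fr\'echet-algebra factorization theorem rather than the Banach-algebra original.
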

\begin{proof}
(1) Let $T\in \bb{S}_{B}$ with spectrum $\sigma(T)$. Since $\bb{C}_{B}$ is non-unital
it follows that $0\in \sigma(T)$.
{Let $f$ be an analytic function on an open neighborhood $\s{U}$ of the spectrum
 $\sigma(T)$ such that the boundary $\Gamma:=\partial\s{U}$ is a~Jordan
curve}
and $g(z):=f(z)-f(0)$. To prove that $\bb{S}_{B}$ is a~pre-$C^*$-algebra we need to shows that
\[
 g(T) := -\frac{1}{2\pi\ii}\oint_{\Gamma}\dd z \,\frac{g(z)}{T-z\mathbf{1}} = -\frac{1}{2\pi\ii}\oint_{\Gamma}\dd z \, \frac{f(z)}{T-z\mathbf{1}} - f(0){\bf 1}
\]
is in $\bb{S}_{B}$
for every analytic function $f$. By using Cauchy's formula $f(0)= \frac{1}{2\pi\ii}\oint_{\Gamma}\dd z\, f(z)z^{-1}$
we get that $g(T)\in \bb{S}_{B}$ follows if we can prove that $S_z:=(T-z\mathbf{1})^{-1}+z^{-1}{\bf 1}\in \bb{S}_{B}$
for every $z\in\C\setminus \sigma(T)$.
We know
a priori that
 $(T-z\mathbf{1})^{-1}\in \bb{M}_{B}$. Observing that
 \begin{equation}\label{eq:aux_oo1}
T(T-z\mathbf{1})^{-1}={\bf 1}+z(T-z\mathbf{1})^{-1} = z S_z
\end{equation}
one gets $TS_z=S_zT=z^{-1}T(T-z\mathbf{1})^{-1}T\in\bb{S}_{B}$ in view of
 Lemma~\ref{lemma:append_A}. On the other hand, by multiplying the second equality in~\eqref{eq:aux_oo1} on the left by $T$ one gets
 $zTS_z=T+z^2S_z$, and in turn $S_z=z^{-2}(zTS_z-T)\in \bb{S}_{B}$.

 (2) This follows from a similar factorization
property for $\s{S}_{B}$ proved in~\cite[Theorem 7 and Corollary]{gracia-varilly-88}.
\end{proof}

Item (2) of Proposition~\ref{prop:struct_C_infty} can be reformulated as follows. Let
$(\bb{S}_{B})^2$ be the linear space generated by the products $S_1S_2$ with $S_1,S_2\in \bb{S}_{B}$. Then $(\bb{S}_{B})^2=\bb{S}_{B}$.

\subsection{The Hilbert algebra structure}\label{sec:Hil-alg}
One interesting property of the magnetic twisted convolution is that in addition to the classic Young's inequalities, it also satisfies the following $L^2$-inequality~\cite[Proposition 1.33]{folland-89} or~\cite[Theorem 1.2.2]{thangavelu-93}
\begin{equation}\label{eq:continuity-magnetic-translations}
 \lVert f * g \rVert_{L^2} \leqslant \frac{1}{\sqrt{2\pi}\ell_B}\lVert f \rVert_{L^2} \lVert g \rVert_{L^2} ,\qquad f,g \in L^{2}\big(\R^2\big) .
\end{equation}
It is worth noting that the inequality becomes singular in the limit of vanishing magnetic field $\ell_B\to\infty$.
In view of~\eqref{eq:continuity-magnetic-translations}, every $f\in L^{2}\big(\R^2\big)$ define the integral type operator
\[
(L_f\psi)(x) := \frac{1}{2\pi \ell_{B}^{2}} \int_{\mathbb{R}^{2}}\dd y\, f(y-x) \Phi_{B}(x,y) \psi(y) ,
\qquad \psi \in L^2\big(\R^2\big)
\]
with norm bounded by $\|L_f\|\leqslant \frac{1}{\sqrt{2\pi}\ell_B}\lVert f \rVert_{L^2}$. It turns out that $L_f$ is a Carleman operator~\cite[Section~11]{halmos-sunder-78}.
Let
\[
\bb{L}^2_B := \big\{L_f\in\bb{B}\big(L^2\big(\R^2\big)\big) \,|\, f\in L^2\big(\R^2\big)\big\}
\]
be the set of convolution-type operators with $L^2$-kernel.
Evidently, $L_f=\pi(f)$ whenever $f\in L^1\big(\R^2\big)\cap L^2\big(\R^2\big)$.
The relation between $\bb{L}^2_B$ and the $C^*$-algebra $\bb{C}_B$ is described in the following result.
\begin{Proposition}\label{prop:inclus_after_first}
$\bb{L}^2_B$ is a dense two-sided ideal of $\bb{C}_B$. Moreover, the following $($proper$)$ inclusions hold true
\[
\bb{S}_B \subset \bb{L}^2_B \subset \bb{C}_B .
\]
\end{Proposition}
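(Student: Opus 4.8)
The plan is to transport the problem to the concrete model of Proposition~\ref{prop:isomorphi-to-compact-operators}, where $\bb{L}^2_B$ will be identified with the Hilbert--Schmidt ideal. The inclusions and density are the easy part. Since Schwartz functions are square-integrable, $\s{S}_B=\s{L}^1_B\cap{S}\big(\R^2\big)\subset L^1\big(\R^2\big)\cap L^2\big(\R^2\big)$, and on this intersection $\pi(f)=L_f$; hence $\bb{S}_B=\pi(\s{S}_B)\subset\bb{L}^2_B$, and in particular $\bb{L}^2_B$ is dense in $\bb{C}_B$ because $\bb{S}_B$ already is. For the inclusion $\bb{L}^2_B\subset\bb{C}_B$, given $f\in L^2\big(\R^2\big)$ choose $f_n\in L^1\big(\R^2\big)\cap L^2\big(\R^2\big)$ with $f_n\to f$ in $L^2$; applying \eqref{eq:continuity-magnetic-translations} to $f_n-f$ gives $\pi(f_n)=L_{f_n}\to L_f$ in operator norm, so $L_f\in\overline{\pi(\s{L}^1_B)}=\bb{C}_B$.

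The core step is to compute $\rho_m(\bb{L}^2_B)$ for each $m\in\N_0$. Parseval's identity in the Laguerre basis identifies $L^2\big(\R^2\big)$ with $\ell^2\big(\N_0^2\big)$: if $f=\sum_{k,j}c_{k,j}\psi_{k,j}$ then $\|f\|_{L^2}^2=\sum_{k,j}|c_{k,j}|^2$. Since $\pi(\psi_{k,j})=(-1)^{k-j}\big(\sqrt{2\pi}\ell_B\big)^{-1}\Upsilon_{j\mapsto k}$ (proof of Proposition~\ref{prop:struct}) and a coefficient array in $\ell^2\big(\N_0^2\big)$ defines a Hilbert--Schmidt, hence bounded, operator, the corresponding series converges in operator norm and $L_f=\pi(f)=\big(\sqrt{2\pi}\ell_B\big)^{-1}\sum_{k,j}(-1)^{k-j}c_{k,j}\,\Upsilon_{j\mapsto k}$. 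Restricting to the block $\s{V}_m$ and using $\Upsilon_{j\mapsto k}\psi_{n,m}=\delta_{j,n}\psi_{k,m}$, the operator $\rho_m(L_f)=P_mL_fP_m$ acts on the orthonormal basis $\{\psi_{n,m}\}_{n\in\N_0}$ of $\s{V}_m$ through the matrix $\big(\sqrt{2\pi}\ell_B\big)^{-1}(-1)^{k-j}c_{k,j}$; it is thus Hilbert--Schmidt with $\|\rho_m(L_f)\|_{\rm HS}^2=(2\pi\ell_B^2)^{-1}\|f\|_{L^2}^2$, and every Hilbert--Schmidt operator on $\s{V}_m$ arises this way. Therefore $\rho_m(\bb{L}^2_B)=\bb{K}_2(\s{V}_m)$, the two-sided ideal of Hilbert--Schmidt operators on $\s{V}_m$.

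The remaining claims then follow at once. The Hilbert--Schmidt class is a two-sided ideal of $\bb{B}(\s{V}_m)$, a fortiori of $\bb{K}(\s{V}_m)=\rho_m(\bb{C}_B)$; since $\rho_m$ is a $\ast$-isomorphism of $\bb{C}_B$ onto $\bb{K}(\s{V}_m)$ and $\bb{K}_2(\s{V}_m)\subset\bb{K}(\s{V}_m)$, this shows $\bb{L}^2_B=\rho_m^{-1}\big(\bb{K}_2(\s{V}_m)\big)$ is a two-sided ideal of $\bb{C}_B$. The inclusions are proper in the same model: $\rho_m(\bb{S}_B)$ consists of the operators whose matrix lies in the proper subspace $S\big(\N_0^2\big)\subsetneq\ell^2\big(\N_0^2\big)$ (Proposition~\ref{prop:discret_sch}), so $\bb{S}_B\subsetneq\bb{L}^2_B$, while a diagonal compact operator with non-square-summable diagonal shows $\bb{K}_2(\s{V}_m)\subsetneq\bb{K}(\s{V}_m)$, hence $\bb{L}^2_B\subsetneq\bb{C}_B$. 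I expect the ideal property to be the only genuinely non-formal point, and the reason the model is needed: one \emph{cannot} obtain it from a naive limiting argument such as $AL_f=\lim_n\pi(g_n)L_f=\lim_nL_{g_n\ast f}$, because $\bb{L}^2_B$ is not norm-closed in $\bb{C}_B$ (Hilbert--Schmidt operators are norm-dense in the compacts) and $h\mapsto L_h$ is not bounded below on $L^2\big(\R^2\big)$, so operator-norm convergence of $\{L_{g_n\ast f}\}$ carries no $L^2$-control on $\{g_n\ast f\}$. Transporting to the block-diagonal model of Proposition~\ref{prop:isomorphi-to-compact-operators}, where $\bb{L}^2_B$ becomes the Hilbert--Schmidt ideal, circumvents exactly this difficulty.
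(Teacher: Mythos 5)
Your proof is correct. For the inclusions and density you argue exactly as the paper does: $\s{S}_B\subset\s{L}^1_B\cap L^2\big(\R^2\big)$ with $\pi(f)=L_f$ there gives $\bb{S}_B\subset\bb{L}^2_B$, and $L^2$-approximation together with inequality~\eqref{eq:continuity-magnetic-translations} gives $\bb{L}^2_B\subset\bb{C}_B$, with density inherited from $\bb{S}_B$. The divergence is in the ideal property. The paper endows $\s{L}^1_B\cap L^2\big(\R^2\big)$ with the structure of a Hilbert algebra and invokes, via the later Proposition~\ref{prop:von_propert}(2), the abstract fact (Dixmier, Part~I, Chapter~5, Proposition~3) that the left-bounded elements form a two-sided ideal of the associated left von Neumann algebra. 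You instead transport everything through the $\ast$-isomorphism $\rho_m\colon\bb{C}_B\to\bb{K}(\s{V}_m)$ of Proposition~\ref{prop:isomorphi-to-compact-operators}, check by a Parseval/transition-operator computation that $\rho_m$ carries $\bb{L}^2_B$ isometrically (up to a constant) onto the Hilbert--Schmidt ideal inside $\bb{K}(\s{V}_m)$, and conclude from the elementary fact that Hilbert--Schmidt operators form a two-sided ideal of the bounded operators. Your route is more concrete and self-contained: it avoids the Hilbert algebra machinery, cites only earlier results, makes the properness of both inclusions transparent, and re-derives the identification $\bb{L}^2_B\simeq\ell^2\big(\N_0^2\big)$ that the paper records afterwards in~\eqref{eq:exp_oper_Ups}. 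What the paper's route buys in exchange is the stronger conclusion that $\bb{L}^2_B$ is an ideal of the full von Neumann algebra $\bb{M}_B$, which Proposition~\ref{prop:von_propert}(2) needs in its own right. Your closing diagnosis---that a naive norm-limit argument $AL_f=\lim_n\pi(g_n)L_f$ fails because $\bb{L}^2_B$ is not norm-closed in $\bb{C}_B$ and $h\mapsto L_h$ is not bounded below on $L^2\big(\R^2\big)$---correctly pinpoints why some structural input, the Hilbert algebra theory or the block-diagonal model, is unavoidable.
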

\begin{proof} Let $\s{L}^{1,2}_{B}:=\s{L}^1_{B}\cap L^2\big(\R^2\big)$. This is a $\ast$-subalgebra of $\s{L}^1_{B}$ in view of the inequality~\eqref{eq:continuity-magnetic-translations}
and $\s{S}_{B}\subset \s{L}^{1,2}_{B}$. This implies that $\s{L}^{1,2}_B$ is dense in $\s{L}^1_{B}$ and in turn $\pi(\s{L}^{1,2})$ is dense in $\bb{C}_B$. Moreover, the inclusions $\bb{S}_B\subset \pi\big(\s{L}^{1,2}_B\big)\subset \bb{L}^2_B$ holds. Let $f\in L^2\big(\R^2\big)$ and $\{f_n\}\subset \s{L}^{1,2}$ be a~sequence such that $f_n\to f$ in the $L^2$-norm. In view of the $L^2$-estimate for the norm of elements in $\bb{L}^2_B$ one has that $\|L_f-\pi(f_n)\|\to 0$, i.e., $L_f$ lies in the norm closure of $ \pi\big(\s{L}^{1,2}_B\big)$ which coincides with $\bb{C}_B$. This proves the inclusions
stated in the claim. The ideal property is a consequence of the more general result in Proposition~\ref{prop:von_propert}(2).
\end{proof}

The subspace $\s{L}^{1,2}_{B}\subset \s{L}^{1}_{B}$ endowed with the algebraic structure of $\s{L}^{1}_{B}$ and the $L^2$-scalar product
\[
\langle f,g\rangle_B := \frac{1}{2\pi \ell_{B}^{2}}\int_{\R^2}\dd x\, \overline{f(x)}g(x) ,
\]
is a \emph{Hilbert algebra} in the sense of~\cite[Part~I, Chapter~5, Section~1]{dixmier-81}
Indeed, it is only matter of straightforward computations to check that the conditions which define the structure of a~Hilbert algebra are satisfied.
Moreover, one can check that $L_f\psi=f^-\ast\psi$ for all $f,\psi\in \s{L}^{1,2}_{B}$, where $f^-(x):=f(-x)$. Then, after observing that the completion of $\s{L}^{1,2}_{B}$ with respect to the norm induced by the scalar product $\langle \,,\, \rangle_B$ is exactly $L^2\big(\R^2\big)$, one
can prove that every element in
 $\bb{L}^2_B$ is a \emph{left bounded} operator associated to $\s{L}^{1,2}_{B}$ according to~\cite[Part~I, Chapter~5, Definition~2]{dixmier-81}. Moreover, in view of the effect of the involution $J$, every element in
 $\bb{L}^2_B$ is also a \emph{right bounded} operator~\cite[Part~I, Chapter~5, Proposition~2]{dixmier-81}.
 Indeed, a direct computation shows
\begin{align*}
(JL_{f}J\psi)(x) &= \int_{\mathbb{R}^{2}}\dd y\, \overline{f(y-x)} \Phi_{B}(y,x) \psi(y)
 = \frac{1}{2\pi \ell_{B}^{2}} \int_{\mathbb{R}^{2}}\dd y\, \psi^{-}(y-x) \Phi_{B}(x,y) \overline{f(y)}\\
&= \big(\psi\ast (Jf)^{-}\big)(x) =: (R_{Jf}\psi)(x) ,
\end{align*}
where we used $\overline{f(x)}=\overline{f^-(-x)}=(Jf)^{-}(x)$.

Every $g\in L^2\big(\R^2\big)$ admits a unique expansion in terms of the Laguerre basis $\psi_{n,m}$ according to
$g=\sum_{(n,m)\in\N_0^2}g_{n,m} \psi_{n,m}$. As a consequence the associated operator $L_g\in \bb{L}^2_B$ is uniquely determined by the expansion
\begin{equation}\label{eq:exp_oper_Ups}
L_g = \frac{1}{\sqrt{2\pi}\ell_B}\sum_{(n,m)\in\N_0^2}(-1)^{m-n}g_{n,m} \Upsilon_{m\mapsto n} .
\end{equation}
In other words we are exploiting the identification
$L^2\big(\R^2\big)\simeq \ell^2\big(\N_0^2\big)$ induced by the
connection between the Laguerre basis
$\psi_{n,m}$ and the operators $\Upsilon_{m\mapsto n}$ to define the topological isomorphism
\[
\bb{L}^2_B \simeq \ell^2\big(\N_0^2\big) .
\]
In the same spirit we can define the space
\[
\bb{L}^1_B \simeq \ell^1\big(\N_0^2\big)
\]
formed by operators of the form~\eqref{eq:exp_oper_Ups} with associated sequence
$\{g_{n,m}\}\in \ell^1\big(\N_0^2\big)$. It turns out that $\bb{L}^1_B$ is a Banach space with respect to the topology induced by the norm of $\ell^1\big(\N_0^2\big)$.
Another space that will play an important role in Sections~\ref{sec:dix_tr} and~\ref{sec:dix_tr_2}
is the set $\big(\bb{L}^2_B\big)^2$ made up of products of operators of the form $L_{g_1}L_{g_2}$ with
$L_{g_1},L_{g_2}\in \bb{L}^2_B$.

\begin{Remark}\label{lemm:pre_inclus}
Every element in $\big(\bb{L}^2_B\big)^2$ is of the form $L_h$ with $h\in L^2\big(\R^2\big)\cap C_0\big(\R^2\big)$, where $C_0\big(\R^2\big)$ is the space of continuous function which vanish at infinity.
Indeed, by observing that $L_{g_1}L_{g_2}=L_{g_1\ast g_2}$, one can describe
$\big(\bb{L}^2_B\big)^2$ as the space of operators with the integral kernel given by the
magnetic twisted convolution of two $L^2$-functions. From~\eqref{eq:continuity-magnetic-translations} it follows that $g_1\ast g_2\in L^2\big(\R^2\big)$. Moreover, by adapting a classical argument~\cite[Lemma~2.20]{lieb-loss-01} one can prove that
$g_1\ast g_2\in C_0\big(\R^2\big)$.
\end{Remark}

The mutual relations between the various spaces introduced above are described in the next result.
\begin{Proposition}\label{prop:very_inclus}
The following chain of inclusions holds true
\begin{equation}\label{eq:L^p-inclus}
\bb{S}_B \subset \bb{L}^1_B \subset \big(\bb{L}^2_B\big)^2 \subset \bb{L}^2_B \subset \bb{C}_B .
\end{equation}
\end{Proposition}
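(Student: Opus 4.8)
The plan is to deduce the whole chain \eqref{eq:L^p-inclus} from a single algebraic dictionary already implicit in Section~\ref{sec:Hil-alg}: the Laguerre expansion identifies $\bb{L}^2_B$, as a $*$-algebra, with the Hilbert–Schmidt operators on $\ell^2\big(\N_0\big)$. Indeed, if $g=\sum_{n,m}g_{n,m}\psi_{n,m}\in L^2\big(\R^2\big)$ then $L_g\in\bb{L}^2_B$ is given by \eqref{eq:exp_oper_Ups}, and Lemma~\ref{lemm:conv_laguer} gives $\psi_{k,j}\ast\psi_{n,m}=\frac{1}{\sqrt{2\pi}\ell_B}\delta_{j,n}\psi_{k,m}$; hence the assignment $L_g\mapsto(g_{n,m})$ turns the operator product $L_{g_1}L_{g_2}=L_{g_1\ast g_2}$ into $\frac{1}{\sqrt{2\pi}\ell_B}$ times the matrix product. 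Up to this harmless constant, $\bb{L}^2_B$ is therefore $*$-isomorphic to the ideal of Hilbert–Schmidt operators on $\ell^2\big(\N_0\big)$, while $\bb{L}^1_B$ corresponds to the matrices with absolutely summable entries and $\bb{S}_B=\pi(\s{S}_B)$ (via \eqref{eq:exp_oper_Ups} applied to $f\in\s{S}_B$) to the matrices whose entries are the Laguerre coefficients of a Schwartz function. With these dictionaries the four inclusions become standard facts about Schatten ideals, to be checked one at a time.

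For $\bb{S}_B\subset\bb{L}^1_B$: an element of $\bb{S}_B$ is $L_f$ with $f$ Schwartz, so by Proposition~\ref{prop:discret_sch} its coefficients $\{f_{n,m}\}$ lie in ${S}\big(\N_0^2\big)$, and Cauchy–Schwarz against the weight $(2n+1)^2(2m+1)^2$ gives $\sum_{n,m}|f_{n,m}|\leqslant r_2(\{f_{n,m}\})\big(\sum_{n,m}(2n+1)^{-2}(2m+1)^{-2}\big)^{1/2}<\infty$, i.e.\ $\{f_{n,m}\}\in\ell^1\big(\N_0^2\big)$ and $L_f\in\bb{L}^1_B$. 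For the remaining three inclusions I would argue on the matrix side. If $L_h\in\bb{L}^1_B$, the matrix $H:=(h_{n,m})$ has $\sum_{n,m}|h_{n,m}|<\infty$; writing $H=\sum_{n,m}h_{n,m}E_{n,m}$ in the rank-one matrix units (each of trace norm $1$) and using subadditivity of the trace norm shows that $H$ is trace-class, with $\|H\|_{1}\leqslant\sum_{n,m}|h_{n,m}|$. Polar-decomposing $H=U|H|$ and setting $P:=U|H|^{1/2}$, $Q:=|H|^{1/2}$, one gets $H=PQ$ with $P,Q$ Hilbert–Schmidt (since $\Tr(|H|)=\|H\|_{1}<\infty$, so $|H|^{1/2}$ has finite Hilbert–Schmidt norm); translating back through the dictionary, $P$ and $Q$ are the coefficient matrices of functions $g_1,g_2\in L^2\big(\R^2\big)$ and, after absorbing the constant $\frac{1}{\sqrt{2\pi}\ell_B}$ into $g_1$, one has $L_h=L_{g_1}L_{g_2}\in\big(\bb{L}^2_B\big)^2$. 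This gives $\bb{L}^1_B\subset\big(\bb{L}^2_B\big)^2$. The inclusion $\big(\bb{L}^2_B\big)^2\subset\bb{L}^2_B$ is then immediate, since a product of two Hilbert–Schmidt operators is Hilbert–Schmidt — equivalently $g_1\ast g_2\in L^2\big(\R^2\big)$ by \eqref{eq:continuity-magnetic-translations}, as already recorded in Remark~\ref{lemm:pre_inclus} — and finally $\bb{L}^2_B\subset\bb{C}_B$ is exactly Proposition~\ref{prop:inclus_after_first}.

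The only step needing a genuine (if short) argument is the middle one, and within it the observation that absolute summability of the matrix entries forces membership in the trace class; everything else is bookkeeping with the Laguerre dictionary and with results already established. If, as the statement suggests, one also wants the inclusions to be proper, it suffices to exhibit at each stage a simple matrix separating the two ideals: $(n+1)^{-2}(m+1)^{-2}$ lies in $\ell^1\big(\N_0^2\big)$ but is not rapidly decreasing; a rank-one matrix built from two vectors in $\ell^2\big(\N_0\big)\setminus\ell^1\big(\N_0\big)$ is trace-class but not $\ell^1$-summable; the diagonal matrix with entries $(n+1)^{-1}$ is Hilbert–Schmidt but not trace-class; and a compact operator with slowly decaying singular values is not Hilbert–Schmidt.
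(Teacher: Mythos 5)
Your proof is correct, and it handles the first, third, and fourth inclusions in essentially the same way as the paper (Schwartz $\subset\ell^1$ for the first, Remark~\ref{lemm:pre_inclus} for the third, Proposition~\ref{prop:inclus_after_first} for the fourth). The interesting divergence is in the middle inclusion $\bb{L}^1_B\subset\big(\bb{L}^2_B\big)^2$: the paper gives an elementary, hand-crafted factorization $g=a\ast b$ by setting $d_r:=\sup_n\sqrt{|g_{n,r}|}$, $a_{n,r}:=d_r^{-1}g_{n,r}$, $b_{r,m}:=\delta_{r,m}d_r$ and checking directly that $\{a_{n,r}\}$ and $\{b_{r,m}\}$ are square-summable (using $|g_{n,r}|\leqslant d_r^2$ and $\sum_r d_r^2\leqslant\sum_{n,r}|g_{n,r}|$). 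You instead push the whole question through the Laguerre dictionary into the Schatten ideal picture: absolutely summable matrices are trace-class (by subadditivity of $\lVert\cdot\rVert_1$ on matrix units), and every trace-class operator factors as a product of two Hilbert--Schmidt operators via the polar decomposition $H=\big(U|H|^{1/2}\big)\big(|H|^{1/2}\big)$. Both arguments are sound; yours is more structural and in fact identifies $\big(\bb{L}^2_B\big)^2$ exactly with the trace class under the dictionary, which makes the properness of the inclusions transparent (the rank-one example you mention), whereas the paper's construction is self-contained and avoids invoking polar decomposition. One small point worth stating explicitly in your write-up: the multiplicativity of the dictionary (modulo the $\big(\sqrt{2\pi}\ell_B\big)^{-1}$ factor) is what guarantees that the matrix-side factorization $H=PQ$ pulls back to an operator-side factorization $L_h=L_{g_1}L_{g_2}$, and this in turn rests on Lemma~\ref{lemm:conv_laguer}; you gesture at this but it is the one spot where the argument silently crosses between the two pictures.
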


\begin{proof} The first inclusion follows from Proposition~\ref{prop:discret_sch} and the inclusion
${S}\big(\N_0^2\big)\subset \ell^1\big(\N_0^2\big)$.
The third inclusion follows from Remark~\ref{lemm:pre_inclus}.
The last inclusion has been proved in Proposition~\ref{prop:inclus_after_first}.
Then, it only remains to prove the second inclusion.
Let $L_g\in \bb{L}^1_B$ with $g=\sum_{(n,m)\in\N_0^2}g_{n,m} \psi_{n,m}$, satisfying $\{g_{n,m}\}\in \ell^1\big(\N_0^2\big)$. Let
\[
d_r := \sup_{n\in\N_0} \Big\{\sqrt{|g_{n,r}|} \Big\}
\]
and define two sequences $a_{n,r}:=d_r^{-1}g_{n,r}$, (with $a_{n,r}=0$ if $d_r=0$) and $b_{r,m}:= \delta_{r,m} d_r$.
Let $a:=\sum_{(n,r)\in\N_0^2}a_{n,r} \psi_{n,r}$,
and $b:=\sum_{(r,m)\in\N_0^2}b_{r,m} \psi_{r,m}$. By construction
$a,b\in L^2\big(\R^2\big)$. Moreover, since $g_{n,m}=\sum_{r\in\N_0}a_{n,r}b_{r,m}$, one gets that $g=a\ast b$. This implies that
$L_g=L_aL_b$ with $L_a,L_b\in \bb{L}^2_B$ and this concludes the proof.
\end{proof}

\subsection{The von Neumann algebra of magnetic operators}\label{sec:W*magn_alg}
The von Neumann algebra of magnetic operators $\bb{M}_B$ is the bicommutant of the $C^*$-algebra $\bb{C}_B$, namely
\[
\bb{M}_B := \bb{C}_B'' .
\]
The celebrated density theorem
implies that $\bb{M}_B$ coincides with the weak-closure of $\bb{C}_B$, or equivalently with the strong-closure of $\bb{C}_B$.
The main properties of $\bb{M}_B$ are described below.
\begin{Proposition}\label{prop:von_propert}
Let $\bb{M}_B$ be the von Neumann algebra of magnetic operators. The following facts hold true:
\begin{itemize}\itemsep=0pt
\item[$(1)$] The $\ast$-subalgebras $\bb{F}_B$, $\bb{S}_B$, $\bb{L}^1_B$ and $\bb{L}^2_B$
are weakly and strongly dense in $\bb{M}_B$;
\item[$(2)$] $\bb{L}^2_B$ is a two-sided ideal of $\bb{M}_B$;
\item[$(3)$] $\bb{M}_B= \bb{V}_B'$ where $\bb{V}_B$ is the $C^*$-algebra generated by the dual magnetic translations.
\end{itemize}
\end{Proposition}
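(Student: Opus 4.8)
The plan is to establish each of the three claims in Proposition~\ref{prop:von_propert} separately, leveraging the structural results already obtained for $\bb{C}_B$.

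For item~(3), the inclusion $\bb{M}_B\subseteq\bb{V}_B'$ is immediate: by Proposition~\ref{prop:C-commut} we have $\bb{C}_B\subseteq\bb{V}_B'$, and since $\bb{V}_B'$ is a von Neumann algebra (being a commutant) it is weakly closed, hence it contains $\bb{M}_B=\bb{C}_B''$. The reverse inclusion $\bb{V}_B'\subseteq\bb{M}_B$ is the substantive part. The natural route is to show that $\bb{V}_B''\subseteq\bb{C}_B'$, i.e., that every operator commuting with everything in $\bb{C}_B'$ already lies in the weak closure of $\bb{C}_B$; equivalently, one shows $\bb{C}_B'\subseteq\bb{V}_B$, so that taking commutants gives $\bb{V}_B'\subseteq\bb{C}_B''=\bb{M}_B$. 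To see $\bb{C}_B'\subseteq\bb{V}_B$ one uses the explicit description of $\bb{C}_B$ from Proposition~\ref{prop:struct}: since $\bb{C}_B$ is generated by the transition operators $\Upsilon_{j\mapsto k}$, an operator $T$ in $\bb{C}_B'$ must commute with all $\Upsilon_{j\mapsto k}$; using properties (2)--(4) of Proposition~\ref{prop:struct} (in particular that the $\Upsilon_{j\mapsto j}=\Pi_j$ are mutually orthogonal projections summing to $\mathbf 1$, and the shift relations) one computes the matrix elements of $T$ in the Laguerre basis and finds that $T$ acts only on the ``$m$-index'', i.e., $T$ is built from the dual ladder operators $\rr b^\pm$, hence lies in the von Neumann algebra generated by the $V_B(a)$, which is $\bb{V}_B''=\bb{V}_B$ (as $\bb{V}_B$ is already a von Neumann algebra by definition as a $C^*$-algebra generated by unitaries — more carefully, one argues $\bb{C}_B'$ is contained in the commutant of $\bb{U}_B:=C^*(U_B(a))$ as well, and invokes the standard duality for the pair of commuting Weyl systems $U_B,V_B$). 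The cleanest formulation is via the unitary $\s U$ of Proposition~\ref{prop:isomorphi-to-compact-operators}: under $\s U$, $\bb{C}_B\cong\bigoplus_k\bb{K}(\s V_k)$ acting diagonally with the \emph{same} compact operator in each block, and $\bb{V}_B$ is identified with the operators that permute/act across the blocks; then $\bb{M}_B=\bigoplus_k\bb{B}(\s V_k)^{\mathrm{diag}}$ and $\bb{V}_B'$ is exactly this, by the elementary commutation theory of $\bb{B}(\s H)\bar\otimes\C$ inside $\bb{B}(\s H\otimes\s K)$.

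For item~(1), the density of $\bb{F}_B$, $\bb{S}_B$, $\bb{L}^1_B$, $\bb{L}^2_B$ in $\bb{M}_B$ follows by transitivity: by the $\varepsilon/2$-argument recalled before Proposition~\ref{prop:struct} (or directly by Proposition~\ref{prop:inclus_after_first} together with $\bb{F}_B\subseteq\bb{S}_B\subseteq\bb{L}^1_B\subseteq\bb{L}^2_B$ from Proposition~\ref{prop:very_inclus}), each of these $\ast$-subalgebras is norm-dense in $\bb{C}_B$; since $\bb{C}_B$ is weakly/strongly dense in its bicommutant $\bb{M}_B$ by the von Neumann density theorem (already invoked just above the proposition), and norm convergence implies strong convergence, each subalgebra is strongly — hence weakly — dense in $\bb{M}_B$.

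For item~(2), that $\bb{L}^2_B$ is a two-sided ideal of $\bb{M}_B$: here the Hilbert-algebra picture of Section~\ref{sec:Hil-alg} does the work. We established there that $\s L^{1,2}_B$ is a Hilbert algebra with completion $L^2(\R^2)$, that every $L_f\in\bb{L}^2_B$ is simultaneously a left-bounded and right-bounded operator for this Hilbert algebra, and (via the $J$-computation) that $JL_fJ=R_{Jf}$ is the corresponding right-convolution operator. By the standard theory of Hilbert algebras (\cite[Part~I, Chapter~5]{dixmier-81}), the von Neumann algebra $\bb{M}_B$ generated by the left-bounded operators has the set of Hilbert-space vectors (realized as left-bounded, equivalently here $\bb{L}^2_B$) as a two-sided ideal: concretely, for $A\in\bb{M}_B$ and $L_f\in\bb{L}^2_B$ one checks $AL_f=L_{Af^\vee}$ for a suitable $L^2$-vector $Af^\vee$, using that $A$ commutes with all right-bounded operators $R_{Jg}=JL_gJ$; the right-multiplication case is symmetric using $J\bb{M}_BJ=\bb{M}_B'$ (which itself follows from item~(3), since $J\bb{V}_B J=\bb{U}_B$-type relations from the computations $JK_iJ=G_{3-i}$ given in Section~\ref{sec:landau_ham}). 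I expect item~(3) to be the main obstacle, since it requires pinning down the commutation duality between the two projective representations $U_B$ and $V_B$ precisely rather than the soft density/ideal arguments needed for (1) and (2); the decomposition-into-blocks viewpoint of Proposition~\ref{prop:isomorphi-to-compact-operators} is the device I would lean on to make it clean.
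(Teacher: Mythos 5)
Your items~(1) and~(2) track the paper's proof essentially line-for-line: (1) is the observation that norm density in $\bb{C}_B$ propagates to weak/strong density in $\bb{C}_B''$, and (2) invokes Dixmier's ideal property for the left von Neumann algebra of a Hilbert algebra. (A small remark on your (2): the identity $J\bb{M}_BJ=\bb{M}_B'$ is itself part of the Hilbert-algebra commutation theorem in Dixmier and does not depend on item~(3), so the dependence you flag is spurious.)

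For item~(3) your route is genuinely different from the paper's, and it is worth noting why the paper's is cleaner. Both of your variants (direct Laguerre-basis computation of $\bb{C}_B'$, or the block picture via $\s{U}$) reduce to the same pivot: you need $\bb{V}_B''$ to equal the \emph{entire} algebra of operators that act only in the ``$m$-index'' (i.e., $\bb{V}_B''={\bf 1}\otimes\bb{B}(\ell^2(\N_0))$ in the factorization $L^2(\R^2)\simeq\ell^2(\N_0)_n\otimes\ell^2(\N_0)_m$). This is an irreducibility statement for the restriction of the Weyl system $V_B$ to a single Landau level, which is true but is an additional fact that would have to be established; as it stands your sketch asserts it (``standard duality for the pair of commuting Weyl systems'') rather than proving it. The paper's proof avoids this entirely: it introduces the \emph{right} von Neumann algebra $\bb{R}_B$ of the Hilbert algebra $\s{L}^{1,2}_B$, notes that each $R_f$ is literally the integrated operator $\frac{1}{2\pi\ell_B^2}\int\dd y\,f(-y)V_B(y)$ so that $\bb{R}_B\subseteq\bb{V}_B$ (a mere inclusion, no irreducibility needed), and then applies the Hilbert-algebra commutation theorem $\bb{R}_B'=\bb{M}_B$ to get $\bb{V}_B'\subseteq\bb{R}_B'=\bb{M}_B$. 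Your approach buys a concrete, ``block-diagonal'' picture that is pleasant to have in mind (and is formalized in the paper right after, in Proposition~\ref{prop:isomorphi-to-compact-operators_von_neu}), but it costs you the Stone--von Neumann-type irreducibility argument; the paper's approach costs nothing extra because the left/right commutation theorem is already in play for item~(2).
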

\begin{proof} (1) since the convergence in norm dominates the weak (strong)
convergence, every norm-dense $\ast$-subalgebra $\bb{A}\subset \bb{C}_B$ is automatically
weakly (strongly) dense in~$\bb{C}_B$. As a~consequence~$\bb{A}$ is also
weakly (strongly) dense in $\bb{C}_B''$.

(2) Since $\bb{M}_B={\bb{L}^2_B}''$ it follows that $\bb{M}_B$ is the \emph{left} von Neumann algebra generated by the Hilbert algebra $\s{L}^{1,2}_{B}$ (see the discussion and the references at the end of Section~\ref{sec:C-magn_alg}). Then $\bb{L}^2_B$ is a two-sided ideal in $\bb{M}_B$ in view of~\cite[Part~I, Chapter~5, Proposition~3]{dixmier-81}.

(3) Since $\bb{V}_B'$ is a commutant, and therefore a von Neumann algebra, one
gets the inclusion $\bb{M}_B\subseteq \bb{V}_B'$ directly from
Proposition~\ref{prop:C-commut}. Now, let us consider the \emph{right} von
Neumann algebra~$\bb{R}_B$ generated by the Hilbert algebra
$\s{L}^{1,2}_{B}$. For a given $f\in\s{L}^{1,2}_{B}$, the associated operator
$R_f\in\bb{R}_B$ acts as $R_f\psi:=\psi\ast f^{-}$. Then, an explicit check shows that
$R_f$ is exactly the integrated operator
$\frac{1}{2\pi\ell^2_B}\int_{\R^2}\dd y\,f(-y)V_B(y)$ where the $V_B(y)$'s are the dual
magnetic translations. As a consequence $\bb{R}_B\subseteq \bb{V}_B$ and
$\bb{V}_B'\subseteq \bb{R}_B'$. The proof is completed by the equality
$\bb{R}_B'=\bb{M}_B$ proved in~\cite[Part~I, Chapter~5,Theorem~1]{dixmier-81}.
\end{proof}

 The next result is a consequence of the invariance of the elements of $\bb{M}_B$ under the dual magnetic translations.

 \begin{Lemma}\label{lemma:comm_rel_bpm}
 Let $G_j$, with $j=1,2$, be the self-adjoint operators defined
 by~\eqref{eq:dual-momenta}, and let $\s{D}(G_j)$ their related domains. Then,
 for every $A\in \bb{M}_B$, it holds true that $A[\s{D}(G_j)]\subseteq \s{D}(G_j)$ and
 \[
 [A,G_j] = 0 ,\qquad j=1,2 .
 \]
 Furthermore, the commutation relation extends to the ladder operators $\rr{b}^\pm$, i.e.,
 \[
 \big[A,\rr{b}^\pm\big] = 0 ,\qquad \forall\, A\in\bb{M}_B .
 \]
\end{Lemma}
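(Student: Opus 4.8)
The plan is to exploit two facts: that $\bb{M}_B=\bb{V}_B'$ by Proposition~\ref{prop:von_propert}(3), so every $A\in\bb{M}_B$ commutes with all the dual magnetic translations $V_B(a)$; and that $G_1$ and $G_2$ are, up to a rescaling of the parameter, the Stone generators of one‑parameter subgroups of $\{V_B(a)\}_{a\in\R^2}$. Indeed, from $V_B(a)=\expo{-\frac{\ii}{\ell_B}(a_1G_2+a_2G_1)}$ one reads off that $s\mapsto V_B(0,-\ell_B s)=\expo{\ii s G_1}$ and $s\mapsto V_B(-\ell_B s,0)=\expo{\ii s G_2}$ are strongly continuous unitary groups, whose self‑adjoint generators are precisely $G_1$ and $G_2$ (this is just Stone's theorem applied to the essentially self‑adjoint operators of~\eqref{eq:dual-momenta}).

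Next I would run the standard argument showing that a bounded operator commuting with a strongly continuous unitary group $\{\expo{\ii s H}\}$ leaves $\s{D}(H)$ invariant and commutes with $H$ there. Fix $A\in\bb{M}_B$; then $A$ commutes with $\expo{\ii s G_j}$ for all $s\in\R$ and $j=1,2$. For $\psi\in\s{D}(G_j)$ one has $\ii G_j\psi=\lim_{s\to0}\frac{1}{s}\big(\expo{\ii s G_j}-\mathbf{1}\big)\psi$, hence
\[
\frac{1}{s}\big(\expo{\ii s G_j}-\mathbf{1}\big)A\psi = A\,\frac{1}{s}\big(\expo{\ii s G_j}-\mathbf{1}\big)\psi \;\xrightarrow[s\to0]{}\; \ii\,A G_j\psi,
\]
and the mere existence of this limit is the statement that $A\psi\in\s{D}(G_j)$ with $G_jA\psi=AG_j\psi$. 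This establishes $A[\s{D}(G_j)]\subseteq\s{D}(G_j)$ and $[A,G_j]=0$, the first two assertions of the lemma.

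Finally, to reach the ladder operators $\rr{b}^\pm=-\frac{1}{\sqrt{2}}(G_1\pm\ii G_2)$, I would use the invariant core $\s{F}_B$, on which $\rr{b}^\pm$ acts as this algebraic combination and which satisfies $\s{F}_B\subseteq\s{D}(G_1)\cap\s{D}(G_2)$. For $\psi\in\s{F}_B$ the previous step gives $A\psi\in\s{D}(G_1)\cap\s{D}(G_2)$ and $\rr{b}^\pm A\psi=-\frac{1}{\sqrt{2}}(G_1\pm\ii G_2)A\psi=-\frac{1}{\sqrt{2}}A(G_1\pm\ii G_2)\psi=A\,\rr{b}^\pm\psi$; one then extends to all of $\s{D}(\rr{b}^\pm)$ by approximating $\psi\in\s{D}(\rr{b}^\pm)$ by $\psi_n\in\s{F}_B$ with $\psi_n\to\psi$ and $\rr{b}^\pm\psi_n\to\rr{b}^\pm\psi$, noting $A\psi_n\to A\psi$ while $\rr{b}^\pm A\psi_n=A\,\rr{b}^\pm\psi_n\to A\,\rr{b}^\pm\psi$, and invoking closedness of $\rr{b}^\pm$. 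The computations are routine; the only point that needs a word of care is the unbounded bookkeeping in this last step — namely that, on $\s{D}(G_1)\cap\s{D}(G_2)$, the operator $\rr{b}^\pm$ really coincides with $-\frac{1}{\sqrt{2}}(G_1\pm\ii G_2)$ and acquires no extra domain, which is a standard property of the harmonic‑oscillator ladder operators (cf.\ the structure of $Q_B$ in~\eqref{eq:harm_osc}).
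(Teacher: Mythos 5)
Your proof is correct, and it takes a genuinely different route from the paper's. Both arguments start from the same key fact, $\bb{M}_B=\bb{V}_B'$ (Proposition~\ref{prop:von_propert}(3)), so every $A\in\bb{M}_B$ commutes with the dual magnetic translations $V_B(a)$. From there, the paper proceeds via resolvents: it writes $R_j:=(G_j-\ii\mathbf{1})^{-1}$ as a strongly convergent Laplace transform of the $V_B$'s, deduces $R_j\in\bb{V}_B''=\bb{M}_B'$, gets the domain invariance from $A[\s{D}(G_j)]=R_jA[L^2(\R^2)]$, and then uses the resolvent relation $R_jG_j=\mathbf{1}+\ii R_j$ to kill the commutator. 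You instead recognize $G_1$, $G_2$ directly as Stone generators of one-parameter subgroups $s\mapsto V_B(0,-\ell_B s)$ and $s\mapsto V_B(-\ell_B s,0)$ of the dual magnetic translations, and then run the classical difference-quotient argument: a bounded operator commuting with a strongly continuous unitary group preserves the domain of the generator and commutes with it. This is more elementary — it avoids the Laplace transform and never needs to place $R_j$ in $\bb{M}_B'$, only that $A$ commutes with each $V_B(a)$. The trade-off is that the generator identification needs a brief check of signs and scalings against the definition of $V_B$, and that the resolvent route makes the domain invariance $A[\s{D}(G_j)]\subseteq\s{D}(G_j)$ completely immediate, whereas in the Stone argument it comes out as the conclusion of the limiting step. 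For the ladder operators $\rr{b}^\pm$ both treatments are essentially the same — the paper just says "linear combinations of $G_1$ and $G_2$" and stops — and your extra care about the core $\s{F}_B$ and the closure of $\rr{b}^\pm$ is a reasonable tidying-up of what the paper leaves implicit, though you should make explicit that $\s{F}_B$ is indeed a core for $\rr{b}^\pm$ (the paper's statement that they are "closable operators on $\s{F}_B$" is what licenses this).
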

\begin{proof}
The resolvents $R_j:=(G_j-\ii\mathbf{1})^{-1}$ can be obtained from the dual magnetic translations via the Laplace transform,
\[
R_j = \ii\int_0^{+\infty}\dd s\, \expo{-s} V_B(s\ell_B e_j) ,
\]
where $e_1=(1,0)$ and $e_2=(0,1)$ are the canonical basis of $\R^2$. Since the integral is defined in the strong sense one gets $R_j\in \bb{V}_B''=\bb{M}_B'$ where the second equality is a consequence of Proposition~\ref{prop:von_propert}(3). Since $R_j\big[L^2\big(\R^2\big)\big]=\s{D}(G_j)$, it follows that
$A[\s{D}(G_j)]=R_jA\big[L^2\big(\R^2\big)\big]\subseteq \s{D}(G_j)$. As a consequence the
difference $AG_j-G_jA$ is well defined and closable on the dense domain $\s{D}(G_j)$
and we can denote with $[A,G_j]$ the related closure.
 From the equation $R_j G_j=\mathbf{1}+\ii R_j$, one gets $R_j(AG_j-G_jA)=[A,R_jG_j]=\ii[A,R_j]=0$
where the sequence of equalities is well defined on the dense domain $\s{D}(G_j)$. Since $R_j$ is invertible one obtains $(AG_j-G_jA)=0$ on a dense set. This implies $[A,G_j]=0$. The commutation relations for the
ladder operators are consequence of the fact that $\rr{b}^\pm$ are linear combinations of
$G_1$ and $G_2$.
\end{proof}

A further characterization of $\bb{M}_B$ can be obtained from an application of Proposition~\ref{prop:isomorphi-to-compact-operators}.
\begin{Proposition}\label{prop:isomorphi-to-compact-operators_von_neu}
The unitary transform $\s{U}$ establishes a unitary equivalence of von Neumann algebras
 \[
 \s{U} \s{M}_B \s{U}^* = \bigoplus_{k\in\N_0} \bb{B}(\s{V}_k)
 \]
 defined by $\s{U}A\s{U}^*=\bigoplus_{k\in\N_0}P_kAP_k$ for all $A\in \s{M}_B$. Moreover, every projection $\rho_k\colon \s{M}_B\to \bb{B}(\s{V}_k)$ defines a $\ast$-isomorphism.
 \end{Proposition}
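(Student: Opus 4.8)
The plan is to bootstrap from the $C^*$-algebraic statement already established in Proposition~\ref{prop:isomorphi-to-compact-operators}. Recall from its proof that every transition operator $\Upsilon_{n\mapsto m}$ commutes with every dual Landau projection $P_k$; since the $\Upsilon_{n\mapsto m}$ generate $\bb{C}_B$ this gives $[P_k,A]=0$ for all $A\in\bb{C}_B$, and because the map $A\mapsto[P_k,A]$ is weakly continuous the identity passes to the weak closure $\bb{M}_B=\bb{C}_B''$. In particular each $\s{V}_k$ is a reducing subspace for $\bb{M}_B$, so that $P_kAP_k=AP_k$ and the compression $\rho_k(A):=P_kAP_k$, viewed as an operator on $\s{V}_k$, is a normal $\ast$-homomorphism $\bb{M}_B\to\bb{B}(\s{V}_k)$. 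Consequently $\s{U}A\s{U}^*=\bigoplus_{k\in\N_0}P_kAP_k$ is well defined on all of $\bb{M}_B$, as claimed.

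Next I would identify the image $\s{U}\bb{M}_B\s{U}^*$ by a bicommutant computation. Since $\s{U}$ is unitary and $\bb{M}_B=\bb{C}_B''$,
\[
\s{U}\bb{M}_B\s{U}^* \;=\; \big(\s{U}\bb{C}_B\s{U}^*\big)'' \;=\; \Big(\bigoplus_{k\in\N_0}\bb{K}(\s{V}_k)\Big)'' ,
\]
using Proposition~\ref{prop:isomorphi-to-compact-operators} in the last step. An operator on $\bigoplus_k\s{V}_k$ commuting with $\bigoplus_k\bb{K}(\s{V}_k)$ must be block diagonal---its $(j,k)$-block intertwines the irreducible actions of $\bb{K}(\s{V}_k)$ and $\bb{K}(\s{V}_j)$ and hence vanishes for $j\neq k$---and each diagonal block commutes with all of $\bb{K}(\s{V}_k)$, hence is scalar. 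Thus the commutant is $\bigoplus_{k}\C\,\mathbf{1}_{\s{V}_k}$, whose commutant in turn is exactly the algebra of block-diagonal operators $\bigoplus_{k\in\N_0}\bb{B}(\s{V}_k)$; this yields the stated unitary equivalence of von Neumann algebras.

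Finally, to see that each $\rho_k$ is a $\ast$-isomorphism onto $\bb{B}(\s{V}_k)$: surjectivity is immediate, since composing $A\mapsto\s{U}A\s{U}^*$ with the projection onto the $k$-th summand of $\bigoplus_{k}\bb{B}(\s{V}_k)$ is precisely $\rho_k$. Injectivity follows exactly as in Proposition~\ref{prop:isomorphi-to-compact-operators}: if $P_kAP_k=0$ then $A\psi_{n,k}=0$ for every $n\in\N_0$, and applying the operators $\rr{b}^\pm$, which commute with $A$ by Lemma~\ref{lemma:comm_rel_bpm}, propagates this to $A\psi_{n,m}=0$ for all $n,m\in\N_0$, so $A=0$. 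The only delicate point is the commutant computation in the second step---keeping track of the off-diagonal blocks and using the irreducibility of $\bb{K}(\s{V}_k)$ on $\s{V}_k$---but this is routine once the $C^*$-level decomposition of Proposition~\ref{prop:isomorphi-to-compact-operators} is available.
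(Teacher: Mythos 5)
Your argument is correct and amounts to a careful filling-in of the paper's one-line proof, which simply cites Proposition~\ref{prop:isomorphi-to-compact-operators} together with the fact that the weak closure of the compacts is all of the bounded operators. The bicommutant computation you carry out---extending $[P_k,A]=0$ from $\bb{C}_B$ to $\bb{M}_B$ by weak continuity of $A\mapsto[P_k,A]$, and then identifying $\big(\bigoplus_k\bb{K}(\s{V}_k)\big)''$ as the block-diagonal algebra $\bigoplus_k\bb{B}(\s{V}_k)$ by computing the commutant---is exactly the detail the paper leaves implicit.
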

\begin{proof}
The claim is a direct consequence of Proposition~\ref{prop:isomorphi-to-compact-operators} along with the fact that the weak (strong) closure of the algebra of compact operators is the algebra of all bounded operators.
\end{proof}

\subsection{Integration theory and the trace per unit volume}\label{sec:dix_tr}
The von Neumann algebra $\bb{M}_{B}$ admits a privileged trace which allows to build a noncommutative integration theory.
We will introduce such a trace following the construction of~\cite{lenz-99}.

The trace on $\bb{M}_{B}$ can be induced from the underlying Hilbert algebra structure presented in Section~\ref{sec:C-magn_alg}.
\begin{Proposition}\label{prop:FSN-trace}
There exists a unique normal trace $\fint_B$ on $\bb{M}_{B}$ defined by
\[
\fint_B(L_f^*L_f) := \langle f,f\rangle_B ,\qquad \forall\, L_f\in\bb{L}^2_B .
\]
Moreover, the trace $\fint_B$ is semi-finite and faithful, and its ideal of definition is given by $\big(\bb{L}^2_B\big)^2\subset \bb{L}^2_B$, where $\big(\bb{L}^2_B\big)^2$ is the span of the products $S_1S_2$
for all $S_1,S_2\in \bb{L}^2_B$.
Finally, it holds true that
\[
\fint_B(L_f^*L_g) := \langle f,g\rangle_B ,\qquad \forall\, L_f,L_g\in\bb{L}^2_B .
\]
\end{Proposition}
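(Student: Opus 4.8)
The plan is to invoke the standard theory of Hilbert algebras, as developed in Dixmier~\cite[Part~I, Chapter~5]{dixmier-81}, applied to the Hilbert algebra $\s{L}^{1,2}_{B}$ introduced at the end of Section~\ref{sec:C-magn_alg}, whose completion is $L^2\big(\R^2\big)$. Recall that to every Hilbert algebra $\f{A}$ with completion $\f{H}$ one associates a canonical semi-finite normal faithful trace on the left von Neumann algebra $\f{L}(\f{A})=\f{A}''$, characterized on positive left-bounded operators $L_f$ (with $f\in\f{H}$) by $\fint(L_f^*L_f)=\langle f,f\rangle$. Since Proposition~\ref{prop:von_propert}(2) identifies $\bb{M}_B$ with the left von Neumann algebra generated by $\s{L}^{1,2}_{B}$, and since $\bb{L}^2_B=\{L_f\mid f\in L^2\big(\R^2\big)\}$ consists precisely of the left-bounded operators associated to this Hilbert algebra (as observed in the discussion preceding Remark~\ref{lemm:pre_inclus}), the trace $\fint_B$ is simply the canonical Hilbert-algebra trace transported to our concrete setting. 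Its uniqueness among normal traces agreeing with $\langle\cdot,\cdot\rangle_B$ on $\{L_f^*L_f\}$, its semi-finiteness, and its faithfulness are all part of this general package.

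First I would verify in detail that $\s{L}^{1,2}_{B}$ genuinely satisfies Dixmier's axioms for a Hilbert algebra: the multiplication is separately continuous for the $\langle\cdot,\cdot\rangle_B$ topology (this is exactly the $L^2$-estimate~\eqref{eq:continuity-magnetic-translations}), the involution $f\mapsto f^*$ is an isometric anti-automorphism for $\langle\cdot,\cdot\rangle_B$, the associativity identity $\langle f*g,h\rangle_B=\langle g,f^**h\rangle_B$ holds (a direct change-of-variables computation using the cocycle identity for $\Phi_B$, which is normalized), and the products $f*g$ span a dense subspace of $L^2\big(\R^2\big)$; the latter follows since $\s{L}^{1,2}_B$ contains $\s{F}_B$ and Lemma~\ref{lemm:conv_laguer} shows $\psi_{k,j}*\psi_{j,m}$ is a nonzero multiple of $\psi_{k,m}$, so all Laguerre functions lie in the span of products. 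These are the "straightforward computations" alluded to in the text, so I would not belabor them.

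Next I would extract the precise description of the ideal of definition. By general Hilbert-algebra theory the ideal of definition of $\fint_B$ is the linear span of products $S_1S_2$ with $S_1,S_2$ left-bounded, i.e. exactly $\big(\bb{L}^2_B\big)^2$; by Remark~\ref{lemm:pre_inclus} this sits inside $\bb{L}^2_B$, so the chain of inclusions in Proposition~\ref{prop:very_inclus} is respected. Then the polarization identity, applied to the already-established quadratic formula, upgrades $\fint_B(L_f^*L_f)=\langle f,f\rangle_B$ to the sesquilinear statement $\fint_B(L_f^*L_g)=\langle f,g\rangle_B$ for all $L_f,L_g\in\bb{L}^2_B$: write $4L_f^*L_g=\sum_{k=0}^{3}i^k L_{f+i^k g}^*L_{f+i^k g}$ and use linearity of $\fint_B$ on its ideal of definition together with the corresponding polarization of $\langle\cdot,\cdot\rangle_B$.

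The main obstacle, to the extent there is one, is not conceptual but bookkeeping: one must be careful that the abstract "left-bounded operators" of the Hilbert algebra $\s{L}^{1,2}_{B}$ coincide \emph{exactly} with $\bb{L}^2_B$ and not merely contain it, so that the ideal of definition and the formula for $\fint_B$ are stated on the right space; this is precisely where the identity $L_f\psi=f^-*\psi$ and the boundedness estimate $\|L_f\|\le (\sqrt{2\pi}\ell_B)^{-1}\|f\|_{L^2}$ from Section~\ref{sec:Hil-alg}, together with the action of $J$ identifying right-bounded with left-bounded operators, do the work. Once that identification is in hand, every assertion of the Proposition is a direct citation of~\cite[Part~I, Chapter~5, Theorem~2 and Proposition~3]{dixmier-81}.
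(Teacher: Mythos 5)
Your proposal is correct and takes essentially the same route as the paper: both appeal to Dixmier's theory of Hilbert algebras (Part~I, Chapter~5) applied to $\s{L}^{1,2}_B$, together with the identification of $\bb{M}_B$ as the left von Neumann algebra of that Hilbert algebra established in Proposition~\ref{prop:von_propert}. The only minor difference is that the paper settles the uniqueness clause by a specific citation to \cite[Lemma~2.2.1]{lenz-99} rather than folding it into the general Hilbert-algebra package, and it does not spell out the polarization step you add (which is a harmless and correct supplement).
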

\begin{proof}
Since $\bb{M}_{B}$ coincides with the \emph{left} von Neumann algebra generated by the Hilbert algebra~$\s{L}^{1,2}_{B}$, then~\cite[Part~I, Chapter~5, Theorem~1]{dixmier-81} applies verbatim. Therefore, $\fint_B$ coincides with the \emph{natural} trace associated with the Hilbert algebra structure~\cite[Part~I, Chapter~5, Definition~2]{dixmier-81}. As a consequence
$\fint_B$
is semi-finite, faithful and normal with ideal of definition $\bb{L}^2_B\subset \bb{M}_{B}$. Unicity follows from~\cite[Lemma~2.2.1]{lenz-99}.
\end{proof}

Another way of describing the domain of $\fint_B$ is the following:
$T\in \bb{M}_{B}^+$ meets the condition $\fint_B(T)\leqslant +\infty$, if and only if $S:=\sqrt{T}\in \bb{L}^2_B$.
The next result provides a useful criterion for computation.
\begin{Corollary}\label{prop:comput_trace}
Let $L_h\in \big(\bb{L}^2_B\big)^2$. Then
\[
\fint_B(L_h) = h(0) .
\]
\end{Corollary}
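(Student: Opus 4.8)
The plan is to reduce the identity to the closed formula $\fint_B\big(L_f^*L_g\big)=\langle f,g\rangle_B$ recorded in Proposition~\ref{prop:FSN-trace}. By Remark~\ref{lemm:pre_inclus} any $L_h\in\big(\bb{L}^2_B\big)^2$ factors as $L_h=L_{g_1}L_{g_2}$ with $g_1,g_2\in L^2\big(\R^2\big)$ and $h=g_1\ast g_2$; moreover $h\in L^2\big(\R^2\big)\cap C_0\big(\R^2\big)$, so the pointwise value $h(0)$ is well defined. The one structural fact still needed is that the correspondence $f\mapsto L_f$ intertwines the involution $f^*(x):=\overline{f(-x)}$ with the operator adjoint, i.e.\ $L_{f^*}=L_f^*$ for every $f\in L^2\big(\R^2\big)$.

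To obtain this intertwining I would argue by density. For $f\in L^1\big(\R^2\big)\cap L^2\big(\R^2\big)$ one has $L_f=\pi(f)$, so $L_{f^*}=\pi(f^*)=\pi(f)^*=L_f^*$. For general $f\in L^2\big(\R^2\big)$ choose $f_n\in L^1\cap L^2$ with $f_n\to f$ in $L^2$; then $\|L_{f_n}-L_f\|\leqslant\frac{1}{\sqrt{2\pi}\ell_B}\|f_n-f\|_{L^2}\to 0$, and likewise $\|L_{f_n^*}-L_{f^*}\|\to 0$ because $g\mapsto g^*$ is an $L^2$-isometry, so passing to the limit in $L_{f_n^*}=L_{f_n}^*$ gives $L_{f^*}=L_f^*$. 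With this in hand, put $f:=g_1^*\in L^2\big(\R^2\big)$: then $L_{g_1}=\big(L_{g_1^*}\big)^*=L_f^*$, hence $L_h=L_f^*L_{g_2}$ falls under Proposition~\ref{prop:FSN-trace}, giving $\fint_B(L_h)=\langle g_1^*,g_2\rangle_B$.

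It then only remains to compare this scalar product with $h(0)$, which is a routine change of variables. Using $\overline{g_1^*(x)}=g_1(-x)$,
\[
\langle g_1^*,g_2\rangle_B=\frac{1}{2\pi\ell_B^2}\int_{\R^2}\dd x\; g_1(-x)\,g_2(x),
\]
whereas, since $y\wedge 0=0$ forces $\Phi_B(y,0)=1$,
\[
h(0)=(g_1\ast g_2)(0)=\frac{1}{2\pi\ell_B^2}\int_{\R^2}\dd y\; g_1(y)\,g_2(-y)\,\Phi_B(y,0)=\frac{1}{2\pi\ell_B^2}\int_{\R^2}\dd y\; g_1(y)\,g_2(-y),
\]
and the substitution $y=-x$ identifies the two integrals, so $\fint_B(L_h)=h(0)$. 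The argument is largely bookkeeping; the only point demanding genuine care — the one I would regard as the main (mild) obstacle — is the verification of $L_{f^*}=L_f^*$ at the level of $L^2$-kernels rather than only for $f\in L^1\cap L^2$, since $\fint_B$ is defined via the Hilbert-algebra completion and the factorization $L_{g_1}=\big(L_{g_1^*}\big)^*$ must be legitimate there. Everything else reduces to Fubini and the triviality of $\Phi_B(y,0)$.
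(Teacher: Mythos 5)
Your argument is correct and follows the paper's own route: factor $L_h=L_f^*L_{g_2}$, invoke Proposition~\ref{prop:FSN-trace} to get $\fint_B(L_h)=\langle f,g_2\rangle_B$, and identify this with $h(0)$ via the triviality of $\Phi_B(\cdot,0)$. You add one worthwhile detail that the paper leaves implicit, namely the $L^2$-density verification that $L_{f^*}=L_f^*$, which is precisely what legitimizes rewriting the defining product $L_{g_1}L_{g_2}\in\big(\bb{L}^2_B\big)^2$ in the starred form $L_f^*L_g$ required by Proposition~\ref{prop:FSN-trace}.
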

\begin{proof}
From Remark~\ref{lemm:pre_inclus} we know that $h\in L^2\big(\R^2\big)\cap C_0\big(\R^2\big)$, then the expression $h(0)$ makes sense. Since $L_h=L_{f}^*L_g=L_{f^**g}$ for $f,g\in L^2\big(\R^2\big)$ by definition, Then
$\fint_B(L_h)= \langle f,g\rangle_B$ in view of Proposition~\ref{prop:FSN-trace}. The equality
$h(0)=(f^**g)(0)= \langle f,g\rangle_B$ concludes the proof.
\end{proof}

Corollary~\ref{prop:comput_trace} is particularly useful to compute the trace of elements in
$\bb{S}_B$ (or more in general in $\bb{L}^1_B$). For instance, the trace of the Landau projection $\Pi_n$ is given by
\begin{equation}\label{eq:trac_proj_olb}
\fint_B(\Pi_n) = p_n(0) = 1 ,\qquad n\in\N_0,
\end{equation}
where the function $p_n$ is defined in~\eqref{eq:fun_proj}. Similarly, the trace of the heat semigroup $\expo{- \frac{s}{\s{E}_B}H_B}$ is given by
\begin{equation*}
\fint_B\big(\expo{- \frac{s}{\s{E}_B}H_B}\big) = g_s(0) = \frac{1}{2\sinh\big(\frac{s}{2}\big)} ,\qquad s>0,
\end{equation*}
where the function $g_s$ is defined in~\eqref{eq:fun_heat_ker}. Finally the trace of the operators
$\Upsilon_{j\mapsto k}$ defined by~\eqref{eq:def_Ups} is given by
\begin{equation*}
\fint_B(\Upsilon_{j\mapsto k}) := (-1)^{j-k}{\sqrt{2\pi}\ell_B} \psi_{k,j}(0) = \delta_{k,j} ,\qquad j,k\in\N_0 .
\end{equation*}

The next result links the trace $\fint_B$ with the standard trace in $L^2\big(\R^2\big)$.
\begin{Lemma}\label{lemma:pre-trac_uni_vol}
Let $\Lambda\subseteq \R^2$ be a compact subset with finite volume $|\Lambda|$, and $\chi_\Lambda$
the related characteristic function acting as a projection on $L^2\big(\R^2\big)$.
The operator $\chi_\Lambda L_f^*L_g \chi_\Lambda$ is trace class
and the following equality
\[
 \fint_B(L_f^*L_g) = \frac{2\pi\ell^2_B}{|\Lambda|}{\Tr}_{L^2(\R^2)}( \chi_\Lambda L_f^*L_g \chi_\Lambda )
\]
holds true for every $L_f,L_g\in\bb{L}^2_B$.
\end{Lemma}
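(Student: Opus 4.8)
The plan is to reduce the claimed identity to the defining property of $\fint_B$ stated in Proposition~\ref{prop:FSN-trace}, namely $\fint_B(L_f^*L_g)=\langle f,g\rangle_B$, by computing the ordinary trace $\Tr_{L^2(\R^2)}(\chi_\Lambda L_f^*L_g\chi_\Lambda)$ directly in terms of the integral kernel. First I would recall from Remark~\ref{lemm:pre_inclus} that $L_f^*L_g=L_{f^**g}$ and that $h:=f^**g\in L^2\big(\R^2\big)\cap C_0\big(\R^2\big)$, so $L_f^*L_g$ is the convolution-type operator with (continuous, square-integrable) integral kernel
\[
\s{K}(x,y)=\frac{1}{2\pi\ell_B^2}h(y-x)\Phi_B(x,y),\qquad x,y\in\R^2,
\]
read off from the defining formula for $L_h$. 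Since $\chi_\Lambda$ is multiplication by the characteristic function of the compact set $\Lambda$, the operator $\chi_\Lambda L_f^*L_g\chi_\Lambda$ has kernel $\chi_\Lambda(x)\s{K}(x,y)\chi_\Lambda(y)$, which is continuous on the compact set $\Lambda\times\Lambda$ and hence square-integrable there; this already gives Hilbert–Schmidtness. To upgrade to trace class I would write $\chi_\Lambda L_f^*L_g\chi_\Lambda=(\chi_\Lambda L_{f^*})(L_g\chi_\Lambda)$ — a product of two Hilbert–Schmidt operators, since $f^*,g\in L^2$ and the kernels $\chi_\Lambda(x)\tfrac{1}{2\pi\ell_B^2}f^*(y-x)\Phi_B(x,y)$ and $\tfrac{1}{2\pi\ell_B^2}g(y-x)\Phi_B(x,y)\chi_\Lambda(y)$ are each in $L^2(\R^2\times\R^2)$ — and a product of two Hilbert–Schmidt operators is trace class.

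Next I would compute the trace. For a trace-class integral operator with continuous kernel restricted to a compact set, the trace is the integral of the kernel over the diagonal, so
\[
\Tr_{L^2(\R^2)}\big(\chi_\Lambda L_f^*L_g\chi_\Lambda\big)=\int_\Lambda\dd x\,\s{K}(x,x)=\int_\Lambda\dd x\,\frac{1}{2\pi\ell_B^2}h(0)\Phi_B(x,x)=\frac{|\Lambda|}{2\pi\ell_B^2}h(0),
\]
using $\Phi_B(x,x)=\expo{\ii x\wedge x/(2\ell_B^2)}=1$. On the other hand, by Corollary~\ref{prop:comput_trace} (or directly from Proposition~\ref{prop:FSN-trace} together with $h(0)=(f^**g)(0)=\langle f,g\rangle_B$) we have $\fint_B(L_f^*L_g)=h(0)$. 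Combining the two displays gives
\[
\fint_B(L_f^*L_g)=h(0)=\frac{2\pi\ell_B^2}{|\Lambda|}\Tr_{L^2(\R^2)}\big(\chi_\Lambda L_f^*L_g\chi_\Lambda\big),
\]
which is the asserted formula.

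The main obstacle is the justification of the trace-equals-diagonal-integral step: for a general trace-class operator the naive kernel-on-the-diagonal formula can fail, so one must invoke a version adapted to the present situation. The cleanest route is to observe that $\chi_\Lambda L_{f^*}$ and $L_g\chi_\Lambda$ are Hilbert–Schmidt with explicit $L^2$-kernels $\s{K}_1,\s{K}_2$, so that $\Tr\big((\chi_\Lambda L_{f^*})(L_g\chi_\Lambda)\big)=\langle \s{K}_1^*,\s{K}_2\rangle_{L^2(\R^2\times\R^2)}=\iint \overline{\s{K}_1(y,x)}\,\s{K}_2(x,y)\dd x\,\dd y$ — a rigorously valid identity for products of Hilbert–Schmidt operators — and then check by Fubini (everything is absolutely integrable on $\Lambda\times\R^2$) that this double integral equals $\int_\Lambda \s{K}(x,x)\dd x$ with $\s{K}$ the composed kernel; one also uses that the kernel of the composition $L_{f^*}L_g$ is indeed $\s{K}$, i.e. that kernels compose by integration, which is where $h=f^**g$ and $L_h=L_{f^*}L_g$ from Remark~\ref{lemm:pre_inclus} enter. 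Alternatively, one may cite that for a positive trace-class operator with continuous kernel Mercer-type arguments give the diagonal formula, and extend by polarization; either way, the remaining computations are the elementary ones indicated above.
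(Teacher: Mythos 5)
Your proposal is correct and follows essentially the same route as the paper: factor $\chi_\Lambda L_f^*L_g\chi_\Lambda$ as a product of the two Hilbert--Schmidt operators $\chi_\Lambda L_{f^*}$ and $L_g\chi_\Lambda$, integrate the resulting kernel along the diagonal, and invoke the defining identity $\fint_B(L_f^*L_g)=\langle f,g\rangle_B$. The only difference is cosmetic — you compress the composed kernel via $L_f^*L_g=L_h$ with $h=f^**g$, while the paper writes it out as a double integral — and your explicit justification of the trace-equals-diagonal step via the Hilbert--Schmidt inner product is a small but welcome tightening of a point the paper leaves implicit.
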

\begin{proof}
The operator $L_g \chi_\Lambda$ has integral kernel
\[
\kappa_g(x,y) := \frac{1}{2\pi \ell_{B}^{2}} g(y-x) \Phi_{B}(x,y) \chi_\Lambda(y)
\]
with $g\in L^2\big(\R^2\big)$. Since $\chi_\Lambda$ is supported on a compact set it follows that
 $\kappa_g \in L^2\big(\R^2\times \R^2\big)$, thus $L_g \chi_\Lambda$ is a Hilbert--Schmidt operator.
 A similar argument shows that also $\chi_\Lambda L_f^*$ is Hilbert--Schmidt. Therefore the product $\chi_\Lambda L_f^*L_g \chi_\Lambda$ is trace class and has kernel
 \[
 \kappa_{f,g}(x,y) := \frac{\chi_\Lambda(x)\chi_\Lambda(y)}{\big(2\pi \ell_{B}^{2}\big)^2}\int_{\R^2}\dd s \overline{f(x-s)} g(y-s) \Phi_{B}(s,y-x),
\]
 which along the diagonal reads
 \[
 \kappa_{f,g}(x,x) := \frac{\chi_\Lambda(x)}{2\pi \ell_{B}^{2}}\langle f,g\rangle_B .
 \]
 Since the trace of $\chi_\Lambda L_f^*L_g \chi_\Lambda$ is given by the integral of $ \kappa_{f,g}$ along the diagonal one obtains
\[
{\Tr}_{L^2(\R^2)}( \chi_\Lambda L_f^*L_g \chi_\Lambda) = \frac{|\Lambda|}{2\pi \ell_{B}^{2}}\langle f,g\rangle_B .
\]
The proof is completed by using the definition of $\fint_B$.
\end{proof}

\begin{Remark}[relation with the trace per unit of volume I]\label{rk:trac_unit_volI}
Let $\Lambda_{n} \subseteq \R^2$
an
 increasing sequence of compact subsets such that $\Lambda_n\nearrow\R^2$ and which satisfies the \emph{F{\o}lner condition} (see, e.g., \cite{greenleaf-69} for more details). A bounded operator $T$ admits a thermodynamic limit (related to the F{\o}lner sequence $\Lambda_n$) if the limit
 \[
\s{T}_B(T) := \lim_{n\to+\infty} \frac{1}{|\Lambda_n|}{\Tr}_{L^2(\R^2)}( \chi_{\Lambda_n}T \chi_{\Lambda_n} )
 \]
 exists. The linear functional $\s{T}_B$ is known as the \emph{trace per unit volume}.
 Lemma~\eqref{lemma:pre-trac_uni_vol} immediately implies that
every element $T$ in the domain of definition of $\fint_B$ admits trace per unit of volume independently of the election of a~particular F{\o}lner sequence. In particular, it holds true that
 \[
 \s{T}_B(T) = \frac{1}{2\Lambda_B} \fint_B(T) ,
 \]
 where $\Lambda_B:=\pi\ell_B^2$ is the area of the \emph{magnetic disk} of radius $\ell_B$.
\end{Remark}

In view of Proposition~\ref{prop:FSN-trace}, the von Neumann algebra
$\bb{M}_{B}$ turns out to be endowed with the faithful, semi-finite and normal
(FSN) trace $\fint_B$. The pair $(\bb{M}_{B},\fint_B)$ is the basic element for
the development of the noncommutative integration theory in the sense
of~\cite{nelson-13,segal-53,terp-93} (see also~\cite[Section~3.2]{denittis-lein-book} and references therein). Let $T\in\big(\bb{L}^2_B\big)^2$ be an
element of the domain of $\fint_B$. The $L^p$-norm of $T$ is given by
 \[
 \nnorm{T}_{B,p} := \left[\fint_B\big(|T|^p\big)\right]^{\frac{1}{p}} ,\qquad 0<p<+\infty .
 \]
 The noncommutative $L^p$-spaces are defined by
 \[
 \rr{L}_B^p := \overline{\big(\bb{L}^2_B\big)^2}^{ \nnorm{\ }_{B,p}} .
 \]
 The spaces $\rr{L}_B^p$ are Banach spaces of possibly unbounded operators and
 $\rr{L}_B^p\subset {\rm Aff}(\bb{M}_{B})$ where ${\rm Aff}(\bb{M}_{B})$ is the set of closed and densely defined operators affiliated with $\bb{M}_{B}$. The identification $\bb{M}_{B}=\rr{L}_B^\infty$ is often used.

\subsection{Integration theory and Dixmier trace}\label{sec:dix_tr_2}
The trace $\fint_B$ on $\bb{M}_{B}$ can be described making use of the \emph{Dixmier trace} along the line anticipated in~\cite{denittis-gomi-moscolari-19}.
 There are several standard references
 for the theory of the Dixmier trace, see, e.g.,~\cite[Chapter~4, Section~2]{connes-94}, \cite[Appendix~A]{connes-moscovici-95}, \cite[Section~7.5 and Appendix~7.C]{gracia-varilly-figueroa-01}, \cite[Chapter~6]{lord-sukochev-zanin-12}, \cite{alberti-matthes-02}.
 Here, we will recall only the basic definition of the Dixmier trace.
Let us start with the \emph{singular values} $\mu_n(T)$ of a compact operator~$T$ which are, by definition, the eigenvalues of $|T|:=\sqrt{T^*T}$. By convention the singular values will be listed in decreasing order, repeated according to their multiplicity, i.e.,
\[
\mu_0(T) \geqslant \mu_1(T) \geqslant \cdots \geqslant \mu_n(T) \geqslant \mu_{n+1}(T) \geqslant \cdots \geqslant 0 .
\]
In particular $\mu_0(T) =\|\,|T|\,\|=\|T\|$.
Let
\begin{equation*}%\label{eq:partial_sigma}
\sigma_N^p(T) := \sum_{n=0}^{N-1}\mu_n(T)^p ,\qquad p\in[1,+\infty) .
\end{equation*}
A compact operator $T$ is in the $p$-th
\emph{Schatten ideal} $\rr{S}^p$, if and only if,
 $\|T\|_p^p:=\lim\limits_{N\to\infty} \sigma_N^p(T)<+\infty$. Accordingly, $\rr{S}^1$ is the ideal
 of the trace-class operators. Let
\begin{equation*}%\label{eq:partial_gamma}
\gamma_N(T) := \frac{\sigma_N^1(T)}{\log(N)} = \frac{1}{\log(N)}\sum_{n=0}^{N-1}\mu_n(T) ,\qquad N>1 .
\end{equation*}
A compact operator $T$ is in the \emph{Dixmier ideal} $\rr{S}^{1^+}$ if its \emph{$($Calder\'on$)$} norm
\begin{equation}\label{eq:clad_norm}
\lVert T\rVert_{1^+} := \sup_{N>1}\ \gamma_N(T) < +\infty
\end{equation}
is finite. $\rr{S}^{1^+}$ is a two-sided self-adjoint ideal which is closed with respect to the norm~\eqref{eq:clad_norm} (but not
with respect to the operator norm). The set of operators such that
$\lim\limits_{N\to\infty} \gamma_N(T)=0$ forms {an ideal} inside $\rr{S}^{1^+}$ denoted with $\rr{S}^{1^+}_0$.
The chain of
 (proper) inclusions $\rr{S}^{1}\subset\rr{S}^{1^+}_0\subset\rr{S}^{1^+}\subset\rr{S}^{1+\epsilon}$ holds true for every $\epsilon>0$.
To define a trace functional with domain the Dixmier ideal $\rr{S}^{1^+}$ we need to fix a \emph{generalized scale-invariant limit}\footnote{A generalized scale-invariant limit is a continuous positive linear functional $\omega\colon \ell^{\infty}(\mathbb{N}) \to \mathbb{C}$ which coincides with the ordinary limit on the subspace of convergent
sequences and is invariant under \virg{dilations} of the sequences of the type $\{a_1,a_2,a_3,\ldots\}\mapsto \{a_1,a_1,a_2,a_2,a_3,a_3,\ldots\}$.}
$\omega\colon \ell^{\infty}(\mathbb{N}) \to \mathbb{C}$.
The $\omega$-{Dixmier trace} of a~positive element of the Dixmier ideal is defined as
\[
 {\Tr}_{{\rm Dix},\omega}(T) : = \omega( \{ \gamma_{N}(T)\}_{N} )
 ,\qquad T \in \rr{S}^{1^+} ,\qquad T\geqslant0 .
\]
The definition of ${\Tr}_{{\rm Dix},\omega}$
extends to non-positive elements of $\rr{S}^{1^+}$ by linearity. The
$\omega$-Dixmier trace
provides an example of a singular (hence non-normal) trace and it is continuous with respect to the Calder\'on norm~\eqref{eq:clad_norm}, i.e.,
$|{\Tr}_{{\rm Dix},\omega}(T)|\leqslant \|T\|_{1^+}$.
 A element $T\in \rr{S}^{1^+}$ is called \emph{measurable} if
the value of $\omega (\{\gamma_N(T)\}_N)$ is independent of the
 choice of the generalized scale-invariant limit~$\omega$. For a positive element $T\geqslant 0$ this is equivalent to the convergence of a certain Ces\`{a}ro mean of $\gamma_N(T)$ \cite[Chapter~4, Section~2, Proposition~6]{connes-94}. In particular, for a $T\geqslant 0$ such that $\gamma_N(T)$ is convergent, one has that $T$ is measurable and \[
 {\Tr}_{{\rm Dix}}(T) : =
 \lim_{N\to\infty}\left(\frac{1}{\log(N)}\sum_{n=0}^{N-1}\mu_n(T)\right) ,\]
independently of the election of the generalized scale-invariant limit $\omega$.
The set of measurable operators $\rr{S}^{1^+}_{\rm m}$ is
 a closed subspace of
$\rr{S}^{1^+}$ (but not an ideal) which is invariant under conjugation by bounded invertible operators.
Evidently, $\rr{S}^{1^+}_0\subset\rr{S}^{1^+}_{\rm m}$.

Let $\varepsilon>-1$ and define the operator
\[
Q_{B,\varepsilon} := Q_B + \varepsilon{\bf 1}.
\]
Since $Q_{B,\varepsilon}$ is strictly positive, the inverse powers $Q_{B,\varepsilon}^{-s}$ are well
defined for all $s>0$. The following facts have been proved in~\cite[Lemmas~B.4 and~B.5]{denittis-gomi-moscolari-19}.\footnote{In order to adapt the proofs of~\cite[Lemmas~B.4 and~B.5]{denittis-gomi-moscolari-19} to Proposition~\ref{props:old_resul} it is enough to set $\varepsilon=2\xi+1$ and to observe that the arguments of~\cite[Lemmas~B.4 and~B.5]{denittis-gomi-moscolari-19} are still valid for $\xi>-1$.}
\begin{Proposition}\label{props:old_resul}
Let $Q_{B,\varepsilon}^{-s}$ be defined as above. Then:
\begin{itemize}\itemsep=0pt\label{eq:comp_Trac_Q}
\item[$(1)$] $Q_{B,\varepsilon}^{-s}\in\rr{S}^{1}$ for every $s>2$ and $\varepsilon>-1$;
\item[$(2)$] $Q_{B,\varepsilon}^{-2}\in \rr{S}^{1^+}_{\rm m}$ for every $\varepsilon>-1$, and
\begin{equation*}%\label{eq:traXXX}
{\Tr}_{\rm Dix}\big(Q_{B,\varepsilon}^{-2}\big) = \frac{1}{2} ,
\end{equation*}
independently of $\varepsilon$;
\item[$(3)$] let $\Pi_j$ be the $j$-th Landau projection, then $Q_{B,\varepsilon}^{-s}\Pi_j\in\rr{S}^{1}$ for every $s>1$ and $\varepsilon>-1$;
\item[$(4)$] $Q_{B,\varepsilon}^{-1}\Pi_j\in \rr{S}^{1^+}_{\rm m}$ for every $\varepsilon>-1$
and
\begin{equation}\label{eq:traXXX_II}
{\Tr}_{\rm Dix}\big(Q_{B,\varepsilon}^{-1}\Pi_j\big) = 1 ,
\end{equation}
independently of $\varepsilon$.
\end{itemize}
\end{Proposition}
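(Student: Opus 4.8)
The plan is to establish the four items by reducing everything to explicit spectral computations, exploiting the fact that $Q_B$ is diagonalized by the Laguerre basis $\{\psi_{n,m}\}$ with eigenvalue $n+m+1$ on $\psi_{n,m}$, so that $Q_{B,\varepsilon}=Q_B+\varepsilon\mathbf{1}$ has eigenvalue $j+\varepsilon+1$ on the eigenspace $\{\psi_{n,m}\mid n+m=j\}$, which has multiplicity $j+1$. Consequently the singular values of $Q_{B,\varepsilon}^{-s}$ are precisely the numbers $(j+1+\varepsilon)^{-s}$, each repeated $j+1$ times. This converts all statements into convergence questions for the series $\sum_{j\geqslant 0}(j+1)(j+1+\varepsilon)^{-s}$, which behaves like $\sum j^{1-s}$.

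\textbf{Items (1) and (2).} For (1), $Q_{B,\varepsilon}^{-s}\in\rr{S}^1$ iff $\sum_{j\geqslant 0}(j+1)(j+1+\varepsilon)^{-s}<+\infty$, which holds iff $s>2$; this is immediate by comparison with a $p$-series. For (2), I would use that $Q_{B,\varepsilon}^{-2}$ has the $N$-th partial sum of singular values asymptotic to $\log N$: grouping eigenvalues by the index $j$, the first $\tfrac12 J(J+1)\approx \tfrac12 J^2$ singular values contribute $\sum_{j=0}^{J-1}(j+1)(j+1+\varepsilon)^{-2}\sim \log J\sim \tfrac12\log N$. Hence $\gamma_N\big(Q_{B,\varepsilon}^{-2}\big)=\sigma_N^1/\log N$ converges to $\tfrac12$, so by the criterion recalled just before the proposition ($\gamma_N$ convergent $\Rightarrow$ measurable with Dixmier trace equal to the limit) one gets $Q_{B,\varepsilon}^{-2}\in\rr{S}^{1^+}_{\rm m}$ and ${\Tr}_{\rm Dix}\big(Q_{B,\varepsilon}^{-2}\big)=\tfrac12$, independently of $\varepsilon$ since the $\varepsilon$-shift affects only finitely many leading terms and the tail asymptotics are unchanged.

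\textbf{Items (3) and (4).} Here one uses that $Q_{B,\varepsilon}$ commutes with the Landau projection $\Pi_j=\sum_{m\geqslant 0}\ketbra{\psi_{j,m}}{\psi_{j,m}}$, so $Q_{B,\varepsilon}^{-s}\Pi_j$ is diagonal with eigenvalues $(j+m+1+\varepsilon)^{-s}$ for $m\in\N_0$, each of multiplicity one. Thus its singular values are $(j+m+1+\varepsilon)^{-s}$ ordered in $m$, and $Q_{B,\varepsilon}^{-s}\Pi_j\in\rr{S}^1$ iff $\sum_{m\geqslant 0}(j+m+1+\varepsilon)^{-s}<+\infty$, i.e.\ iff $s>1$, giving (3). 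For (4), with $s=1$ the partial sum $\sigma_N^1\big(Q_{B,\varepsilon}^{-1}\Pi_j\big)=\sum_{m=0}^{N-1}(j+m+1+\varepsilon)^{-1}\sim \log N$, so $\gamma_N\to 1$ and again the convergence criterion yields measurability with ${\Tr}_{\rm Dix}\big(Q_{B,\varepsilon}^{-1}\Pi_j\big)=1$, independently of both $j$ and $\varepsilon$.

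\textbf{Main obstacle.} The only genuinely delicate point is the passage from ``$\gamma_N(T)$ converges along the subsequence of indices $N=\tfrac12 J(J+1)$ where a full $j$-block closes'' to ``$\gamma_N(T)$ converges for \emph{all} $N$''. Between two consecutive block-closing indices the numerator $\sigma_N^1$ increases by at most $(j+1)\cdot(j+1+\varepsilon)^{-2}=O(1/j)=O(1/\log N)$ in case (2) (and is actually monotone, with each added term $O(1/N)$ in case (4)), while $\log N$ varies continuously, so the oscillation of $\gamma_N$ between block endpoints is $o(1)$; this squeezing argument is routine but needs to be stated. Since I am allowed to invoke \cite[Lemmas~B.4 and~B.5]{denittis-gomi-moscolari-19} as recalled in the text (with the substitution $\varepsilon=2\xi+1$, valid for $\xi>-1$), the cleanest proof is simply to cite those lemmas and note that their arguments extend verbatim from $\xi>-1$ to the stated range; I would include the spectral bookkeeping above only as a sketch to make the paper self-contained.
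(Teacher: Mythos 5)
Your proposal is correct and matches the paper's route: the paper simply cites \cite[Lemmas~B.4 and~B.5]{denittis-gomi-moscolari-19} with the substitution $\varepsilon = 2\xi+1$, and your spectral bookkeeping — eigenvalues $j+1+\varepsilon$ of multiplicity $j+1$ for $Q_{B,\varepsilon}$, eigenvalues $j+m+1+\varepsilon$ of multiplicity one for $Q_{B,\varepsilon}^{-s}\Pi_j$, followed by the $p$-series comparison and the $\gamma_N \to$ limit criterion — is precisely the content of those cited lemmas, and you correctly isolate the interpolation between block-closing indices as the one technicality requiring care. One small slip in the obstacle paragraph: the increment of $\sigma_N^1$ between block endpoints in case~(2) is $(j+1)(j+1+\varepsilon)^{-2} = O(1/j) = O\big(1/\sqrt{N}\big)$, not $O(1/\log N)$, but since $O\big(1/\sqrt{N}\big) = o(1)$ the squeeze argument goes through unchanged.
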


Since $Q_B$ commutes with $\Pi_j$, it follows that \[
Q_{B,\varepsilon}^{-1}\Pi_j = Q_{B,\varepsilon}^{-\frac{1}{2}}\Pi_jQ_{B,\varepsilon}^{-\frac{1}{2}} = \Pi_jQ_{B,\varepsilon}^{-1} .
\]
Then, the order of the operators $Q_{B,\varepsilon}^{-\frac{1}{2}}$ and $\Pi_j$ in~\eqref{eq:traXXX_II} is irrelevant.
\begin{Corollary}\label{corol:dix_trac1}
Let $\Upsilon_{j\mapsto k}$, with $k,j\in\N_0$, be the operators defined by~\eqref{eq:def_Ups}. Then $Q_{B,\varepsilon}^{-1}\Upsilon_{j\mapsto k}$, $Q_{B,\varepsilon}^{-\frac{1}{2}}\Upsilon_{j\mapsto k}Q_{B,\varepsilon'}^{-\frac{1}{2}}$ and $\Upsilon_{j\mapsto k}Q_{B,\varepsilon}^{-1}$ are elements of $\rr{S}^{1^+}_{\rm m}$, and
\begin{equation*}%\label{eq:traXXX_III}
{\Tr}_{\rm Dix}\big(Q_{B,\varepsilon}^{-1}\Upsilon_{j\mapsto k}\big) = {\Tr}_{\rm Dix}\big(\Upsilon_{j\mapsto k}Q_{B,\varepsilon}^{-1}\big)
= {\Tr}_{\rm Dix}\big(Q_{B,\varepsilon}^{-\frac{1}{2}}\Upsilon_{j\mapsto k}Q_{B,\varepsilon'}^{-\frac{1}{2}}\big) = \delta_{j,k}
\end{equation*}
independently of $\varepsilon,\varepsilon'>-1$.
\end{Corollary}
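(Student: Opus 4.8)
The plan is to reduce everything to Proposition~\ref{props:old_resul}(4) by exploiting the algebra of the transition operators recorded in Proposition~\ref{prop:struct} together with the fact that the harmonic oscillator $Q_B$, and hence each power $Q_{B,\varepsilon}^{-s}$, acts diagonally on the Laguerre basis and commutes with every Landau projection $\Pi_\ell$. From $Q_{B,\varepsilon}^{-s}\psi_{n,m}=(n+m+1+\varepsilon)^{-s}\psi_{n,m}$ and the defining relation~\eqref{eq:def_Ups} one reads off that $Q_{B,\varepsilon}^{-1}\Upsilon_{j\mapsto k}$ vanishes on $\psi_{\ell,m}$ for $\ell\ne j$ and sends $\psi_{j,m}\mapsto(k+m+1+\varepsilon)^{-1}\psi_{k,m}$; thus its singular values are $(k+m+1+\varepsilon)^{-1}$, $m\in\N_0$, so $\sigma^1_N\sim\log N$ and the operator lies in $\rr{S}^{1^+}$. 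The same holds for $\Upsilon_{j\mapsto k}Q_{B,\varepsilon}^{-1}=\big(Q_{B,\varepsilon}^{-1}\Upsilon_{k\mapsto j}\big)^*$ and for $Q_{B,\varepsilon}^{-\frac{1}{2}}\Upsilon_{j\mapsto k}Q_{B,\varepsilon'}^{-\frac{1}{2}}$, whose singular values $(j+m+1+\varepsilon')^{-\frac{1}{2}}(k+m+1+\varepsilon)^{-\frac{1}{2}}$ again decay like $1/m$. Alternatively, $\rr{S}^{1^+}$-membership follows at once by writing each operator as a product of an element of $\rr{S}^{1^+}$ furnished by Proposition~\ref{props:old_resul} with bounded factors and using that $\rr{S}^{1^+}$ is a two-sided ideal.

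For the value of the Dixmier trace I would split into two cases. If $j=k$, then $\Upsilon_{j\mapsto j}=\Pi_j$ by item~(3) of Proposition~\ref{prop:struct}, and since $Q_{B,\varepsilon}^{-s}$ commutes with $\Pi_j$ the first two operators reduce exactly to $Q_{B,\varepsilon}^{-1}\Pi_j$, which has Dixmier trace $1=\delta_{j,j}$ by Proposition~\ref{props:old_resul}(4); the third reduces to $\big(Q_{B,\varepsilon}Q_{B,\varepsilon'}\big)^{-\frac{1}{2}}\Pi_j$, whose eigenvalue sequence differs from that of $Q_{B,0}^{-1}\Pi_j$ by terms of order $O(m^{-2})$, hence by a trace-class operator; since $\Tr_{\rm Dix}$ annihilates $\rr{S}^{1}$ and $\rr{S}^{1^+}_{\rm m}$ is a closed subspace containing both $\rr{S}^{1}$ and $Q_{B,0}^{-1}\Pi_j$, this operator is measurable with Dixmier trace $1$. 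If $j\ne k$, the key point is that each of the three operators $T$ satisfies $T=\Pi_k\,T\,\Pi_j$: this uses $\Pi_k\Upsilon_{j\mapsto k}=\Upsilon_{j\mapsto k}=\Upsilon_{j\mapsto k}\Pi_j$ (item~(4)) together with the commutation of $Q_{B,\varepsilon}^{-s}$ with $\Pi_j$ and $\Pi_k$. Therefore, for every generalized scale-invariant limit $\omega$, the cyclicity of the $\omega$-Dixmier trace — i.e.\ $\Tr_{{\rm Dix},\omega}(AB)=\Tr_{{\rm Dix},\omega}(BA)$ for $A\in\rr{S}^{1^+}$ and $B$ bounded — gives $\Tr_{{\rm Dix},\omega}(T)=\Tr_{{\rm Dix},\omega}\big((T\Pi_j)\Pi_k\big)=\Tr_{{\rm Dix},\omega}(T\Pi_j\Pi_k)=0$, because $\Pi_j\Pi_k=0$ for $j\ne k$. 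Since this value is independent of $\omega$, $T$ is measurable with $\Tr_{\rm Dix}(T)=0=\delta_{j,k}$. Combining the two cases yields the three identities, and the one for $\Upsilon_{j\mapsto k}Q_{B,\varepsilon}^{-1}$ also follows simply by taking adjoints, using $\Tr_{{\rm Dix},\omega}(T^*)=\overline{\Tr_{{\rm Dix},\omega}(T)}$ and the reality of the values obtained.

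\textbf{Main obstacle.} The only point that is not purely formal is the step from ``$T$ is a product of a measurable operator with bounded operators'' to ``$T$ is measurable with the expected Dixmier trace'': this cannot be extracted from the ideal structure alone, since $\rr{S}^{1^+}_{\rm m}$ is not an ideal and, for $j\ne k$, the singular values of $T$ still satisfy $\gamma_N(T)\to1\ne0$ so that $T\notin\rr{S}^{1^+}_0$. The resolution is precisely the trace (cyclicity) property of $\Tr_{{\rm Dix},\omega}$ invoked above, used together with the elementary trace-class comparison handling the diagonal case with $\varepsilon\ne\varepsilon'$; once these are in place the remainder is routine bookkeeping with the relations of Proposition~\ref{prop:struct}.
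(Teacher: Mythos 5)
Your argument is correct and rests on the same two ingredients as the paper's proof: Proposition~\ref{props:old_resul}(4) supplies the reference operator $Q_{B,\varepsilon}^{-1}\Pi_j\in\rr{S}^{1^+}_{\rm m}$ with Dixmier trace~$1$, and the trace (cyclicity) property $\Tr_{{\rm Dix},\omega}(AB)=\Tr_{{\rm Dix},\omega}(BA)$ together with $\Pi_j\Pi_k=\delta_{j,k}\Pi_j$ produces the $\delta_{j,k}$. The only organizational difference is that you split into $j=k$ and $j\neq k$, handling the $j=k$ case of the two-sided operator $Q_{B,\varepsilon}^{-1/2}\Upsilon_{j\mapsto k}Q_{B,\varepsilon'}^{-1/2}$ by a trace-class comparison of eigenvalue sequences, whereas the paper treats all $(j,k)$ uniformly by writing $\Upsilon_{j\mapsto k}=\Pi_k\Upsilon_{j\mapsto k}\Pi_j$ and, for the two-sided case, factoring out the bounded operator $Q_{B,\varepsilon}^{1/2}\Upsilon_{j\mapsto k}Q_{B,\varepsilon'}^{-1/2}$ so that the $\delta_{j,k}$ falls out of a single cyclicity computation. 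Both presentations are valid; the paper's is marginally more uniform, yours substitutes a concrete singular-value comparison for the bounded-factor trick but reaches the same conclusion.
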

\begin{proof}
 Proposition~\ref{prop:struct}(4) implies that
 \[ Q_{B,\varepsilon}^{-1}\Upsilon_{j\mapsto k}=\big(Q_{B,\varepsilon}^{-1}\Pi_k\big)\Upsilon_{j\mapsto k} \qquad \text{and} \qquad \Upsilon_{j\mapsto k}Q_{B,\varepsilon}^{-1}=\Upsilon_{j\mapsto k}\big(Q_{B,\varepsilon}^{-1}\Pi_j\big).
 \] Proposition~\ref{props:old_resul}(4) and the ideal property of the Dixmier ideal imply that both $Q_{B,\varepsilon}^{-1}\Upsilon_{j\mapsto k}$ and $\Upsilon_{j\mapsto k}Q_{B,\varepsilon}^{-1}$ are elements of $\rr{S}^{1^+}$.
The trace property of any $\omega$-Dixmier trace implies
\begin{gather*}
{\Tr}_{{\rm Dix},\omega}\big(Q_{B,\varepsilon}^{-1}\Upsilon_{j\mapsto k}\big) = {\Tr}_{{\rm Dix},\omega}\big(Q_{B,\varepsilon}^{-1}\Pi_k\Upsilon_{j\mapsto k}\Pi_j\big)
 = {\Tr}_{{\rm Dix},\omega}\big(Q_{B,\varepsilon}^{-1}\Pi_j\Pi_k\Upsilon_{j\mapsto k}\big) = \delta_{j,k},
\end{gather*}
where the last equality follows from $\Pi_j\Pi_k=0$ if $j\neq k$ and $\Upsilon_{j\mapsto j}=\Pi_j$. The independence of the result from the choice of the
{generalized scale-invariant limit} $\omega$ implies that
$Q_{B,\varepsilon}^{-1}\Upsilon_{j\mapsto k}\in \rr{S}^{1^+}_{\rm m}$. The proof for
$\Upsilon_{j\mapsto k}Q_{B,\varepsilon}^{-1}$ uses the same argument.
For the last case we can use the identity
\[
Q_{B,\varepsilon}^{-\frac{1}{2}}\Upsilon_{j\mapsto k}Q_{B,\varepsilon'}^{-\frac{1}{2}} = Q_{B,\varepsilon}^{-1}\Pi_k\big(Q_{B,\varepsilon}^{\frac{1}{2}}\Upsilon_{j\mapsto k}Q_{B,\varepsilon'}^{-\frac{1}{2}}\big) .
\]
Then, to prove that the left-hand side is in $\rr{S}^{1^+}$ it is sufficient to prove
that the product in the round brackets on the right-hand side defines a bounded operator.
This follows by observing that
\[
Q_{B,\varepsilon}^{\frac{1}{2}}\Upsilon_{j\mapsto k}Q_{B,\varepsilon'}^{-\frac{1}{2}} = \left(\sum_{m\in\N_0}\sqrt{\frac{m+(k+1+\varepsilon)}{{m+(j+1+\varepsilon')}}}P_m\right) \Upsilon_{j\mapsto k},
\]
where $P_m$ are the dual Landau projections~\eqref{eq:Lan_proj-dual}. Finally, the trace property of any $\omega$-Dixmier trace implies
\begin{align*}
{\Tr}_{\text{Dix},\omega}\big(Q_{B,\varepsilon}^{-\frac{1}{2}}\Upsilon_{j\mapsto k}Q_{B,\varepsilon'}^{-\frac{1}{2}}\big) &= {\Tr}_{{\rm Dix},\omega}\big(\Pi_kQ_{B,\varepsilon}^{-\frac{1}{2}}\Upsilon_{j\mapsto k}Q_{B,\varepsilon'}^{-\frac{1}{2}}\Pi_j\big)\\
&= {\Tr}_{{\rm Dix},\omega}\big(\Pi_j\Pi_kQ_{B,\varepsilon}^{-\frac{1}{2}}\Upsilon_{j\mapsto k}Q_{B,\varepsilon'}^{-\frac{1}{2}}\big) = \delta_{j,k} ,
\end{align*}
independently of $\omega$. This concludes the proof.
\end{proof}

For the next crucial result, which strongly relies on Corollary~\ref{corol:dix_trac1}, we will need to recall the definition of the space $\bb{L}^1_B$ given at the end of Section~\ref{sec:C-magn_alg}.
\begin{Proposition}\label{prop:trace_dix_main res}
Let $L_g\in\bb{L}^1_B$ with $g=\sum_{(n,m)\in\N_0^2}g_{n,m} \psi_{n,m}$ and $\{g_{n,m}\}\in \ell^1\big(\N_0^2\big)$.
 Then $Q_{B,\varepsilon}^{-1}L_g$, $Q_{B,\varepsilon}^{-\frac{1}{2}}L_gQ_{B,\varepsilon'}^{-\frac{1}{2}}$ and $L_gQ_{B,\varepsilon}^{-1}$ are elements of $\rr{S}^{1^+}_{\rm m}$, and
\begin{gather*}%\label{eq:traXXX_III}
 {\Tr}_{\rm Dix}\big(Q_{B,\varepsilon}^{-1}L_g\big) = {\Tr}_{\rm Dix}\big(L_gQ_{B,\varepsilon}^{-1}\big) = {\Tr}_{\rm Dix}\big(Q_{B,\varepsilon}^{-\frac{1}{2}}L_gQ_{B,\varepsilon'}^{-\frac{1}{2}}\big)
 = g(0) = \fint_B(L_g)
\end{gather*}
independently of $\varepsilon,\varepsilon'>-1$.
\end{Proposition}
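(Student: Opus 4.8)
The plan is to reduce the statement to the single--transition--operator estimates of Corollary~\ref{corol:dix_trac1} and then sum. By \eqref{eq:exp_oper_Ups} an element $L_g\in\bb{L}^1_B$ with $g=\sum_{(n,m)}g_{n,m}\psi_{n,m}$, $\{g_{n,m}\}\in\ell^1\big(\N_0^2\big)$, has the expansion
\[
L_g=\frac{1}{\sqrt{2\pi}\ell_B}\sum_{(n,m)\in\N_0^2}(-1)^{m-n}g_{n,m}\,\Upsilon_{m\mapsto n},
\]
which, because $\|\Upsilon_{m\mapsto n}\|=1$ (Proposition~\ref{prop:struct}) and $\{g_{n,m}\}\in\ell^1$, converges absolutely in operator norm. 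Multiplying on the left, on the right, or on both sides by the bounded operators $Q_{B,\varepsilon}^{-1}$, $Q_{B,\varepsilon}^{-1}$, respectively $Q_{B,\varepsilon}^{-1/2}$ and $Q_{B,\varepsilon'}^{-1/2}$, produces the corresponding norm--convergent expansions of $Q_{B,\varepsilon}^{-1}L_g$, $L_gQ_{B,\varepsilon}^{-1}$ and $Q_{B,\varepsilon}^{-1/2}L_gQ_{B,\varepsilon'}^{-1/2}$ in terms of the building blocks $Q_{B,\varepsilon}^{-1}\Upsilon_{m\mapsto n}$, $\Upsilon_{m\mapsto n}Q_{B,\varepsilon}^{-1}$ and $Q_{B,\varepsilon}^{-1/2}\Upsilon_{m\mapsto n}Q_{B,\varepsilon'}^{-1/2}$.

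The key step, and the main obstacle, is to upgrade these to series that converge in the Calder\'on norm $\lVert\,\cdot\,\rVert_{1^+}$; for this I would prove that the Calder\'on norms of the building blocks are bounded \emph{uniformly} in $(n,m)$. Since $Q_B$ is diagonal in the Laguerre basis with $Q_B\psi_{p,q}=(p+q+1)\psi_{p,q}$, each building block sends $\psi_{m,q}\mapsto c_q\psi_{n,q}$ and kills $\psi_{p,q}$ for $p\neq m$, where e.g.\ $c_q=(n+q+1+\varepsilon)^{-1}$ in the first and third cases and $c_q=\big((m+q+1+\varepsilon')(n+q+1+\varepsilon)\big)^{-1/2}$ in the symmetrized case; hence its singular values are exactly the numbers $c_q$, $q\in\N_0$, already listed in decreasing order. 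The elementary bounds $c_q\le(q+1+\varepsilon)^{-1}$ and $c_q\le(q+1+\min(\varepsilon,\varepsilon'))^{-1}$ then give $\sigma^1_N\le(1+\varepsilon)^{-1}+1+\log N$, so that $\gamma_N=\sigma^1_N/\log N$ is bounded, uniformly in $N>1$ and in $(n,m)$, by a constant depending only on $\varepsilon$ (and $\varepsilon'$). Together with $\{g_{n,m}\}\in\ell^1$ this makes the three series converge absolutely in $\lVert\,\cdot\,\rVert_{1^+}$; since $\rr{S}^{1^+}$ is closed in this norm, the limits lie in $\rr{S}^{1^+}$, and because $\lVert T\rVert\le(\log 2)\lVert T\rVert_{1^+}$ they coincide with $Q_{B,\varepsilon}^{-1}L_g$, $L_gQ_{B,\varepsilon}^{-1}$ and $Q_{B,\varepsilon}^{-1/2}L_gQ_{B,\varepsilon'}^{-1/2}$. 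Each building block is measurable by Corollary~\ref{corol:dix_trac1}, and $\rr{S}^{1^+}_{\rm m}$ is closed in $\lVert\,\cdot\,\rVert_{1^+}$, so the three operators belong to $\rr{S}^{1^+}_{\rm m}$.

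Finally, since the $\omega$--Dixmier trace is linear and continuous for $\lVert\,\cdot\,\rVert_{1^+}$, it can be applied term by term; using ${\Tr}_{{\rm Dix}}\big(Q_{B,\varepsilon}^{-1}\Upsilon_{m\mapsto n}\big)=\delta_{m,n}$ (and the two analogous identities) from Corollary~\ref{corol:dix_trac1}, one obtains
\[
{\Tr}_{{\rm Dix}}\big(Q_{B,\varepsilon}^{-1}L_g\big)=\frac{1}{\sqrt{2\pi}\ell_B}\sum_{n\in\N_0}g_{n,n},
\]
and the same value for the other two products, independently of $\varepsilon,\varepsilon'>-1$. It remains to recognize this number as $g(0)=\fint_B(L_g)$: by \eqref{eq:lag_pol} one has $\psi_{n,m}(0)=0$ for $n\neq m$ and $\psi_{n,n}(0)=\psi_0(0)=(\sqrt{2\pi}\ell_B)^{-1}$, so $g(0)=\sum_{n}g_{n,n}\psi_{n,n}(0)=\frac{1}{\sqrt{2\pi}\ell_B}\sum_{n}g_{n,n}$, while $g(0)=\fint_B(L_g)$ follows from Corollary~\ref{prop:comput_trace} and the inclusion $\bb{L}^1_B\subset\big(\bb{L}^2_B\big)^2$ of Proposition~\ref{prop:very_inclus}. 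Once the uniform Calder\'on bound is in place, the remaining steps are routine bookkeeping with absolutely convergent series and the continuity of the Dixmier trace.
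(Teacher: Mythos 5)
Your proof is correct and follows essentially the same route as the paper's: uniform Calder\'on bounds for the building blocks $Q_{B,\varepsilon}^{-1}\Upsilon_{m\mapsto n}$, $\Upsilon_{m\mapsto n}Q_{B,\varepsilon}^{-1}$, $Q_{B,\varepsilon}^{-1/2}\Upsilon_{m\mapsto n}Q_{B,\varepsilon'}^{-1/2}$ obtained from their explicit singular values, followed by $\ell^1$--summation, term--by--term application of the Dixmier trace via Corollary~\ref{corol:dix_trac1}, and identification with $g(0)=\fint_B(L_g)$ via Corollary~\ref{prop:comput_trace}. The only cosmetic difference is that you invoke closedness of $\rr{S}^{1^+}_{\rm m}$ in the Calder\'on norm to get measurability, whereas the paper deduces it from the $\omega$--independence of the resulting trace; the two are interchangeable, and your derivation of the uniform bound (a constant depending only on $\varepsilon,\varepsilon'$, rather than $1$) is if anything the more careful one.
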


\begin{proof}
Before proceeding with the proof, let us observe that $g\in C_0\big(\R^2\big)$ in view of Remark~\ref{lemm:pre_inclus} and the inclusion~\eqref{eq:L^p-inclus}. Then the quantity $g(0)$ is well defined. Moreover, observing that $\psi_{n,m}(0)=\big(\sqrt{2\pi}\ell_B\big)^{-1}\delta_{n,m}$ one gets
\[
g(0) = \frac{1}{\sqrt{2\pi}\ell_B}\sum_{n\in\N_0}g_{n,n} .
\]
Let us start by proving that
\[
\big\|Q_{B,\varepsilon}^{-1}\Upsilon_{j\mapsto k}\big\|_{1^+} \leqslant 1 ,\qquad\big\|\Upsilon_{j\mapsto k}Q_{B,\varepsilon}^{-1}\big\|_{1^+} \leqslant 1 ,\qquad\big\|Q_{B,\varepsilon}^{-\frac{1}{2}}\Upsilon_{j\mapsto k}Q_{B,\varepsilon'}^{-\frac{1}{2}}\big\|_{1^+} \leqslant 1
\]
for all $(j,k)\in\N_0^2$. From property~(3) of Proposition~\ref{prop:struct} and using the spectral decomposition of $Q_{B,\varepsilon}$ in terms of the Landau projections $\Pi_j$ and the dual Landau projections $P_m$, one gets
\[
\big|\Upsilon_{j\mapsto k}Q_{B,\varepsilon}^{-1}\big|^2 = Q_{B,\varepsilon}^{-1}\Pi_jQ_{B,\varepsilon}^{-1} = \left(\sum_{m\in\N_0}\frac{P_m}{(m+j+1+\varepsilon)^2}\right)\Pi_j .
\]
Therefore, the singular values of $\Upsilon_{j\mapsto k}Q_{B,\varepsilon}^{-1}$ are $(m+j+1+\varepsilon)^{-1}$ and
\[
\big\|Q_{B,\varepsilon}^{-1}\Upsilon_{j\mapsto k}\big\|_{1^+} = \sup_{N>1}\frac{1}{\log(N)}\sum_{m=0}^{N-1}\frac{1}{m+(j+1+\varepsilon)} \leqslant 1 .
\]
The proof of
\[
\big\|\Upsilon_{j\mapsto k}Q_{B,\varepsilon}^{-1}\big\|_{1^+} \leqslant 1
\]
can be deduced from the equality $\Upsilon_{j\mapsto k}Q_{B,\varepsilon}^{-1}=\big(Q_{B,\varepsilon}^{-1}\Upsilon_{k\mapsto j}\big)^*$, along with the fact that the singular values of an operator and its adjoint coincide. For the remaining case we can look again at the spectral decomposition obtaining
\begin{equation}\label{eq:aux_1ref}
\big|Q_{B,\varepsilon}^{-\frac{1}{2}}\Upsilon_{j\mapsto k}Q_{B,\varepsilon'}^{-\frac{1}{2}}\big|^2 = \left(\sum_{m\in\N_0}\frac{P_m}{(m+k+1+\varepsilon)(m+j+1+\varepsilon')}\right)\Pi_j .
\end{equation}
Therefore, one obtains
\[
\big\|Q_{B,\varepsilon}^{-\frac{1}{2}}\Upsilon_{j\mapsto k}Q_{B,\varepsilon'}^{-\frac{1}{2}}\big\|_{1^+} = \sup_{N>1}\frac{1}{\log(N)}\sum_{m=0}^{N-1}\frac{1}{m\sqrt{1+\frac{1}{m}a+\frac{1}{m^2}b}} \leqslant 1 ,
\]
where $a:=k+j+2+\varepsilon+\varepsilon'$ and $b:=(k+1+\varepsilon)(j+1+\varepsilon')$ are positive constants. Now, let
$g_T=\sum_{j=0}^{N}\sum_{k=0}^{M}t_{k,j}\psi_{k,j}$ be a finite linear combination of
generalized Laguerre functions $\psi_{k,j}$ and
\[
T = \pi(g_T) = \frac{1}{\sqrt{2\pi}\ell_B}\sum_{j=0}^{N}\sum_{k=0}^{M}(-1)^{j-k}t_{k,j}\Upsilon_{j\mapsto k} \in \bb{F}_B
\]
the related operator. Then, one obtains that
\[
\big\|Q_{B,\varepsilon}^{-1}T\big\|_{1^+} \leqslant \frac{1}{\sqrt{2\pi}\ell_B}\sum_{j=0}^{N}\sum_{k=0}^{M}|t_{k,j}| = \frac{1}{\sqrt{2\pi}\ell_B}\|\{t_{k,j}\}\|_{\ell^1} .
\]
The last inequality, along with the density of $\bb{F}_B$ in $\bb{L}^1_B$ with respect to the norm induced by $\ell^1\big(\N_0^2\big)$,
implies the continuity of the linear map
\[
\bb{L}^1_B \ni T \longmapsto Q_{B,\varepsilon}^{-1}T \in \rr{S}^{1^+} .
\]
Exactly in the same way one can prove the continuity of the maps
\[
\bb{L}^1_B \ni T \longmapsto TQ_{B,\varepsilon}^{-1} \in \rr{S}^{1^+} ,
\]
and
\[
\bb{L}^1_B \ni T \longmapsto Q_{B,\varepsilon}^{-\frac{1}{2}}TQ_{B,\varepsilon'}^{-\frac{1}{2}} \in \rr{S}^{1^+} .
\]
Let us consider now a generic element $L_g\in \bb{L}^1_B$ identified with the expansion~\eqref{eq:exp_oper_Ups}. The linearity and the continuity of the Dixmier trace with respect to the Calder\'on norm provides
\begin{align*}
{\Tr}_{{\rm Dix},\omega}\big(Q_{B,\varepsilon}^{-1}L_g\big) &=
\frac{1}{\sqrt{2\pi}\ell_B}\sum_{(n,m)\in\N_0^2}(-1)^{m-n}g_{n,m}{\Tr}_{{\rm Dix}}\big(Q_{B,\varepsilon}^{-1}\Upsilon_{m\mapsto n}\big)\\
&= \frac{1}{\sqrt{2\pi}\ell_B}\sum_{n\in\N_0}g_{n,n} = g(0)
\end{align*}
in view of Corollary~\ref{corol:dix_trac1}.
In particular the result is independent of $\omega$ (and of $\varepsilon$) proving that $Q_{B,\varepsilon}^{-1}L_g$ is a measurable operator.
Finally, the equality with the trace $\fint_B$ follows from Corollary~\ref{prop:comput_trace} and the inclusion $\bb{L}^1_B\subset \big(\bb{L}^2_B\big)^2$ proved in Proposition~\ref{prop:very_inclus}.
 The trace of $L_gQ_{B,\varepsilon}^{-1}$ and
$Q_{B,\varepsilon}^{-\frac{1}{2}}L_gQ_{B,\varepsilon'}^{-\frac{1}{2}}$ can be computed
following the same argument.
\end{proof}

A first consequence of Proposition~\ref{prop:trace_dix_main res} can be deduced from the identity
\[
Q_{B,\varepsilon}^{-\frac{1}{2}}TQ_{B,\varepsilon'}^{-\frac{1}{2}} - Q_{B,\varepsilon}^{-1}T = Q_{B,\varepsilon}^{-\frac{1}{2}} \big[T,Q_{B,\varepsilon}^{-\frac{1}{2}}\big] .
\]
In the left-hand side there is the difference of two elements of
$\rr{S}^{1^+}_{\rm m}$ with same trace. Therefore, the element in the right-hand side is in $\rr{S}^{1^+}_{\rm m}$ and has vanishing Dixmier trace (by linearity). Thus we proved that
\begin{equation*}%\label{eq:null_dix_comm}
Q_{B,\varepsilon}^{-\frac{1}{2}} \big[T,Q_{B,\varepsilon}^{-\frac{1}{2}}\big] \in \rr{S}^{1^+}_0 ,\qquad \forall\, T\in\bb{L}^1_B .
\end{equation*}
A similar argument also shows that
\begin{equation}\label{eq:null_dix_comm_II}
Q_{B,\varepsilon}^{-1} T - T Q_{B,\varepsilon'}^{-1} \in \rr{S}^{1^+}_0 ,\qquad \forall\, T\in\bb{L}^1_B
\end{equation}
for every $\varepsilon,\varepsilon'>-1$.

Although the content of Proposition~\ref{prop:trace_dix_main res} is sufficient for all the applications in Section~\ref{sec:magn_spec_trip}, we find it is important to extend its validity to a domain larger than $\bb{L}^1_B$.
The following result is proved in
Appendix~\ref{appB}.
\begin{Theorem}\label{prop:trace_dix_main res_bis}
Let $T=L_{f}^*L_g$ with $L_f\in \bb{L}^1_B$ and $L_g\in\bb{L}^2_B$. Then
$Q_{B,\varepsilon}^{-1}T\in\rr{S}^{1^+}_{\rm m}$ and the equality
\[
{\Tr}_{\rm Dix}\big(Q_{B,\varepsilon}^{-1}T\big) = \langle f,g \rangle_B =
\fint_B(T)
\]
holds true, independently of $\varepsilon>-1$.
\end{Theorem}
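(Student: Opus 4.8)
The plan is to bootstrap from Proposition~\ref{prop:trace_dix_main res}, which already settles the case $L_g\in\bb{L}^1_B$, by an approximation argument carried out in the Calder\'on norm $\lVert\cdot\rVert_{1^+}$. Write $T=L_f^*L_g$ with $L_f\in\bb{L}^1_B$ and $L_g\in\bb{L}^2_B$. Since item~(1) of Proposition~\ref{prop:struct} shows $\Upsilon_{j\mapsto k}^*=\Upsilon_{k\mapsto j}$, the space $\bb{L}^1_B$ is $\ast$-closed, so $L_f^*\in\bb{L}^1_B$, and Proposition~\ref{prop:trace_dix_main res} (applied to the symbol $f^*$) gives that $A:=Q_{B,\varepsilon}^{-1}L_f^*\in\rr{S}^{1^+}_{\rm m}$ for every $\varepsilon>-1$. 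Writing $Q_{B,\varepsilon}^{-1}T=A\,L_g$ and using that $\rr{S}^{1^+}$ is a two-sided ideal with $\lVert A L_g\rVert_{1^+}\leqslant\lVert A\rVert_{1^+}\,\lVert L_g\rVert$, one gets at once $Q_{B,\varepsilon}^{-1}T\in\rr{S}^{1^+}$; the real content is measurability together with the identification of the Dixmier trace.

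For that, first choose a sequence $g_n\in\s{F}_B$ (finite linear combinations of Laguerre functions) with $g_n\to g$ in $L^2\big(\R^2\big)$; this is possible because the Laguerre basis is complete. Each $L_{g_n}$ lies in $\bb{F}_B\subset\bb{L}^1_B$, and, because $g_n$ has only finitely many nonzero Laguerre coefficients, a direct computation from the multiplication rule in item~(2) of Proposition~\ref{prop:struct} shows that $L_f^*\Upsilon_{j\mapsto k}\in\bb{L}^1_B$, hence $L_f^*L_{g_n}\in\bb{L}^1_B$. Consequently Proposition~\ref{prop:trace_dix_main res} applies to $L_f^*L_{g_n}$ and yields $Q_{B,\varepsilon}^{-1}L_f^*L_{g_n}\in\rr{S}^{1^+}_{\rm m}$ with ${\Tr}_{\rm Dix}\big(Q_{B,\varepsilon}^{-1}L_f^*L_{g_n}\big)=\fint_B\big(L_f^*L_{g_n}\big)=\langle f,g_n\rangle_B$, independently of $\varepsilon$, where the last equality is the final identity in Proposition~\ref{prop:FSN-trace}.

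Now pass to the limit. Using the ideal bound once more together with the operator-norm estimate $\lVert L_{g-g_n}\rVert\leqslant\big(\sqrt{2\pi}\ell_B\big)^{-1}\lVert g-g_n\rVert_{L^2}$ from~\eqref{eq:continuity-magnetic-translations}, one obtains $\big\lVert Q_{B,\varepsilon}^{-1}T-Q_{B,\varepsilon}^{-1}L_f^*L_{g_n}\big\rVert_{1^+}\leqslant\lVert Q_{B,\varepsilon}^{-1}L_f^*\rVert_{1^+}\big(\sqrt{2\pi}\ell_B\big)^{-1}\lVert g-g_n\rVert_{L^2}\to0$. Hence $Q_{B,\varepsilon}^{-1}T$ is a Calder\'on-norm limit of measurable operators; since $\rr{S}^{1^+}_{\rm m}$ is closed in $\big(\rr{S}^{1^+},\lVert\cdot\rVert_{1^+}\big)$ it follows that $Q_{B,\varepsilon}^{-1}T\in\rr{S}^{1^+}_{\rm m}$, and since every $\omega$-Dixmier trace is continuous for $\lVert\cdot\rVert_{1^+}$ (with $\lvert{\Tr}_{{\rm Dix},\omega}(\cdot)\rvert\leqslant\lVert\cdot\rVert_{1^+}$) we get ${\Tr}_{\rm Dix}\big(Q_{B,\varepsilon}^{-1}T\big)=\lim_n\langle f,g_n\rangle_B=\langle f,g\rangle_B$, the last step by Cauchy--Schwarz for $\langle\cdot,\cdot\rangle_B$ together with $\bb{L}^1_B\subset\bb{L}^2_B$ (Proposition~\ref{prop:very_inclus}), so that $f\in L^2\big(\R^2\big)$. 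Independence of $\varepsilon$ is inherited from the approximants, and finally $\langle f,g\rangle_B=\fint_B\big(L_f^*L_g\big)=\fint_B(T)$ by Proposition~\ref{prop:FSN-trace}.

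I expect the only delicate point to be the upgrade of the convergence $g_n\to g$ in $L^2$ to convergence of $Q_{B,\varepsilon}^{-1}L_f^*L_{g_n}$ in the Calder\'on norm. This is exactly where the hypothesis $L_f\in\bb{L}^1_B$ (rather than merely $L_f\in\bb{L}^2_B$) is used in an essential way: it guarantees that the fixed factor $Q_{B,\varepsilon}^{-1}L_f^*$ already sits in the Dixmier ideal, so that the ideal estimate $\lVert XY\rVert_{1^+}\leqslant\lVert X\rVert_{1^+}\lVert Y\rVert$ can absorb the bounded perturbation $L_{g-g_n}$. If $L_f$ were only in $\bb{L}^2_B$, then $Q_{B,\varepsilon}^{-1}T$ would in fact be trace class with vanishing Dixmier trace, so the stated formula would fail and the hypothesis cannot be weakened.
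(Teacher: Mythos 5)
Your argument is correct and is essentially the paper's own proof in Appendix~\ref{appB}: pick an $L^2$-approximating sequence for which Proposition~\ref{prop:trace_dix_main res} already applies, verify the approximants give operators in $\bb{L}^1_B$, bound the error in the Calder\'on norm via the ideal estimate $\lVert XY\rVert_{1^+}\leqslant\lVert X\rVert_{1^+}\lVert Y\rVert$ together with $\lVert L_{g-g_n}\rVert\leqslant(\sqrt{2\pi}\ell_B)^{-1}\lVert g-g_n\rVert_{L^2}$, and invoke the closedness of $\rr{S}^{1^+}_{\rm m}$ and the $\lVert\cdot\rVert_{1^+}$-continuity of the Dixmier trace. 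The only differences are cosmetic: you approximate with finite Laguerre combinations $g_n\in\s{F}_B$ and check $L_f^*L_{g_n}\in\bb{L}^1_B$ by hand from property~(2) of Proposition~\ref{prop:struct}, whereas the paper takes $g_n\in S\big(\R^2\big)$ and invokes Lemma~\ref{lemm:very_inclus_impr} $\big((\bb{L}^1_B)^2\subset\bb{L}^1_B\big)$. Both routes are fine; yours is marginally more elementary because it never needs the Lemma, while the paper's is slightly shorter once the Lemma is in hand.

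There is, however, a genuine error in your concluding remark. You assert that if $L_f$ were merely in $\bb{L}^2_B$ then $Q_{B,\varepsilon}^{-1}T$ would be trace class with vanishing Dixmier trace, so that the formula would fail; this is false. For instance $T=\Pi_0=\Upsilon_{0\mapsto 0}^*\Upsilon_{0\mapsto 0}$ is of the form $L_f^*L_g$ with $L_f=L_g\in\bb{L}^2_B$, and $Q_{B,\varepsilon}^{-1}\Pi_0$ has Dixmier trace equal to~$1$ by Proposition~\ref{props:old_resul}(4); it is not trace class. More to the point, the paper itself states (Remark~\ref{rk:trac_unit_volII}, and Proposition~\ref{prop_app_B03} in Appendix~\ref{appB}) that the identity is expected to hold on all of $\big(\bb{L}^2_B\big)^2$ and records the extension as an open problem. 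What the hypothesis $L_f\in\bb{L}^1_B$ actually buys is precisely what you correctly identify earlier: it guarantees a fixed factor $Q_{B,\varepsilon}^{-1}L_f^*$ already in $\rr{S}^{1^+}$, which makes the approximation argument close in the Calder\'on norm. The restriction is a limitation of the method, not a reflection of the formula actually failing outside $\bb{L}^1_B\cdot\bb{L}^2_B$.
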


\begin{Remark}[relation with the trace per unit of volume II]\label{rk:trac_unit_volII}
By combining Remark~\ref{rk:trac_unit_volI} with the content of Theorem~\ref{prop:trace_dix_main res_bis}
 we get that
\begin{equation}\label{eq:impossib_eq}
\begin{aligned}
 \s{T}_B(T) &= \frac{1}{2\Lambda_B} {\Tr}_{\rm Dix}\big(Q_{B,\varepsilon}^{-1}T\big)
\end{aligned}
\end{equation}
 for every $\varepsilon>-1$ and for every $T$ in the set
\[
 \bb{L}^{1}_B\cdot \bb{L}^{2}_B := \big\{T=L_{f}L_g \,|\, L_f\in \bb{L}^1_B , L_g\in\bb{L}^2_B \big\} .
\]
 This result generalizes the validity of~\cite[Theorem~B.2]{denittis-gomi-moscolari-19}. However, we are firmly convinced
 that equation~\eqref{eq:impossib_eq} must be valid on the entire domain of definition $\big(\bb{L}^2_B\big)^2$ of the trace~$\s{T}_B$. Unfortunately we were not able to obtain such a generalization, which remains as an open problem for the time being (see a further comment at the end of Appendix~\ref{appB}).
\end{Remark}

\subsection{Differential structure}\label{sec:diff_struct}
The algebra $\bb{C}_B$ supports the action of two natural $\ast$-derivations,
which endow the magnetic operators
with a noncommutative differential calculus.

The Banach $\ast$-algebra $\s{L}_B^1$ can be endowed with an $\R^2$-action of automorphisms $\R^2\ni k\mapsto \widehat{\alpha}_k\in{\rm Aut}\big(\s{L}_B^1\big)$ defined by
\[
 \widehat{\alpha}_k(f)(x) := \expo{\ii k\cdot x}f(x) ,\qquad f\in \s{L}_B^1 ,
\]
where $k\cdot x:=k_1x_1+k_2x_2$. As a consequence of the Lebesgue's dominated convergence theorem, this action is strongly continuous, i.e.,
\[
 \lim_{k\to 0} \nnorm{\widehat{\alpha}_k(f)-f}_{B,1} = 0 , \qquad \forall\, f\in \s{L}_B^1 .
\]
Moreover, it can be differentiated with respect to the norm-topology of $\s{L}_B^1$ and produces
 two (unbounded) $\ast$-derivations
\[
 ( \partial_j(f))(x) := \left.\frac{\partial f}{\partial k_j}\right|_{k=0}(x) = \ii x_jf(x) ,\qquad j=1,2,
\]
which are densely defined~\cite[Proposition 3.1.6]{bratteli-robinson-87}. For
$f$ and $g$ in the domain of definition of the derivations, a direct computation shows that:
{\begin{itemize}\itemsep=0pt
\item[(1)] $\partial_{j} (f * g) = \partial_{j} (f) * g + f * \partial_{j}( g)$, $j=1,2$;
\item[(2)] $\partial_{j} (f^{*}) = {(\partial_{j}( f))}^{*}$, $j=1,2$;
\item[(3)] $\partial_{1} \partial_{2} (f) = \partial_{2} \partial_{1} (f)$.
\end{itemize}}
Property (1) is the Leibniz rule for derivations. Property~(2) says that the derivations are compatible with the involution of the algebra. Finally, property~(3) says that both derivations commute.
Let $\s{D}^{(n,m)}_B\subset \s{L}_B^1$ be the domain of the iterated derivation $\partial_{1}^n \partial_{2}^m$. It is straightforward to check that
\[
\s{S}_{B} \subset \bigcap_{(n,m)\in\N_{0}^2}\s{D}^{(n,m)}_B ,
\]
namely $\s{S}_{B}$ is comprised of elements that are \emph{smooth} with respect to the $\R^2$-action $\widehat{\alpha}_k$.

The $\R^2$-action can be defined directly in the representation on the Hilbert space $L^2\big(\R^2\big)$.
For every $k\in\R^2$, let ${\alpha}_k$ be the $\ast$-automorphism of $\bb{B}\big(L^2\big(\R^2\big)\big)$ defined by
\begin{equation}\label{eq:R2act}
{\alpha}_k(A) := \expo{-\ii k\cdot x} A \expo{\ii k\cdot x} ,\qquad A\in \bb{B}\big(L^2\big(\R^2\big)\big),
\end{equation}
where $k\cdot x=k_1x_1+k_2x_2$ is meant as the linear combination of the position operators on
$L^2\big(\R^2\big)$.
\begin{Proposition}\label{prop:R-flow_C^*}
The following facts hold true:
\begin{itemize}\itemsep=0pt
\item[$(1)$] $\pi(\widehat{\alpha}_k(f))={\alpha}_k(\pi(f))$ for all $f\in\s{L}_B^1$;
\item[$(2)$] the map $\R^2\ni k\mapsto {\alpha}_k\in{\rm Aut}(\bb{C}_B)$ extends to a strongly continuous
$\R^2$-action by automorphisms of the $C^*$-algebra $\bb{C}_B$.
\end{itemize}
\end{Proposition}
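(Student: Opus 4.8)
The plan is to prove (1) by a direct change of variables and then to obtain (2) from (1) by the standard procedure of extending a bounded, isometric, pointwise norm‑continuous action from a dense $\ast$‑subalgebra; thus the whole statement rests on one short computation.

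For part (1) I would evaluate both operators on a test vector $\psi\in L^2\big(\R^2\big)$. Using the formula for $\pi$ together with $\widehat{\alpha}_k(f)(x)=\expo{\ii k\cdot x}f(x)$ one gets
\[
\big(\pi(\widehat{\alpha}_k(f))\psi\big)(x) = \frac{1}{2\pi\ell_B^2}\int_{\R^2}\dd y\,\expo{\ii k\cdot(y-x)}f(y-x)\,\Phi_B(x,y)\,\psi(y),
\]
while, since $\alpha_k(\pi(f))=\expo{-\ii k\cdot x}\pi(f)\expo{\ii k\cdot x}$ with $x$ the position operator and $\big(\expo{\ii k\cdot x}\psi\big)(y)=\expo{\ii k\cdot y}\psi(y)$, applying $\alpha_k(\pi(f))$ to $\psi$ produces exactly the same integral: the two phase factors combine to $\expo{\ii k\cdot(y-x)}$, which is precisely the modulation $\widehat{\alpha}_k$ evaluated at $y-x$, and the argument of the cocycle $\Phi_B$ is left untouched. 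Hence $\pi(\widehat{\alpha}_k(f))=\alpha_k(\pi(f))$. To keep the interchange of integrals clean I would first establish this identity on $\s{L}^1_B\cap L^2\big(\R^2\big)$, where all manipulations are manifestly legitimate, and then extend it to all of $\s{L}^1_B$ using that both sides are $\nnorm{\ }_{B,1}$‑continuous in $f$ (the left side because $f\mapsto\pi(f)$ is a contraction and $\widehat{\alpha}_k$ is isometric on $\s{L}^1_B$, the right side because $\alpha_k$ is isometric on $\bb{B}\big(L^2\big(\R^2\big)\big)$), together with the density of $\s{L}^1_B\cap L^2\big(\R^2\big)$ in $\s{L}^1_B$.

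For part (2), note first that each $\alpha_k$, being conjugation by the unitary $\expo{\ii k\cdot x}$, is an isometric $\ast$‑automorphism of $\bb{B}\big(L^2\big(\R^2\big)\big)$, and that $\alpha_k\alpha_{k'}=\alpha_{k+k'}$ since the position operators commute. By (1), $\alpha_k\big(\pi(f)\big)=\pi\big(\widehat{\alpha}_k(f)\big)\in\pi\big(\s{L}^1_B\big)$ for every $f\in\s{L}^1_B$, because $\widehat{\alpha}_k$ is an automorphism of $\s{L}^1_B$; since $\pi\big(\s{L}^1_B\big)$ is dense in $\bb{C}_B$ and $\alpha_k$ is isometric, this forces $\alpha_k(\bb{C}_B)\subseteq\bb{C}_B$, and applying the same to $\alpha_{-k}=\alpha_k^{-1}$ gives $\alpha_k(\bb{C}_B)=\bb{C}_B$. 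Thus $k\mapsto\alpha_k$ is a homomorphism of $\R^2$ into ${\rm Aut}(\bb{C}_B)$. For strong continuity it suffices, by the group law and isometry (which give $\lVert\alpha_k(A)-\alpha_{k_0}(A)\rVert=\lVert\alpha_{k-k_0}(A)-A\rVert$), to check continuity at $k=0$. On the dense subalgebra $\pi\big(\s{L}^1_B\big)$ this follows from
\[
\big\lVert\alpha_k\big(\pi(f)\big)-\pi(f)\big\rVert = \big\lVert\pi\big(\widehat{\alpha}_k(f)-f\big)\big\rVert \leqslant \nnorm{\widehat{\alpha}_k(f)-f}_{B,1}\longrightarrow 0 \qquad\text{as }k\to 0,
\]
by the already‑recorded norm‑continuity of the $\R^2$‑action $\widehat{\alpha}$ on $\s{L}^1_B$. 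A routine $\varepsilon/3$‑argument (approximate $A\in\bb{C}_B$ by some $\pi(f)$ in operator norm, using that $\alpha_k$ is isometric on the two outer terms) then upgrades this to $\lim_{k\to0}\lVert\alpha_k(A)-A\rVert=0$ for all $A\in\bb{C}_B$.

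I do not expect a genuine obstacle: the proposition simply says that the modulation action on $\s{L}^1_B$ is spatially implemented on $L^2\big(\R^2\big)$ and extends continuously to the $C^*$‑completion. The only point needing a little care is the interchange of integrals in part (1), which is why I would prove the intertwining identity first on $\s{L}^1_B\cap L^2\big(\R^2\big)$ and then pass to the limit; everything else is the standard extension pattern for a bounded, isometric, pointwise‑continuous action defined on a dense $\ast$‑subalgebra.
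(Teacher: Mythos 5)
Your proposal is correct and follows essentially the same route as the paper: a direct pointwise computation of the integral kernel for part~(1), and for part~(2) the observation that $\alpha_k$ preserves $\bb{C}_B$ because it is isometric and maps the dense subalgebra $\pi(\s{L}^1_B)$ into itself, combined with the bound $\lVert\alpha_k(\pi(f))-\pi(f)\rVert\leqslant\nnorm{\widehat\alpha_k(f)-f}_{B,1}$ and an $\varepsilon/3$-argument. The only minor overcaution is the detour through $\s{L}^1_B\cap L^2(\R^2)$ in part~(1): there is no iterated integral to interchange, only a single integral kernel identity, so the computation is already valid directly on $\s{L}^1_B$.
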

\begin{proof}
(1) follows from a direct computation. (2)
The norm-density of $\pi\big(\s{L}_B^1\big)$ in $\bb{C}_B$ assures that ${\alpha}_k(T)\in\bb{C}_B$ whenever $T\in \bb{C}_B$. The inequality
\[
 \lVert {\alpha}_k(\pi(f))-\pi(f)\rVert \leqslant \nnorm{\widehat{\alpha}_k(f)-f}_{B,1} ,\qquad f\in\s{L}_B^1 ,
\]
and the density of $\pi(\s{L}_B^1)$ imply that
\[
\lim_{k\to0}\lVert{\alpha}_k(A)-A\rVert = 0\qquad \forall\, A\in \bb{C}_B ,
\]
i.e., the strongly continuity of the $\R^2$-action
 $k\mapsto {\alpha}_k$ on $\bb{C}_B$.
\end{proof}

The $\R^2$-action $k\mapsto\alpha_k$
can be differentiated with respect to the topology of $\bb{C}_B$ and produces
 two (unbounded) commuting $\ast$-derivations
 $\nabla_1$ and $\nabla_2$
 which are densely defined~\cite[Proposition~3.1.6]{bratteli-robinson-87}. Let $\bb{D}^{(n,m)}_B$ be the domain of $\nabla_1^n\nabla_2^m$. As a consequence of
 Proposition~\ref{prop:R-flow_C^*}, one can prove that $\pi\big(\s{D}^{(n,m)}_B\big)\subset \bb{D}^{(n,m)}_B$
 and
 \[
 \pi\big(\partial_{1}^n \partial_{2}^m(f)\big) = \nabla_1^n\nabla_2^m(\pi(f)) ,\qquad f\in \s{D}^{(n,m)}_B .
 \]
 In particular
 \[
 \bb{S}_B \subset \bigcap_{(n,m)\in\N^2}\bb{D}^{(n,m)}_B .
\]
 Let $\bb{C}^{N}_B:=\bigcap_{n+m\leqslant N}\bb{D}^{(n,m)}_B$. These are Banach spaces obtained as the completion
 \[
 \bb{C}^{N}_B = \overline{\bb{S}_B}^{ \lVert \ \rVert_{N}}
 \]
 with respect to the norm
\begin{equation}\label{eq:diff_norm}
 \lVert A \rVert_N := \sum_{n+m\leqslant N}\|\nabla_1^n\nabla_2^m(A)\| .
 \end{equation}
Let $\bb{C}^{\infty}_B:=\bigcap_{N\in\N}\bb{C}^{N}_B$ be the subalgebra of \emph{smooth elements} of $\bb{C}_B$.
\begin{Proposition}
$\bb{C}_{B}^\infty$ is a non-unital pre-$C^*$-algebra of $\bb{C}_{B}$.
\end{Proposition}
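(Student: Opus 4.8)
The plan is to verify directly the two defining features of a non-unital pre-$C^*$-algebra in the sense of~\cite[Definition~3.26]{gracia-varilly-figueroa-01}: that $\bb{C}_B^\infty$ is a dense $\ast$-subalgebra of $\bb{C}_B$, and that it is stable under the holomorphic functional calculus applied to functions $f$ with $f(0)=0$. The first property is essentially bookkeeping. Iterating the Leibniz rule and the $\ast$-compatibility of the commuting derivations $\nabla_1,\nabla_2$ — which are inherited, through Proposition~\ref{prop:R-flow_C^*}, from properties (1)--(3) of the $\partial_j$ — one writes $\nabla_1^n\nabla_2^m(AB)$ as a finite sum of products $\nabla_1^i\nabla_2^j(A)\,\nabla_1^{n-i}\nabla_2^{m-j}(B)$ with $i\leqslant n$, $j\leqslant m$, and $\nabla_1^n\nabla_2^m(A^*)=(\nabla_1^n\nabla_2^m(A))^*$. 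Hence $\bb{C}_B^\infty$ is closed under products and involution and satisfies $\|AB\|_N\leqslant C_N\|A\|_N\|B\|_N$, so it is a Fréchet $\ast$-algebra; density in $\bb{C}_B$ follows from the inclusion $\bb{S}_B\subset\bb{C}_B^\infty$ together with the density of $\bb{S}_B$ in $\bb{C}_B$, and non-unitality is immediate from Corollary~\ref{cor:not_unit}(1).

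The core of the argument is the stability under holomorphic functional calculus, and I would run it along the lines of the proof of Proposition~\ref{prop:struct_C_infty}. Given $T\in\bb{C}_B^\infty$ and $f$ analytic on an open neighbourhood $\s{U}$ of $\sigma(T)$ (spectrum taken in the unitization) with $f(0)=0$ and $\Gamma:=\partial\s{U}$ a Jordan curve — so that $0\in\s{U}$ and $z\neq0$ for $z\in\Gamma$ — we already know that $f(T)=-\frac{1}{2\pi\ii}\oint_\Gamma f(z)(T-z\mathbf 1)^{-1}\dd z\in\bb{C}_B$, and, by the same Cauchy identity $f(0)=\frac{1}{2\pi\ii}\oint_\Gamma f(z)z^{-1}\dd z$ used there, it suffices to prove that $S_z:=(T-z\mathbf 1)^{-1}+z^{-1}\mathbf 1$ lies in $\bb{C}_B^\infty$ with $\sup_{z\in\Gamma}\|S_z\|_N<\infty$ for every $N$. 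Writing $R_z:=(T-z\mathbf 1)^{-1}$, the key identity is $\nabla_j(R_z)=-R_z\,\nabla_j(T)\,R_z$: one observes $\alpha_k(R_z)=(\alpha_k(T)-z\mathbf 1)^{-1}$, so the resolvent identity together with the norm-continuity of $k\mapsto\alpha_k(T)$ and of operator inversion gives $\frac1t(\alpha_{te_j}(R_z)-R_z)\to -R_z\nabla_j(T)R_z$ in norm; thus $R_z\in\Dom(\nabla_j)$ and $\nabla_j(S_z)=\nabla_j(R_z)=-R_z\nabla_j(T)R_z\in\bb{C}_B$. Since $T\in\bb{C}_B^\infty$ forces $\nabla_1^i\nabla_2^j(T)\in\bb{C}_B^\infty$ for all $i,j$ (by commutativity of the derivations), an induction on $n+m$ using the Leibniz rule expresses $\nabla_1^n\nabla_2^m(S_z)$ as a finite sum of bounded operators of the form $\pm\,R_z D_1 R_z D_2\cdots D_\ell R_z$ with each $D_i$ an iterated $\nabla$-derivative of $T$ and $\ell\leqslant n+m$; in particular $S_z\in\bb{C}_B^\infty$. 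Because the $D_i$ do not depend on $z$ while $\sup_{z\in\Gamma}\|R_z\|<\infty$, this yields uniform bounds $\sup_{z\in\Gamma}\|S_z\|_N<\infty$, so $z\mapsto f(z)S_z$ is Bochner-integrable with respect to every norm $\|\cdot\|_N$, the contour integral lies in the completion $\bb{C}_B^\infty$, and hence $f(T)\in\bb{C}_B^\infty$.

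The main obstacle is precisely this last step, and within it the genuinely non-formal point is that the resolvent of a smooth element is again smooth — in practice, the initial claim $R_z\in\Dom(\nabla_j)$. This is where the strong continuity of the $\R^2$-action $k\mapsto\alpha_k$ on $\bb{C}_B$ from Proposition~\ref{prop:R-flow_C^*}(2) is indispensable; once the $C^1$-regularity of $R_z$ is secured, passage to arbitrary order is a routine Leibniz bootstrap and the uniform control on $\Gamma$ is automatic since the derivatives $D_i$ of $T$ are fixed. This is, of course, the familiar fact that the algebra of smooth vectors for a strongly continuous action of $\R^d$ on a $C^*$-algebra is a pre-$C^*$-algebra, here made explicit for the commuting pair $\nabla_1,\nabla_2$.
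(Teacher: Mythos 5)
Your proof is correct, and it takes a genuinely different (and more self-contained) route than the paper. The paper's proof is a one-line citation to~\cite[Proposition~3.45]{gracia-varilly-figueroa-01} — the general theorem that the smooth vectors for a strongly continuous Lie group action on a $C^*$-algebra form a pre-$C^*$-algebra. You instead reprove that general fact in this concrete setting, mirroring the contour-integral strategy already used in the paper for Proposition~\ref{prop:struct_C_infty}, and your closing remark shows you are aware of this. The mathematical content is the same — the smoothness of the resolvent $R_z=(T-z\mathbf 1)^{-1}$ via the identity $\nabla_j(R_z)=-R_z\nabla_j(T)R_z$, the Leibniz bootstrap to all orders, and the uniform $\|\cdot\|_N$-bounds along the Jordan curve $\Gamma$ guaranteeing Bochner integrability in the Fréchet topology of $\bb{C}_B^\infty$ — but you make all of it explicit, which has the pedagogical advantage of exposing the mechanism and the practical advantage of not relying on the cited reference. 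The two approaches trade economy for transparency: the paper's citation is shorter; yours shows concretely why the commuting pair $(\nabla_1,\nabla_2)$ inherited from the $\R^2$-action of Proposition~\ref{prop:R-flow_C^*} is what makes the construction work. One small stylistic note: it is worth flagging explicitly, as the paper does in the proof of Proposition~\ref{prop:struct_C_infty}, that $0\in\sigma(T)$ because $\bb{C}_B$ is non-unital and hence no element is invertible — this is why $0\in\s{U}$, which you use in invoking Cauchy's formula for $f(0)$.
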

\begin{proof}
By construction $\bb{C}_{B}^\infty$ is the subalgebra of {smooth elements} of $\bb{C}_B$
with respect to the strongly continuous action of the Lie group~$\R^2$. Then, the result follows from~\cite[Proposition~3.45]{gracia-varilly-figueroa-01}.
\end{proof}

Consider an element $A\in\bb{C}_B$ such that
 $A[\s{D}(x_j)]\subseteq \s{D}(x_j)$ where $\s{D}(x_j)$ is the domain of the position operator $x_j$. In this case the difference
$
x_jA-Ax_j
$ is initially well defined on the dense set
$\s{D}(x_j)$ and is closable. Therefore, it uniquely defines a closed operator, the \emph{commutator}, that will be denoted with $[x_j,A]$. Let
\[
\bb{C}^{1,0}_{B} := \{A\in\bb{C}_B \,|\, A[\s{D}(x_j)]\subseteq \s{D}(x_j) , [x_j,A]\in\bb{C}_B , j=1,2 \} .
\]
\begin{Proposition}\label{props:commut_deriv}
The space $\bb{C}^{1,0}_{B}\subset \bb{D}^1_{B}$ is a core for $\nabla_1$ and $\nabla_2$ and, for every $A\in \bb{C}^{1,0}_{B}$, the equalities
\begin{equation*}%\label{eq:commut_deriv_01}
\nabla_j(A) = -\ii [x_j,A] ,\qquad j=1,2
\end{equation*}
hold true. Moreover, one has that
\begin{equation*}%\label{eq:commut_deriv_02}
\nabla_1(A) = -\ii\ell_B [K_2,A] ,\qquad \nabla_2(A) = \ii\ell_B [K_1,A] ,
\end{equation*}
where $K_j$ are the {magnetic momenta}~\eqref{eq:momenta}.
\end{Proposition}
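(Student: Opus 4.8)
The plan is to prove the three assertions of the statement separately: the inclusion $\bb{C}^{1,0}_{B}\subseteq\bb{D}^1_{B}$ together with the formula $\nabla_j(A)=-\ii[x_j,A]$; then the core property; and finally the rewriting of these derivations in terms of the magnetic momenta $K_1,K_2$.

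For the first assertion I would run the standard argument identifying the domain of a $\ast$-derivation implemented by a strongly continuous unitary group. Recall that $\nabla_j$ is the norm-generator of the one-parameter group $t\mapsto\alpha_{te_j}(A)=\expo{-\ii t x_j}A\expo{\ii t x_j}$ on $\bb{C}_B$ (Proposition~\ref{prop:R-flow_C^*}). If $A\in\bb{C}^{1,0}_{B}$, then $\expo{\pm\ii t x_j}$ and $A$ both leave $\s{D}(x_j)$ invariant, so for $\psi\in\s{D}(x_j)$ the $L^2\big(\R^2\big)$-valued map $t\mapsto\alpha_{te_j}(A)\psi$ is differentiable with derivative $\alpha_{te_j}(-\ii[x_j,A])\psi$. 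Since $[x_j,A]\in\bb{C}_B$ by hypothesis, Proposition~\ref{prop:R-flow_C^*} makes $s\mapsto\alpha_{se_j}(-\ii[x_j,A])$ norm-continuous, so $\alpha_{te_j}(A)-A=\int_0^t\alpha_{se_j}(-\ii[x_j,A])\dd s$ as a norm-convergent Bochner integral; this holds first on the dense domain $\s{D}(x_j)$ and then everywhere, both sides being bounded. Dividing by $t$ and letting $t\to0$ gives $A\in\s{D}(\nabla_j)$ and $\nabla_j(A)=-\ii[x_j,A]$; doing this for $j=1,2$ yields $\bb{C}^{1,0}_{B}\subseteq\bb{D}^1_{B}$.

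For the core property I would appeal to the standard fact that a dense subspace of $\bb{C}_B$ contained in $\s{D}(\nabla_j)$ and invariant under the group $\{\alpha_{te_j}\}_{t\in\R}$ is automatically a core for $\nabla_j$ (see, e.g., \cite[Section~3.1]{bratteli-robinson-87}). Density is obtained by checking $\bb{S}_B\subseteq\bb{C}^{1,0}_{B}$: for $f\in S\big(\R^2\big)$ and $\psi\in\s{D}(x_j)$, inserting $x_j=y_j-(y-x)_j$ into the integral kernel of $\pi(f)$ gives $x_j\pi(f)\psi=\pi(f)(x_j\psi)-\pi(g_j)\psi$ with $g_j(x):=x_jf(x)\in S\big(\R^2\big)$, so $\pi(f)$ preserves $\s{D}(x_j)$ and $[x_j,\pi(f)]=-\pi(g_j)\in\bb{S}_B$; since $\bb{S}_B$ is dense in $\bb{C}_B$ so is $\bb{C}^{1,0}_{B}$. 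Invariance of $\bb{C}^{1,0}_{B}$ under $\{\alpha_{te_k}\}$ follows from $[x_j,\alpha_{te_k}(A)]=\alpha_{te_k}([x_j,A])\in\bb{C}_B$, which is valid because $x_j$ commutes with $\expo{\pm\ii t x_k}$.

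For the magnetic-momenta formulas I would use the elementary operator identities $x_1=\ell_B(K_2-G_1)$ and $x_2=\ell_B(G_2-K_1)$, read off directly from~\eqref{eq:momenta} and~\eqref{eq:dual-momenta} and valid on the common core $S\big(\R^2\big)$. Since $A\in\bb{C}_B\subseteq\bb{M}_B$, Lemma~\ref{lemma:comm_rel_bpm} gives that $A$ leaves each $\s{D}(G_k)$ invariant and $[A,G_k]=0$; hence, working on the $A$-invariant core $\s{D}(x_1)\cap\s{D}(G_1)$ of $K_2$, on which $\ell_B K_2=x_1+\ell_B G_1$, one gets $\ell_B[K_2,A]=[x_1,A]$ and thus $\nabla_1(A)=-\ii[x_1,A]=-\ii\ell_B[K_2,A]$; the analogous identity $\ell_B K_1=\ell_B G_2-x_2$ gives $\ell_B[K_1,A]=-[x_2,A]$, hence $\nabla_2(A)=-\ii[x_2,A]=\ii\ell_B[K_1,A]$. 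The step I expect to require the most care is the domain bookkeeping behind this last paragraph: one has to verify that $[x_j,A]$, $[K_j,A]$ and $[G_j,A]$ are all genuinely defined as closures of operators on a single $A$-invariant core, so that the linear-algebra manipulations among them are identities of closed operators rather than merely formal.
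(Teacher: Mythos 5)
Your proof is correct and follows essentially the same path as the paper. The paper's proof of the first part (the core property and the formula $\nabla_j(A)=-\ii\,[x_j,A]$) simply defers to \cite[Theorem~7.3]{pagter-sukochev-07}, whose content is exactly the $\R$-flow Bochner-integral argument you spell out, and the magnetic-momenta formulas are obtained from Lemma~\ref{lemma:comm_rel_bpm} together with the identities $x_1=\ell_B(K_2-G_1)$ and $x_2=\ell_B(G_2-K_1)$ precisely as you do; your version is merely more self-contained than the paper's terse citation.
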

\begin{proof} The first part of the claim which states that $\bb{C}^{1,0}_{B}$ is a
core and the equality $\nabla_j(A)=-\ii [x_j,A]$ can be proved as in~\cite[Theorem~7.3]{pagter-sukochev-07}. The second equality $\nabla_j(A)=-\ii [K_j,A]$ follows from
Lemma~\ref{lemma:comm_rel_bpm} and the equalities $x_1=\ell_B(K_2-G_1)$ and
$x_2=\ell_B(G_2-K_1)$. \end{proof}

It is useful to have criteria to establish when an operator sits inside $\bb{C}^{1,0}_{B}$.
\begin{Proposition}\label{prop:criter_commut}
Let $\langle x \rangle$ denotes the Japanese bracket~\eqref{eq:jap_brack}. The following facts hold true:
\begin{itemize}\itemsep=0pt
\item[$(1)$] if $f\in L^1\big(\R^2\big)$ and $\langle x \rangle f\in L^1\big(\R^2\big)$ then $\pi(f)\in \bb{C}^{1,0}_{B}$;
\item[$(2)$] $\bb{S}_B\subset \bb{C}^{1,0}_{B}$;
\item[$(3)$] if $f\in L^2\big(\R^2\big)$ and $\langle x \rangle f\in L^2\big(\R^2\big)$ then $L_f\in \bb{C}^{1,0}_{B}$.
\end{itemize}
\end{Proposition}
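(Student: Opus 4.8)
The plan is to reduce all three statements to one computation inside the magnetic twisted convolution. The starting point is the identity $\pi(f)\psi = f^-\ast\psi$ (obtained in the proof of Proposition~\ref{prop:struct}), valid for $f\in L^1\big(\R^2\big)$ and $\psi\in L^2\big(\R^2\big)$ by Young's inequality, together with its $L^2$-analogue $L_f\psi = f^-\ast\psi$ (noted after~\eqref{eq:continuity-magnetic-translations}) for $f\in L^2\big(\R^2\big)$; here $f^-(x):=f(-x)$. In either case the defining integral
\[
(f^-\ast\psi)(x) = \frac{1}{2\pi\ell_B^2}\int_{\R^2}\dd y\, f^-(y)\,\psi(x-y)\,\Phi_B(y,x)
\]
converges absolutely for a.e.\ $x$ — by Young's inequality when $f\in L^1\big(\R^2\big)$, and by Cauchy--Schwarz via~\eqref{eq:continuity-magnetic-translations} when $f\in L^2\big(\R^2\big)$ — so it can be manipulated pointwise.

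Fix $j\in\{1,2\}$ and $\psi\in\s{D}(x_j)$, so $x_j\psi\in L^2\big(\R^2\big)$. I would write $x_j=(x_j-y_j)+y_j$ under the integral sign, observing that the cocycle factor $\Phi_B(y,x)$ is untouched by this splitting because it depends on $y$ and $x$ but not on the difference $x-y$. Using $(x_j-y_j)\psi(x-y)=(x_j\psi)(x-y)$ and $y_j f^-(y)=-(x_j f)^-(y)$ one arrives at the pointwise identity
\[
x_j\,(f^-\ast\psi)(x) = \big(f^-\ast(x_j\psi)\big)(x) - \big((x_j f)^-\ast\psi\big)(x) .
\]
For part~(1), the hypothesis $\langle x\rangle f\in L^1\big(\R^2\big)$ gives $x_j f\in L^1\big(\R^2\big)=\s{L}^1_B$, so both terms on the right belong to $L^2\big(\R^2\big)$; hence $x_j\pi(f)\psi\in L^2\big(\R^2\big)$, i.e.\ $\pi(f)[\s{D}(x_j)]\subseteq\s{D}(x_j)$, and the identity says $\big(x_j\pi(f)-\pi(f)x_j\big)\psi=-\pi(x_j f)\psi$ on the dense domain $\s{D}(x_j)$. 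Passing to closures gives $[x_j,\pi(f)]=-\pi(x_j f)\in\bb{C}_B$, which is exactly $\pi(f)\in\bb{C}^{1,0}_B$.

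Part~(2) is then immediate: any $f\in\s{S}_B\subset S\big(\R^2\big)$ satisfies $\langle x\rangle f\in L^1\big(\R^2\big)$ since Schwartz functions decay faster than any polynomial, so $\bb{S}_B=\pi(\s{S}_B)\subset\bb{C}^{1,0}_B$ by part~(1). Part~(3) runs verbatim with $\pi$ replaced by $L$ throughout: the splitting identity now involves twisted convolutions of $L^2$ functions, controlled by~\eqref{eq:continuity-magnetic-translations}; the hypothesis $\langle x\rangle f\in L^2\big(\R^2\big)$ gives $x_j f\in L^2\big(\R^2\big)$, hence $L_{x_j f}\in\bb{L}^2_B\subset\bb{C}_B$ (Proposition~\ref{prop:inclus_after_first}), $L_f[\s{D}(x_j)]\subseteq\s{D}(x_j)$, and $[x_j,L_f]=-L_{x_j f}\in\bb{C}_B$, so $L_f\in\bb{C}^{1,0}_B$. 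The only places needing care are the bookkeeping with the cocycle $\Phi_B$ in the splitting step and the passage from the identity on $\s{D}(x_j)$ to the identity for the closed commutator; neither is a genuine obstacle, so I expect the proof to be short.
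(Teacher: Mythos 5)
Your proof is correct and takes essentially the same approach as the paper: both split $x_j$ under the twisted-convolution integral to obtain $x_j(\pi(f)\phi)=\pi(f)(x_j\phi)-\pi(x_jf)\phi$, then use the hypothesis to put $x_jf$ back in the appropriate $L^p$ space. (The paper works directly from the definition of $\pi(f)$ rather than via the identity $\pi(f)\psi=f^-*\psi$, but this is the same computation; also note the paper's displayed identity gives $[x_j,\pi(f)]=-\pi(\widetilde f)$, matching your sign, even though the paper then writes the commutator without the minus sign — immaterial for membership in $\bb{C}_B$.)
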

\begin{proof}(1) Let $\phi\in \s{D}(x_j)$. Then a straightforward computation shows that
\[
 x_j(\pi(f)\phi) = \pi(f)\widetilde{\phi} -
 \pi\big(\widetilde{f}\big){\phi} ,
\]
where $\widetilde{\phi}(x):=x_j\phi(x)$ is a vector in $L^2\big(\R^2\big)$ and $\widetilde{f}(x):=x_jf(x)$ is in $L^1\big(\R^2\big)$ as a consequence of the respective assumptions. This shows that $x_j(\pi(f)\phi)$ is well defined, namely $\pi(f)\phi\in \s{D}(x_j)$. Moreover, the same computation shows that $[x_j,\pi(f)]=\pi(\widetilde{f})\in \bb{C}_B$. Then the conditions for $f\in \bb{C}^{1,0}_{B}$
are satisfied. (2) is a direct consequence of (1) since the space of Schwartz
functions is stable under the multiplication by $\langle x \rangle$. (3) follows from the
same argument used to prove~(1) by replacing $\pi(f)$ and $\pi(\widetilde{f})$ with $L_f$ and $L_{\widetilde{f}}$, respectively. In particular one gets that $[x_j,L_f]=L_{\widetilde{f}}\in \bb{L}^2_B$.
\end{proof}

In view of Proposition~\ref{prop:criter_commut} one gets that $\Upsilon_{j\mapsto k}\in \bb{C}^{1,0}_{B}$. Therefore, the derivatives of
$\Upsilon_{j\mapsto k}$ can be computed by the commutator as in Proposition~\ref{props:commut_deriv}. By observing that
$K_1=2^{-\frac{1}{2}}(\rr{a}^++\rr{a}^-)$ and
$K_2=-\ii2^{-\frac{1}{2}}(\rr{a}^+-\rr{a}^-)$
and using the relations in Proposition~\ref{prop:struct} (and in particular the commutation relations~\eqref{eq:commut_ups}) one gets
\begin{gather}\label{eq:comm_ups_01}
\nabla_1(\Upsilon_{j\mapsto k}) = \frac{\ell_B}{\sqrt{2}}\big( \sqrt{k}\Upsilon_{j\mapsto k-1}+\sqrt{j}\Upsilon_{j-1\mapsto k}
 -\sqrt{k+1}\Upsilon_{j\mapsto k+1}-\sqrt{j+1}\Upsilon_{j+1\mapsto k}\big) ,
\end{gather}
and
\begin{gather}\label{eq:comm_ups_02}
 \nabla_2(\Upsilon_{j\mapsto k}) = \ii\frac{\ell_B}{\sqrt{2}}\big(\sqrt{k}\Upsilon_{j\mapsto k-1}-\sqrt{j}\Upsilon_{j-1\mapsto k}
 +\sqrt{k+1}\Upsilon_{j\mapsto k+1}-\sqrt{j+1}\Upsilon_{j+1\mapsto k}\big) .
\end{gather}
In the special case $\Upsilon_{k\mapsto k}=\Pi_k$, the equations~\eqref{eq:comm_ups_01}
and~\eqref{eq:comm_ups_02} show how to compute the derivations of the Landau projections.

The $\R^2$-action~\eqref{eq:R2act} extends to the von Neumann algebra $\bb{M}_B$.
\begin{Proposition}\label{prop:R-flow_W^*}
The family of automorphisms~\eqref{eq:R2act} defines an ultra-weakly continuous action
$\R^2\ni k\mapsto {\alpha}_k\in{\rm Aut}(\bb{M}_B)$ of $\R^2$ on the von Neumann algebra $\bb{M}_B$. Moreover
\[
\fint_B({\alpha}_k(T)) = \fint_B(T) ,\qquad \forall\, T\in\big(\bb{L}^2_B\big)^2 .
\]
\end{Proposition}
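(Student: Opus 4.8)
The plan is to treat the two assertions separately. For the action, the key point is that $\alpha_k$ is conjugation by the multiplication unitary $\expo{\ii k\cdot x}$ on $L^2\big(\R^2\big)$, so that $\expo{\ii k\cdot x}\expo{\ii k'\cdot x}=\expo{\ii(k+k')\cdot x}$ immediately gives the group law $\alpha_{k+k'}=\alpha_k\alpha_{k'}$, and conjugation by a fixed unitary is normal (it sends a normal functional $\phi(A)=\sum_n\langle\xi_n,A\eta_n\rangle$ to the normal functional determined by the square-summable families $\big(\expo{\ii k\cdot x}\xi_n\big)$ and $\big(\expo{\ii k\cdot x}\eta_n\big)$). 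That each $\alpha_k$ restricts to an automorphism of $\bb{M}_B$ I would deduce from Proposition~\ref{prop:R-flow_C^*}(2), which gives $\alpha_k(\bb{C}_B)=\bb{C}_B$, together with the fact that an automorphism of $\bb{B}(\s{H})$ commutes with taking commutants, so that $\alpha_k(\bb{M}_B)=\alpha_k(\bb{C}_B)''=\bb{C}_B''=\bb{M}_B$; alternatively one computes directly that $\alpha_k(V_B(a))=\expo{-\ii k\cdot a}V_B(a)$, whence $\alpha_k$ preserves $\bb{V}_B$ and therefore its commutant $\bb{M}_B$ (Proposition~\ref{prop:von_propert}(3)).

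For the pointwise ultra-weak continuity of $k\mapsto\alpha_k$ it is enough, since $\|\alpha_k(T)\|=\|T\|$, to verify that $k\mapsto\phi(\alpha_k(T))$ is continuous for $T\in\bb{M}_B$ and $\phi$ a vector functional, and more generally a functional of the form above. I would write $\phi(\alpha_k(T))=\sum_n\big\langle\expo{\ii k\cdot x}\xi_n,T\expo{\ii k\cdot x}\eta_n\big\rangle$: each summand is continuous in $k$ because $k\mapsto\expo{\ii k\cdot x}$ is strongly continuous on $L^2\big(\R^2\big)$ (dominated convergence) and $T$ is bounded, while the series converges uniformly in $k$ by Cauchy--Schwarz, its $n$-th term being dominated by $\|T\|\,\|\xi_n\|\,\|\eta_n\|$. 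This completes the first half.

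For the invariance of $\fint_B$, the main step is to lift the intertwining relation of Proposition~\ref{prop:R-flow_C^*}(1) from $\pi(f)$ to the operators $L_f$: a kernel computation shows that conjugating $L_f$ by $\expo{\ii k\cdot x}$ multiplies its kernel $\frac{1}{2\pi\ell_B^2}f(y-x)\Phi_B(x,y)$ by $\expo{\ii k\cdot(y-x)}$, so that
\[
\alpha_k(L_f)=L_{\widehat{\alpha}_k(f)},\qquad \widehat{\alpha}_k(f)(z)=\expo{\ii k\cdot z}f(z),\qquad f\in L^2\big(\R^2\big).
\]
In particular $\big\|\widehat{\alpha}_k(f)\big\|_{L^2}=\|f\|_{L^2}$, so $\alpha_k$ maps $\bb{L}^2_B$, and hence $\big(\bb{L}^2_B\big)^2$, into itself; and since $\bb{L}^2_B$ is stable under adjoints ($L_f^*=L_{f^*}$, again a kernel computation) every generator of $\big(\bb{L}^2_B\big)^2$ has the form $L_f^*L_g$. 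Because $\alpha_k$ is a $\ast$-homomorphism, the defining formula of Proposition~\ref{prop:FSN-trace} then yields
\[
\fint_B\big(\alpha_k(L_f^*L_g)\big)=\fint_B\big(L_{\widehat{\alpha}_k(f)}^*L_{\widehat{\alpha}_k(g)}\big)=\big\langle\widehat{\alpha}_k(f),\widehat{\alpha}_k(g)\big\rangle_B=\langle f,g\rangle_B=\fint_B\big(L_f^*L_g\big),
\]
the third equality because multiplication by the unimodular function $\expo{\ii k\cdot x}$ leaves $\langle\cdot,\cdot\rangle_B$ invariant; linearity extends this to all of $\big(\bb{L}^2_B\big)^2$.

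I do not anticipate a real obstacle here: once Proposition~\ref{prop:R-flow_C^*} is available, the statement is essentially bookkeeping. The two points that require a little care are the interchange of the limit $k\to k_0$ with the infinite sum in the continuity argument, handled by the Cauchy--Schwarz domination above, and the justification of the formal kernel manipulation defining $\alpha_k(L_f)$, which is obtained by testing against compactly supported functions as in the proof of Lemma~\ref{lemma:pre-trac_uni_vol}.
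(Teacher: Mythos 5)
Your proof is correct and follows essentially the same route as the paper. The only cosmetic difference is in the ultra-weak continuity step: the paper derives it from strong continuity of multiplication on norm-bounded sets (where strong, ultra-strong, and ultra-weak topologies coincide), whereas you test $k\mapsto\phi(\alpha_k(T))$ directly against normal functionals written as absolutely convergent series of vector states and dominate with Cauchy--Schwarz; both arguments are standard and both hinge on the isometry $\|\alpha_k(T)\|=\|T\|$, while the trace-invariance computation via $\alpha_k(L_f)=L_{\widehat{\alpha}_k(f)}$ is identical to the paper's.
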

\begin{proof}
The density of $\bb{C}_B$ in $\bb{M}_B$ with respect to the strong (weak) topology implies that ${\alpha}_k(T)\in\bb{M}_B$ whenever $T\in \bb{M}_B$.
This proves that every ${\alpha}_k$ is an automorphism of the von Neumann algebra $\bb{M}_B$. Since the maps $k\mapsto \expo{\pm\ii k\cdot x}$ are strongly continuous, and since multiplication on norm-bounded subsets of $\bb{B}\big(L^2\big(\R^2\big)\big)$ is strongly continuous, it follows that ${\alpha}_k(T)\to0$ when $k\to0$ with respect to the strong topology for all $T\in \bb{M}_B$.
 Moreover, on a norm-bounded subset of $\bb{B}\big(L^2\big(\R^2\big)\big)$ the strong and ultra-strong topology coincide, and so ${\alpha}_k(T)\to0$ ultra-strongly, and hence ultra-weakly, as $t\to 0$ for all $T\in \bb{M}_B$.
This shows that the $\R^2$-action $k\mapsto {\alpha}_k$ is ultra-weakly continuous on $\bb{M}_B$. Let $T=L_f^*L_g\in \big(\bb{L}^2_B\big)^2$. A direct computation shows that
${\alpha}_k(T)=L_{{f'}}^*L_{{g'}}$ where ${f}'(x):=\expo{\ii k\cdot x}f(x)$ and ${g}'(x):=\expo{\ii k\cdot x}g(x)$. The chain of equalities
\[
\fint_B({\alpha}_k(T)) = \langle{f}',{g}'\rangle_B = \langle{f},{g}\rangle_B = \fint_B(T)
\]
concludes the proof.
\end{proof}

Using the jargon of~\cite{pagter-sukochev-07},
Proposition~\ref{prop:R-flow_W^*} says that
$k\mapsto {\alpha}_k$ is an \emph{$\R^2$-flow} on the pair $\big(\bb{M}_B,\fint_B\big)$.
Thus,~\cite[Proposition 4.1]{pagter-sukochev-07} guarantees that every ${\alpha}_k$ extends to a $\ast$-isometry in every Banach space
 $\rr{L}_B^p$ and the map $\R^2\ni k\mapsto {\alpha}_k\in{\rm Iso}\big(\rr{L}_B^p\big)$ is strongly continuous, i.e.,
 \[
 \lim_{k\to 0} \nnorm{{\alpha}_k(T)-T}_{B,p} = 0\qquad \forall\, T\in \rr{L}_B^p .
 \]
The $\R^2$-flow can be differentiated in every space $\rr{L}_B^p$ (with the convention $\rr{L}_B^\infty=\bb{M}_B$) with respect to the relative topology. We will denote the respective generators simply by $\nabla_1$ and $\nabla_2$ without further reference to the particular space $\rr{L}_B^p$. It results that
$\nabla_1$ and $\nabla_2$ have a dense domain in every space $\rr{L}_B^p$~\cite[Proposition~3.1.6]{bratteli-robinson-87} with a dense core in $\rr{L}_B^p\cap\bb{M}_B$ where they can be represented as the commutators with the position operators~\cite[Theorem~7.3]{pagter-sukochev-07}.
From the invariance of the trace $\fint_B$ under the $\R^2$-flow ${\alpha}_k$ it immediately follows that
\begin{equation*}%\label{eq:fund_teo_calc}
\fint_B \circ \nabla_j = 0 ,\qquad j=1,2 .
\end{equation*}
This property, along with the Leibniz rule for derivations, provides that
\begin{equation*}%\label{eq:int_part}
\fint_B(T\nabla_j(S)) = -\fint_B(\nabla_j(T)S) ,\qquad j=1,2,
\end{equation*}
whenever $T$, $S$ and $TS$ are differentiable and
$T\nabla_j(S)$ and $\nabla_j(T)S$ are trace-class with respect to $\fint_B$.
It is also possible to define the \emph{noncommutative Sobolev spaces} $\rr{W}_B^{N,p}\subset \rr{L}_B^p$ as the closure of the set of smooth elements with respect to the Sobolev norm
\begin{equation}\label{eq:sob_norm}
 \nnorm{T}_{B,N,p} := \left(\sum_{n+m\leqslant N}\nnorm{\nabla_1^n\nabla_2^m(T)}_{B,p}^p\right)^{\frac{1}{p}} .
\end{equation}
In the rest of this work we will often use the notation
\[
\nabla(T) := (\nabla_1(T),\nabla_2(T))
\]
for the (noncommutative) \emph{gradient} of $T\in\rr{W}_B^{1,p}$.

\section{The magnetic spectral triple}\label{sec:magn_spec_trip}
In this section, we will introduce a compact spectral triple suitable for the study of the geometry of the magnetic $C^*$-algebra $\bb{C}_B$. The key element is a Dirac operator $D_B$ which is, in a sense that will be made more precise below, the square root of the
harmonic oscillator $Q_B$ defined in Section~\ref{sec:landau_ham}. We will prove
that our spectral triple provides a good example of a noncommutative manifold
of spectral dimension 2. Moreover, we will prove the continuous version of the
first Connes' formula appearing in~\cite{bellissard-elst-schulz-baldes-94}.

\subsection{Construction of the spectral triple}\label{sec:spectral-triple}
For the standard definitions concerning the theory of spectral triples we will refer to the classical textbooks~\cite{connes-94,gracia-varilly-figueroa-01}. In particular we will take~\cite[Definition~9.16]{gracia-varilly-figueroa-01} as the recipe for the construction of our spectral triple.

We will start with the $C^*$-algebra of magnetic operators $\bb{C}_B$ endowed with the pre-$C^*$-algebra $\bb{S}_B\subset\bb{C}_B$. We will represent $\bb{C}_B$ on the separable Hilbert space
\[
\s{H}_4 := L^2\big(\R^2\big) \otimes \C^4
\]
through the \emph{diagonal} representation
$\rho\colon \bb{C}_B\to \bb{B}(\s{H}_4)$ defined by
\[
 \rho(A) := A \otimes {\bf 1}_4 = \left(\begin{matrix}A & 0 & 0 & 0 \\0 & A & 0 & 0 \\0 & 0 & A & 0 \\0 & 0 & 0 & A\end{matrix}\right) ,\qquad A\in\bb{C}_B,
\]
where ${\bf 1}_4\in {\rm Mat}_4(\C)$ is the identity of the algebra of $4\times 4$
complex matrices. The Laguerre basis of $L^2\big(\R^2\big)$ can be lifted to $\s{H}_4$ as $\psi_{n,m}^{(r)}:=\psi_{n,m}\otimes e_r$ where $\psi_{n,m}$ are the Laguerre functions and $e_1,\ldots,e_4$ is the canonical basis of $\C^4$.
It follows that $\big\{\psi_{n,m}^{(r)} \,|\, (n,m)\in\N^2_0, r=1,\ldots,4\big\}$ is a~complete
orthonormal system in $\s{H}_4$ that will be still called the (magnetic) Laguerre basis.
To complete the spectral triple we need a \emph{Dirac operator} $D_B$. The definition of $D_B$ is subordinated to the election of an irreducible representation of the Clifford algebra $C\ell_4(\C)$ on $\C^4$. More precisely, we need four Hermitian $4\times 4$ matrices $\{\gamma_1,\ldots,\gamma_4\}$ which satisfies the fundamental anti-commutation relations
 \[
 \{\gamma_i,\gamma_j\} := \gamma_i \gamma_j + \gamma_j \gamma_1 = 2\delta_{i,j} {\bf 1}_4 ,\qquad i,j=1,\ldots,4 .
 \]
A possible convenient choice is the following:
\begin{alignat*}{3}
& \gamma_{1}:= \left(
 \begin{matrix}
 0 & 0 & 0 & 1\\
 0 & 0 & 1 & 0\\
 0 & 1 & 0 & 0\\
 1 & 0 & 0 & 0
 \end{matrix}
\right) ,\qquad
 &&\gamma_{2} : = \left(\begin{matrix}
 0 & 0 & 0 & -\ii\\
 0 & 0 & \ii & 0\\
 0 & -\ii & 0 & 0\\
 \ii & 0 & 0 & 0
 \end{matrix}
 \right) ,&
\\
& \gamma_{3}: = \left(
 \begin{matrix}
 0 & 0 & 1 & 0\\
 0 & 0 & 0 & -1\\
 1 & 0 & 0 & 0\\
 0 & -1 & 0 & 0
 \end{matrix}
\right) ,\qquad
 &&\gamma_{4} : = \left(\begin{matrix}
 0 & 0 & \ii & 0\\
 0 & 0 & 0 & \ii\\
 -\ii & 0 & 0 & 0\\
 0 & -\ii & 0 & 0
 \end{matrix}
\right) .&
\end{alignat*}
The \emph{$($magnetic$)$} Dirac operator is defined by
\begin{equation*}
D_B := \frac{1}{\sqrt{2}}\big(K_1 \otimes \gamma_1 + K_2 \otimes \gamma_2 + G_1 \otimes \gamma_3 + G_2 \otimes \gamma_4\big) ,
\end{equation*}
where $K_1$, $K_2$, $G_1$, $G_2$ are the magnetic momenta and the dual magnetic momenta defined by~\eqref{eq:momenta} and~\eqref{eq:dual-momenta}, respectively. In terms of the ladder operators $\rr{a}^\pm$ and $\rr{b}^\pm$ the operator $D_B$ reads
\[
D_B = \left(\begin{matrix}
 0 & 0 & -\rr{b}^+ & \rr{a}^-\\
 0 & 0 & \rr{a}^+ & \rr{b}^-\\
 -\rr{b}^- & \rr{a}^- & 0 & 0\\
 \rr{a}^+ & \rr{b}^+ & 0 & 0
 \end{matrix}
\right) .
\]
The operator $D_B$ is a sum of four operators which are essentially self-adjoint on the dense invariant core $S\big(\R^2\big)\otimes\C^4\subset\s{H}_4$. As a consequence~$D_B$ is well defined and symmetric on
$S\big(\R^2\big)\otimes\C^4$.
Consider the involution
\[
\chi := {\bf 1} \otimes \gamma_1\gamma_2\gamma_3\gamma_4 = \left(\begin{matrix}
 +{\bf 1} & 0 & 0&0\\
 0 & +{\bf 1} & 0 & 0\\
 0 &0 & -{\bf 1}& 0\\
 0&0 & 0 & -{\bf 1}
 \end{matrix}
\right) .
\]
The algebraic relations among the $\gamma$-matrices imply
\begin{equation}\label{eq:parit_D}
\chi D_B \chi = -D_B ,\qquad \chi^* = \chi^{-1} = \chi .
\end{equation}
The involution $\chi$ will be called \emph{chiral} or \emph{grading} operator.
A direct computation shows that the square $D_B^2$ acts on $S\big(\R^2\big)\otimes\C^4$ according to
\[
D_B^2 := \left(\begin{matrix}
 Q_B & 0 & 0 & 0\\
 0 & Q_B & 0 & 0\\
 0 & 0 & Q_B & 0\\
 0 & 0 & 0 &Q_B
 \end{matrix}
\right) + \left(\begin{matrix}
 0 & 0 & 0 & 0\\
 0 & 0 & 0 & 0\\
 0 & 0 & +{\bf 1} & 0\\
 0 & 0 & 0 &-{\bf 1}
 \end{matrix}
\right),
\]
where $Q_B$ is the {harmonic oscillator}
defined by~\eqref{eq:harm_osc}. Since $Q_B$ is essentially self-adjoint on $S\big(\R^2\big)$ and
$D_B^2$ is a bounded perturbation of $Q_B\otimes{\bf 1}_4$ it follows from the
Kato--Rellich theorem that $D_B^2$ is essentially self-adjoint on $S\big(\R^2\big)\otimes\C^4$ and define a self-adjoint operator with domain $\s{D}(Q_B)\otimes\C^4$. Moreover $D_B^2$ has spectrum
\[
\sigma\big(D^2_B\big) = \{\theta_j:=j\, |\, j\in\N_0 \} ,
\]
and the eigenvalues $\theta_j$ have finite {multiplicity} $\operatorname{Mult}[\theta_0]=1$,
and $\operatorname{Mult}[\theta_j]=4j$ for every $j\geqslant 1$. As a consequence
\[
\big(D^2_B - z {\bf 1}\big)^{-s} \in \bb{K}(\s{H}_4)
\]
is a compact operator for every $s>0$ and $z\in\C\setminus \N_0$.
\begin{Proposition}\label{prop:comp_res_D}
The operator $D_B$ is essentially self-adjoint on the dense domain $S\big(\R^2\big)\otimes\C^4\subset\s{H}_4$, and its spectrum is given by
\[
\sigma(D_B) = \big\{\vartheta_{\pm j}:=\pm\sqrt{j}\, |\, j\in\N_0 \big\}
\]
with $\operatorname{Mult}[\vartheta_0]=1$,
and $\operatorname{Mult}[\vartheta_{\pm j}]=2j$ for every $j\geqslant 1$.
Moreover $(D_B - z {\bf 1})^{-1}\in \bb{K}(\s{H}_4)$ is compact for every $z\in\C\setminus \sigma(D_B)$.
\end{Proposition}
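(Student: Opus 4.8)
The plan is to deduce all three assertions from the structure $D_B^2=Q_B\otimes\mathbf{1}_4+\mathbf{1}\otimes\varpi$ established above, by diagonalizing $D_B$ on the Laguerre basis. First I would observe that every basis vector $\psi_{n,m}^{(r)}=\psi_{n,m}\otimes e_r$ is an eigenvector of $D_B^2$, with $D_B^2\psi_{n,m}^{(r)}=(n+m+1+\varpi_{rr})\psi_{n,m}^{(r)}$, where $\varpi_{11}=\varpi_{22}=0$, $\varpi_{33}=+1$ and $\varpi_{44}=-1$ are the diagonal entries of $\varpi$. Consequently the eigenspace $E_j\subset\s{H}_4$ of the self-adjoint operator $D_B^2$ for the eigenvalue $\theta_j=j$ is the finite-dimensional span of those $\psi_{n,m}^{(r)}$ with $n+m+1+\varpi_{rr}=j$, and their algebraic direct sum $\bigoplus_{j\in\N_0}E_j$ is exactly the dense subspace $\s{F}_B\otimes\C^4\subseteq S(\R^2)\otimes\C^4$. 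Since $\rr{a}^{\pm}$ and $\rr{b}^{\pm}$ preserve $\s{F}_B$, the core $\s{F}_B\otimes\C^4$ is invariant under $D_B$, and for $v\in E_j\subset\s{F}_B\otimes\C^4$ one has $D_B^2(D_B v)=D_B(D_B^2 v)=j\,D_B v$, so $D_B v\in E_j$; hence $D_B$ restricts to a Hermitian matrix $D_B|_{E_j}$ on each $E_j$, and on $\s{F}_B\otimes\C^4$ it is the orthogonal algebraic direct sum of these matrices. In particular $D_B$ is essentially self-adjoint on $\s{F}_B\otimes\C^4$, with closure $\bigoplus_{j\in\N_0}D_B|_{E_j}$, and essential self-adjointness on the larger core $S(\R^2)\otimes\C^4$ follows from the chain $\overline{D_B|_{\s{F}_B\otimes\C^4}}\subseteq\overline{D_B|_{S(\R^2)\otimes\C^4}}\subseteq(D_B|_{S(\R^2)\otimes\C^4})^{*}\subseteq(D_B|_{\s{F}_B\otimes\C^4})^{*}=\overline{D_B|_{\s{F}_B\otimes\C^4}}$, which forces equality throughout.

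For the spectrum, writing $\overline{D_B}=\bigoplus_{j\in\N_0}D_B|_{E_j}$ gives $\sigma(D_B)=\overline{\bigcup_{j\in\N_0}\sigma(D_B|_{E_j})}$, while $(D_B|_{E_j})^2=j\,\mathbf{1}_{E_j}$ forces $\sigma(D_B|_{E_j})\subseteq\{+\sqrt{j},-\sqrt{j}\}$. The multiplicities are then read off from those of $D_B^2$: the space $E_0$ is one-dimensional, spanned by $\psi_{0,0}^{(4)}$, which $D_B$ annihilates, so $\vartheta_0=0$ is a simple eigenvalue; and for $j\geq1$ one has $\dim E_j=\operatorname{Mult}[\theta_j]=4j$, while the relation $\chi D_B\chi=-D_B$ of~\eqref{eq:parit_D} gives $\chi D_B^2\chi=D_B^2$, so $\chi$ preserves $E_j$ and maps the $(+\sqrt{j})$-eigenspace of $D_B|_{E_j}$ isomorphically onto the $(-\sqrt{j})$-eigenspace; hence both are non-trivial and of dimension $2j$. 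Collecting, $\sigma(D_B)=\{\vartheta_{\pm j}:=\pm\sqrt{j}\mid j\in\N_0\}$, a set already closed in $\R$, with $\operatorname{Mult}[\vartheta_0]=1$ and $\operatorname{Mult}[\vartheta_{\pm j}]=2j$ for $j\geq1$.

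Finally, since $\overline{D_B}$ is self-adjoint with purely discrete spectrum, eigenvalues of finite multiplicity accumulating only at $\pm\infty$, it has compact resolvent, so $(D_B-z\mathbf{1})^{-1}\in\bb{K}(\s{H}_4)$ for every $z\in\C\setminus\sigma(D_B)$; this is also consistent with, and can be re-derived from, the compactness of $(D_B^2-z^2\mathbf{1})^{-s}$ recalled before the statement. No step presents a genuine obstacle: the only points deserving care are the invariance $D_B E_j\subseteq E_j$ — immediate since $E_j$ lies inside the $D_B$-invariant core $\s{F}_B\otimes\C^4$ — and the transfer of essential self-adjointness between the two cores, both of which are handled above.
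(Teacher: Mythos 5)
Your proof is correct. The main divergence from the paper's argument lies in how essential self-adjointness is established. The paper deduces it from the already-known essential self-adjointness of $D_B^2$ on $S\big(\R^2\big)\otimes\C^4$ (a Kato--Rellich consequence of $D_B^2$ being a bounded perturbation of $Q_B\otimes{\bf 1}_4$), via the factorization $D_B^2+{\bf 1}=(D_B+\ii{\bf 1})(D_B-\ii{\bf 1})=(D_B-\ii{\bf 1})(D_B+\ii{\bf 1})$, which forces $D_B\pm\ii{\bf 1}$ to have dense range. You instead diagonalize $D_B$ directly by noting $D_B$ preserves the core $\s{F}_B\otimes\C^4$, that $D_B E_j\subseteq E_j$ for each finite-dimensional eigenspace $E_j$ of $D_B^2$, and hence that $D_B$ is an orthogonal algebraic direct sum of finite Hermitian matrices; you then transfer essential self-adjointness to the larger core $S\big(\R^2\big)\otimes\C^4$ by the chain $\overline{A}\subseteq\overline{B}\subseteq B^{*}\subseteq A^{*}=\overline{A}$. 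Your route is more explicit and self-contained (it does not rest on the $D_B^2$ theory being settled first), at the cost of a few more bookkeeping steps. For the spectrum and multiplicities the two proofs are essentially parallel: both use the spectral mapping from $D_B^2$ together with the chiral symmetry $\chi D_B\chi=-D_B$ to pair the eigenvalues $\pm\sqrt{j}$ and split $\dim E_j=4j$ evenly; you additionally identify $E_0=\mathrm{span}\{\psi_{0,0}^{(4)}\}$ explicitly and verify $D_B\psi_{0,0}^{(4)}=0$, which the paper handles more tersely via $P_0^{(2)}=P_0$. Both conclude compactness of the resolvent from the discrete spectrum with finite multiplicities.
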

\begin{proof}
We already know that $D_B$ is symmetric
and that $D_B^2$ is essentially self adjoint on $S\big(\R^2\big)\otimes\C^4$. Moreover, the existence of the resolvent $\big(D_B^2+{\bf 1}\big)^{-1}$ implies that $D_B^2+{\bf 1}$ maps the domain $S\big(\R^2\big)\otimes\C^4$ onto a dense subspace of $\s{H}_4$. The factorization
\[
D_B^2+{\bf 1} = (D_B+\ii{\bf 1})(D_B-\ii{\bf 1}) = (D_B-\ii{\bf 1})(D_B+\ii{\bf 1})
\]
implies that the two operators $D_B\pm\ii{\bf 1}$ map $S\big(\R^2\big)\otimes\C^4$ into a dense
subspace of $\s{H}_4$. This proves that $D_B$ is essentially self-adjoint on
$S\big(\R^2\big)\otimes\C^4$. The spectral mapping theorem and $\sigma\big(D_B^2\big)=\N_0$ imply that for
all $j\in\N_0$, at least one value between $+\sqrt{j}$ and $-\sqrt{j}$ must be in
$\sigma(D_B)$. However, the symmetry~\eqref{eq:parit_D} implies that
$+\sqrt{j}\in \sigma(D_B)$, if and only if, $-\sqrt{j}\in\sigma(D_B)$ and this completes the
description of the spectrum of $D_B$. Let $P^{(2)}_j$ be the spectral projection
of $D_B^2$ related to the eigenvalue $\theta_j$ and $P_{\pm j}$ be the spectral
projections of $D_B$ related to $\vartheta_{\pm j}$, respectively. The spectral
theorem implies that, for $j\neq 0$, $P^{(2)}_j=P_{+ j}+P_{- j}$ and the symmetry~$\chi$ provides $\chi P_{\pm j} \chi= P_{\mp j}$. Therefore, $P_{+ j}$~and~$P_{- j}$ have the
same dimension which is half of the dimension of $P^{(2)}_j$. For the zero
eigenvalue one has that $P^{(2)}_0=P_{0}$. This completes the description of the
multiplicity of the eigenvalues of $D_B$. Finally, the spectral analysis of
$D_B$ guarantees the compactness of the resolvents of $D_B$, which can be
obtained as norm-limit of finite rank operators.
\end{proof}

The chiral operator $\chi$ provides a grading of the Hilbert space
\begin{equation}\label{eq:grading}
\s{H}_4 = \s{H}_4^+ \oplus \s{H}_4^- ,
\end{equation}
where $\s{H}_4^\pm\simeq L^2\big(\R^2\big)\otimes\C^2$ are the ranges of the projections $\chi_\pm:=\frac{1}{2}(\chi\pm{\bf 1})$. Equation~\eqref{eq:parit_D} says that $D_B$ is \emph{odd} with respect to this grading. On the other hand the representation $\rho$ is \emph{even} in the sense that
\begin{equation*}
\chi \rho(A) \chi = \rho(A) ,\qquad \forall\, A\in\bb{C}_B .
\end{equation*}
The next step is to control the commutators of $D_B$ with sufficiently regular elements of $\bb{C}_B$.
\begin{Proposition}\label{prop:commut_D}
For every $A\in \bb{S}_B$ the commutator
\[
[D_B,\rho(A)] := D_B \rho(A) - \rho(A) D_B
\]
is a well defined bounded operator on $\s{H}_4$.
\end{Proposition}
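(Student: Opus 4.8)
The plan is to reduce the boundedness of $[D_B,\rho(A)]$ for $A\in\bb{S}_B$ to the boundedness of the commutators of $A$ (as an operator on $L^2(\R^2)$) with the four first-order operators $K_1,K_2,G_1,G_2$. Since
\[
[D_B,\rho(A)] = \frac{1}{\sqrt{2}}\big([K_1,A]\otimes\gamma_1 + [K_2,A]\otimes\gamma_2 + [G_1,A]\otimes\gamma_3 + [G_2,A]\otimes\gamma_4\big)
\]
(at least formally, on the core $S(\R^2)\otimes\C^4$, using $\rho(A)=A\otimes{\bf 1}_4$ and the fact that $A$ commutes with the matrix factors), and since the $\gamma_j$ are bounded, it suffices to show that each $[K_j,A]$ and each $[G_j,A]$ extends to a bounded operator on $L^2(\R^2)$; the norm of $[D_B,\rho(A)]$ is then controlled by $\tfrac{1}{\sqrt 2}\sum(\|[K_j,A]\|+\|[G_j,A]\|)$.

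For the dual momenta this is immediate: by Lemma~\ref{lemma:comm_rel_bpm} every element of $\bb{M}_B\supset\bb{C}_B\supset\bb{S}_B$ commutes with $G_1$ and $G_2$, so $[G_1,A]=[G_2,A]=0$. Thus only the commutators with the magnetic momenta $K_1,K_2$ remain, and these can be handled via the ladder operators, writing $K_1=2^{-1/2}(\rr{a}^++\rr{a}^-)$ and $K_2=-\ii2^{-1/2}(\rr{a}^+-\rr{a}^-)$, so that it is enough to bound $[\rr{a}^{\pm},A]$. The natural route is: first establish the claim for $A=\Upsilon_{j\mapsto k}$, where \eqref{eq:commut_ups} gives the explicit formulas
\[
\big[\rr{a}^+,\Upsilon_{j\mapsto k}\big] = \sqrt{k+1}\,\Upsilon_{j\mapsto k+1} - \sqrt{j}\,\Upsilon_{j-1\mapsto k},\qquad
\big[\rr{a}^-,\Upsilon_{j\mapsto k}\big] = \sqrt{k}\,\Upsilon_{j\mapsto k-1} - \sqrt{j+1}\,\Upsilon_{j+1\mapsto k},
\]
each a bounded (finite-rank plus partial-isometry-type) operator; then extend to a general $A\in\bb{S}_B$ by writing $A=\pi(f)$ with $f\in\s{S}_B$, expanding $f=\sum c_{n,m}\psi_{n,m}$ with $\{c_{n,m}\}\in S(\N_0^2)$ (Proposition~\ref{prop:discret_sch}), hence $A=\tfrac{1}{\sqrt{2\pi}\ell_B}\sum(-1)^{m-n}c_{n,m}\Upsilon_{m\mapsto n}$, and checking that the series $\sum c_{n,m}[\rr{a}^\pm,\Upsilon_{m\mapsto n}]$ converges in operator norm because the rapid decay of $\{c_{n,m}\}$ absorbs the $\sqrt{n},\sqrt{m}$ growth coming from the commutator formulas.

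The main technical obstacle is the passage from the dense core to the full domain: one must verify that $D_B\rho(A)-\rho(A)D_B$, a priori only defined on $S(\R^2)\otimes\C^4$, genuinely extends to a bounded operator on all of $\s{H}_4$, i.e.\ that $\rho(A)$ preserves (or at least maps into) the domain of $D_B$ in a way compatible with the commutator estimate. This is where one uses that $A\in\bb{S}_B\subset\bb{C}^{1,0}_B$ (Proposition~\ref{prop:criter_commut}(2)) together with Lemma~\ref{lemma:comm_rel_bpm}: the commutator with each $K_j$ closes to a bounded operator, the commutator with each $G_j$ vanishes, and since $S(\R^2)\otimes\C^4$ is a core for $D_B$ (Proposition~\ref{prop:comp_res_D}), a bounded operator agreeing with $D_B\rho(A)-\rho(A)D_B$ on this core is the unique bounded extension. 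A clean way to make this rigorous is to argue first for $A\in\bb{F}_B$ (a finite combination of $\Upsilon_{j\mapsto k}$), where everything is literally a bounded matrix-type operator, and then pass to the limit in $\bb{S}_B$ using the norm convergence of the series above, invoking closedness of $D_B$ to identify the limit.
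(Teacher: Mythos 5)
Your proposal is correct and its architecture matches the paper's: both first reduce to the formal decomposition
\[
[D_B,\rho(A)] = \tfrac{1}{\sqrt{2}}\big([K_1,A]\otimes\gamma_1+[K_2,A]\otimes\gamma_2+[G_1,A]\otimes\gamma_3+[G_2,A]\otimes\gamma_4\big)
\]
on the core $S\big(\R^2\big)\otimes\C^4$, both observe that $\rho(A)$ preserves this core because $f^-\ast\psi$ is Schwartz for Schwartz $f$, $\psi$, and both invoke Lemma~\ref{lemma:comm_rel_bpm} to annihilate the $[G_j,A]$ terms. Where you diverge is the treatment of $[K_j,A]$: the paper invokes Proposition~\ref{props:commut_deriv} to identify $[K_j,A]$, up to sign and a factor of $\ell_B$, with $\nabla_{j'}(A)=\pi(\partial_{j'}f)$, and concludes boundedness from the fact that $\partial_{j'}$ preserves $\s{S}_B$, so the commutator is again an element of $\bb{S}_B$. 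You instead write $K_j$ in ladder-operator form and bound the series $\sum c_{n,m}[\rr{a}^\pm,\Upsilon_{m\mapsto n}]$ in operator norm using the explicit formulas~\eqref{eq:commut_ups} together with the rapid decay of the coefficients. Your estimate is sound: each $\Upsilon$ is a partial isometry of norm one, so the triangle inequality gives $\big\|\sum c_{n,m}[\rr{a}^\pm,\Upsilon_{m\mapsto n}]\big\|\leqslant\sum|c_{n,m}|\big(\sqrt{n+1}+\sqrt{m+1}\big)$, which converges because $\{c_{n,m}\}\in S\big(\N_0^2\big)$. The paper's route is slicker because it not only proves boundedness but also identifies the commutator as an element of $\bb{S}_B\otimes\Mat_4(\C)$, a structural fact used immediately afterwards (e.g.\ in Corollary~\ref{cor:reg}, Proposition~\ref{prop:commut_D-II} and Lemma~\ref{lemma:sub_fred_02}); your route gives the boundedness estimate more concretely but leaves that extra structural information implicit. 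Both are complete proofs of the stated proposition.
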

\begin{proof} Every $A\in \bb{S}_B$ has a kernel $f\in \s{S}_B$
such that $A=\pi(f)$. Then, for every $\psi\in S\big(\R^2\big)$ one gets $A\psi=f^-\ast\psi\in S\big(\R^2\big)$
where $f^-(x):=f(-x)$. Therefore, $\rho(A)\big[S\big(\R^2\big)\otimes\C^4\big]\subseteq S\big(\R^2\big)\otimes\C^4$ and, as a consequence, the difference $D_B\rho(A)-\rho(A)D_B$ is well defined on the dense core $S\big(\R^2\big)\otimes\C^4$ of
$D_B$, and it is closable. The closure will be denote with $[D_B,\rho(A)]$. In view of the definition of~$D_B$, Lemma~\ref{lemma:comm_rel_bpm} and Proposition~\ref{props:commut_deriv} one gets
\begin{align*}
[D_B,\rho(A)] &= [K_1,A] \otimes \frac{\gamma_1}{\sqrt{2}} + [K_2,A] \otimes \frac{\gamma_2}{\sqrt{2}}
 = \nabla_1(A) \otimes \frac{\ii\gamma_2}{\sqrt{2}\ell_B} - \nabla_2(A) \otimes \frac{\ii\gamma_1}{\sqrt{2}\ell_B}
\\
&= \pi(\partial_1 (f)) \otimes \frac{\ii\gamma_2}{\sqrt{2}\ell_B} - \pi(\partial_2 (f)) \otimes \frac{\ii\gamma_1}{\sqrt{2}\ell_B} .
\end{align*}
Since $\pi(\partial_j (f))\in\bb{S}_B$ for $j=1,2$, it follows that $[D_B,\rho(A)]$ is a bounded operator.
\end{proof}

Let us introduce the unbounded derivation ${\delta}_B$ initially defined as
\[
{\delta}_B(A) := -\ii[D_B,\rho(A)] ,\qquad A\in \bb{S}_B .
\]
The action of ${\delta}_B$ can be extended continuously to the closure of $\bb{S}_B$ with respect to the graph-norm
\[
 \lVert A \rVert_{D_B} := \lVert A \rVert +\lVert{\delta}_B(A)\rVert,
\]
which is dominated by the norm~\eqref{eq:diff_norm}. This immediately implies the following result:
\begin{Corollary}\label{cor:reg}
The derivation ${\delta}_B$ is well defined on $\bb{C}^1_B$ and the following formula holds true:
\[
{\delta}_B(A) = \nabla_1(A) \otimes \frac{\gamma_2}{\sqrt{2}\ell_B} - \nabla_2(A) \otimes \frac{\gamma_1}{\sqrt{2}\ell_B} ,\qquad A\in \bb{C}^1_B .
\]
Moreover ${\delta}_B(A)=-\ii[D_B,\rho(A)]$ for all $A\in \bb{C}^{1,0}_B$.
\end{Corollary}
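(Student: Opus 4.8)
The plan is to establish Corollary~\ref{cor:reg} as a routine consequence of Proposition~\ref{prop:commut_D} together with the density and continuity arguments already in place. First I would observe that the computation carried out in the proof of Proposition~\ref{prop:commut_D} shows that, for $A\in\bb{S}_B$ with kernel $f\in\s{S}_B$,
\[
{\delta}_B(A) = -\ii[D_B,\rho(A)] = \nabla_1(A)\otimes\frac{\gamma_2}{\sqrt{2}\ell_B} - \nabla_2(A)\otimes\frac{\gamma_1}{\sqrt{2}\ell_B},
\]
since $-\ii[K_1,A]\otimes\gamma_1/\sqrt2 - \ii[K_2,A]\otimes\gamma_2/\sqrt2 = \nabla_1(A)\otimes\gamma_2/(\sqrt2\ell_B) - \nabla_2(A)\otimes\gamma_1/(\sqrt2\ell_B)$ by Proposition~\ref{props:commut_deriv}. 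This gives the stated formula on the dense subalgebra $\bb{S}_B$.

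Next I would bound the graph-norm by the first-order differential norm~\eqref{eq:diff_norm}: from the displayed formula one immediately gets $\lVert{\delta}_B(A)\rVert \leqslant \tfrac{1}{\sqrt2\,\ell_B}\big(\lVert\nabla_1(A)\rVert+\lVert\nabla_2(A)\rVert\big)$, since the $\gamma$-matrices are unitary (hence norm one) and the norm of $T\otimes\gamma$ on $\s{H}_4$ equals $\lVert T\rVert$. Adding $\lVert A\rVert$ shows $\lVert A\rVert_{D_B}\leqslant C\lVert A\rVert_1$ for a constant $C$ depending only on $\ell_B$, so the identity map $(\bb{S}_B,\lVert\cdot\rVert_1)\to(\bb{S}_B,\lVert\cdot\rVert_{D_B})$ is continuous. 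Since $\bb{C}^1_B$ is by definition the completion $\overline{\bb{S}_B}^{\,\lVert\cdot\rVert_1}$ and ${\delta}_B$ extends continuously along the graph-norm, the extension of ${\delta}_B$ is defined on all of $\bb{C}^1_B$; moreover, because $\nabla_1,\nabla_2$ are by construction continuous from $(\bb{C}^1_B,\lVert\cdot\rVert_1)$ into $\bb{C}_B$, both sides of the displayed formula are continuous in $A\in\bb{C}^1_B$ and agree on the dense subset $\bb{S}_B$, hence agree on $\bb{C}^1_B$.

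Finally, for the last assertion, let $A\in\bb{C}^{1,0}_B$. Then $A[\s{D}(x_j)]\subseteq\s{D}(x_j)$ and $[x_j,A]\in\bb{C}_B$, so the argument of Proposition~\ref{prop:commut_D} applies verbatim with $\pi(\partial_j f)$ replaced by $-\ii[x_j,A]=\nabla_j(A)$ (using Proposition~\ref{props:commut_deriv} and Lemma~\ref{lemma:comm_rel_bpm}): the difference $D_B\rho(A)-\rho(A)D_B$ is well defined and closable on $S\big(\R^2\big)\otimes\C^4$ — because $A$ preserves $\s{D}(x_j)$ and hence $\rho(A)$ preserves $S\big(\R^2\big)\otimes\C^4 \subseteq \s{D}(D_B)$ — and its closure equals $\nabla_1(A)\otimes\ii\gamma_2/(\sqrt2\ell_B) - \nabla_2(A)\otimes\ii\gamma_1/(\sqrt2\ell_B)$, which is exactly $\ii{\delta}_B(A)$. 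The main (and only) obstacle is a bookkeeping one: confirming that $\rho(A)$ leaves the core $S\big(\R^2\big)\otimes\C^4$ invariant for $A\in\bb{C}^{1,0}_B$ — but this is immediate from Proposition~\ref{prop:criter_commut} and the density/core statements already proved, so the corollary truly is \emph{immediate} as claimed.
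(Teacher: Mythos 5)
Your first two paragraphs are correct and match what the paper intends by ``this immediately implies'': you extract the explicit formula for ${\delta}_B(A)$ on $\bb{S}_B$ from the proof of Proposition~\ref{prop:commut_D} (using Proposition~\ref{props:commut_deriv} to convert $[K_j,A]$ into $\nabla_j(A)$), bound the graph norm $\lVert\cdot\rVert_{D_B}$ by $\lVert\cdot\rVert_1$, and extend by density. That is the full content of the paper's own (implicit) argument for the first two statements.

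The last paragraph, however, contains a genuine gap. You assert that because $A\in\bb{C}^{1,0}_B$ preserves $\s{D}(x_j)$, ``hence $\rho(A)$ preserves $S\big(\R^2\big)\otimes\C^4$.'' This is a non sequitur: $S\big(\R^2\big)$ is a proper (indeed very small) subspace of $\s{D}(x_1)\cap\s{D}(x_2)$, and an operator preserving the latter has no reason to preserve the former. Proposition~\ref{prop:criter_commut} does not rescue this either --- it gives \emph{sufficient} conditions for a convolution operator to belong to $\bb{C}^{1,0}_B$, not a structural description of an arbitrary $A\in\bb{C}^{1,0}_B$, and in particular it does not say such an $A$ maps Schwartz functions to Schwartz functions. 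So the commutator $D_B\rho(A)-\rho(A)D_B$ is not shown to be well defined on the core $S\big(\R^2\big)\otimes\C^4$.

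The conclusion is nonetheless salvageable by choosing a larger domain. Set $\s{D}:=\s{D}(G_1)\cap\s{D}(G_2)\cap\s{D}(x_1)\cap\s{D}(x_2)$. Then $A[\s{D}]\subseteq\s{D}$, because $A$ preserves $\s{D}(G_j)$ by Lemma~\ref{lemma:comm_rel_bpm} (since $A\in\bb{M}_B$) and preserves $\s{D}(x_j)$ by definition of $\bb{C}^{1,0}_B$. Moreover, using $K_1=G_2-x_2/\ell_B$ and $K_2=G_1+x_1/\ell_B$ one sees $\s{D}\subseteq\s{D}(K_1)\cap\s{D}(K_2)\cap\s{D}(G_1)\cap\s{D}(G_2)\subseteq\s{D}(D_B)$, and since $S\big(\R^2\big)\subseteq\s{D}$, the subspace $\s{D}\otimes\C^4$ is a core for $D_B$. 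On this core the commutator $D_B\rho(A)-\rho(A)D_B$ is well defined; the $G_j$-contributions vanish by Lemma~\ref{lemma:comm_rel_bpm}, and the $K_j$-contributions reproduce $\nabla_j(A)$ via Proposition~\ref{props:commut_deriv}, yielding $\ii{\delta}_B(A)$. That is the argument to substitute for your claim about $\rho(A)$ preserving the Schwartz space.
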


Let us denoted with $\bb{A}_B\subset\bb{B}(\s{H}_4)$ the $\ast$-algebra generated by the elements $\rho( \bb{S}_B)$ and the commutators $[D_B,\rho( \bb{S}_B)]$. From Proposition~\ref{prop:commut_D} it follows that $\bb{A}_B\subset\bb{S}_B\otimes {\rm Mat}_4(\C)$.
For every $\varepsilon\geqslant 0$ let
\[
\rr{d}_{B,\varepsilon}(T) := \Big[\sqrt{D_B^2+\varepsilon{\bf 1}},T\Big] ,\qquad T\in \s{D}(\rr{d}_{B,\varepsilon})\subset \bb{B}(\s{H}_4) .
\]
This is an unbounded derivation.
Let $\s{D}^k(\rr{d}_{B,\varepsilon})$ be the domain of the power ${\rr{d}_{B,\varepsilon}}^k:=\rr{d}_{B,\varepsilon}\circ\cdots\circ \rr{d}_{B,\varepsilon}$ and $\s{D}^\infty(\rr{d}_{B,\varepsilon}):=\cap_{k\in\N_0}\s{D}^k(\rr{d}_{B,\varepsilon})$ the related smooth domain. The boundedness of $\rr{d}_{B,\varepsilon}(\rho(A))$ for elements $A\in\bb{S}_B$
is not guaranteed by Proposition~\ref{prop:commut_D}. For instance, when $\varepsilon=0$ the polar decomposition $|D_B|=V^*D_B$ provides
\[\rr{d}_{B,\varepsilon=0}(\rho(A)) = V^{*}{\delta}_B(A)+[V^{*},\rho(A)]D_B
\]
and the second summand is not bounded a priori. For this reason the following result is not at all obvious.
\begin{Proposition}\label{prop:commut_D-II}
With the notation introduced above it holds true that $\bb{A}_B\subset \s{D}^\infty(\rr{d}_{B,\varepsilon})$ for every $\varepsilon\geqslant 0$.
\end{Proposition}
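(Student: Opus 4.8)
The plan is to show that conjugation by $\sqrt{D_B^2+\varepsilon{\bf 1}}$ preserves the algebra $\bb{A}_B$ up to passing to a slightly larger class of "smoothing" operators, and then to iterate. The key observation is that $D_B^2 = Q_B\otimes{\bf 1}_4 + {\bf 1}\otimes\varpi$ with $\varpi={\rm diag}(0,0,+1,-1)$, so $\sqrt{D_B^2+\varepsilon{\bf 1}}$ is a function only of $Q_B$ (block by block), hence commutes with the dual Landau projections $P_m$ and with the ladder operators $\rr{b}^\pm$. Since every $T\in\bb{A}_B$ lies in $\bb{S}_B\otimes{\rm Mat}_4(\C)$, it suffices to understand $\rr{d}_{B,\varepsilon}(\rho(A))$ and, more generally, $\rr{d}_{B,\varepsilon}(\Upsilon_{j\mapsto k}\otimes M)$ for a matrix $M$. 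Because $\Upsilon_{j\mapsto k}$ maps the $k$-th Landau level to the $j$-th (shifting only the first quantum number), one computes on the Laguerre basis that
\[
\Big[\sqrt{D_B^2+\varepsilon{\bf 1}},\,\Upsilon_{j\mapsto k}\otimes M\Big]
= \sum_{m\in\N_0}\big(\omega_{k,m}(\varepsilon,M)-\omega_{j,m}(\varepsilon,M)\big)\,P_m\Upsilon_{j\mapsto k}\otimes(\cdots),
\]
where the scalars $\omega$ are of the form $\sqrt{m+k+1+\varepsilon+c}$ with $c\in\{0,\pm1\}$ coming from $\varpi$. The first step is therefore to make this block computation explicit and to bound the coefficient sequence: by the mean value theorem applied to $x\mapsto\sqrt{x}$, differences like $\sqrt{m+k+1+\varepsilon+c}-\sqrt{m+j+1+\varepsilon+c'}$ are $O(1)$ uniformly in $m$ (the square-root flattens), so the operator $\rr{d}_{B,\varepsilon}(\Upsilon_{j\mapsto k}\otimes M)$ is bounded, with norm controlled by $|k-j|+1$ times $\|M\|$.

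The second step is to identify the class of operators generated under iteration. The natural candidate is the set $\bb{T}_B\otimes{\rm Mat}_4(\C)$, where $\bb{T}_B$ consists of norm-convergent sums $\sum c_{j,k}\,\varphi_{j,k}(Q_B)\,\Upsilon_{j\mapsto k}$ with $\{c_{j,k}\}$ rapidly decreasing and $\varphi_{j,k}$ functions obtained from finite sums/differences of square roots of affine functions of the eigenvalue index, having uniformly bounded discrete derivatives of all orders. One checks: (i) $\bb{S}_B\otimes{\rm Mat}_4(\C)\subset\bb{T}_B\otimes{\rm Mat}_4(\C)$ (take $\varphi\equiv1$); (ii) $[D_B,\rho(\bb{S}_B)]\subset\bb{S}_B\otimes{\rm Mat}_4(\C)$ by Proposition~\ref{prop:commut_D}, so $\bb{A}_B\subset\bb{T}_B\otimes{\rm Mat}_4(\C)$; and crucially (iii) $\rr{d}_{B,\varepsilon}$ maps $\bb{T}_B\otimes{\rm Mat}_4(\C)$ into itself, because applying the commutator with $\sqrt{D_B^2+\varepsilon{\bf 1}}$ turns $\varphi_{j,k}(Q_B)\Upsilon_{j\mapsto k}$ into $\big(\text{translated-difference of }\varphi\text{ and a new square-root factor}\big)(Q_B)\,\Upsilon_{j\mapsto k}$, and finite sums and first differences of functions with uniformly bounded discrete derivatives again have uniformly bounded discrete derivatives; the rapid decay of the coefficients is preserved because the extra factors grow at most polynomially in $j,k$ while $|k-j|$ terms only cost a fixed power. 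Iterating (iii) gives $\bb{A}_B\subset\s{D}^k(\rr{d}_{B,\varepsilon})$ for every $k$, hence $\bb{A}_B\subset\s{D}^\infty(\rr{d}_{B,\varepsilon})$.

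I expect the main obstacle to be step two: setting up the stable class $\bb{T}_B$ precisely enough that it is manifestly closed under $\rr{d}_{B,\varepsilon}$, i.e.\ keeping track of how the "symbol functions" $\varphi_{j,k}$ and their discrete derivatives behave under the translation $m\mapsto m$, the multiplication by a new $\sqrt{m+k+1+\varepsilon+c}$-type factor, and the formation of differences. The analytic heart is an elementary but slightly delicate uniform estimate: for $p(m):=\sqrt{m+\alpha}$ with $\alpha\ge 1$, all forward differences $\Delta^\ell p$ are bounded on $\N_0$ (indeed $|\Delta^\ell p(m)|\le C_\ell$ independently of $\alpha\ge 1$), and products of finitely many such $p$'s with the relevant bounded symbols inherit the same property; granting this, everything else is bookkeeping with the relations in Proposition~\ref{prop:struct} and the norm estimates already used in the proof of Proposition~\ref{prop:trace_dix_main res}. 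Alternatively, one can avoid an explicit symbol class and argue by induction directly: assuming $\rr{d}_{B,\varepsilon}^{\,k}(T)\in\bb{B}(\s{H}_4)$ and block-diagonal-in-$P_m$ structure, show $\rr{d}_{B,\varepsilon}^{\,k+1}(T)$ is again bounded using the $O(1)$ square-root-difference bound; this is the line I would actually write up, as it sidesteps defining $\bb{T}_B$ and needs only the boundedness estimate at each stage.
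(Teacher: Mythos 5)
Your starting point coincides with the paper's: reduce to elementary tensors $A\otimes\tau_{i,k}$ in $\bb{S}_B\otimes{\rm Mat}_4(\C)$, use the block-diagonal form $\sqrt{D_B^2+\varepsilon{\bf 1}}={\rm diag}\big(\sqrt{Q_{B,\varepsilon}},\sqrt{Q_{B,\varepsilon}},\sqrt{Q_{B,\varepsilon+1}},\sqrt{Q_{B,\varepsilon-1}}\big)$, compute the first commutator on the generators $\Upsilon_{r\mapsto s}$, get a coefficient of sub-polynomial growth in $r,s$, and absorb it with the rapid decay of the $\bb{S}_B$-coefficients. Where you diverge is the iteration. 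Setting $\wp_{i,k}(A):=\sqrt{Q_B+\alpha_i{\bf 1}}\,A-A\sqrt{Q_B+\alpha_k{\bf 1}}$ so that $\rr{d}_{B,\varepsilon}(A\otimes\tau_{i,k})=\wp_{i,k}(A)\otimes\tau_{i,k}$, the paper computes $\wp_{i,k}(\Upsilon_{r\mapsto s})=C^{s,r}_{i,k}\Upsilon_{r\mapsto s}$ with $C^{s,r}_{i,k}=\sum_m\big(\sqrt{(s+1+\alpha_i)+m}-\sqrt{(r+1+\alpha_k)+m}\big)P_m$, a bounded function of the dual Landau projections. Since $C^{s,r}_{i,k}$ commutes with $\Upsilon_{r\mapsto s}$ and with every $\sqrt{Q_B+\alpha{\bf 1}}$, the iteration collapses at once to ${\wp_{i,k}}^p(\Upsilon_{r\mapsto s})=\big(C^{s,r}_{i,k}\big)^p\Upsilon_{r\mapsto s}$, whose norm is $\big|\sqrt{s+1+\alpha_i}-\sqrt{r+1+\alpha_k}\big|^p$, still polynomial in $r,s$; smoothness then follows from the same rapid-decay estimate as for $p=1$. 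This renders your symbol class $\bb{T}_B$ unnecessary; moreover, that class is slightly mistargeted, because by the identities $\Pi_nP_m\Upsilon_{r\mapsto s}=\delta_{n,s}P_m\Upsilon_{r\mapsto s}$ and $\Upsilon_{r\mapsto s}\Pi_nP_m=\delta_{n,r}P_m\Upsilon_{r\mapsto s}$ the coefficient operators that arise are functions of $\rr{b}^+\rr{b}^-$ (the dual index $m$) rather than of $Q_B$. Your ``alternative direct induction'' on boundedness is likewise underspecified: boundedness plus $P_m$-block structure is not by itself an inductive invariant that controls the growth of the coefficient under the next commutator. Both of your routes are repairable, but the paper's observation that the iterate simply raises the coefficient to a power is what makes the argument close cleanly.
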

\begin{proof}We will prove something a little more general, namely that $\bb{S}_B\otimes {\rm Mat}_4(\C)\subset \s{D}^\infty(\rr{d}_{B,\varepsilon})$. Let $\tau_{i,k}$ be the $4\times4$
matrix which has a single $1$ in the entry at the position $(i,k)$ and 0 in all other entries.
Then, every element in $\bb{S}_B\otimes {\rm Mat}_4(\C)$ is a finite linear combination of elements of the form $A\otimes \tau_{i,k}$ with $A\in \bb{S}_B$. A direct computation shows that
\[
\rr{d}_{B,\varepsilon}(A\otimes \tau_{i,k}) = \Big[\sqrt{D_B^2+\varepsilon{\bf 1}},A\otimes \tau_{i,k}\Big] = \wp_{i,k}(A)\otimes \tau_{i,k},
\]
where
\[
\wp_{i,k}(A) := \sqrt{Q_B+\alpha_i{\bf 1}} A - A \sqrt{Q_B+\alpha_k{\bf 1}}
\]
and $\alpha_i,\alpha_k\in\{\varepsilon,\varepsilon\pm 1\}$. Since every $A\in \bb{S}_B$ is a linear combination with fast decaying coefficients of the operators $\Upsilon_{r\mapsto s}\in\bb{S}_B$ and the map $A\mapsto \wp_{i,k}(A)$
is linear it is enough to study the generic element $\wp_{i,k}(\Upsilon_{r\mapsto s})$.
Observing that
\[
\sqrt{Q_B+\alpha{\bf 1}} = \sum_{(n,m)\in\N^2_0}\sqrt{n+m+1+\alpha} \Pi_n P_m ,
\]
where $P_m$ are the {dual} Landau projections~\eqref{eq:Lan_proj-dual}, and using the relations
\[\Pi_n P_m\Upsilon_{r\mapsto s}=\delta_{n,s}P_m\Upsilon_{r\mapsto s} \qquad \text{and} \qquad \Upsilon_{r\mapsto s}\Pi_n P_m=\delta_{n,r}P_m\Upsilon_{r\mapsto s},\]
one obtains after a direct computation
\begin{equation}\label{eq:P_ups}
\wp_{i,k}(\Upsilon_{r\mapsto s}) = C_{i,k}^{s,r} \Upsilon_{r\mapsto s},
\end{equation}
where
\[
 C_{i,k}^{s,r} := \sum_{m\in\N_0}\Big(\sqrt{(s+1+\alpha_i)+m}-\sqrt{(r+1+\alpha_k)+m}\Big)P_m .
\]
Since
$\big\lVert \wp_{i,k}(\Upsilon_{r\mapsto s})\big\rVert \leqslant\lVert C_{i,k}^{s,r}\rVert=\big\lvert\sqrt{(s+1+\alpha_i)}-\sqrt{(r+1+\alpha_k)}\big\rvert$
has a growth of order $\frac{1}{2}$ in~$r$ and~$s$, it follows that
$\|\wp_{i,k}(A)\|<+\infty$ in view of the rapid decay of the coefficients of $A$. This
proves that $A\otimes \tau_{i,k}\in \s{D}(\rr{d}_{B,\varepsilon})$ for every $i,k=1,\ldots,4$, and
consequently $\bb{S}_B\otimes {\rm Mat}_4(\C)\subset \s{D}(\rr{d}_{B,\varepsilon})$. The same argument
can be used to prove that the elements of $\bb{S}_B\otimes {\rm Mat}_4(\C)$ are
smooth. For that, it is enough to prove the validity of the equations
${\rr{d}_{B,\varepsilon}}^{p}(A\otimes \tau_{i,k})={\wp_{i,k}}^p(A)\otimes \tau_{i,k}$ and
${\wp_{i,k}}^p(\Upsilon_{r\mapsto s})=\big(C_{i,k}^{s,r}\big)^p\Upsilon_{j\mapsto k}$ for every $p\in\N$.
\end{proof}

After putting the pieces together, and comparing with~\cite[Definitions 9.16 and~10.10]{gracia-varilly-figueroa-01}, we obtain that $(\bb{S}_B,\s{H}_4, D_B)$ is a~\emph{regular even compact spectral triple} as stated in Theorem~\ref{theo:main_01}.

\subsection{Measurability properties}\label{sec:mes_prop}
Let us introduce the family of operators
\[
{|D_{B,\varepsilon}|^{-s}} := \big(D_B^2+\varepsilon{\bf 1}\big)^{-\frac{s}{2}} ,\qquad \varepsilon>0 ,\qquad s\geqslant 1 .
\]
\begin{Proposition}\label{props:old_resul_D_dix}
Let ${|D_{B,\varepsilon}|^{-s}}$ be defined as above. Then:
\begin{itemize}\itemsep=0pt
\item[$(1)$] ${|D_{B,\varepsilon}|^{-s}}\in\rr{S}^{1}$ for every $s>4$, and $\varepsilon>0$;
\item[$(2)$] $|D_{B,\varepsilon}|^{-4}\in \rr{S}^{1^+}_{\rm m}$ for every $\varepsilon>0$,
and
\begin{equation*}%\label{eq:traXXX_DDD}
{\Tr}_{\rm Dix}\big(|D_{B,\varepsilon}|^{-4}\big) = 2 ,
\end{equation*}
independently of $\varepsilon>0$.
\end{itemize}
\end{Proposition}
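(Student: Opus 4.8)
Reducing everything to Proposition~\ref{props:old_resul} via the block structure of the square of the Dirac operator seems the natural route. First I would use the identity $D_B^2 = Q_B\otimes{\bf 1}_4 + {\bf 1}\otimes\varpi$, with $\varpi={\rm diag}(0,0,+1,-1)$, established in Section~\ref{sec:spectral-triple}. Since $\varpi$ is diagonal, the strictly positive self-adjoint operator $D_B^2+\varepsilon{\bf 1}$ is block-diagonal with respect to the splitting $\s{H}_4\cong L^2(\R^2)^{\oplus 4}$, its four diagonal blocks being $Q_{B,\varepsilon}$, $Q_{B,\varepsilon}$, $Q_{B,\varepsilon+1}$ and $Q_{B,\varepsilon-1}$, where $Q_{B,\xi}:=Q_B+\xi{\bf 1}$ is the operator considered right before Proposition~\ref{props:old_resul}. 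By the functional calculus this gives, for every $s\geqslant 1$,
\[
|D_{B,\varepsilon}|^{-s} = Q_{B,\varepsilon}^{-\frac{s}{2}}\oplus Q_{B,\varepsilon}^{-\frac{s}{2}}\oplus Q_{B,\varepsilon+1}^{-\frac{s}{2}}\oplus Q_{B,\varepsilon-1}^{-\frac{s}{2}} ,
\]
and for $\varepsilon>0$ all four shifts $\varepsilon,\varepsilon,\varepsilon+1,\varepsilon-1$ are $>-1$, so Proposition~\ref{props:old_resul} applies to each summand.

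With this reduction item~(1) would be immediate: for $s>4$ one has $s/2>2$, hence $Q_{B,\xi}^{-s/2}\in\rr{S}^1$ for each of the shift values by Proposition~\ref{props:old_resul}(1), and a finite orthogonal direct sum of trace-class operators is trace-class (its trace norm being the sum of those of the summands), so $|D_{B,\varepsilon}|^{-s}\in\rr{S}^1$. For item~(2) I would set $s=4$, so that each block becomes $Q_{B,\xi}^{-2}$; Proposition~\ref{props:old_resul}(2) then gives $Q_{B,\xi}^{-2}\in\rr{S}^{1^+}_{\rm m}$ with ${\Tr}_{\rm Dix}\big(Q_{B,\xi}^{-2}\big)=\frac{1}{2}$ for every $\xi>-1$. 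The remaining point is to transfer measurability and the Dixmier trace through the finite direct sum.

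For that I would use the elementary observation that if $T$ is a compact operator on a Hilbert space $\s{K}$ and $\s{K}$ is isometrically embedded in $\s{H}=\s{K}\oplus\s{K}'$, then $T\oplus 0$ has the same decreasingly ordered sequence of singular values as $T$, the extra zeros sitting at the tail. Hence $\gamma_N(T\oplus 0)=\gamma_N(T)$ for all $N$, so $T\oplus 0$ lies in $\rr{S}^{1^+}$, resp.\ in $\rr{S}^{1^+}_{\rm m}$, precisely when $T$ does, and ${\Tr}_{{\rm Dix},\omega}(T\oplus 0)={\Tr}_{{\rm Dix},\omega}(T)$ for every generalized scale-invariant limit $\omega$. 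Writing $|D_{B,\varepsilon}|^{-4}$ as the sum of the four operators obtained by padding each $Q_{B,\xi}^{-2}$ with zeros, and using that $\rr{S}^{1^+}_{\rm m}$ is a linear subspace of $\rr{S}^{1^+}$ on which ${\Tr}_{{\rm Dix},\omega}$ is linear and independent of $\omega$, I would conclude $|D_{B,\varepsilon}|^{-4}\in\rr{S}^{1^+}_{\rm m}$ and
\[
{\Tr}_{\rm Dix}\big(|D_{B,\varepsilon}|^{-4}\big) = \frac{1}{2}+\frac{1}{2}+\frac{1}{2}+\frac{1}{2} = 2 ,
\]
the $\varepsilon$-independence being inherited from Proposition~\ref{props:old_resul}(2). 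The only step that is not pure bookkeeping is this direct-sum compatibility of the Dixmier trace (standard, but worth stating explicitly); alternatively one could show $|D_{B,\varepsilon}|^{-4}-Q_{B,\varepsilon}^{-2}\otimes{\bf 1}_4\in\rr{S}^1$ by a resolvent expansion using that $\varpi$ is bounded, but the block-diagonal route is shorter and makes the independence of $\varepsilon$ manifest.
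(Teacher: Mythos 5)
Your proof is correct and follows essentially the same route as the paper: both use the block-diagonal decomposition $|D_{B,\varepsilon}|^{-s}=Q_{B,\varepsilon}^{-s/2}\oplus Q_{B,\varepsilon}^{-s/2}\oplus Q_{B,\varepsilon+1}^{-s/2}\oplus Q_{B,\varepsilon-1}^{-s/2}$ and reduce to Proposition~\ref{props:old_resul}. The only cosmetic difference is that the paper disposes of the bookkeeping step by citing ${\Tr}_{\rm Dix}|_{\s{H}_4}={\Tr}_{\rm Dix}|_{L^2(\R^2)}\otimes{\Tr}_{\C^4}$ from~\cite[Lemma~B.3]{denittis-gomi-moscolari-19}, whereas you verify the direct-sum compatibility of $\gamma_N$, measurability, and the Dixmier trace directly via the (correct) observation that padding with zeros leaves the decreasingly ordered singular-value sequence unchanged.
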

\begin{proof}
An explicit computation provides
\[
{|D_{B,\varepsilon}|^{-s}} := \left(\begin{matrix}
 Q_{B,\varepsilon}^{-\frac{s}{2}} & 0 & 0 & 0\\
 0 & Q_{B,\varepsilon}^{-\frac{s}{2}} & 0 & 0\\
 0 & 0 & Q_{B,\varepsilon+1}^{-\frac{s}{2}} & 0\\
 0 & 0 & 0 &Q_{B,\varepsilon-1}^{-\frac{s}{2}}
 \end{matrix}
\right),
\]
where $Q_{B,\varepsilon}:= Q_B + \varepsilon{\bf 1}$. Then, the result follows from Proposition~\ref{props:old_resul} along with ${\Tr}_{\s{H}_4}={\Tr}_{L^2(\R^2)}\otimes{\Tr}_{\C^4}$ and ${\Tr}_{\rm Dix}|_{\s{H}_4}={\Tr}_{\rm Dix}|_{L^2(\R^2)}\otimes{\Tr}_{\C^4}$. In particular, the second relation is proved in~\cite[Lemma~B.3]{denittis-gomi-moscolari-19}.
\end{proof}

According to~\cite[Definitions 10.8 and~10.12]{gracia-varilly-figueroa-01} we can infer from Proposition~\ref{props:old_resul_D_dix} that the magnetic spectral triple $(\bb{S}_B,\s{H}_4,D_B)$ is \emph{$p$-summable}
for every $p>4$ and \emph{$4^+$-summable}. In particular this would imply that
its \emph{classical dimension} is~4.
However, it is worth recalling that~$\bb{C}_B$ (and hence $\bb{S}_B$) is not unital and therefore, a more appropriate definition of dimension is given by~\cite[Definition~2.1]{gayral-gracia-Bondia-iochum-all-04}.
Let us also observe that the operator $|D_{B,\varepsilon}|$ and the Dixmier trace can be combined to compute the trace $\fint_B$ of the von Neumann algebra $\bb{M}_B$. Indeed, as a~direct consequence of {Proposition}~\ref{prop:trace_dix_main res}
 one gets for all $T\in \bb{L}^1_B$
 \begin{align}
\fint_B(T) &= \frac{1}{4}{\Tr}_{\rm Dix}\big(|D_{B,\varepsilon}|^{-2}\rho(T)\big)
 = \frac{1}{4}{\Tr}_{\rm Dix}\big(\rho(T)|D_{B,\varepsilon}|^{-2}\big)\nonumber\\
&= \frac{1}{4}{\Tr}_{\rm Dix}\big(|D_{B,\varepsilon}|^{-1}\rho(T)|D_{B,\varepsilon'}|^{-1}\big) ,\label{eq:connes_integr}
 \end{align}
independently of $\varepsilon,\varepsilon'>0$.
Equation~\eqref{eq:connes_integr} can be compared with~\cite[equation~(7.83)]{gracia-varilly-figueroa-01}
which describes the noncommutative \emph{Connes' integral}.
In particular, in view of~\cite[Corollary~7.21]{gracia-varilly-figueroa-01}, one can interpret formula~\eqref{eq:connes_integr} as the \virg{volume-integral} of the \virg{noncommutative smooth manifold}~$\bb{S}_B$. It turns out that
the spectral triple $(\bb{S}_B,\s{H}_4,D_B)$
behaves like a noncommutative Riemannian manifold of dimension~2. Borrowing the parlance of~\cite[Theorem~3.2]{gayral-gracia-Bondia-iochum-all-04}
 we can state the following result:
\begin{Theorem}[spectral dimension]\label{theo:spec_dim}
The spectral dimension of the magnetic spectral triple $(\bb{S}_B,\s{H}_4,D_B)$ is~$2$.
\end{Theorem}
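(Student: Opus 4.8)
The proof amounts to checking, for the algebra $\bb{S}_B$, the conditions that fix the \emph{spectral dimension} of a non-unital spectral triple in the sense of \cite[Definition~2.1 and Theorem~3.2]{gayral-gracia-Bondia-iochum-all-04}. Concretely, the plan is to verify three facts: (i) for every $A\in\bb{S}_B$ the operator $\rho(A)|D_{B,\varepsilon}|^{-s}$ is trace class whenever $s>2$; (ii) for every $A\in\bb{S}_B$ the operator $\rho(A)|D_{B,\varepsilon}|^{-2}$ lies in $\rr{S}^{1^+}_{\rm m}$ and its Dixmier trace is proportional to $\fint_B(A)$, which is a faithful semi-finite trace on $\bb{S}_B$; and (iii) the exponent $2$ is optimal, i.e.\ there is a positive element of $\bb{S}_B$ for which $\rho(A)|D_{B,\varepsilon}|^{-s}$ fails to be trace class for every $0<s<2$. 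Item (ii) is essentially verbatim the content of formula~\eqref{eq:connes_integr} derived from Proposition~\ref{prop:trace_dix_main res}, so the real work is (i) and (iii).

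For the reductions one exploits the block structure used in Proposition~\ref{props:old_resul_D_dix}: $|D_{B,\varepsilon}|^{-s}$ is diagonal with blocks $Q_{B,\varepsilon'}^{-s/2}$, $\varepsilon'\in\{\varepsilon,\varepsilon\pm1\}$, hence $\rho(A)|D_{B,\varepsilon}|^{-s}=\bigoplus_{\varepsilon'}AQ_{B,\varepsilon'}^{-s/2}$ and every statement reduces to one about $AQ_{B,\varepsilon'}^{-s/2}$ on $L^2(\R^2)$. For (i), I would write $A\in\bb{S}_B$ as a rapidly decreasing combination $A=\frac{1}{\sqrt{2\pi}\ell_B}\sum_{k,j}(-1)^{j-k}t_{k,j}\Upsilon_{j\mapsto k}$ (Propositions~\ref{prop:discret_sch} and~\ref{prop:struct}) and use $|\Upsilon_{j\mapsto k}Q_{B,\varepsilon'}^{-s/2}|^2=Q_{B,\varepsilon'}^{-s/2}\Pi_jQ_{B,\varepsilon'}^{-s/2}=\sum_m(m+j+1+\varepsilon')^{-s}P_m\Pi_j$, so that the singular values of $\Upsilon_{j\mapsto k}Q_{B,\varepsilon'}^{-s/2}$ are $(m+j+1+\varepsilon')^{-s/2}$ and are dominated by the $j=0$ case. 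Combining the triangle inequality with Proposition~\ref{props:old_resul}(3) ($Q_{B,\varepsilon'}^{-s/2}\Pi_0\in\rr{S}^1$ for $s>2$) gives $\|AQ_{B,\varepsilon'}^{-s/2}\|_1\le\frac{1}{\sqrt{2\pi}\ell_B}\|\{t_{k,j}\}\|_{\ell^1}\,\|Q_{B,\varepsilon'}^{-s/2}\Pi_0\|_1<+\infty$, the $\ell^1$-sum being finite because $\{t_{k,j}\}\in S(\N_0^2)\subset\ell^1(\N_0^2)$. This is the exact analogue, in the subcritical ideal $\rr{S}^1$, of the Calder\'on-norm estimates carried out in the proof of Proposition~\ref{prop:trace_dix_main res}.

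For (ii) one simply invokes Proposition~\ref{prop:trace_dix_main res} with $T=A\in\bb{S}_B\subset\bb{L}^1_B$ to get $|D_{B,\varepsilon}|^{-2}\rho(A)\in\rr{S}^{1^+}_{\rm m}$ with ${\Tr}_{\rm Dix}(|D_{B,\varepsilon}|^{-2}\rho(A))=4\fint_B(A)=8\Lambda_B\,\s{T}_B(A)$ (using Remark~\ref{rk:trac_unit_volI}), the functional $\fint_B$ being faithful and semi-finite by Proposition~\ref{prop:FSN-trace}. For the sharpness (iii) I would test on $A=\Pi_0=\Upsilon_{0\mapsto0}\in\bb{S}_B$: since $|\Pi_0Q_{B,\varepsilon'}^{-s/2}|^2=\sum_m(m+1+\varepsilon')^{-s}P_m\Pi_0$, the singular values of $\Pi_0Q_{B,\varepsilon'}^{-s/2}$ are $(m+1+\varepsilon')^{-s/2}$, $m\in\N_0$, whose $\ell^1$-sum diverges exactly when $s\le2$; hence $\rho(\Pi_0)|D_{B,\varepsilon}|^{-s}\notin\rr{S}^1$ for $s<2$, and together with (i) this pins the critical exponent — that is, the spectral dimension — to $2$.

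The subtle point is conceptual rather than computational: the \virg{classical dimension} read off from the Dirac operator alone is $4$, since ${\Tr}_{\rm Dix}(|D_{B,\varepsilon}|^{-4})=2<\infty$ by Proposition~\ref{props:old_resul_D_dix}(2), so one must be careful to work with the non-unital notion of dimension (phrasing everything for elements of the pre-$C^*$-algebra $\bb{S}_B$, using $(\bb{S}_B)^2=\bb{S}_B$ of Proposition~\ref{prop:struct_C_infty}(2) if the cited definition is framed via products) so that the localization provided by $\rho(A)$ brings the effective dimension down from $4$ to $2$ — a manifestation of the confinement of the magnetic dynamics to an effectively two-dimensional set. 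Making this drop precise, via (i)--(iii), is the heart of the argument.
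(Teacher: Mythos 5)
Your proof is correct and follows the same route as the paper: the essential content is the identity $\fint_B(A)=\tfrac14{\Tr}_{\rm Dix}\big(|D_{B,\varepsilon}|^{-2}\rho(A)\big)$, i.e.\ equation~\eqref{eq:connes_integr} derived from Proposition~\ref{prop:trace_dix_main res}, which the paper then combines with the non-unital notion of dimension from~\cite[Definition~2.1 and Theorem~3.2]{gayral-gracia-Bondia-iochum-all-04} without spelling out the verification. Your items~(i) (trace-class of $\rho(A)|D_{B,\varepsilon}|^{-s}$ for $s>2$) and~(iii) (sharpness at $s=2$ tested on $\Pi_0$) are correct and usefully make explicit what the paper delegates to that citation.
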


The content of Theorem~\ref{theo:spec_dim} describes an expected property of non-unital spectral triples investigated in full detail, and in a quite similar framework, in the excellent works \cite{gayral-wulkenhaar-13,grosse-wulkenhaar-12}.

\begin{Remark}[relation with the trace per unit of volume III]\label{rk:trac_unit_volIII}
From Remark~\ref{rk:trac_unit_volII} we get the formula
\[
\begin{aligned}
 \s{T}_B(T) &= \frac{1}{8\Lambda_B} {\Tr}_{\rm Dix}\big(|D_{B,\varepsilon}|^{-2}\rho(T)\big) ,
 \end{aligned}
 \]
for the trace per unit of volume of every operator $T\in\bb{G}^{1\cdot2}_B$, independently of $\varepsilon>0$.
\end{Remark}

\subsection{Absence of real structures}\label{sec:no real}
One interesting property of the magnetic spectral triple is that, while
it satisfies most of the axioms required for a noncommutative
spin geometry, it does not admit a real structure as it is
evidenced in the next proposition.

\begin{Proposition}[real structure] The magnetic spectral triple does not admit a real structure.
\end{Proposition}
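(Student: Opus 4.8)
The plan is to derive a contradiction from the sign $\epsilon$ occurring in the relation $\mathscr{J}^2=\epsilon\,{\bf 1}$ that any real structure must obey. Recall that a real structure on the (even) spectral triple $(\bb{S}_B,\s{H}_4,D_B)$ is an antilinear isometry $\mathscr{J}\colon\s{H}_4\to\s{H}_4$ with
\[
\mathscr{J}^2=\epsilon\,{\bf 1},\qquad \mathscr{J}D_B=\epsilon'D_B\mathscr{J},\qquad \mathscr{J}\chi=\epsilon''\chi\mathscr{J},
\]
satisfying the order-zero and order-one conditions, where $(\epsilon,\epsilon',\epsilon'')\in\{\pm1\}^3$ are the signs dictated by the dimension of the geometry modulo $8$ in Connes' table. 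Our triple is $4^+$-summable (classical dimension $4$), even, and of metric dimension $2$ by Theorem~\ref{theo:spec_dim}; for both relevant KO-dimensions, $n\equiv 2$ and $n\equiv 4\bmod 8$, one has $\epsilon=-1$ (and $\epsilon'=+1$). So it suffices to show that $\epsilon=-1$ is impossible.

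I would argue as follows. Whatever the value of $\epsilon'$, the intertwining $\mathscr{J}D_B=\epsilon'D_B\mathscr{J}$ forces $\mathscr{J}$ to leave $\ker D_B$ invariant: if $D_B\xi=0$ then $D_B(\mathscr{J}\xi)=\epsilon'\mathscr{J}(D_B\xi)=0$. By Proposition~\ref{prop:comp_res_D} the eigenvalue $\vartheta_0=0$ of $D_B$ has multiplicity one, so $\ker D_B$ is a one-dimensional line; concretely it is spanned by $\psi_{0,0}\otimes e_4$, the unique ground state of $D_B^2=Q_B\otimes{\bf 1}_4+{\bf 1}\otimes{\rm diag}(0,0,+1,-1)$. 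But an antilinear isometry of a one-dimensional Hilbert space necessarily has the form $\xi\mapsto c\,\xi$ with $|c|=1$, hence squares to the identity; applying this to the restriction of $\mathscr{J}$ to $\ker D_B$ gives $\mathscr{J}^2={\bf 1}$ there, contradicting $\epsilon=-1$. This proves that no real structure exists.

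Two remarks on the structure of the argument. First, the same kernel vector also obstructs $\epsilon''=-1$: since $\psi_{0,0}\otimes e_4$ lies in the $(-1)$-eigenspace of $\chi={\bf 1}\otimes\gamma_1\gamma_2\gamma_3\gamma_4$, evaluating $\mathscr{J}\chi=\epsilon''\chi\mathscr{J}$ on it yields $\epsilon''=+1$; so the only sign pattern consistent with the one-dimensional kernel is $(\epsilon,\epsilon',\epsilon'')=(+1,+1,+1)$, that of KO-dimension $0$, which is exactly the one ruled out by the (nonzero, even) dimension of the triple. Second — and this is the genuinely delicate point, more than the elementary computation — one must be careful that \virg{real structure} here means a $\mathscr{J}$ whose signs are those of the dimension of the triple: with the KO-dimension $0$ sign pattern one can in fact build an antilinear isometry $\mathscr{J}=J\otimes C$, using the involution $J$ of Section~\ref{sec:landau_ham} (which exchanges the magnetic momenta with their duals, $JK_1J=G_2$, $JK_2J=G_1$) together with a suitable antiunitary $C$ of $\C^4$ implementing the automorphism of $C\ell_4(\C)$ forced by $\mathscr{J}D_B=D_B\mathscr{J}$, and this $\mathscr{J}$ commutes with $D_B$ and with $\chi$, squares to $+{\bf 1}$, and fulfils the order-zero and order-one conditions; what fails is only that $\mathscr{J}^2$ has the sign $+1$ rather than the $-1$ demanded by the dimension.
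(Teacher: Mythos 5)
Your argument is correct and is essentially the paper's own proof: the sign $\epsilon=-1$ forced on $\mathscr{J}^2$ collides with the antilinearity of $\mathscr{J}$ restricted to the one-dimensional $\ker D_B$, which necessarily squares to $+\mathbf{1}$. You add useful detail that the paper leaves implicit — justifying the sign $\epsilon=-1$ via the KO-dimension table (noting it holds for both candidate dimensions $2$ and $4$), and observing that the kernel vector $\psi_{0,0}\otimes e_4$ also pins down $\epsilon''=+1$ — but the contradiction itself is the same.
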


\begin{proof} Suppose that $(\mathscr{S}_{B}, \mathcal{H}_{B}, D_{B})$ admits a real
 structure $\mathscr{J}$. According to~\cite[Definition~9.18]{gracia-varilly-figueroa-01}, such structure $\mathscr{J}$
 would be an
 anti-linear isometry that satisfies $\mathscr{J}D_{B}
 =D_{B}\mathscr{J}$, and
 $\mathscr{J}^{2} = -{\bf 1}$. Let $\psi\in{\rm Ker}(D_{B})$
 a non-zero vector. Then $\mathscr{J}\psi = \alpha \psi$ for some $\alpha \not= 0$ since the kernel of
 $D_{B}$ is one-dimensional and $D_{B}$ commutes with $\mathscr{J}$. However, this would imply
 $- \psi = \mathscr{J}^{2} \psi = \lvert \alpha \rvert^{2} \psi$
 which is evidently a contradiction.
\end{proof}

From a physical point of view the absence of a real structure can be interpreted as a manifestation of the presence of a non-trivial magnetic field which breaks the time reversal symmetry.

\subsection[Magnetic Fredholm module and KK-homology]{Magnetic Fredholm module and $\boldsymbol{K\!K}$-homology}\label{sec:magn_fre-mod}
A $\ast$-algebra can be endowed with a quantized differential calculus by means of a \emph{Fredholm module}. In turn, a Fredholm module can be derived from a spectral triple. We will refer to~\cite[Chapter~4]{connes-94}
and~\cite[Chapters 8 and~10]{gracia-varilly-figueroa-01}
for the general theory concerning Fredholm modules, the
cyclic cohomology and their relation with spectral triples.

Let us start by introducing the following operator
\begin{equation}\label{eq:fred_phas}
F_{B,\varepsilon} := \frac{D_B}{|D_{B,\varepsilon}|} ,\qquad \varepsilon>0 .
\end{equation}
In view of the definition of $|D_{B,\varepsilon}|$, equation~\eqref{eq:fred_phas} defines a bounded operator which is self-adjoint, i.e.,
\[
F_{B,\varepsilon}^* = F_{B,\varepsilon} .
\]
Moreover, $F_{B,\varepsilon}$ is a quasi-involution in the sense that
\[
F_{B,\varepsilon}^2 = {\bf 1} - \varepsilon |D_{B,\varepsilon}|^{-2}
\]
and the difference $F_{B,\varepsilon}^2-{\bf 1}$ is a compact operator. Moreover, a direct check shows that
\begin{equation*}%\label{eq:parit_F}
\chi F_{B,\varepsilon} \chi = -F_{B,\varepsilon} .
\end{equation*}
The next result characterizes the commutator of $F_{B,\varepsilon}$ with the elements of the magnetic $C^*$-algebra $\bb{S}_B$.
\begin{Lemma}\label{lemma:sub_fred_01}
For every
$A\in \bb{C}_B$ the commutator $[F_{B,\varepsilon},\rho(A)]$ is compact.
\end{Lemma}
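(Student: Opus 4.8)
The plan is to reduce the general case $A\in\bb{C}_B$ to the dense subalgebra $\bb{S}_B$ by a standard approximation argument, and then to prove compactness of $[F_{B,\varepsilon},\rho(A)]$ for $A\in\bb{S}_B$ by exhibiting it, up to a compact error, as a product of a bounded operator and a compact one. First I would observe that $A\mapsto[F_{B,\varepsilon},\rho(A)]$ is a bounded linear map from $\bb{C}_B$ into $\bb{B}(\s{H}_4)$ (since $\|[F_{B,\varepsilon},\rho(A)]\|\leqslant 2\|F_{B,\varepsilon}\|\,\|A\|$ and $\|F_{B,\varepsilon}\|\leqslant 1$), and that the compact operators $\bb{K}(\s{H}_4)$ form a closed subspace; hence it suffices to establish the claim on the norm-dense subalgebra $\bb{S}_B$.

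\medskip

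\noindent\textbf{The core estimate on $\bb{S}_B$.} For $A\in\bb{S}_B$ I would write, with $D_{B,\varepsilon}:=\sqrt{D_B^2+\varepsilon{\bf 1}}$,
\[
[F_{B,\varepsilon},\rho(A)] = [D_B|D_{B,\varepsilon}|^{-1},\rho(A)]
= [D_B,\rho(A)]\,|D_{B,\varepsilon}|^{-1} + D_B\,[|D_{B,\varepsilon}|^{-1},\rho(A)] .
\]
The first term is the product of the \emph{bounded} operator $[D_B,\rho(A)]=\ii\,\delta_B(A)$ (bounded by Proposition~\ref{prop:commut_D}) with $|D_{B,\varepsilon}|^{-1}$, which is compact by Proposition~\ref{props:old_resul_D_dix}(1) (indeed $|D_{B,\varepsilon}|^{-1}$ has singular values $(j+\varepsilon)^{-1/2}\to 0$); hence it is compact. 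For the second term I would use the integral (or resolvent) formula for the commutator of $\rho(A)$ with the function $t\mapsto t^{-1/2}$ of $D_B^2+\varepsilon{\bf 1}$: writing $[|D_{B,\varepsilon}|^{-1},\rho(A)] = -|D_{B,\varepsilon}|^{-1}[|D_{B,\varepsilon}|,\rho(A)]\,|D_{B,\varepsilon}|^{-1}$, and noting $[|D_{B,\varepsilon}|,\rho(A)]=\rr{d}_{B,\varepsilon}(\rho(A))$ is bounded by Proposition~\ref{prop:commut_D-II}, one obtains
\[
D_B\,[|D_{B,\varepsilon}|^{-1},\rho(A)] = -\big(D_B|D_{B,\varepsilon}|^{-1}\big)\,\rr{d}_{B,\varepsilon}(\rho(A))\,|D_{B,\varepsilon}|^{-1}
= -F_{B,\varepsilon}\,\rr{d}_{B,\varepsilon}(\rho(A))\,|D_{B,\varepsilon}|^{-1},
\]
which is again a bounded operator times the compact operator $|D_{B,\varepsilon}|^{-1}$, hence compact. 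Therefore $[F_{B,\varepsilon},\rho(A)]$ is compact for every $A\in\bb{S}_B$, and by the density argument for every $A\in\bb{C}_B$.

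\medskip

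\noindent\textbf{Main obstacle.} The delicate point is the treatment of $[|D_{B,\varepsilon}|^{-1},\rho(A)]$: a priori $\rho(A)$ need not preserve the domain of $|D_{B,\varepsilon}|$ in a way that makes the formal manipulation $[|D_{B,\varepsilon}|^{-1},\rho(A)]=-|D_{B,\varepsilon}|^{-1}[|D_{B,\varepsilon}|,\rho(A)]|D_{B,\varepsilon}|^{-1}$ rigorous. This is exactly why the \emph{regularity} result, Proposition~\ref{prop:commut_D-II} (i.e.\ $\bb{A}_B\subset\s{D}^\infty(\rr{d}_{B,\varepsilon})$, so in particular $\rho(A)\in\s{D}(\rr{d}_{B,\varepsilon})$), is invoked: it guarantees that $\rho(A)$ maps $\s{D}(|D_{B,\varepsilon}|)$ into itself and that $\rr{d}_{B,\varepsilon}(\rho(A))$ extends to a bounded operator, so all the identities above hold as identities of bounded operators on $\s{H}_4$. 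An alternative, slightly more self-contained route would be to bypass the fractional power altogether: approximate $F_{B,\varepsilon}$ in norm by $D_B(D_B^2+\varepsilon{\bf 1})^{-1/2}$ smoothed via $D_B R_n$ where $R_n$ is a spectral cutoff, use the resolvent identity to compute commutators with $\rho(A)$ for $A\in\bb{S}_B$ — each such commutator being compact because it contains a resolvent $(D_B^2+\varepsilon{\bf 1})^{-1}$ factor — and pass to the limit using that $\bb{K}(\s{H}_4)$ is norm-closed. Either way the structural input is the same: boundedness of $[D_B,\rho(A)]$ and of $\rr{d}_{B,\varepsilon}(\rho(A))$, combined with compactness of the resolvent of $D_B$.
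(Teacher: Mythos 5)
Your proof is correct, and the overall skeleton — split $[F_{B,\varepsilon},\rho(A)]=[D_B,\rho(A)]\,|D_{B,\varepsilon}|^{-1}+D_B[|D_{B,\varepsilon}|^{-1},\rho(A)]$, observe the first term is (bounded)$\times$(compact), and then reduce the general $A\in\bb{C}_B$ to $A\in\bb{S}_B$ by norm density and the $2\|F_{B,\varepsilon}\|\,\|A_n-A\|$ Lipschitz estimate — is exactly the paper's. The difference is in how the second summand is handled. You rewrite it as $-F_{B,\varepsilon}\,\rr{d}_{B,\varepsilon}(\rho(A))\,|D_{B,\varepsilon}|^{-1}$ and invoke the \emph{regularity} result (Proposition~\ref{prop:commut_D-II}), which is where the paper proves that $\rr{d}_{B,\varepsilon}(\rho(A))$ extends to a bounded operator. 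The paper's proof of Lemma~\ref{lemma:sub_fred_01} deliberately sidesteps Proposition~\ref{prop:commut_D-II}: it instead inserts the spectral integral
\[
 |D_{B,\varepsilon}|^{-1}=\frac{1}{\pi}\int_{\varepsilon}^{+\infty}\frac{\dd\lambda}{\sqrt{\lambda-\varepsilon}}\,|D_{B,\lambda}|^{-2},
\]
and shows that the resulting integrand is of the form (bounded)$\times|D_{B,\lambda}|^{-2}$, using only the boundedness of $[D_B,\rho(A)]$ and of $[D_B^2,\rho(A)]$ (the latter being elementary from the structure of $\bb{S}_B$), and then passes to the norm-convergent integral. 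Thus the paper obtains compactness from first-order commutator bounds with $D_B$ and $D_B^2$ and a dominated-convergence style argument, without the half-power commutator; your version trades that for the sharper regularity estimate. Both are legitimate. In fact the manipulation you use for $D_B[|D_{B,\varepsilon}|^{-1},\rho(A)]$ appears verbatim in the paper, but in the proof of the summability refinement Lemma~\ref{lemma:sub_fred_02}, not here — so you have effectively anticipated that later step. The only point worth stating more explicitly is that the identity $[|D_{B,\varepsilon}|^{-1},\rho(A)]=-|D_{B,\varepsilon}|^{-1}\rr{d}_{B,\varepsilon}(\rho(A))\,|D_{B,\varepsilon}|^{-1}$ requires $\rho(A)$ to preserve $\s{D}(|D_{B,\varepsilon}|)$, which is exactly what membership in $\s{D}(\rr{d}_{B,\varepsilon})$ guarantees; you note this obstacle yourself, and Proposition~\ref{prop:commut_D-II} closes it.
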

\begin{proof}Let us start with $A\in \bb{S}_B$. In view of Proposition~\ref{prop:commut_D} the commutator $[D_B,\rho(A)]$ is bounded. In a similar way, one can prove that
\[
\big[D_B^2,\rho(A)\big] = [Q_B,A]\otimes{\bf 1}_4 = [\rr{a}^+\rr{a}^-,A]\otimes{\bf 1}_4 \in \rho(\bb{S}_B) ,
\]
by using the properties proved in Proposition~\ref{prop:struct}, the commutators~\eqref{eq:commut_ups} and the fact that $A$ is a linear combination of $\Upsilon_{j\mapsto k}$ with rapidly decaying coefficients.
Following the strategy of~\cite[Lemma~10.18]{gracia-varilly-figueroa-01} we can rewrite
\begin{equation}\label{eq:fact_comm_F}
[F_{B,\varepsilon},\rho(A)] = [D_B,\rho(A)] |D_{B,\varepsilon}|^{-1} + D_B \big[|D_{B,\varepsilon}|^{-1},\rho(A)\big] .
\end{equation}
by introducing the spectral formula
\begin{equation*}%\label{eq:spec_rep_01}
|D_{B,\varepsilon}|^{-1} = \frac{1}{\pi}\int_\varepsilon^{+\infty}\frac{\dd \lambda}{\sqrt{\lambda-\varepsilon}} |D_{B,\lambda}|^{-2} .
\end{equation*}
After some manipulation one gets
\begin{equation*}
[F_{B,\varepsilon},\rho(A)] = \frac{\ii}{\pi}\int_\varepsilon^{+\infty}\frac{\dd \lambda}{\sqrt{\lambda-\varepsilon}} \big(\delta_B(A) + \ii \eta_{B,\lambda}(A)\big) |D_{B,\lambda}|^{-2} ,
\end{equation*}
with $\delta_B(A):=-\ii[D_B,\rho(A)]$, and
\[
 \eta_{B,\lambda}(A) := D_B|D_{B,\lambda}|^{-2}\big[D_B^2,\rho(A)\big] .
\]
The compactness of $|D_{B,\lambda}|^{-2}$ for every $\lambda\geqslant \varepsilon$ and the boundedness
of $\delta_B(A)+\ii \eta_{B,\lambda}(A)$ implies the compactness of $[F_{B,\varepsilon},\rho(A)]$.
Now, let $\{A_n\}\subset \bb{S}_B$ be a sequence that converges in norm to $A\in \bb{C}_B$. The inequality
\[
\|[F_{B,\varepsilon},\rho(A_n)]-[F_{B,\varepsilon},\rho(A)]\| \leqslant 2 \|F_{B,\varepsilon}\lVert \, \rVert A_n-A\| ,
\]
along with the closure of the space of compact operators in norm topology, implies that \linebreak $[F_{B,\varepsilon},\rho(A)]$ is also compact.
\end{proof}

The next result concerns with the summability property of the commutators $[F_{B,\varepsilon},\rho(A)]$.

We need to introduce the \emph{second} Dixmier ideal
$\rr{S}^{2^+}$ (also known as the Ma\u{c}aev ideal of order~$2^+$) which consists of the
compact operators $T$ such that their \emph{$($Calder\'on$)$ norm}
\begin{equation}\label{eq:clad_norm_2}
\lVert T\rVert_{2^+} := \sup_{N>1}\ \frac{\sigma^1_N(T)}{\sqrt{N}}
\end{equation}
is finite. The ideal $\rr{S}^{2^+}$ is closed with respect to the norm \ref{eq:clad_norm_2}. Moreover, every element $T \in \rr{S}^{2^+}$ satisfies
 $|T|^2=T^*T \in \rr{S}^{1^+}$~\cite[Lemma~7.37]{gracia-varilly-figueroa-01}. As a consequence, if
$S_1,S_2\in\rr{S}^{2^+}$ then their product $S_1S_2\in \rr{S}^{1^+}$ is in the Dixmier
ideal. The latter fact can be initially justified for positive operators by following the same argument in the proof of~\cite[Proposition~7.16]{gracia-varilly-figueroa-01}.
Then, the polar decomposition and the fact that~$\rr{S}^{2^{+}}$ is an ideal allow to extend the result to arbitrary pairs of elements of~$\rr{S}^{2^{+}}$.

\begin{Lemma}\label{lemma:sub_fred_02}
For every $A\in \bb{S}_B$, the commutator $[F_{B,\varepsilon},\rho(A)]$ lies in the second Dixmier ideal~$\rr{S}^{2^+}$.
\end{Lemma}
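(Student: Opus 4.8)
The plan is to reduce to the generators $\Upsilon_{p\mapsto q}$ and to exploit the explicit block form of $F_{B,\varepsilon}$, rather than the integral representation used in the previous lemma. Writing $D_B$ in the matrix form displayed in Section~\ref{sec:spectral-triple} and noting that $|D_{B,\varepsilon}|^{-1}=\big(D_B^2+\varepsilon{\bf 1}\big)^{-1/2}={\rm diag}\big(Q_{B,\varepsilon}^{-1/2},Q_{B,\varepsilon}^{-1/2},Q_{B,\varepsilon+1}^{-1/2},Q_{B,\varepsilon-1}^{-1/2}\big)$, each matrix entry of $F_{B,\varepsilon}=D_B|D_{B,\varepsilon}|^{-1}$ is, up to sign, of the form $\rr{c}^{\pm}Q_{B,\alpha}^{-1/2}$ with $\rr{c}\in\{\rr{a},\rr{b}\}$ and $\alpha\in\{\varepsilon,\varepsilon\pm 1\}$ (or is $0$). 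Since $\rho(A)=A\otimes{\bf 1}_4$ is diagonal, the entry of $[F_{B,\varepsilon},\rho(A)]$ in position $(i,k)$ equals $\big[\rr{c}^{\pm}Q_{B,\alpha}^{-1/2},A\big]$ (or $0$) for the corresponding data, so it suffices to show that $\big[\rr{c}^{\pm}Q_{B,\alpha}^{-1/2},A\big]\in\rr{S}^{2^+}$ for all $A\in\bb{S}_B$, $\rr{c}\in\{\rr{a},\rr{b}\}$ and $\alpha>-1$, with Calder\'on norm dominated by a seminorm of $A$; by linearity and Proposition~\ref{prop:discret_sch} it is enough to do this for $A=\Upsilon_{p\mapsto q}$.

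Two cases arise. If $\rr{c}=\rr{b}$, then $A$ commutes with $\rr{b}^{\pm}$ by Lemma~\ref{lemma:comm_rel_bpm}, so $\big[\rr{b}^{\pm}Q_{B,\alpha}^{-1/2},\Upsilon_{p\mapsto q}\big]=\rr{b}^{\pm}\big[Q_{B,\alpha}^{-1/2},\Upsilon_{p\mapsto q}\big]$. A direct computation shows that $\big[Q_{B,\alpha}^{-1/2},\Upsilon_{p\mapsto q}\big]$ sends $\psi_{p,m}\mapsto\big[(q+m+1+\alpha)^{-1/2}-(p+m+1+\alpha)^{-1/2}\big]\psi_{q,m}$ and kills every other basis vector, so applying $\rr{b}^{\pm}$ multiplies the coefficients by a factor $\sqrt{m+O(1)}$ and shifts the $\rr{b}$–quantum number. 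The mean value theorem gives $\big|(q+m+1+\alpha)^{-1/2}-(p+m+1+\alpha)^{-1/2}\big|\le\tfrac12|q-p|\,(m+1+\alpha)^{-3/2}$, so the singular values of $\rr{b}^{\pm}\big[Q_{B,\alpha}^{-1/2},\Upsilon_{p\mapsto q}\big]$ are $O\big(|q-p|\,m^{-1}\big)$ and the operator lies in $\rr{S}^{1^+}\subset\rr{S}^{2^+}$ with Calder\'on norm $\le C_\varepsilon(1+p+q)$. If $\rr{c}=\rr{a}$, commute the resolvent past the ladder operator, $\rr{a}^{\pm}Q_{B,\alpha}^{-1/2}=Q_{B,\alpha\mp 1}^{-1/2}\rr{a}^{\pm}$, and expand $\big[\rr{a}^{\pm}Q_{B,\alpha}^{-1/2},\Upsilon_{p\mapsto q}\big]=Q_{B,\alpha\mp 1}^{-1/2}\big[\rr{a}^{\pm},\Upsilon_{p\mapsto q}\big]+\big[Q_{B,\alpha\mp 1}^{-1/2},\Upsilon_{p\mapsto q}\big]\rr{a}^{\pm}$. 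By Proposition~\ref{prop:struct} (equations~\eqref{eq:commut_ups}) $\big[\rr{a}^{\pm},\Upsilon_{p\mapsto q}\big]$ is a combination of at most two transition operators with coefficients $O\big(\sqrt{p}+\sqrt{q}\big)$, and since the nonzero singular values of $Q_{B,\beta}^{-1/2}\Upsilon_{r\mapsto s}$ are $\{(s+m+1+\beta)^{-1/2}\}_{m\in\N_0}=O\big(m^{-1/2}\big)$ with Calder\'on norm bounded uniformly in $r,s$ (only through $\varepsilon$), the first summand lies in $\rr{S}^{2^+}$ with norm $\le C_\varepsilon(1+p+q)$. The second summand acts as $\psi_{p\mp 1,m}\mapsto(\text{const})\cdot c_m\psi_{q,m}$ with $|c_m|=O\big(|q-p|\,m^{-3/2}\big)$, hence is trace class, so again in $\rr{S}^{2^+}$ with polynomially bounded Calder\'on norm.

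Combining these estimates, every entry of the $4\times 4$ operator matrix $[F_{B,\varepsilon},\rho(\Upsilon_{p\mapsto q})]$ lies in $\rr{S}^{2^+}$ with Calder\'on norm bounded by a fixed polynomial in $(p,q)$ whose constants depend only on $\varepsilon$, and therefore so does $[F_{B,\varepsilon},\rho(\Upsilon_{p\mapsto q})]$. Writing a general $A\in\bb{S}_B$ as $\sum_{(p,q)\in\N_0^2}a_{p,q}\Upsilon_{p\mapsto q}$ with $\{a_{p,q}\}$ rapidly decreasing (Proposition~\ref{prop:discret_sch}), the series $\sum_{p,q}a_{p,q}[F_{B,\varepsilon},\rho(\Upsilon_{p\mapsto q})]$ converges absolutely in the Banach space $\big(\rr{S}^{2^+},\lVert\cdot\rVert_{2^+}\big)$; since $\lVert\cdot\rVert\le\sqrt 2\,\lVert\cdot\rVert_{2^+}$ and $A\mapsto[F_{B,\varepsilon},\rho(A)]$ is norm continuous, the sum equals $[F_{B,\varepsilon},\rho(A)]$, which is thus in $\rr{S}^{2^+}$. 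The main technical burden is the bookkeeping of the sixteen matrix entries and the verification that all Calder\'on bounds grow only polynomially in $(p,q)$, so that the rapid decay of $\{a_{p,q}\}$ forces convergence; the conceptual point — and the reason the commutator lands in $\rr{S}^{2^+}$ rather than merely in the $4^+$–summability predicted by the naive dimension count — is that differentiating $\Upsilon_{p\mapsto q}$ with $F_{B,\varepsilon}$ produces operators supported on a single Landau level, whose singular values decay like those of a two–dimensional resolvent.
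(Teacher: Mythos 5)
Your proof is correct, and it follows a genuinely different decomposition from the paper's. The paper starts from the algebraic factorization
\begin{equation*}
[F_{B,\varepsilon},\rho(A)]=[D_B,\rho(A)]\,|D_{B,\varepsilon}|^{-1}+D_B\big[|D_{B,\varepsilon}|^{-1},\rho(A)\big],
\end{equation*}
handles the first summand through Calder\'on-norm estimates on $Q_{B,\varepsilon}^{-1/2}\Upsilon_{j\mapsto k}Q_{B,\varepsilon'}^{-1/2}$, and rewrites the second as $-F_{B,\varepsilon}\,\rr{d}_{B,\varepsilon}(\rho(A))\,|D_{B,\varepsilon}|^{-1}$ so that the spectral formula for $\wp_{i,i}$ and the operators $C^{s,r}_{i,i}$ established in Proposition~\ref{prop:commut_D-II} can be reused. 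You instead go directly to the sixteen entries of $F_{B,\varepsilon}=D_B\,|D_{B,\varepsilon}|^{-1}$, each a product $\rr{c}^{\pm}Q_{B,\alpha}^{-1/2}$ with $\rr{c}\in\{\rr{a},\rr{b}\}$, use that $\rho(A)$ is scalar across the $\C^4$ factor, commute the resolvent power past the $\rr{a}$-ladder via $\rr{a}^{\pm}Q_{B,\alpha}^{-1/2}=Q_{B,\alpha\mp1}^{-1/2}\rr{a}^{\pm}$, and apply Leibniz entrywise, with the $\rr{b}$-case collapsing thanks to Lemma~\ref{lemma:comm_rel_bpm}. Both routes terminate at the same two elementary facts --- the singular values of $\Upsilon_{r\mapsto s}Q_{B,\beta}^{-1/2}$ are $(s+m+1+\beta)^{-1/2}$, hence $\lVert\cdot\rVert_{2^+}$-bounded uniformly in $r,s$, while the commutator $[Q_{B,\beta}^{-1/2},\Upsilon_{p\mapsto q}]$ is trace class with norm $O(|q-p|)$ --- followed by absolute convergence in $\rr{S}^{2^+}$ from the rapid decay of the Laguerre coefficients. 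Your version is a bit more self-contained and does not lean on the regularity apparatus of Proposition~\ref{prop:commut_D-II}, at the cost of a case distinction between the two ladder types; the paper's factorization treats $\rr{a}$ and $\rr{b}$ uniformly and recycles work already done for the regularity of the spectral triple.
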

\begin{proof}
First of all, let us prove that $T|D_{B,\varepsilon}|^{-1}\in \rr{S}^{2^+}$ for every
$T\in \bb{S}_B\otimes {\rm Mat}_4(\C)$.

A direct inspection to the matrix-valued operator
\begin{equation*}%\label{eq:exos_01}
\big\lvert T|D_{B,\varepsilon}|^{-1}\big\rvert^2 = |D_{B,\varepsilon}|^{-1} |T|^2 |D_{B,\varepsilon}|^{-1}
\end{equation*}
shows that its entries are of the form $Q_{B,\varepsilon}^{-\frac{1}{2}}A Q_{B,\varepsilon'}^{-\frac{1}{2}}$ with $A\in \bb{S}_B$ and $\varepsilon,\varepsilon'>0$. Consider first the case $A = \Upsilon_{j \mapsto k}$. A comparison with \eqref{eq:aux_1ref} shows that $\mu_{m}\big(Q_{B,\varepsilon}^{-\frac{1}{2}}A Q_{B,\varepsilon'}^{-\frac{1}{2}}\big)(1 + m)<1$ for all $m\in\N_0$. Therefore, one gets
\[
\big\|Q_{B,\varepsilon}^{-\frac{1}{2}}\Upsilon_{j\mapsto k}Q_{B,\varepsilon'}^{-\frac{1}{2}}\big\|_{2^+} < \sup_{N>1}\frac{1}{\sqrt{N}}\sum_{m=0}^{N-1}\frac{1}{m+1} < 2 ,
\]
independently of $j$ and $k$. Therefore, by using that $\rr{S}^{2^+}$ is closed with respect to the norm~\eqref{eq:clad_norm_2}, and the expansion in terms of the $\Upsilon_{j\mapsto k}$, one gets that $\big\|Q_{B,\varepsilon}^{-\frac{1}{2}}AQ_{B,\varepsilon'}^{-\frac{1}{2}}\big\|_{2^+}<+\infty$ for every \mbox{$A \in \bb{S}_{B}$}, and in turn $T|D_{B,\varepsilon}|^{-1}\in \rr{S}^{2^+}$. The factorization~\eqref{eq:fact_comm_F} implies that the claim is proved if we can show that the two summands on the left-hand side of~\eqref{eq:fact_comm_F} are in the ideal $\rr{S}^{2^+}$.
Proposition~\ref{prop:commut_D} suggests that $[D_B,\rho(A)]\in \bb{S}_B\otimes {\rm Mat}_4(\C)$ and the discussion above implies that $[D_B,\rho(A)] |D_{B,\varepsilon}|^{-1}\in \rr{S}^{2^+}$. For the second summand we can use the (trivial) identity $[{\bf 1},\rho(A)]=0$ to rewrite
\begin{align*}
D_B \big[|D_{B,\varepsilon}|^{-1},\rho(A)\big] = F_{B,\varepsilon}|D_{B,\varepsilon}|\big[|D_{B,\varepsilon}|^{-1},\rho(A)\big]
 = -F_{B,\varepsilon} \rr{d}_{B,\varepsilon}(\rho(A)) |D_{B,\varepsilon}|^{-1} ,
\end{align*}
where the derivation $\rr{d}_{B,\varepsilon}$
is the commutator with $|D_{B,\varepsilon}|$.
Since $F_{B,\varepsilon}$ is bounded, and $\rr{S}^{2^+}$ is an ideal, it is enough to prove that
$\rr{d}_{B,\varepsilon}(\rho(A))|D_{B,\varepsilon}|^{-1}$ is in the second Dixmier ideal.
From the proof of Proposition~\ref{prop:commut_D-II} one gets
\[
\rr{d}_{B,\varepsilon}(\rho(A)) |D_{B,\varepsilon}|^{-1}
=
\left(\begin{matrix}
\wp_{1,1}(A)Q_{B,\varepsilon}^{-\frac{1}{2}} & 0 & 0 & 0 \\0 & \wp_{2,2}(A)Q_{B,\varepsilon}^{-\frac{1}{2}} & 0 & 0 \\0 & 0 & \wp_{3,3}(A)Q_{B,\varepsilon+1}^{-\frac{1}{2}} & 0 \\0 & 0 & 0 & \wp_{4,4}(A)Q_{B,\varepsilon-1}^{-\frac{1}{2}}\end{matrix}\right) .
\]
Thus, one needs only to prove that each term in the diagonal is in
$\rr{S}^{2^+}$. From~\eqref{eq:P_ups} one deduces that
$\wp_{i,i}(\Upsilon_{r\mapsto s})Q_{B,\varepsilon}^{-\frac{1}{2}}=C_{i,i}^{s,r}\Upsilon_{r\mapsto s}Q_{B,\varepsilon}^{-\frac{1}{2}}\in \rr{S}^{2^+}$ since $C_{i,i}^{s,r}$ is a bounded operator (commuting with~$\Upsilon_{r\mapsto s}$ and~$Q_{B,\varepsilon}$). Clearly this result extends to finite linear combinations of operators~$\Upsilon_{r\mapsto s}$. The proof of the general case can be obtained by observing that the map
\begin{equation}\label{eq:ineq_mesur_mod_fred_00}
\bb{S}_B \ni A \longmapsto \wp_{i,i}(A)Q_{B,\varepsilon}^{-\frac{1}{2}} \in \rr{S}^{2^+} ,
\end{equation}
initially defined on finite linear combinations, is continuous when $\bb{S}_B$ is endowed with its
Fr\'echet topology, and $\rr{S}^{2^+}$ with the topology induced by the norm $\lVert \ \rVert_{2^+}$.
The latter fact can be proved by using the same strategy as in the proof of Proposition~\ref{prop:trace_dix_main res}. First of all one needs to compute
\[
\big\lVert\Upsilon_{r\mapsto s}Q_{B,\varepsilon}^{-\frac{1}{2}}\big\rVert_{2^+} = \sup_{N>1}\frac{1}{\sqrt{N}}\sum_{m=0}^{N-1}\frac{1}{\sqrt{m+r+1+\varepsilon}}
 \leqslant \sup_{N>1}\frac{1}{\sqrt{N}}\sum_{m=1}^{N}\frac{1}{\sqrt{m}} = 2 .
\]
Then, let $A=\sum_{(s,r)\in\N_0^2}c_{s,r}\Upsilon_{r\mapsto s}$ be a generic element of $\bb{S}_B$. A straightforward computation shows
\begin{equation}\label{eq:ineq_mesur_mod_fred_01}
\big\lVert \wp_{i,i}(A)Q_{B,\varepsilon}^{-\frac{1}{2}}\big\rVert_{2^+} \leqslant 2\sum_{(s,r)\in\N_0^2}|c_{s,r}| \big\lVert C_{i,i}^{s,r}\big\rVert \leqslant 2 \alpha_i r_4(\{c_{n,m}\}) ,
\end{equation}
where $r_4(\{c_{n,m}\})$ is the Schwartz semi-norm defined by~\eqref{eq:fre_top} and the constant $\alpha_i$ is defined by
\[
 \alpha_i^2 := \sum_{(s,r)\in\N_0^2}\frac{\|C_{i,i}^{s,r}\|^2}{{ (2s+1) }^4{ (2r+1) }^4} ,
\]
and the value of $\|C_{i,i}^{s,r}\|$ computed in the proof of Proposition~\ref{prop:commut_D-II}.
Inequality~\eqref{eq:ineq_mesur_mod_fred_01} proves the continuity of the map~\eqref{eq:ineq_mesur_mod_fred_00}.
\end{proof}

\begin{Remark}\label{Rk:mes_dob_prod}
A closer inspection of the proof of Lemma~\ref{lemma:sub_fred_02} allows to deduce a slightly stronger result, namely that
\[
[F_{B,\varepsilon},\rho(A_1)] [F_{B,\varepsilon},\rho(A_2)] \in \rr{S}^{1^+}_{\rm m} ,
\]
for all $A_1,A_2\in \bb{S}_B$.
\end{Remark}

Comparing all the results discussed above with~\cite[Definition 8.4]{gracia-varilly-figueroa-01} and~\cite[Chapter~4, Section~1.$\gamma$, Definition~8]{connes-94} we can claim that:
\begin{Theorem}\label{theo:Fred_module}
The triplet $(\rho,\s{H}_4,F_{B,\varepsilon})$ is a $($pre-$)$Fredholm module for the magnetic algebra~$\bb{C}_B$ with grading
operator $\chi$. It is densely $2^+$-summable $($on the pre-$C^*$-algebra $\bb{S}_B)$.
\end{Theorem}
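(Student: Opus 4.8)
The plan is to \emph{assemble} the defining properties of an even $2^+$-summable (pre-)Fredholm module, as listed in \cite[Definition~8.4]{gracia-varilly-figueroa-01} and \cite[Chapter~4, Section~1.$\gamma$, Definition~8]{connes-94}, from the facts already established. Concretely, the data $(\rho,\s{H}_4,F_{B,\varepsilon})$ together with $\chi$ must be checked to satisfy: (i) $\rho$ is a $\ast$-representation of $\bb{C}_B$ on $\s{H}_4$; (ii) $F_{B,\varepsilon}$ is bounded and self-adjoint; (iii) $F_{B,\varepsilon}^2-{\bf 1}$ is compact; (iv) $[F_{B,\varepsilon},\rho(A)]$ is compact for every $A\in \bb{C}_B$; (v) $\chi$ is a self-adjoint involution commuting with $\rho(\bb{C}_B)$ and anticommuting with $F_{B,\varepsilon}$; and (vi) $[F_{B,\varepsilon},\rho(A)]\in\rr{S}^{2^+}$ for all $A$ in the dense pre-$C^*$-subalgebra $\bb{S}_B$.

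First I would recall that $\rho$ is by construction the diagonal $\ast$-representation $A\mapsto A\otimes{\bf 1}_4$, which is faithful since $\pi$ is injective (Section~\ref{sec:C-magn_alg}). The operator $F_{B,\varepsilon}=D_B|D_{B,\varepsilon}|^{-1}$ is bounded for $\varepsilon>0$ (since $D_B^2$ is bounded below and $|D_{B,\varepsilon}|$ is invertible with $|D_{B,\varepsilon}|^{-1}$ bounded), self-adjoint because $D_B$ is self-adjoint and commutes with the bounded Borel function $|D_{B,\varepsilon}|^{-1}$, and satisfies $F_{B,\varepsilon}^2={\bf 1}-\varepsilon|D_{B,\varepsilon}|^{-2}$; by the spectral analysis of $D_B$ in Proposition~\ref{prop:comp_res_D} the operator $|D_{B,\varepsilon}|^{-2}$ is compact, so $F_{B,\varepsilon}^2-{\bf 1}$ is compact and $F_{B,\varepsilon}$ is a ``quasi-involution'' — this is precisely what the prefix \emph{pre-} records, and one could, if desired, pass to an honest symmetry by the standard modification on the (finite-dimensional) kernel of $D_B$ without affecting the $K\!K$-class. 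The graded structure is supplied by the identities $\chi^*=\chi^{-1}=\chi$, $\chi\rho(A)\chi=\rho(A)$ for $A\in\bb{C}_B$, and $\chi F_{B,\varepsilon}\chi=-F_{B,\varepsilon}$, all of which were derived above from the algebra of the $\gamma$-matrices (see equation~\eqref{eq:parit_D} and the remarks following the grading~\eqref{eq:grading}).

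It then remains only to invoke the two summability lemmas: Lemma~\ref{lemma:sub_fred_01} gives compactness of $[F_{B,\varepsilon},\rho(A)]$ for every $A\in\bb{C}_B$, which closes axioms (i)--(v) of a Fredholm module, and Lemma~\ref{lemma:sub_fred_02} upgrades this to $[F_{B,\varepsilon},\rho(A)]\in\rr{S}^{2^+}$ for $A\in\bb{S}_B$; combined with Proposition~\ref{prop:struct_C_infty}, which asserts that $\bb{S}_B$ is a dense pre-$C^*$-subalgebra of $\bb{C}_B$, this yields the ``dense $2^+$-summability'' statement. Since all the genuine analytic work has already been carried out in Lemmas~\ref{lemma:sub_fred_01} and~\ref{lemma:sub_fred_02}, I expect no substantive obstacle; the only care needed is bookkeeping — matching the normalization conventions of the quoted definitions (in particular whether one requires $F^2={\bf 1}$ exactly or merely up to compacts, and whether $p^+$-summability is demanded on the whole $C^*$-algebra or on a dense subalgebra), and confirming that the even structure is recorded with the correct signs. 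This verification is what I would call the ``hard part'', though it is purely a matter of reconciling conventions rather than proving anything new.
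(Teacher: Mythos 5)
Your proof is correct and follows essentially the same route as the paper, which states the theorem immediately after Lemmas~\ref{lemma:sub_fred_01} and~\ref{lemma:sub_fred_02} and the discussion of $F_{B,\varepsilon}^2-{\bf 1}$ and $\chi F_{B,\varepsilon}\chi=-F_{B,\varepsilon}$, leaving the verification of the axioms implicit with the remark ``Comparing all the results discussed above with [Definitions]''. You have simply made the axiom-by-axiom bookkeeping explicit — including the point that $\bb{S}_B$ is a dense pre-$C^*$-subalgebra (Proposition~\ref{prop:struct_C_infty}) so that $2^+$-summability on $\bb{S}_B$ is what ``dense summability'' requires — which is exactly what the paper intends.
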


\begin{Remark}[pre-Fredholm modules vs.\ Fredholm modules]\label{Rk:F_0}
 According to~\cite[Definition~8.4]{gracia-varilly-figueroa-01}, a
 \emph{genuine} Fredholm module is defined by a~self-adjoint involution
 $F=F^*=F^{-1}$. In our case the involution property is violated since the
 difference $F^2_{B,\varepsilon}-{\bf 1}$ is $\varepsilon$ times a compact operator. On the other
 hand the presence of $\varepsilon$ is needed to make the operator $D_B$ invertible. The
 violation of the involutive property is not a big issue. Indeed, as discussed
 in~\cite[p.~327]{gracia-varilly-figueroa-01} there is a~canonical procedure to
 associate a {genuine} Fredholm module to a given pre-Fredholm module. Another
 possibility is to define the \emph{$\varepsilon=0$ regularization} of the resolvent of
 $D_B$ as described in~\cite[Section~4.2.$\gamma$]{connes-94} or~\cite[p.~446]{gracia-varilly-figueroa-01} Let $V_0$ be the one-dimensional kernel of~$D_B$ and~$P_{V_0}$ the orthogonal projection on~$V_0$. Then, one can define
 $|D_{B,0}|^{-1}:=|D_B|^{-1}({\bf 1}-P_{V_0})$ on the orthogonal complement
 $V_0^\bot$ (where the inverse of $|D_B|$ is well defined). One introduces the
 extended Hilbert space $\widehat{\s{H}}_4:=V_0^\bot\oplus(V_0\oplus V_0)$ and with the
 identification $V_0\oplus V_0\simeq\C^2$, one defines
 \[
 F_{B,0} :=D_B|D_{B,0}|^{-1} \oplus \left(\begin{matrix}0 & 1 \\1 & 0\end{matrix}\right) .
 \]
In this way $F_{B,0}$ coincides with the {sign function} $\operatorname{sgn}(D_B)$ on
$V_0^\bot$ and provides a true self-adjoint involution on $\widehat{\s{H}}_4$.
Since $\widehat{\s{H}}_4= {\s{H}}_4\oplus V_0$ one can extend the representation $\rho$
of the elements $A\in\bb{C}_B$ in the following form $\widehat{\rho}(A):={\rho}(A)\oplus 0$
and the grading is restored by $\widehat{\chi}=\chi\oplus (-1)$.
\end{Remark}

The (pre-)Fredholm modules are the building blocks of the \emph{Kasparov
 $K\!K$-theory} (see~\cite[Chapter~4, Appendix~A]{connes-94} or~\cite[Chapter~VII]{blackadar-98} for more details). In a nutshell, we can say that the
Kasparov $K\!K$-theory is a homology theory in which the cycles are
(pre-)Fredholm modules. Elements of the $K\!K$-homology group are equivalence
classes of (pre-)Fredholm modules modulo homotopy. With this in mind, we can
consider the triplets $(\rho,\s{H}_4,F_{B,\varepsilon})$ as cycles of the $K\!K$-homology
$K\!K(\bb{C}_B,\C)$. Since the map $(0,+\infty)\ni\varepsilon\mapsto F_{B,\varepsilon}$ is {norm-continuous} it
follows that all these cycles define a unique class of $K\!K(\bb{C}_B,\C)$ that
will be denoted with $[F_B]$. Moreover, since $\lim\limits_{\varepsilon \to 0^+}F_{B,\varepsilon}=F_{B,0}$
when restricted to $V_0^\bot$, and the remainder in the orthogonal complement is a
\emph{compact perturbation} (in the sense of~\cite[Proposition
17.2.5]{blackadar-98}) it follows that the class $[F_B]$ contains also the $\varepsilon=0$
regularization described in Remark~\ref{Rk:F_0}.

\subsection{Graded structure of the magnetic Fredholm module}\label{sec:grad_struct_magn_fre-mod}
It is instructive to represent $F_{B,\varepsilon}$ with respect to the grading~\eqref{eq:grading} induced by $\chi$. A simple computation shows that
\[
F_{B,\varepsilon} = \left(\begin{array}{@{}c|c@{}}0 & U_{+,-} \\
\hline
U_{-,+} & 0\end{array}\right) ,
\]
where $U_{+,-}\colon \s{H}_-\to\s{H}_+$ is defined by
\[
U_{+,-} := \chi_+ F_{B,\varepsilon} \chi_-\big|_{\s{H}_-} = \big(Q_{B,\varepsilon}^{-\frac{1}{2}}\otimes{\bf 1}_2\big)\left(\begin{matrix}-\rr{b}^+ & \rr{a}^- \\\rr{a}^+ & \rr{b}^-\end{matrix}\right) ,
\]
and $U_{-,+}:=(U_{+,-})^*$. From these definitions, one gets the following relations
\[
U_{+,-}U_{-,+} = {\bf 1}_{\s{H}_-} - \varepsilon Q_{B,\varepsilon}^{-1}\otimes{\bf 1}_2 ,
\]
and
\[
U_{-,+}U_{+,-} = {\bf 1}_{\s{H}_+} - \varepsilon \left(\begin{matrix}Q_{B,\varepsilon+1}^{-1} & 0 \\0 & Q_{B,\varepsilon-1}^{-1}\end{matrix}\right) .
\]

An element $T\in \bb{B}(\s{H}_4)$ will be called of \emph{degree $j\in\{0,1\}$} with respect to the grading induced by $\chi$ if $\chi T\chi=(-1)^j T$.
Every $T$ can be naturally split as $T=T_0+T_1$
where $T_0:=\chi_+T\chi_++\chi_-T\chi_-$ has degree $0$ and $T_1:=\chi_+T\chi_-+\chi_-T\chi_+$ has degree~1.
In the matricial form one has that
\begin{equation}\label{eq:nat_split}
T = T_0+T_1 := \left(\begin{array}{@{}c|c@{}}T_{+,+} & 0 \\
\hline
0 & T_{-,-}\end{array}\right) + \left(\begin{array}{@{}c|c@{}}0 & T_{+,-} \\
\hline
T_{-,+} & 0\end{array}\right) .
\end{equation}
Let us consider the commutator induced by the operator $F_{B,\varepsilon}$ on the algebra $\bb{B}(\s{H}_4)$, i.e., $T\mapsto [F_{B,\varepsilon}, T]$. An explicit computation shows that
\begin{gather*}
{[F_{B,\varepsilon}, T]} = {\left(\begin{array}{@{}c|c@{}}U_{+,-}T_{-,+}- T_{+,-}U_{-,+}& 0 \\
\hline
0 & U_{-,+}T_{+,-}-T_{-,+}U_{+,-}\end{array}\right)}\\
\hphantom{{[F_{B,\varepsilon}, T]} =}{} + {\left(\begin{array}{@{}c|c@{}}0 & U_{+,-}T_{-,-}-T_{+,+}U_{+,-} \\
\hline
U_{-,+}T_{+,+}-T_{-,-}U_{-,+} & 0\end{array}\right)} .
\end{gather*}
One gets that $[F_{B,\varepsilon}, T_0]$ is odd and $[F_{B,\varepsilon}, T_1]$ is even, namely the commutator shifts by $1\!\mod 2$ the degree of an element.

The latter observation suggests to introduce the \emph{graded commutator} on $\bb{B}(\s{H}_4)$, cf.\ \cite[Section~5.A]{gracia-varilly-figueroa-01}. Let $S,T\in\bb{B}(\s{H}_4)$ be two elements with definite degree, say $\deg(S),\deg(T)\in\{0,1\}$. The graded commutator between $S$ and $T$ is initially defined as
\[
[S,T]_\chi := ST -(-1)^{\deg(S)\deg(T)} TS,
\]
and then is extended to the whole algebra $\bb{B}(\s{H}_4)$ by linearity using the natural split~\eqref{eq:nat_split}. Since $F_{B,\varepsilon}$ is of degree 1 one obtains that
\[
[F_{B,\varepsilon}, ST]_\chi =[F_{B,\varepsilon}, S]_\chi T+(-1)^{\deg(S)}S[F_{B,\varepsilon},T]_\chi
\]
for every pair of graded elements $S,T\in\bb{B}(\s{H}_4)$. Similarly, it can be verified that
\[
\big[F_{B,\varepsilon}, [F_{B,\varepsilon}, T]_\chi\big]_\chi = \big[F_{B,\varepsilon}^2, T\big] = -\varepsilon\big[|D_{B,\varepsilon}|^{-2},T\big]
\]
for every element $T\in\bb{B}(\s{H}_4)$ independently of the degree.

\subsection{The first Connes' formula}\label{sec:1stconn_form}
The graded structure of $\bb{B}(\s{H}_4)$ provided by the magnetic Fredholm module $(\rho,\s{H}_4,F_{B,\varepsilon})$
and the involution~$\chi$ is the starting point for the construction of a quantized differential calculus~\cite[Chapter~4]{connes-94}. In this calculus the role of the exterior derivative is played by the commutator with~$F_{B,\varepsilon}$. This justifies the introduction of the following notation:
\begin{equation}\label{eq:def_diff_d}
\dd(T) := [F_{B,\varepsilon}, T]_\chi ,\qquad T\in \bb{B}(\s{H}_4) .
\end{equation}
For reasons that will be discussed in a future work we will refer to the map $\mathrm{d}$ as the \emph{graded quasi-differential}.
The behavior under the involution is given by
\[
 \dd(T^*) = (-1)^{1-{\deg}(T)}\dd(T)^* ,\qquad T\in \bb{B}(\s{H}_4) .
\]

Even though in this work we will not discuss the construction of the quantized calculus for
the magnetic Fredholm module, we will use the notation~\eqref{eq:def_diff_d} to prove the so called
\emph{first Connes' formula} according to the name introduced by Bellissard et al.\ for~\cite[Theorem~9]{bellissard-elst-schulz-baldes-94}. It is worth pointing out that the first Connes' formula as proved in~\cite[Theorem~9]{bellissard-elst-schulz-baldes-94} works only for the discrete magnetic algebra. In contrast, the proof that we will provide here concerns with the continuous case.

Let us focus now on the product $[F_{B,\varepsilon},T_1]^*[F_{B,\varepsilon},T_2]$ for elements of the type $T_1,T_2\in \bb{S}_B\otimes{\rm Mat}_4(\C)$. After exploiting the expansion~\eqref{eq:fact_comm_F} one gets
\[
[F_{B,\varepsilon},T_1]^*[F_{B,\varepsilon},T_2] = \sum_{i=0}^3I_i(T_1,T_2) ,
\]
with
\begin{gather}
I_0(T_1,T_2) := |D_{B,\varepsilon}|^{-1}[D_B,T_1]^*[D_B,T_2]|D_{B,\varepsilon}|^{-1},\nonumber\\
I_1(T_1,T_2) := |D_{B,\varepsilon}|^{-1}[D_B,T_1]^*D_B\big[|D_{B,\varepsilon}|^{-1},T_2\big],\nonumber\\
I_2(T_1,T_2) = \big[|D_{B,\varepsilon}|^{-1},T_1\big]^*D_B[D_B,T_2]|D_{B,\varepsilon}|^{-1}, \nonumber\\
I_3(T_1,T_2) = \big[|D_{B,\varepsilon}|^{-1},T_1\big]^*D_B^2\big[|D_{B,\varepsilon}|^{-1},T_2\big] . \label{eq:fact_comm_F_terms_Is}
\end{gather}

\begin{Lemma}\label{lemma:pre_connes01}
Assume that $T_1,T_2\in\rho(\bb{S}_B)=\bb{S}_B\otimes{\bf 1}_4$ and that
$Y\in\bb{B}(\s{H}_4)$ is a bounded operator. Then, for every election of the Dixmier trace, it holds true that
\[
{\Tr}_{{\rm Dix},\omega}\big(Y [F_{B,\varepsilon},T_1 ]^* [F_{B,\varepsilon},T_2 ]\big) = {\Tr}_{{\rm Dix},\omega}(YI_0)
\]
where $I_0\equiv I_0(T_1,T_2)$ is defined in~\eqref{eq:fact_comm_F_terms_Is}.
\end{Lemma}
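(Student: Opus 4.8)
The plan is to show that the three \virg{error} terms $I_1,I_2,I_3$ defined in~\eqref{eq:fact_comm_F_terms_Is} contribute nothing to the Dixmier trace of $Y[F_{B,\varepsilon},T_1]^*[F_{B,\varepsilon},T_2]$, because each of them lands in the ideal $\rr{S}^{1^+}_0$ of operators with vanishing Dixmier trace (and $\rr{S}^{1^+}_0$ is an ideal, so multiplying on the left by the bounded operator $Y$ keeps them there). Since $\Tr_{{\rm Dix},\omega}$ is linear and vanishes identically on $\rr{S}^{1^+}_0$, this immediately gives $\Tr_{{\rm Dix},\omega}\big(Y[F_{B,\varepsilon},T_1]^*[F_{B,\varepsilon},T_2]\big)=\Tr_{{\rm Dix},\omega}(YI_0)$. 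So the whole content of the lemma reduces to the claim: $I_1(T_1,T_2),\,I_2(T_1,T_2),\,I_3(T_1,T_2)\in\rr{S}^{1^+}_0$ for $T_1,T_2\in\rho(\bb{S}_B)$.

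To handle these terms I would first rewrite the \virg{bad} factor $D_B[|D_{B,\varepsilon}|^{-1},\rho(A)]$ using the identity already exploited in the proof of Lemma~\ref{lemma:sub_fred_02}, namely $D_B[|D_{B,\varepsilon}|^{-1},\rho(A)] = -F_{B,\varepsilon}\,\rr{d}_{B,\varepsilon}(\rho(A))\,|D_{B,\varepsilon}|^{-1}$, where $\rr{d}_{B,\varepsilon}$ is the commutator with $|D_{B,\varepsilon}|$. Combined with the fact (from Proposition~\ref{prop:commut_D}) that $[D_B,\rho(A)]\in\bb{S}_B\otimes\Mat_4(\C)$ and with the bound established inside the proof of Lemma~\ref{lemma:sub_fred_02} that $T|D_{B,\varepsilon}|^{-1}\in\rr{S}^{2^+}$ and $\rr{d}_{B,\varepsilon}(\rho(A))|D_{B,\varepsilon}|^{-1}\in\rr{S}^{2^+}$ for $A\in\bb{S}_B$, each of $I_1,I_2,I_3$ is seen to be a product of \emph{two} factors, one of which is of the form $S|D_{B,\varepsilon}|^{-1}$ with $S\in\bb{S}_B\otimes\Mat_4(\C)$ bounded, and each of which individually belongs to $\rr{S}^{2^+}$ — hence the product lies in $\rr{S}^{1^+}$. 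To upgrade \virg{in $\rr{S}^{1^+}$} to \virg{in $\rr{S}^{1^+}_0$} I would use the quantitative estimate already present in the paper: from~\eqref{eq:aux_1ref} (and its use in Lemma~\ref{lemma:sub_fred_02}) the singular values of $Q_{B,\varepsilon}^{-1/2}\Upsilon_{j\mapsto k}Q_{B,\varepsilon'}^{-1/2}$ decay like $(1+m)^{-1}$, so $|D_{B,\varepsilon}|^{-1}S'|D_{B,\varepsilon}|^{-1}$-type factors coming from $I_1,I_2,I_3$ in fact have singular values $o\big((1+m)^{-1}\big)$ once the extra $|D_{B,\varepsilon}|^{-1}$ is accounted for, forcing $\gamma_N(\cdot)\to0$; equivalently, I would argue that these are products of an $\rr{S}^{2^+}$-element with a $\rr{S}^{2^+}_0$-element (the extra resolvent factor supplies the \virg{little-oh}), and such products lie in $\rr{S}^{1^+}_0$.

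Concretely, for $I_3 = [|D_{B,\varepsilon}|^{-1},T_1]^*D_B^2[|D_{B,\varepsilon}|^{-1},T_2]$ I would write $D_B^2[|D_{B,\varepsilon}|^{-1},T_2] = D_B\cdot\big(D_B[|D_{B,\varepsilon}|^{-1},T_2]\big) = -D_B F_{B,\varepsilon}\,\rr{d}_{B,\varepsilon}(T_2)|D_{B,\varepsilon}|^{-1} = -|D_{B,\varepsilon}|\,\rr{d}_{B,\varepsilon}(T_2)|D_{B,\varepsilon}|^{-1}$, which together with the similar rewriting of the left factor exhibits $I_3$ as $|D_{B,\varepsilon}|^{-1}\rr{d}_{B,\varepsilon}(T_1)^*\,|D_{B,\varepsilon}|^{-1}\,|D_{B,\varepsilon}|\,\rr{d}_{B,\varepsilon}(T_2)\,|D_{B,\varepsilon}|^{-1}$; the middle $|D_{B,\varepsilon}|^{-1}|D_{B,\varepsilon}|={\bf 1}$ cancels but leaves an extra outer resolvent on each side compared with $I_0$, and tracking singular values via~\eqref{eq:P_ups}–\eqref{eq:aux_1ref} shows the resulting operator has $\gamma_N\to0$. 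The terms $I_1$ and $I_2$ are the \virg{mixed} ones and are handled the same way, with one $[D_B,T_i]|D_{B,\varepsilon}|^{-1}\in\rr{S}^{2^+}$ factor and one $\rr{d}_{B,\varepsilon}(T_j)|D_{B,\varepsilon}|^{-1}$ factor carrying the extra decay; here one uses additionally that $F_{B,\varepsilon}$ and $D_B|D_{B,\varepsilon}|^{-1}=F_{B,\varepsilon}$ are bounded. The main obstacle I anticipate is the bookkeeping needed to see that the error terms are not merely in $\rr{S}^{1^+}$ (which would only give continuity of the Dixmier trace, not vanishing) but genuinely in the smaller ideal $\rr{S}^{1^+}_0$: this requires carefully exploiting the \emph{extra} resolvent factor — equivalently, that each $\rr{d}_{B,\varepsilon}(T_i)$ is a \emph{bounded} operator (growth of order $\tfrac12$ killed by rapid decay of the Laguerre coefficients, as in Proposition~\ref{prop:commut_D-II}) so that the two surviving $|D_{B,\varepsilon}|^{-1}$ factors in $I_1,I_2,I_3$ together behave like $Q_{B,\varepsilon}^{-1}\cdot(\text{compact})$ times a genuinely small correction, rather than merely like $Q_{B,\varepsilon}^{-1/2}\,(\cdot)\,Q_{B,\varepsilon}^{-1/2}$ as in $I_0$.
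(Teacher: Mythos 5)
Your high-level strategy is the right one and matches the paper's: reduce the lemma to showing $I_1,I_2,I_3\in\rr{S}^{1^+}_0$, use that $\rr{S}^{1^+}_0$ is an ideal to absorb $Y$, and conclude by linearity of $\Tr_{{\rm Dix},\omega}$. The gap is in the step where you upgrade \virg{in $\rr{S}^{1^+}$} to \virg{in $\rr{S}^{1^+}_0$}. You claim the error terms carry an \virg{extra outer resolvent} compared with $I_0$, but your own rewriting shows they do not: after inserting $[|D_{B,\varepsilon}|^{-1},T]=-|D_{B,\varepsilon}|^{-1}\rr{d}_{B,\varepsilon}(T)|D_{B,\varepsilon}|^{-1}$ into $I_3$ and cancelling against $D_B^2=|D_{B,\varepsilon}|^2-\varepsilon{\bf 1}$, one is left (up to an $\varepsilon$-correction) with $\pm|D_{B,\varepsilon}|^{-1}\rr{d}_{B,\varepsilon}(T_1^*)\rr{d}_{B,\varepsilon}(T_2)|D_{B,\varepsilon}|^{-1}$, which has exactly the same two powers of $|D_{B,\varepsilon}|^{-1}$ as $I_0=|D_{B,\varepsilon}|^{-1}[D_B,T_1]^*[D_B,T_2]|D_{B,\varepsilon}|^{-1}$. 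If you only know that $\rr{d}_{B,\varepsilon}(T_i)$ is \emph{bounded} (which is all Proposition~\ref{prop:commut_D-II} gives and all you invoke), then $|D_{B,\varepsilon}|^{-1}\cdot(\text{bounded})\cdot|D_{B,\varepsilon}|^{-1}$ sits in $\rr{S}^{1^+}$ but has no reason at all to have vanishing Dixmier trace — indeed that is precisely the structure of the surviving term $I_0$, whose trace is what one is computing.

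The missing ingredient, and what the paper's proof actually uses, is that for $A\in\bb{S}_B$ the commutator $[|D_{B,\varepsilon}|^{-1},\rho(A)]$ is not merely in $\rr{S}^{2^+}$ or even $\rr{S}^{1^+}$ but genuinely \emph{trace-class}: from the explicit $P_m$-spectral resolution in the paper's proof, $\big[Q_{B,\varepsilon}^{-\frac12},\Upsilon_{r\mapsto s}\big]$ has singular values of order $m^{-3/2}$, hence lies in $\rr{S}^1\subset\rr{S}^{1^+}_0$, and moreover $|D_{B,\varepsilon}|^2\big[|D_{B,\varepsilon}|^{-1},\rho(\Upsilon_{r\mapsto s})\big]$ is bounded. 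Equivalently, $\rr{d}_{B,\varepsilon}(\rho(\Upsilon_{r\mapsto s}))=\big[|D_{B,\varepsilon}|,\rho(\Upsilon_{r\mapsto s})\big]$ is not just bounded but \emph{compact}, with singular values $\sim m^{-1/2}$, in contrast to $[D_B,\rho(\Upsilon_{r\mapsto s})]$ which is not compact at all. This extra spectral decay (and not a count of resolvent factors) is what separates $I_1,I_2,I_3$ from $I_0$. With it, the argument for $I_3$ collapses to: $I_3=\big([|D_{B,\varepsilon}|^{-1},T_1]^*\big)\big(D_B^2[|D_{B,\varepsilon}|^{-1},T_2]\big)$ is $(\text{trace-class})\cdot(\text{bounded})$, hence trace-class, hence in $\rr{S}^{1^+}_0$; the mixed terms $I_1$, $I_2$ are treated similarly, peeling off the trace-class commutator factor and checking the remainder is bounded. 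In your factorization $I_3=(|D_{B,\varepsilon}|^{-1}\rr{d}_{B,\varepsilon}(T_1^*))\cdot(\rr{d}_{B,\varepsilon}(T_2)|D_{B,\varepsilon}|^{-1})$, the same decay shows each factor lies in $\rr{S}^{2^+}_0$ (singular values $\sim m^{-1}$, not merely $\sim m^{-1/2}$), which does salvage your \virg{$\rr{S}^{2^+}\cdot\rr{S}^{2^+}_0\subset\rr{S}^{1^+}_0$} step — but you would still have to prove that decay, and it has nothing to do with an \virg{extra} $|D_{B,\varepsilon}|^{-1}$.
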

\begin{proof}
By inspecting the proof of Lemma~\ref{lemma:sub_fred_02} one infers that the operators $I_i(T_1,T_2)$, $i=0,\ldots,3$ are individually in $\rr{S}^{1^+}$. Then, to prove the claim it is enough to show that
$I_i(T_1,T_2)\in \rr{S}^{1^+}_0$ for $i=1,2,3$.
In view of the fact that the operators $T_1$ and $
T_2$ are linear combinations of the fundamental operators $\rho(\Upsilon_{r\mapsto s})$, the maps $I_i(T_1,T_2)$ are bilinear and the trace is linear, it is sufficient to prove that
$I_i(\rho(\Upsilon_{r\mapsto s}),\rho(\Upsilon_{r'\mapsto s'}))\in \rr{S}^{1^+}_0$
for every $r,s,r',s'\in\N_0$ and $i=1,2,3$. For that, let us observe
\[
\big[|D_{B,\varepsilon}|^{-1},\rho(\Upsilon_{r\mapsto s})\big] = \sum_{i=1}^4\big[Q_{B,\varepsilon_i}^{-\frac{1}{2}},\Upsilon_{r\mapsto s}\big]\otimes\tau_{i,i} ,
\]
where the matrices $\tau_{i,j}$ have been defined in the proof of Proposition~\ref{prop:commut_D-II}, $\varepsilon_i\in\{\varepsilon,\varepsilon\pm 1\}$,
and
\[
\big[Q_{B,\varepsilon_i}^{-\frac{1}{2}},\Upsilon_{r\mapsto s}\big] = \left(\sum_{m\in\N_0}\frac{\alpha_m^{(s,r)}}{(m+1+\varepsilon_i)^{\frac{3}{2}}}P_m\right)\Upsilon_{r\mapsto s} ,
\]
with
\[
\alpha_m^{(s,r)} := \frac{r-s}{\sqrt{\left(1+\frac{s}{m+1+\varepsilon_i}\right)\left(1+\frac{r}{m+1+\varepsilon_i}\right)}\left(\sqrt{1+\frac{r}{m+1+\varepsilon_i}}+\sqrt{1+\frac{s}{m+1+\varepsilon_i}}\right)} .
\]
The explicit spectral resolution shows that the operator
$\big[Q_{B,\varepsilon}^{-\frac{1}{2}},\Upsilon_{r\mapsto s}\big]$
is trace-class (i.e., it lies in the first Schatten ideal $\rr{S}^1$)
and $Q_{B,\varepsilon}\big[Q_{B,\varepsilon}^{-\frac{1}{2}},\Upsilon_{r\mapsto s}\big]$ is a bounded operator. Then, it follows that
\begin{gather}
\big[|D_{B,\varepsilon}|^{-1},\rho(\Upsilon_{r\mapsto s})\big]\in \rr{S}^1 \subset \rr{S}^{1^+}_0 ,\nonumber\\
|D_{B,\varepsilon}|^{2}\big[|D_{B,\varepsilon}|^{-1},\rho(\Upsilon_{r\mapsto s})\big] \in \bb{B}(\s{H}_4) .\label{eq:XXYXX}
\end{gather}
These conditions, along with the fact that $\rr{S}^{1^+}_0$ is a closed ideal and the identity $D_B^2=|D_{B,\varepsilon}|^2 - \varepsilon{\bf 1}$, immediately imply
\[
I_3(\rho(\Upsilon_{r\mapsto s}),\rho(\Upsilon_{r'\mapsto s'})) \in \rr{S}^{1^+}_0 .
\]
Let us focus now on $I_2(\rho(\Upsilon_{r\mapsto s}),\rho(\Upsilon_{r'\mapsto s'}))=B^*Z$
where
\begin{gather*}
Z := |D_{B,\varepsilon}|^{-1}[D_B,\rho(\Upsilon_{r'\mapsto s'})]|D_{B,\varepsilon}|^{-1} ,\\
B := F_{B,\varepsilon}|D_{B,\varepsilon}|^{2}\big[|D_{B,\varepsilon}|^{-1},\rho(\Upsilon_{r\mapsto s})\big]^* .
\end{gather*}
In view of~\eqref{eq:XXYXX} one has that $B\in \bb{B}(\s{H}_4)$.
On the other hand
\begin{gather*}
Z = \sum_{i,j=1}^4\big(Q_{B,\varepsilon_i}^{-\frac{1}{2}}\nabla_1(\Upsilon_{r\mapsto s})Q_{B,\varepsilon_j}^{-\frac{1}{2}}\big)\otimes\frac{\ii\tau_{i,i}\gamma_2\tau_{j,j}}{\sqrt{2}\ell_B}
 -\sum_{i,j=1}^4\big(Q_{B,\varepsilon_i}^{-\frac{1}{2}}\nabla_2(\Upsilon_{r\mapsto s})Q_{B,\varepsilon_j}^{-\frac{1}{2}}\big)\otimes\frac{\ii\tau_{i,i}\gamma_1\tau_{j,j}}{\sqrt{2}\ell_B} .
\end{gather*}
The operators $Q_{B,\varepsilon_i}^{-\frac{1}{2}}\nabla_k(\Upsilon_{r\mapsto s})Q_{B,\varepsilon_j}^{-\frac{1}{2}}$,
with $k=1,2$, are in the Dixmier ideal in view of
Proposition~\ref{prop:trace_dix_main res}. Moreover, an application
of~\cite[Lemma B.3]{denittis-gomi-moscolari-19}, along with the fact that the
matrices $\tau_{i,i}\gamma_k\tau_{j,j}$, with $k=1,2$, have vanishing trace, imply that
${\Tr}_{{\rm Dix}}(Z)\in \rr{S}^{1^+}_0$, and in turn
\[
I_2(\rho(\Upsilon_{r\mapsto s}),\rho(\Upsilon_{r'\mapsto s'})) \in \rr{S}^{1^+}_0 .
\]
The remaining case can be treated by observing that
\[
I_1(\rho(\Upsilon_{r\mapsto s}),\rho(\Upsilon_{r'\mapsto s'})) = I_2(\rho(\Upsilon_{r'\mapsto s'}),\rho(\Upsilon_{r\mapsto s}))^* \in \rr{S}^{1^+}_0
\]
in view of the fact that $\rr{S}^{1^+}_0$ is a self-adjoint ideal.
\end{proof}

We are now in a position to prove our main result.
For that we need to recall the definition of the trace $\fint_B$ introduced in Section~\ref{sec:dix_tr}. We also need the
short notation
\[
\nabla(A_1)\cdot\nabla(A_2) := \sum_{j=1}^2\nabla_j(A_1)\nabla_j(A_2)
\]
for elements $A_1$, $A_2$ in the domain of the noncommutative gradient $\nabla$.
\begin{Theorem}[first Connes' formula]\label{theo:1st_Connes}
For every pair $A_1,A_2\in\bb{S}_B$ the following formulas hold true:
\begin{gather*}
{\Tr}_{{\rm Dix}}\big(\dd(\rho(A_1))^*\dd(\rho(A_2))\big) = \frac{2}{\ell_B^2} \fint_B\big(\nabla(A_1)^*\cdot\nabla(A_2)\big) ,\\
{\Tr}_{{\rm Dix}}\big(\chi\dd(\rho(A_1))^*\dd(\rho(A_2))\big) = 0 .
\end{gather*}
\end{Theorem}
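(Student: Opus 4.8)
The plan is to derive both identities from Lemma~\ref{lemma:pre_connes01}, into which all the delicate summability analysis has already been packed, and then to unwind the concrete $4\times4$ structure of $D_B$. Since $\rho(A_i)=A_i\otimes{\bf 1}_4$ is even for the grading $\chi$ while $F_{B,\varepsilon}$ is odd, the graded commutator coincides with the ordinary one, so $\dd(\rho(A_i))=[F_{B,\varepsilon},\rho(A_i)]$. With $A_1,A_2\in\bb{S}_B$ one may then apply Lemma~\ref{lemma:pre_connes01} with the bounded operators $Y={\bf 1}$ and $Y=\chi$, which yields
\[
{\Tr}_{\rm Dix}\big(\dd(\rho(A_1))^*\dd(\rho(A_2))\big)={\Tr}_{\rm Dix}(I_0),\qquad
{\Tr}_{\rm Dix}\big(\chi\,\dd(\rho(A_1))^*\dd(\rho(A_2))\big)={\Tr}_{\rm Dix}(\chi I_0),
\]
where $I_0\equiv I_0(\rho(A_1),\rho(A_2))=|D_{B,\varepsilon}|^{-1}[D_B,\rho(A_1)]^*[D_B,\rho(A_2)]|D_{B,\varepsilon}|^{-1}$. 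It therefore suffices to evaluate these two Dixmier traces.

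The next step is to make the structure of $I_0$ explicit. Every $A\in\bb{S}_B$ commutes with $\rr{a}$'s dual ladder operators $\rr{b}^\pm$ by Lemma~\ref{lemma:comm_rel_bpm}, so in the $2\times2$-block decomposition for which $\chi={\bf 1}_2\oplus(-{\bf 1}_2)$ and $|D_{B,\varepsilon}|^{-1}={\rm diag}\big(Q_{B,\varepsilon}^{-\frac12},Q_{B,\varepsilon}^{-\frac12},Q_{B,\varepsilon+1}^{-\frac12},Q_{B,\varepsilon-1}^{-\frac12}\big)$ the matrix form of $D_B$ gives
\[
[D_B,\rho(A)]=\left(\begin{matrix}0&N(A)\\ N(A)&0\end{matrix}\right),\qquad
N(A):=\left(\begin{matrix}0&[\rr{a}^-,A]\\ [\rr{a}^+,A]&0\end{matrix}\right).
\]
Hence $[D_B,\rho(A_1)]^*[D_B,\rho(A_2)]$ is block diagonal, both diagonal $2\times2$-blocks being $N(A_1)^*N(A_2)={\rm diag}\big([\rr{a}^+,A_1]^*[\rr{a}^+,A_2],\,[\rr{a}^-,A_1]^*[\rr{a}^-,A_2]\big)$; conjugating by the diagonal operator $|D_{B,\varepsilon}|^{-1}$ makes $I_0$ a diagonal $4\times4$ matrix of operators on $L^2\big(\R^2\big)$ whose entries are of the form $Q_{B,\alpha}^{-\frac12}[\rr{a}^\pm,A_1]^*[\rr{a}^\pm,A_2]Q_{B,\alpha}^{-\frac12}$ with $\alpha\in\{\varepsilon,\varepsilon\pm1\}$. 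By~\eqref{eq:commut_ups} the operators $[\rr{a}^\pm,A_i]$ are again combinations of the $\Upsilon_{j\mapsto k}$ with rapidly decreasing coefficients, so $[\rr{a}^\pm,A_i]\in\bb{S}_B$ and $[\rr{a}^\pm,A_1]^*[\rr{a}^\pm,A_2]\in\bb{S}_B\subset\bb{L}^1_B$.

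Now Proposition~\ref{prop:trace_dix_main res} applies to each diagonal entry and gives ${\Tr}_{\rm Dix}\big(Q_{B,\alpha}^{-\frac12}L_gQ_{B,\alpha}^{-\frac12}\big)=\fint_B(L_g)$ for $L_g\in\bb{L}^1_B$, independently of $\alpha>-1$ (note $\varepsilon>0$, so $\varepsilon-1>-1$). Consequently the two $\chi$-eigenblocks of $I_0$ carry the same Dixmier trace, $\fint_B\big([\rr{a}^+,A_1]^*[\rr{a}^+,A_2]\big)+\fint_B\big([\rr{a}^-,A_1]^*[\rr{a}^-,A_2]\big)$; since $\chi$ acts as $+1$ on the first block and $-1$ on the second, these contributions cancel, proving ${\Tr}_{\rm Dix}(\chi I_0)=0$, which is the second formula, whereas
\[
{\Tr}_{\rm Dix}(I_0)=2\Big(\fint_B\big([\rr{a}^+,A_1]^*[\rr{a}^+,A_2]\big)+\fint_B\big([\rr{a}^-,A_1]^*[\rr{a}^-,A_2]\big)\Big).
\]

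Finally, the ladder commutators are rewritten through the noncommutative gradient. Proposition~\ref{props:commut_deriv} gives $[K_1,A]=-\ii\ell_B^{-1}\nabla_2(A)$ and $[K_2,A]=\ii\ell_B^{-1}\nabla_1(A)$, so from $\rr{a}^\pm=\frac{1}{\sqrt{2}}(K_1\pm\ii K_2)$ one gets $[\rr{a}^+,A]=-\frac{1}{\sqrt{2}\,\ell_B}(\nabla_1(A)+\ii\nabla_2(A))$ and $[\rr{a}^-,A]=\frac{1}{\sqrt{2}\,\ell_B}(\nabla_1(A)-\ii\nabla_2(A))$; expanding the two products the mixed terms cancel, so $[\rr{a}^+,A_1]^*[\rr{a}^+,A_2]+[\rr{a}^-,A_1]^*[\rr{a}^-,A_2]=\ell_B^{-2}\big(\nabla_1(A_1)^*\nabla_1(A_2)+\nabla_2(A_1)^*\nabla_2(A_2)\big)=\ell_B^{-2}\,\nabla(A_1)^*\cdot\nabla(A_2)$. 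Inserting this into the formula for ${\Tr}_{\rm Dix}(I_0)$ yields the first identity, ${\Tr}_{\rm Dix}\big(\dd(\rho(A_1))^*\dd(\rho(A_2))\big)=\frac{2}{\ell_B^2}\fint_B\big(\nabla(A_1)^*\cdot\nabla(A_2)\big)$. I expect the only real obstacle to be bookkeeping: keeping the $4\times4$ Clifford/grading structure straight so that the off-diagonal pieces genuinely drop out of the trace, together with the routine verification that the operators produced stay in $\bb{L}^1_B$ so that Proposition~\ref{prop:trace_dix_main res} is applicable. No analytic input beyond Lemma~\ref{lemma:pre_connes01} is needed.
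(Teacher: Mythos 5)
Your proof is correct and follows essentially the same route as the paper: both reduce to the $I_0$ term via Lemma~\ref{lemma:pre_connes01}, both exploit the commutation with $\rr{b}^\pm$ and the gradient formulas to expand the commutator $[D_B,\rho(A)]$, and both finish by applying the Dixmier-trace formula for $Q_{B,\alpha}^{-1/2}L_gQ_{B,\alpha}^{-1/2}$ with $L_g\in\bb{L}^1_B$. The only difference is organizational: you keep the computation in the ladder-operator block form so that $[D_B,\rho(A_1)]^*[D_B,\rho(A_2)]$ comes out manifestly diagonal and the cross-terms cancel algebraically before the trace is taken, whereas the paper expands via $\delta_B(A)=\nabla_1(A)\otimes\frac{\gamma_2}{\sqrt 2\ell_B}-\nabla_2(A)\otimes\frac{\gamma_1}{\sqrt 2\ell_B}$ and disposes of the cross-terms at the end via $\mathrm{tr}(\gamma_1\gamma_2)=0$ and $\mathrm{tr}(\chi\gamma_1\gamma_2)=0$ — a purely cosmetic distinction. (A tiny bonus of your bookkeeping is that Proposition~\ref{prop:trace_dix_main res} alone suffices, rather than the stronger Theorem~\ref{prop:trace_dix_main res_bis} invoked in the paper, since $[\rr a^\pm,A_1]^*[\rr a^\pm,A_2]\in\bb{S}_B\subset\bb{L}^1_B$.)
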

\begin{proof}
Since $\rho(A_j)$ are operators of degree $0$ one has that
\[
\dd (\rho(A_1) )^*\dd (\rho(A_2) ) = [F_{B,\varepsilon},\rho(A_1) ]^* [F_{B,\varepsilon},\rho(A_2) ] .
\]
Therefore, in view of Lemma~\ref{lemma:pre_connes01} we only need to compute the Dixmier trace of the operators $I_0(\rho(A_1),\rho(A_2))$ and $\chi I_0(\rho(A_1),\rho(A_2))$. An explicit computation based on Corollary~\ref{cor:reg} provides
\begin{gather*}
 I_0(\rho(A_1),\rho(A_2)) = \sum_{i=1}^4\big(Q_{B,\varepsilon_i}^{-\frac{1}{2}}\nabla_1(A_1)^*\nabla_1(A_2) Q_{B,\varepsilon_i}^{-\frac{1}{2}}\big)\otimes\frac{\tau_{i,i}}{2\ell_B^2}\\
\hphantom{I_0(\rho(A_1),\rho(A_2)) =}{} +\sum_{i,j=1}^4\big(Q_{B,\varepsilon_i}^{-\frac{1}{2}}\nabla_1(A_1)^*\nabla_2(A_2) Q_{B,\varepsilon_j}^{-\frac{1}{2}}\big)\otimes\frac{\tau_{i,i}\gamma_1\gamma_2\tau_{j,j}}{2\ell_B^2}\\
\hphantom{I_0(\rho(A_1),\rho(A_2)) =}{} -\sum_{i,j=1}^4\big(Q_{B,\varepsilon_i}^{-\frac{1}{2}}\nabla_2(A_1)^*\nabla_1(A_2)Q_{B,\varepsilon_j}^{-\frac{1}{2}}\big) \otimes\frac{\tau_{i,i}\gamma_1\gamma_2\tau_{j,j}}{2\ell_B^2}\\
\hphantom{I_0(\rho(A_1),\rho(A_2)) =}{} +\sum_{i=1}^4\big(Q_{B,\varepsilon_i}^{-\frac{1}{2}}\nabla_2(A_1)^*\nabla_2(A_2)Q_{B,\varepsilon_i}^{-\frac{1}{2}}\big) \otimes\frac{\tau_{i,i}}{2\ell_B^2} ,
\end{gather*}
where the sign in the second line is due to the identity $\gamma_2\gamma_1=-\gamma_1\gamma_2$. In view of Theorem~\ref{prop:trace_dix_main res_bis} all the operators in the round brackets are measurable elements of the Dixmier ideal with trace which does not depend on the indices~$i$,~$j$. More precisely, one has that
\[
 {\Tr}_{{\rm Dix}}\big(Q_{B,\varepsilon_i}^{-\frac{1}{2}}\nabla_a(A_b)^*\nabla_c(A_d)Q_{B,\varepsilon_j}^{-\frac{1}{2}}\big) = \fint_B\big(\nabla_a(A_b)^*\nabla_c(A_d)\big)
\]
for all $a,b,c,d=1,2$ and independently of $i,j=1,\ldots,4$.
The consequence of this observation and of~\cite[Lemma~B.3]{denittis-gomi-moscolari-19} is that the computation of the Dixmier trace of
$I_0(\rho(A_1),\rho(A_2))$ only requires the computation of the trace of the matrices $\sum_{i=1}^4\tau_{i,i}={\bf 1}_4$ and
\[
 \sum_{i,j=1}^4\tau_{i,i}\gamma_1\gamma_2\tau_{j,j} = \gamma_1\gamma_2 = \ii\left(\begin{matrix}+1 & 0 & 0 & 0 \\0 & -1 & 0 & 0 \\0 & 0 & +1 & 0 \\0 & 0 & 0 & -1\end{matrix}\right) .
\]
This completes the proof of the first formula. The second formula is proved in
the same way with the only difference that the matrices $\chi$ and $\chi \gamma_1\gamma_2$ have a vanishing trace.
\end{proof}

It is worth ending this section with a couple of observations. First of all, it holds true that
\[
{\Tr}_{{\rm Dix}}\big(\big|\dd\big(\rho(A)\big)\big|^2\big) = \frac{2}{\ell_B^2} \big|\big|\big| |\nabla(A)|^2 \big|\big|\big|_{B,2}^2 \leqslant \frac{2}{\ell_B^2} \nnorm{A}_{B,1,2}^2 ,\qquad A\in\bb{S}_B,
\]
where on the right-hand side appears the Sobolev norm~\eqref{eq:sob_norm}. This is the starting point to extend the first Connes' formula to elements of the noncommutative Sobolev spaces $\rr{W}_B^{1,2}$. The second observation concerns the role of the trace per unit volume. From Remark~\ref{rk:trac_unit_volI} one gets
\[
 \frac{1}{2\pi}{\Tr}_{\text{Dix}}\big(\big|\dd\big(\rho(A)\big)\big|^2\big) = \s{T}_B\big(|\nabla(A)|^2\big) ,\qquad A\in\bb{S}_B .
\]
This is the (rigorous) \emph{continuous} version of the formula~\cite[equation~(47)]{bellissard-elst-schulz-baldes-94}.

\appendix
\section{Magnetic von Neumann algebra and distributional kernels}\label{appA}
The magnetic twisted convolution defined by~\eqref{eq:magn_conv_01} provides a product $\ast\colon S\big(\R^2\big)\times S\big(\R^2\big)\to S\big(\R^2\big)$ on the space of the Schwartz
functions. This product can be extended by continuity to the space of tempered
distributions $S'\big(\R^2\big)$. More precisely, if $( | )\colon S'\big(\R^2\big)\times S\big(\R^2\big)\to\C$
denotes the standard bilinear pairing between distributions and functions,\footnote{Let us
 recall that in the case of a distribution $\Psi\in S'\big(\R^2\big)\cap L^2\big(\R^2\big)$ the
 relation between pairing and scalar product is given by
 $(\Psi|f)=\langle\overline{\Psi},f\rangle_{L^2}$ for all $f\in S\big(\R^2\big)$.} one then defines
\[
 (\Psi\ast f|g) := (\Psi|f^-\ast g) ,\qquad \forall\, \Psi\in S'\big(\R^2\big) , \qquad \forall\, f,g\in S\big(\R^2\big) ,
\]
where $f^-(x):=f(-x)$. In this way the twisted convolution product extends to a
bilinear map $\ast\colon S'\big(\R^2\big)\times S\big(\R^2\big)\to S'\big(\R^2\big)$ (cf.\ \cite[Definition~2]{gracia-varilly-88}). This result can be made a bit more precise by writing
\begin{equation*}%\label{eq:App:oo1}
\ast \colon \ S'\big(\R^2\big) \times S\big(\R^2\big) \longrightarrow \mathcal{O}_T\big(\R^2\big) \subset S'\big(\R^2\big),
\end{equation*}
where $\mathcal{O}_T\big(\R^2\big)$ is the set of smooth functions for which each derivative is polynomially bounded and the degree of the polynomial bound increases linearly with the order of the derivative~\cite[Theorem~3]{gracia-varilly-88}.

Let us introduce the space (cf.\ \cite[Section IV]{kammerer-86})
\[
 M\big(\R^2\big) := \left\{\Psi\in S'\big(\R^2\big) \left\vert
 \begin{array}{@{}l@{}}
 \Psi\ast f\in L^2\big(\R^2\big) \\
 \vspace{1mm}
 \|\Psi\ast f\|_{L^2}\leqslant C \lVert f \rVert_{L^2}\end{array}, \forall\, f\in S\big(\R^2\big)\right.\right\} ,
\]
which is, by definition, the set of distributions that define bounded operators on $L^2\big(\R^2\big)$. This space characterizes the magnetic von Neumann algebra $\bb{M}_B$.
\begin{Theorem}\label{theo:distr_kern}
$A\in \bb{M}_B$, if and only if, there is a $\Psi_A\in M\big(\R^2\big)$ such that
\[
 Af = \Psi_A\ast f ,\qquad \forall\, f\in S\big(\R^2\big) .
\]
\end{Theorem}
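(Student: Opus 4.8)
The statement is an ``$A\in\bb{M}_B$ iff $A$ is a twisted-convolution operator'' result, so the first move is to trade $\bb{M}_B$ for the commutant $\bb{V}_B'$ using Proposition~\ref{prop:von_propert}(3), and then to exploit the distributional calculus for $\ast$ recalled at the start of this appendix (in particular the extension $\ast\colon S'(\R^2)\times S(\R^2)\to\mathcal{O}_T(\R^2)$ of \cite[Theorem~3]{gracia-varilly-88}). I would prove the two implications separately.

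\emph{The implication $\Psi_A\in M(\R^2)\Rightarrow A\in\bb{M}_B$ is elementary.} Given $\Psi_A\in M(\R^2)$, the map $S(\R^2)\ni f\mapsto\Psi_A\ast f\in L^2(\R^2)$ is bounded for the $L^2$-norm by definition of $M(\R^2)$, and $S(\R^2)$ is dense in $L^2(\R^2)$, so it extends uniquely to a bounded operator $A$ on $L^2(\R^2)$. It remains to check $A\in\bb{V}_B'$. A direct computation (using $\Phi_B(x-a,x)=\Phi_B(x,a)$) shows that each dual magnetic translation is a right twisted convolution, $V_B(a)\psi=\psi\ast(2\pi\ell_B^2\,\delta_a)$, with $\delta_a$ the Dirac mass at $a$. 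Hence, for $f\in S(\R^2)$ (noting $V_B(a)f\in S(\R^2)$),
\[
A\,V_B(a)f=\Psi_A\ast\big(f\ast(2\pi\ell_B^2\delta_a)\big),\qquad V_B(a)\,Af=\big(\Psi_A\ast f\big)\ast(2\pi\ell_B^2\delta_a),
\]
and the two sides coincide by associativity of the twisted convolution in the distributional sense, which one reduces to the Schwartz case by approximating $\delta_a$ by functions in $S(\R^2)$. Thus $[A,V_B(a)]=0$ on the dense subspace $S(\R^2)$, hence on $L^2(\R^2)$, so $A\in\bb{V}_B'=\bb{M}_B$.

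\emph{For the converse the real content is a Schwartz--kernel-theorem argument;} this is essentially \cite[Section~IV]{kammerer-86}, which I would reproduce as follows. Given $A\in\bb{M}_B$, the form $(f,g)\mapsto\langle g,Af\rangle_{L^2}$ on $S(\R^2)\times S(\R^2)$ is separately, hence jointly, continuous, so by the Schwartz kernel theorem it is represented by a distribution $K_A\in S'(\R^2\times\R^2)$. Since $A\in\bb{M}_B=\bb{V}_B'$ commutes with every $V_B(a)$, the kernel $K_A$ obeys a $\Phi_B$-twisted translation covariance; the key step is to deduce from this covariance that $K_A$ arises from a single distribution $\Psi_A\in S'(\R^2)$ through the twisted convolution, i.e. $Af=\Psi_A\ast f$ for all $f\in S(\R^2)$ (in the untwisted setting this is the classical fact that translation-invariant kernels are convolution kernels). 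Once this shape is established, boundedness of $A$ reads exactly $\|\Psi_A\ast f\|_{L^2}\leqslant\|A\|\,\|f\|_{L^2}$ for $f\in S(\R^2)$, which is the defining condition for $\Psi_A\in M(\R^2)$.

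\emph{A more hands-on alternative for the converse} avoids the kernel theorem: since $\bb{L}^2_B$ is an ideal of $\bb{M}_B$ (Proposition~\ref{prop:von_propert}(2)) and $\pi(\eta_n)\in\bb{L}^2_B$ for the approximate identity $\{\eta_n\}$, one has $A\pi(\eta_n)=L_{k_n}$ for a unique $k_n\in L^2(\R^2)$; testing against $\eta_m$ and letting $m\to\infty$ identifies $k_n=(A\eta_n)^-$, whence $A(\eta_n\ast f)=(A\eta_n)\ast f$ for all $f\in L^2(\R^2)$, and letting $n\to\infty$ gives $(A\eta_n)\ast f\to Af$ in $L^2(\R^2)$ for every $f\in S(\R^2)$. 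One then shows $\{A\eta_n\}$ converges in $S'(\R^2)$ and puts $\Psi_A:=\lim_n A\eta_n$. Either way, the main obstacle is the same: promoting the weak information (covariance of $K_A$, or ``$(A\eta_n)\ast f\to Af$ for all $f$'') to the genuine structural conclusion that one fixed distribution $\Psi_A$ does the job while belonging to $M(\R^2)$; this is the point where the specific form of the multiplier $\Phi_B$ (and amenability of $\R^2$) is used, and it is exactly the part I would cite from \cite{kammerer-86}.
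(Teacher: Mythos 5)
Your forward implication matches the paper's: both reduce $A\in\bb{M}_B$ to $A\in\bb{V}_B'$ (Proposition~\ref{prop:von_propert}(3)) and verify that $\Psi_A\ast(\cdot)$ commutes with the dual magnetic translations. The paper cites \cite[Theorem~2]{gracia-varilly-88} for $\Psi_A\ast(V_B(a)f)=V_B(a)(\Psi_A\ast f)$, whereas you derive it by writing $V_B(a)$ as right twisted convolution by $2\pi\ell_B^2\,\delta_a$ and invoking associativity; this is a sound alternative justification of the same step.

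Your converse, however, is not a proof. Both routes you sketch defer the decisive step. In the kernel-theorem route, the passage from \virg{$K_A$ is $\Phi_B$-covariant} to \virg{$K_A$ is a twisted convolution kernel} is precisely what needs proving, and you cite it to~\cite{kammerer-86}. In the approximate-identity route, the assertion that $\{A\eta_n\}$ converges in $S'(\R^2)$ is left open, and this is the nontrivial part: a priori $(A\eta_n)\ast f\to Af$ in $L^2$ gives only a net of candidate symbols, not a single distribution with controlled growth. You correctly diagnose that \virg{promoting the weak information to the structural conclusion} is the obstacle, but you do not surmount it.

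The paper's argument is self-contained and avoids both obstacles by a different device, which is worth internalizing: the distribution $\Psi_A$ is \emph{defined} directly via the FSN trace of Proposition~\ref{prop:FSN-trace},
\[
(\Psi_A|f):=2\pi\ell_B^2\,\fint_B(A\pi(f)),\qquad f\in S\big(\R^2\big),
\]
which makes sense because $\pi(f)\in\bb{S}_B\subset(\bb{L}^2_B)^2$ lies in the ideal of definition of~$\fint_B$. Temperedness of $\Psi_A$ then drops out of the Laguerre expansion and $\fint_B(|\Upsilon_{m\mapsto n}|)=1$ (yielding a bound by $\lVert A\rVert\, r_2(\{f_{n,m}\})$), and the $L^2$-bound defining $M(\R^2)$ comes from the Hilbert-algebra/ideal structure: $A\pi(f^-)=L_{h_{A,f}}\in\bb{L}^2_B$ with
\[
\|\Psi_A\ast f\|_{L^2}^2 = 2\pi\ell_B^2\,\fint_B\big(\pi(f^-)^*A^*A\pi(f^-)\big)\leqslant\|A\|^2\|f\|_{L^2}^2.
\]
Neither the Schwartz kernel theorem nor convergence of $A\eta_n$ in $S'(\R^2)$ is needed. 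Also note that the mechanism is not amenability of $\R^2$ or the particular form of $\Phi_B$, as you suggest, but the availability of the semi-finite trace $\fint_B$ with ideal $(\bb{L}^2_B)^2$, together with Proposition~\ref{prop:von_propert}(2). If you want to make your second route rigorous, the cleanest fix is to replace the limit $\lim_n A\eta_n$ by this trace-pairing definition of $\Psi_A$ and then carry out the two estimates above.
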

\begin{proof} If $\Psi_A\in M\big(\R^2\big)$, then the linear map $f\mapsto \Psi_A\ast f$, initially defined on
the dense domain $S\big(\R^2\big)\subset L^2\big(\R^2\big)$, extends continuously to a bounded
operator $A\in\bb{B}\big(L^2\big(\R^2\big)\big)$. One can check that
$\Psi_A\ast (V_B(a)f)=V_B(a)(\Psi_A\ast f)$ for every $a\in\R^2$ (cf.\ \cite[Theorem~2]{gracia-varilly-88}). This implies that by construction, $A$ commutes with the dual magnetic translations, i.e., $A\in \bb{V}_B'=\bb{M}_B$. On the other hand, let
$A\in \bb{M}_B$ and define $\Psi_A$ through the pairing
\[
 (\Psi_A|f) := 2\pi\ell^2_B \fint_B(A\pi(f)) ,\qquad f\in S\big(\R^2\big),
\]
where $\fint_B$ is the normal trace defined in Proposition~\ref{prop:FSN-trace}.
The definition of the pairing is well posed since $\pi(f)\in\bb{S}_B\subset\big(\bb{L}^2_B\big)^2$
which is the ideal of definition of the trace. The pairing is linear in view of
the linearity of the trace. Let $f=\sum_{n,m=0}^{\infty}f_{n,m}\psi_{n,m}$, where
$\{f_{n,m}\}\in{S}\big(\N_0^2\big)$ is the rapidly decreasing sequence associated to $f$
according to Proposition~\ref{prop:discret_sch}. Then
\begin{align*}
|(\Psi_A|f)| \leqslant 2\pi\ell^2_B \sum_{(n,m)\in\N^2_0}|f_{n,m}| \left|\fint_B(A\pi(\psi_{n,m}))\right|
 \leqslant \sqrt{2\pi}\ell_B \lVert A \rVert \sum_{(n,m)\in\N^2_0}|f_{n,m}| \fint_B(|\Upsilon_{m\mapsto n}|) .
\end{align*}
Observing that $|\Upsilon_{m\mapsto n}|^2=\Upsilon_{m\mapsto n}^*\Upsilon_{m\mapsto n}=\Pi_m=\Pi_m^2$, one gets
\[
 \fint_B(|\Upsilon_{m\mapsto n}|) = \fint_B(\Pi_m) = 1 ,
\]
in view of equation~\eqref{eq:trac_proj_olb} and in turn
\[
|(\Psi_A|f)| \leqslant \sqrt{2\pi}\ell_B \lVert A \rVert \sum_{(n,m)\in\N^2_0}|f_{n,m}| .
\]
Since
\begin{gather*}
\sum_{(n,m)\in\N^2_0}|f_{n,m}| = \sum_{(n,m)\in\N_0^2}\frac{(2n+1)(2m+1)}{(2n+1)(2m+1)} |f_{n,m}|
 \leqslant {\left(\sum_{n\in\N_0}\frac{1}{(2n+1)^2}\right)}^{2} r_2(\{f_{n,m}\}) ,
\end{gather*}
where $r_2$ is the semi-norm of $S\big(\R^2\big)$ defined by~\eqref{eq:fre_top}. It follows that
\[
|(\Psi_A|f)| \leqslant \left(\frac{\pi}{2}\right)^{\frac{5}{2}} \ell_B \lVert A \rVert r_2(\{f_{n,m}\}) .
\]
The latter inequality proves the continuity of the pairing with respect to the Fr\'echet topo\-lo\-gy~$S\big(\R^2\big)$, namely that $\Psi_A\in S'\big(\R^2\big)$. Consider now a pair $f,g\in S\big(\R^2\big)$. Then
\[
(\Psi_A\ast f\vert g) = (\Psi_A|f^-\ast g) = 2\pi\ell^2_B \fint_B(A\pi(f^-)\pi(g)) .
\]
Since $\pi(f^-)\in\bb{S}_B\subset \bb{L}^2_B$ and $\bb{L}^2_B$ is an ideal in $\bb{M}_B$
(cf.\ Proposition~\ref{prop:von_propert}), there exists a~$h_{A,f}\in L^2\big(\R^2\big)$
such that $L_{h_{A,f}}= A\pi(f^-)$. As a consequence
\[
(\Psi_A\ast f|g) = \langle J(h_{A,f}), g\rangle_{L^2} ,\qquad \forall\, g\in S\big(\R^2\big),
\]
which implies $\Psi_A\ast f= \overline{J(h_{A,f})}={h_{A,f}}^-\in L^2\big(\R^2\big)$. Moreover,
\[
\begin{aligned}
\|\Psi_A\ast f\|_{L^2}^2 &= \langle h_{A,f}, h_{A,f}\rangle_{L^2} = 2\pi\ell^2_B \fint_B(\pi(f^-)^*A^*A\pi(f^-))\\
&\leqslant \lVert A \rVert^2 2\pi\ell^2_B \fint_B(\pi(f^-)\pi(f^-)^*) = \lVert A \rVert^2 \lVert f \rVert_{L^2}^2 ,
\end{aligned}
\]
which shows that $\Psi_A\in M\big(\R^2\big)$. To conclude the proof let us observe that
\[
A\pi(f^-) g = A(f\ast g) = (Af)\ast g =: L_{(Af)^-}g
\]
for every $f, g\in S\big(\R^2\big)$. The second equality is a consequence of the
distributive property of the $\ast$-product on $S\big(\R^2\big)$ along with the fact that
$A$ can be approximated in the strong topology by a~sequence $\pi(a_n)$ with
$a_n\in S\big(\R^2\big)$. The last {equality} is justified by the fact that
$(Af)^-\in L^2\big(\R^2\big)$ is in the kernel of an element $L_{(Af)^-}\in \bb{L}^2_B$ Then,
it follows that $(\Psi_A\ast f)^-= h_{A,f}=(Af)^-$ and this concludes the proof.
\end{proof}

Let $A\in \bb{M}_B$ and $\Psi_A\in M\big(\R^2\big)$ the associated distribution according to
Theorem~\ref{theo:distr_kern}. Then we can represent $A$ as the
integral
\[
 A = \frac{1}{2\pi\ell_B^2}\int_{\R^2}\dd y \, \Psi_A(y) U_B(y) ,
\]
where $U_B(x)$ are the magnetic translations. Evidently the latter equation has a precise meaning only when evaluated on elements of the dense domain $S\big(\R^2\big)$ and out of this domain has to be considered as a formal expression.

Theorem~\ref{theo:distr_kern} provides a powerful tool to investigate the properties of the von Neumann algebra $\bb{M}_B$.
An interesting consequence is contained in the following result
\begin{Lemma}\label{lemma:append_A}
Let $S_1,S_2\in\bb{S}_B$ and $A\in\bb{M}_B$. Then $
S_1AS_2\in\bb{S}_B$.
\end{Lemma}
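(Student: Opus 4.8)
The plan is to pass to the discrete picture furnished by the transition operators $\Upsilon_{j\mapsto k}$ and to reduce the assertion to an elementary estimate on doubly-indexed sequences. Combining Proposition~\ref{prop:struct}, which identifies $\Upsilon_{j\mapsto k}$ with a nonzero multiple of $\pi(\psi_{k,j})$, with Proposition~\ref{prop:discret_sch}, an operator $T\in\bb{C}_B$ lies in $\bb{S}_B$ if and only if it admits a norm-convergent expansion $T=\sum_{k,j\in\N_0}c_{k,j}\,\Upsilon_{j\mapsto k}$ with rapidly decreasing coefficient sequence $\{c_{k,j}\}$; such an expansion is unique because $c_{k,j}=\langle\psi_{k,0},T\psi_{j,0}\rangle$, and it converges in norm since a rapidly decreasing sequence is absolutely summable and $\lVert\Upsilon_{j\mapsto k}\rVert=1$ (Proposition~\ref{prop:struct}(3)). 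Since $S_1,S_2\in\bb{S}_B$ are of this form, it suffices to write $S_1AS_2$ in the same way and to verify that its coefficients decay rapidly.

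First I would record how a general element $A\in\bb{M}_B$ multiplies against the transition operators. By Lemma~\ref{lemma:comm_rel_bpm} the operator $A$ commutes with $\rr{b}^\pm$, hence with $\rr{b}^+\rr{b}^-$; a short computation based on the action of $\rr{b}^\pm$ on the Laguerre basis then shows that $\langle\psi_{j,q},A\psi_{n,q'}\rangle=\delta_{q,q'}A_{j,n}$, where $A_{j,n}:=\langle\psi_{j,0},A\psi_{n,0}\rangle$ does not depend on $q$ and satisfies $|A_{j,n}|\leqslant\lVert A\rVert$. Equivalently $\Pi_jA\Pi_n=A_{j,n}\Upsilon_{n\mapsto j}$, and then items (2) and~(4) of Proposition~\ref{prop:struct} give
\[
\Upsilon_{j\mapsto k}\,A\,\Upsilon_{m\mapsto n}=A_{j,n}\,\Upsilon_{m\mapsto k},\qquad j,k,n,m\in\N_0.
\]
Writing $S_1=\sum_{k,j}a_{k,j}\Upsilon_{j\mapsto k}$ and $S_2=\sum_{n,m}b_{n,m}\Upsilon_{m\mapsto n}$ with $\{a_{k,j}\}$ and $\{b_{n,m}\}$ rapidly decreasing, and using that $A$ is bounded and $\lVert\Upsilon_{j\mapsto k}\rVert=1$, the series $\sum_{k,j,n,m}a_{k,j}b_{n,m}\,\Upsilon_{j\mapsto k}A\Upsilon_{m\mapsto n}$ is absolutely norm-convergent; rearranging it produces
\[
S_1AS_2=\sum_{k,m\in\N_0}c_{k,m}\,\Upsilon_{m\mapsto k},\qquad c_{k,m}=\sum_{j,n\in\N_0}a_{k,j}\,A_{j,n}\,b_{n,m}.
\]

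It then remains to estimate $c_{k,m}$. Using $|A_{j,n}|\leqslant\lVert A\rVert$,
\[
|c_{k,m}|\leqslant\lVert A\rVert\Big(\sum_{j\in\N_0}|a_{k,j}|\Big)\Big(\sum_{n\in\N_0}|b_{n,m}|\Big),
\]
and for each $M\geqslant2$ the rapid decay of $\{a_{k,j}\}$ gives $\sum_j|a_{k,j}|\leqslant C_M(1+k)^{-M}$, and likewise $\sum_n|b_{n,m}|\leqslant C_M'(1+m)^{-M}$; hence $\{c_{k,m}\}$ is rapidly decreasing and, by the characterization of $\bb{S}_B$ recalled above, $S_1AS_2\in\bb{S}_B$. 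The steps I expect to require the most care are the bookkeeping ones: verifying that for $A\in\bb{M}_B$ the blocks $\Pi_jA\Pi_n$ are genuinely scalar multiples of $\Upsilon_{n\mapsto j}$ (this is where the commutation of $\bb{M}_B$ with the dual momenta is used in an essential way) and justifying the rearrangement of the multiple series. One could instead argue through Theorem~\ref{theo:distr_kern}, representing $S_1AS_2$ as twisted convolution by $f_1^-\ast\Psi_A\ast f_2^-$ and proving that this distribution is a Schwartz function, but the route above avoids the distribution-theoretic technicalities.
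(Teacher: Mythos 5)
Your proof is correct, but it follows a genuinely different route from the paper's. The paper's argument passes through the distributional characterization of $\bb{M}_B$ (Theorem~\ref{theo:distr_kern}): it represents $S_1AS_2$ as twisted convolution by $f_1^-\ast\Psi_A\ast f_2^-$ and then invokes the graded Hilbert-space chain $\bb{G}_{s,t}$ of Gracia-Bond\'{\i}a and V\'arilly, with the product relation $\bb{G}_{s,t}\ast\bb{G}_{q,r}\subset\bb{G}_{s,r}$ and the inclusion $M\big(\R^2\big)\subset\bigcup_{r>1}\bb{G}_{-r,0}$, to conclude that the resulting kernel lies in $\bigcap_{s,t}\bb{G}_{s,t}=S\big(\R^2\big)$. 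You instead work entirely in the discrete picture: the identity $\Pi_j A\Pi_n=A_{j,n}\Upsilon_{n\mapsto j}$, justified via Lemma~\ref{lemma:comm_rel_bpm} and independence in the second quantum number, reduces everything to the coefficient convolution $c_{k,m}=\sum_{j,n}a_{k,j}A_{j,n}b_{n,m}$, and the uniform bound $|A_{j,n}|\leqslant\lVert A\rVert$ together with a Cauchy--Schwarz estimate yields rapid decay of $\{c_{k,m}\}$. This is the same mechanism by which the sandwich $P_kAP_k$ is already used in Propositions~\ref{prop:isomorphi-to-compact-operators} and~\ref{prop:isomorphi-to-compact-operators_von_neu}, so your argument is well integrated with the paper's machinery while being self-contained and avoiding the external $\bb{G}_{s,t}$ calculus. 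The tradeoff is that the paper's proof works directly with the analytic model of the algebra and extends naturally to other function classes, whereas yours is sharper and more transparent precisely because it exploits the explicit Fr\'echet isomorphism $\bb{S}_B\simeq S\big(\N_0^2\big)$ of Proposition~\ref{prop:discret_sch}. The one point worth spelling out in full is the domain bookkeeping behind $A\rr{b}^+\psi_{n,q}=\rr{b}^+A\psi_{n,q}$: Lemma~\ref{lemma:comm_rel_bpm} does supply $A[\s{D}(G_j)]\subseteq\s{D}(G_j)$, so $A\psi_{n,q}\in\s{D}\big(\rr{b}^\pm\big)$ and the pairing manipulation $\langle\psi_{j,q+1},\rr{b}^+A\psi_{n,q}\rangle=\langle\rr{b}^-\psi_{j,q+1},A\psi_{n,q}\rangle$ is legitimate, but since this is the crux of the claim that $A_{j,n}$ is $q$-independent it deserves an explicit sentence rather than ``a short computation.''
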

\begin{proof} Let $S_j=\pi(f_j)$, $j=1,2$, were $f_1,f_2\in S\big(\R^2\big)$. Then, in view of
 Theorem~\ref{theo:distr_kern}, The operator $S_1AS_2$ acts on the dense domain
 $S\big(\R^2\big)\subset L^2\big(\R^2\big)$ as a convolution operator with kernel
 $h:=f_1^-\ast\Psi_A\ast f_2^-$. To complete the proof one needs to show that
 $h\in S\big(\R^2\big)$. This can be done by exploiting the Hilbert spaces $\bb{G}_{s,t}$
 defined in~\cite[Definition~6]{gracia-varilly-88}. Let us recall that
 $\bb{G}_{s,t}\subseteq \bb{G}_{q,r}$ if and only if $s\geqslant q$, and $t\geqslant r$. We also have the
 equalities $S\big(\R^2\big)=\bigcap_{(s,t)\in\R^2}\bb{G}_{s,t}$, and
 $S'\big(\R^2\big)=\bigcup_{(s,t)\in\R^2}\bb{G}_{s,t}$. Moreover, one has the product relation
 $\bb{G}_{s,t}\ast \bb{G}_{q,r}\subset \bb{G}_{s,r}$ when $t+q\geqslant 0$~\cite[Theorem~8]{gracia-varilly-88} and the inclusion $M\big(\R^2\big)\subset \bigcup_{r>1}\bb{G}_{-r,0}$~\cite[Remarks on p.~886]{gracia-varilly-88-II} Using the associativity of the
 $\ast$-product one has that $h=f_1^-\ast t$ with
 $t:= \Psi_A\ast f_2^-\in \bb{G}_{-r,0}\ast\bb{G}_{q,t}\subset \bb{G}_{-r,t}$ for all $r>1$ and
 $t\in\R$ (it is enough to choose a $q>0$). Then
 $h\in \bb{G}_{s,r+\varepsilon}\ast\bb{G}_{-r,t} \subset \bb{G}_{s,t}$ for every $s,t\in\R$ (it is
 enough to choose a $\varepsilon>0$). Then $h\in \bigcap_{(s,t)\in\R^2}\bb{G}_{s,t}$ and this
 concludes the proof.
\end{proof}

\section{Technical results concerning the Dixmier trace}\label{appB}
Let us start with an improvement of Proposition~\ref{prop:very_inclus}.
\begin{Lemma}\label{lemm:very_inclus_impr}
The following chain of inclusions holds true
\begin{equation*}%\label{eq:L^p-inclus}
\bb{S}_B \subset \big(\bb{L}^1_B\big)^2 \subset \bb{L}^1_B .
\end{equation*}
\end{Lemma}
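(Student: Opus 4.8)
The plan is to prove the two inclusions separately. The first, $\bb{S}_B \subset (\bb{L}^1_B)^2$, is immediate from what is already available: by the (non-unique) factorization property of Proposition~\ref{prop:struct_C_infty}(2), every $T\in\bb{S}_B$ can be written as $T=S_1S_2$ with $S_1,S_2\in\bb{S}_B$, and since $\bb{S}_B\subset\bb{L}^1_B$ by the first inclusion of Proposition~\ref{prop:very_inclus}, both factors lie in $\bb{L}^1_B$; hence $T\in(\bb{L}^1_B)^2$ (interpreting $(\bb{L}^1_B)^2$ as the linear span of products $S_1S_2$ with $S_i\in\bb{L}^1_B$, in analogy with $(\bb{S}_B)^2$ and $(\bb{L}^2_B)^2$, the conclusion holds a fortiori).

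For the second inclusion $(\bb{L}^1_B)^2\subset\bb{L}^1_B$, since $\bb{L}^1_B$ is a linear space it suffices to show $AB\in\bb{L}^1_B$ whenever $A,B\in\bb{L}^1_B$. Using the identification $\bb{L}^1_B\simeq\ell^1\big(\N_0^2\big)$ recalled at the end of Section~\ref{sec:C-magn_alg} (via the expansion~\eqref{eq:exp_oper_Ups}, up to the harmless factors $(-1)^{m-n}/\big(\sqrt{2\pi}\ell_B\big)$), write $A=\sum_{(p,q)\in\N_0^2}\alpha_{p,q}\Upsilon_{q\mapsto p}$ and $B=\sum_{(r,s)\in\N_0^2}\beta_{r,s}\Upsilon_{s\mapsto r}$ with $\{\alpha_{p,q}\},\{\beta_{r,s}\}\in\ell^1\big(\N_0^2\big)$; both series converge absolutely in operator norm because $\|\Upsilon_{j\mapsto k}\|=1$. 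Multiplying term by term and invoking property~(2) of Proposition~\ref{prop:struct}, namely $\Upsilon_{q\mapsto p}\Upsilon_{s\mapsto r}=\delta_{q,r}\Upsilon_{s\mapsto p}$, one obtains
\[
AB = \sum_{(p,s)\in\N_0^2}\gamma_{p,s}\,\Upsilon_{s\mapsto p},\qquad \gamma_{p,s} := \sum_{q\in\N_0}\alpha_{p,q}\beta_{q,s}.
\]
Setting $a_q:=\sum_{p\in\N_0}|\alpha_{p,q}|$ and $b_q:=\sum_{s\in\N_0}|\beta_{q,s}|$, nonnegativity of all summands gives
\[
\sum_{(p,s)\in\N_0^2}|\gamma_{p,s}| \leqslant \sum_{q\in\N_0}a_q b_q \leqslant \Big(\sum_{q\in\N_0}a_q\Big)\Big(\sum_{q\in\N_0}b_q\Big) = \big\|\{\alpha_{p,q}\}\big\|_{\ell^1}\,\big\|\{\beta_{r,s}\}\big\|_{\ell^1} < +\infty,
\]
so $\{\gamma_{p,s}\}\in\ell^1\big(\N_0^2\big)$ and $AB\in\bb{L}^1_B$. (As a byproduct this reproves that $\bb{L}^1_B$ is a Banach $*$-algebra, i.e.\ that $\ell^1\big(\N_0^2\big)$ with this matrix-type product is a Banach algebra.)

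No serious obstacle is expected. The only two points requiring a little care are: (i) justifying the term-by-term multiplication of two norm-absolutely-convergent operator series, which is routine once one records $\|\Upsilon_{j\mapsto k}\|=1$; and (ii) keeping the index bookkeeping in $\Upsilon_{q\mapsto p}\Upsilon_{s\mapsto r}$ correct, and noting that the Fubini-type rearrangement bounding $\sum_{p,s}|\gamma_{p,s}|$ is legitimate precisely because every summand is nonnegative.
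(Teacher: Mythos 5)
Your proof is correct and follows essentially the same route as the paper: the first inclusion via the factorization $\bb{S}_B=(\bb{S}_B)^2$ together with $\bb{S}_B\subset\bb{L}^1_B$, and the second via the matrix-unit relation on the coefficients and the standard $\ell^1$-estimate. The only cosmetic difference is that you phrase the second step in terms of the operator expansion in the $\Upsilon_{j\mapsto k}$ and their product rule, whereas the paper works with the twisted convolution of the kernel coefficients $h_{n,m}=\tfrac{1}{\sqrt{2\pi}\ell_B}\sum_r g^{(1)}_{n,r}g^{(2)}_{r,m}$ -- the same computation under the dictionary $\pi(\psi_{k,j})\leftrightarrow\Upsilon_{j\mapsto k}$ furnished by Lemma~\ref{lemm:conv_laguer}.
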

\begin{proof}The first inclusion follows from $\bb{S}_B=(\bb{S}_B)^2$, which follows from Proposition~\ref{prop:struct_C_infty}, and the inclusion
$\bb{S}_B\subset \bb{L}^1_B$ is proved in Proposition~\ref{prop:very_inclus}. For the second inclusion consider a~pair of operators $L_{g_1},L_{g_2}\in \bb{L}^1_B$ with kernels $g_j:=\sum_{(n,m)\in\N_0^2}g_{n,m}^{(j)} \psi_{n,m}$, $j=1,2$.
The product $L_h:=L_{g_1}L_{g_2}$ has kernel $h:=g_1\ast g_2=\sum_{(n,m)\in\N_0^2}h_{n,m}
 \psi_{n,m}$ with coefficients given by
\[
 h_{n,m} := \frac{1}{\sqrt{2\pi}\ell_B}\sum_{r\in\N_0} g_{n,r}^{(1)}g_{r,m}^{(2)} .
\]
Since
\[
 \sum_{(n,m)\in\N_0^2}|h_{n,m}| \leqslant \frac{1}{\sqrt{2\pi}\ell_B}\left(\sum_{(n,r)\in\N_0^2}\left|g_{n,r}^{(1)}\right|\right)\left(\sum_{(r,m)\in\N_0^2}\left|g_{r,m}^{(2)}\right|\right)
\]
one gets that $\{h_{n,m}\}\in\ell^1\big(\N_0^2\big)$, and in turn
$L_h\in \bb{L}^1_B$.
\end{proof}

We are now ready to provide a proof of the result stated in Section~\ref{sec:dix_tr_2}.

\begin{proof}[Proof of Theorem~\ref{prop:trace_dix_main res_bis}]
Let $\{g_n\}\subset S\big(\R^2\big)$ be a sequence of Schwartz functions
such that $\|g_n-g\|_{L^2} \to 0$ as $n \to \infty$
and consider the sequence of operators $T_n:=L_f^*\pi(g_n)\in\bb{L}^1_B$
where the inclusion follows from Lemma~\ref{lemm:very_inclus_impr}.
Since $Q_{B,\varepsilon}^{-1}L_f^*$ is in the Dixmier ideal
in view of Proposition~\ref{prop:trace_dix_main res},
 also
$Q_{B,\varepsilon}^{-1}T$ and $Q_{B,\varepsilon}^{-1}T_n$ are in the Dixmier ideal.
Moreover, the symmetry property of the Calder\'on norm proved in~\cite[Proposition 7.16]{gracia-varilly-figueroa-01}
provides
\begin{gather*}
\|Q_{B,\varepsilon}^{-1}L_f^* (\pi(g_n)-L_{g} )\|_{1^+} \leqslant \|Q_{B,\varepsilon}^{-1}L_f^*\|_{1+} \|\pi(g_n)-L_g\|
\leqslant \frac{1}{\sqrt{2\pi}\ell_B}\|Q_{B,\varepsilon}^{-1}L_f^*\|_{1+} \|g_n-g\|_{L^2} ,
\end{gather*}
implying that $\big\|Q_{B,\varepsilon}^{-1} (T_n-T )\big\|_{1^+}\to0$ as $n \to \infty$. In
particular this shows that $Q_{B,\varepsilon}^{-1}T\in \rr{S}^{1^+}_{\rm m}$ since the
subspace $\rr{S}^{1^+}_{\rm m}$ is closed in the Calder\'on norm and
$Q_{B,\varepsilon}^{-1}T_n\in \rr{S}^{1^+}_{\rm m}$ in view of
Proposition~\ref{prop:trace_dix_main res}. Finally, since the Dixmier trace is
continuous with respect to the Calder\'on norm~\cite[p.~288]{gracia-varilly-figueroa-01}, it follows that
\begin{gather*}
{\Tr}_{\rm Dix}\big(Q_{B,\varepsilon}^{-1}T\big) = \lim_{n \to \infty} {\Tr}_{\rm Dix}\big(Q_{B,\varepsilon}^{-1}T_n\big) = \lim_{n \to \infty} (f^**g_n)(0) = \lim_{n \to \infty} \langle f,g_n\rangle_B = \langle f,g\rangle_B .
\end{gather*}
The equality with the trace $\fint_B$ is a consequence of
{Proposition}~\ref{prop:FSN-trace}.
\end{proof}

To push forward our analysis we need a hypothesis that, unfortunately, we are unable to prove in general.
\begin{Proposition}\label{prop_app_B03}
Let $L_g\in \bb{L}^2_B$ and assume that
\begin{equation}\label{eq:main_assum}
\big[Q_{B,\varepsilon}^{-1},L_g\big] \in \rr{S}^{1^+}_{0} .
\end{equation}
Let $T=L_{g}^*L_f$ with $L_f\in \bb{L}^1_B$. Then
$Q_{B,\varepsilon}^{-1}T\in\rr{S}^{1^+}_{\rm m}$ and the equality
\[
{\Tr}_{\rm Dix}\big(Q_{B,\varepsilon}^{-1}T\big) = \langle g,f \rangle_B = \fint_B(T)
\]
holds true, independently of $\varepsilon>-1$.
\end{Proposition}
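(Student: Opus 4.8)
The plan is to deduce this proposition from Theorem~\ref{prop:trace_dix_main res_bis}, which already handles operators of the form $L_a^*L_b$ with $L_a\in\bb{L}^1_B$ and $L_b\in\bb{L}^2_B$, using hypothesis~\eqref{eq:main_assum} only to commute $Q_{B,\varepsilon}^{-1}$ past $L_g^*$ modulo $\rr{S}^{1^+}_0$.

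First I would take adjoints in~\eqref{eq:main_assum}; since $\rr{S}^{1^+}_0$ is a self-adjoint ideal this gives $\big[Q_{B,\varepsilon}^{-1},L_g^*\big]\in\rr{S}^{1^+}_0$. Writing
\[
Q_{B,\varepsilon}^{-1}T \;=\; Q_{B,\varepsilon}^{-1}L_g^*L_f \;=\; L_g^*\,\big(Q_{B,\varepsilon}^{-1}L_f\big) \;+\; \big[Q_{B,\varepsilon}^{-1},L_g^*\big]L_f ,
\]
the second summand belongs to $\rr{S}^{1^+}_0$ because $L_f$ is bounded and $\rr{S}^{1^+}_0$ is an ideal, hence it is measurable with vanishing Dixmier trace. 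For the first summand, Proposition~\ref{prop:trace_dix_main res} gives $Q_{B,\varepsilon}^{-1}L_f\in\rr{S}^{1^+}_{\rm m}\subset\rr{S}^{1^+}$ (recall $L_f\in\bb{L}^1_B$), so that $L_g^*\big(Q_{B,\varepsilon}^{-1}L_f\big)\in\rr{S}^{1^+}$ and, by the trace property valid for every $\omega$-Dixmier trace,
\[
{\Tr}_{{\rm Dix},\omega}\big(L_g^*Q_{B,\varepsilon}^{-1}L_f\big) \;=\; {\Tr}_{{\rm Dix},\omega}\big(Q_{B,\varepsilon}^{-1}L_fL_g^*\big).
\]

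The next step is to recognize $Q_{B,\varepsilon}^{-1}L_fL_g^*$ as an instance covered by Theorem~\ref{prop:trace_dix_main res_bis}. Since $\bb{L}^1_B$ is stable under the adjoint (the expansion~\eqref{eq:exp_oper_Ups} identifies it with $\ell^1\big(\N_0^2\big)$, which is preserved by conjugate transposition) one has $L_f=(L_{f^*})^*$ with $L_{f^*}\in\bb{L}^1_B$, while $L_g^*=L_{g^*}\in\bb{L}^2_B$; thus $L_fL_g^*=(L_{f^*})^*L_{g^*}$ and Theorem~\ref{prop:trace_dix_main res_bis} yields $Q_{B,\varepsilon}^{-1}L_fL_g^*\in\rr{S}^{1^+}_{\rm m}$ together with ${\Tr}_{\rm Dix}\big(Q_{B,\varepsilon}^{-1}L_fL_g^*\big)=\langle f^*,g^*\rangle_B$, a value independent of $\omega$. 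Consequently ${\Tr}_{{\rm Dix},\omega}\big(L_g^*Q_{B,\varepsilon}^{-1}L_f\big)$ is independent of $\omega$, so $L_g^*Q_{B,\varepsilon}^{-1}L_f\in\rr{S}^{1^+}_{\rm m}$ (same reasoning as in Corollary~\ref{corol:dix_trac1}), and $Q_{B,\varepsilon}^{-1}T$, being a sum of two elements of the closed subspace $\rr{S}^{1^+}_{\rm m}$, is measurable. A change of variables gives $\langle f^*,g^*\rangle_B=\overline{\langle f,g\rangle_B}=\langle g,f\rangle_B$, hence ${\Tr}_{\rm Dix}\big(Q_{B,\varepsilon}^{-1}T\big)=\langle g,f\rangle_B$; and since $L_f\in\bb{L}^1_B\subset\bb{L}^2_B$, Proposition~\ref{prop:FSN-trace} identifies $\fint_B(T)=\fint_B(L_g^*L_f)=\langle g,f\rangle_B$. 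The independence of $\varepsilon>-1$ is inherited from Theorem~\ref{prop:trace_dix_main res_bis}, since the right-hand side carries no $\varepsilon$.

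The only genuinely nontrivial ingredient is hypothesis~\eqref{eq:main_assum} itself — which is exactly why it is assumed rather than proved; granting it, the argument is essentially bookkeeping. The one point deserving a little care is the passage through the tracial identity for $\omega$-Dixmier traces with only a bounded operator on one side, and the transfer of measurability across that identity; but this is of the same nature as the argument already used in Corollary~\ref{corol:dix_trac1} and in the proof of Theorem~\ref{prop:trace_dix_main res_bis}.
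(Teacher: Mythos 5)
Your argument is essentially identical to the paper's: same decomposition $Q_{B,\varepsilon}^{-1}T = [Q_{B,\varepsilon}^{-1},L_g^*]L_f + L_g^*Q_{B,\varepsilon}^{-1}L_f$, same use of the hypothesis to discard the commutator term, same appeal to the cyclicity of the Dixmier trace, and same final application of Theorem~\ref{prop:trace_dix_main res_bis} to $L_fL_g^*=(L_{f^*})^*L_{g^*}$. The only cosmetic difference is that you spell out the transfer of measurability through the tracial identity more explicitly than the paper does; the substance is the same.
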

\begin{proof} First of all let us observe that $L_g$ meets the condition~\eqref{eq:main_assum}, if and only if, $L_g^+$ meets the same condition. Then, the identity
\[Q_{B,\varepsilon}^{-1}T = \big[Q_{B,\varepsilon}^{-1},L_g^*\big]L_f + L_g^*Q_{B,\varepsilon}^{-1}L_f,\]
and Proposition~\ref{prop:trace_dix_main res} imply $Q_{B,\varepsilon}^{-1}T\in \rr{S}^{1^+}$. The linearity of the Dixmier trace along with the fact that $\rr{S}^{1^+}_{0}$ is an ideal
provides
\[
{\Tr}_{\rm Dix}\big(Q_{B,\varepsilon}^{-1}T\big) = {\Tr}_{\rm Dix}\big(L_g^*Q_{B,\varepsilon}^{-1}L_f\big) = {\Tr}_{\rm Dix}\big(Q_{B,\varepsilon}^{-1}L_fL_g^*\big) .
\]
The last equality follows from~\cite[Section~4.2.$\beta$, Proposition~3(b)]{connes-94}. Finally, using again Theorem~\ref{prop:trace_dix_main res_bis}, one gets
\[
{\Tr}_{\rm Dix}\big(Q_{B,\varepsilon}^{-1}L_fL_g^*\big) =
\langle \overline{f},\overline{g} \rangle_B = \langle {g},{f} \rangle_B =\fint_B(T),
\]
and this concludes the proof.
\end{proof}

Let us point out that the set of operators that meet the
condition~\eqref{eq:main_assum} is not empty since it is satisfied at least on
the subset $\bb{L}^1_B\subset \bb{L}^2_B$ in view of~\eqref{eq:null_dix_comm_II}. It
is also verified by the resolvent $R_\lambda\in \bb{L}^2_B\setminus \bb{L}^1_B$ of the Landau
Hamiltonian described in Section~\ref{sec:C-magn_alg}, and more in general, by
all the functions $f(H_B)\in \bb{L}^2_B$. On the other hand, we believe that
condition~\eqref{eq:main_assum} should be verified by all the elements of
$\bb{L}^2_B$. However, up to this point, we have been unable to
prove such a result. It is worth remarking that if one could prove the validity of
Proposition~\ref{prop_app_B03} for all the elements of $\bb{L}^2_B$, then one
could prove the validity of Theorem~\ref{prop:trace_dix_main res_bis} to the
domain $\bb{L}^1_B\cdot \bb{L}^2_B\cup \bb{L}^2_B\cdot \bb{L}^1_B$ which is
closed under taking the adjoint.

\subsection*{Acknowledgements}
GD's research is supported by the grant \texttt{Fondecyt Regular - 1190204}. MS's
research is supported by the grant \texttt{CONICYT-PFCHA Doctorado Nacional
 2018--21181868}. GD is indebted to Jean Bellissard, who is the real
inspirer of this work. GD would like to cordially thank Chris Bourne,
Massimo Moscolari, and Hermann Schulz-Baldes for several inspiring discussions.
We would like to thank the anonymous referees for providing very useful suggestions which significantly improved the quality of this work.

\pdfbookmark{References}{ref}
\LastPageEnding

\end{document}